\font\twlgot =eufm10 scaled \magstep1 \font\egtgot =eufm8
\font\sevgot =eufm7
\font\twlmsb =msbm10 scaled \magstep1 \font\egtmsb =msbm8
\font\sevmsb =msbm7
\def\pgot{\fam\gotfam\twlgot}
\def\got{\protect\pgot}
\def\Bbb{\protect\pBbb}
\def\pBbb{\relax\ifmmode\expandafter\Bb\else\typeout{You cann't use
Bbb in text mode}\fi}
\def\Bb #1{{\fam\msbfam\relax#1}}
\def\op#1{\mathop{{\it\fam0} #1}\limits}
\newcommand{\id}{{\rm Id\,}}
\newcommand{\pr}{{\rm pr}}
\newcommand{\di}{{\rm dim\,}}
\newcommand{\Id}{{\rm Id}}
\newcommand{\Ker}{{\rm Ker\,}}
\newcommand{\im}{{\rm Im\, }}
\newcommand{\nm}[1]{|{#1}|}
\newcommand{\dv}[1]{\dot{\rm{#1}}}
\newcommand{\bite}{\begin{itemize}}
\newcommand{\eite}{\end{itemize}}
\newcommand{\benu}{\begin{enumerate}}
\newcommand{\eenu}{\end{enumerate}}
\newcommand{\bde}{\begin{description}}
\newcommand{\ede}{\end{description}}
\newcommand{\bquo}{\begin{quote}}
\newcommand{\equo}{\end{quote}}
\newcommand{\bquot}{\begin{quotation}}
\newcommand{\equot}{\end{quotation}}
\newcommand{\eqref}[1]{(\ref{#1})}
\newcommand{\beq}{\begin{equation}}
\newcommand{\eeq}{\end{equation}}
\newcommand{\ben}{\begin{eqnarray}}
\newcommand{\een}{\end{eqnarray}}
\newcommand{\be}{\begin{eqnarray*}}
\newcommand{\ee}{\end{eqnarray*}}
\newcommand{\bea}{\begin{eqalph}}
\newcommand{\eea}{\end{eqalph}}
\newcommand{\gU}{{\got U}}
\newcommand{\gd}{{\got d}}
\newcommand{\gE}{{\got E}}
\newcommand{\gT}{{\got T}}
\newcommand{\gL}{{\got L}}
\newcommand{\cO}{{\cal O}}
\newcommand{\cT}{{\cal T}}
\newcommand{\cL}{{\cal L}}
\newcommand{\cE}{{\cal E}}
\newcommand{\cH}{{\cal H}}
\newcommand{\cF}{{\cal F}}
\newcommand{\cC}{{\cal C}}
\newcommand{\ccG}{{\cal G}}
\newcommand{\bL}{{\bf L}}
\newcommand{\bE}{{\bf E}}
\newcommand{\rA}{{\rm Ann\,}}
\newcommand{\rrq}{{\ol q}}
\newcommand{\al}{\alpha}
\newcommand{\bt}{\beta}
\newcommand{\dl}{\delta}
\newcommand{\la}{\lambda}
\newcommand{\La}{\Lambda}
\newcommand{\f}{\phi}
\newcommand{\F}{\Phi}
\newcommand{\p}{\pi}
\newcommand{\s}{\psi}
\newcommand{\Om}{\Omega}
\newcommand{\m}{\mu}
\newcommand{\n}{\nu}
\newcommand{\g}{\gamma}
\newcommand{\G}{\Gamma}
\newcommand{\e}{\epsilon}
\newcommand{\thh}{\theta}
\newcommand{\vr}{\varrho}
\newcommand{\up}{\upsilon}
\newcommand{\vt}{\vartheta}
\newcommand{\si}{\sigma}
\newcommand{\Si}{\Sigma}
\newcommand{\bom}{{\bf\Omega}}
\newcommand{\bth}{{\bf\Theta}}
\newcommand{\bT}{{\bf T}}
\newcommand{\Y}{Y\to X}
\newcommand{\w}{\wedge}
\newcommand{\wt}{\widetilde}
\newcommand{\wh}{\widehat}
\newcommand{\ol}{\overline}
\newcommand{\dr}{\partial}
\newcommand{\ar}{\op\longrightarrow}
\newcommand{\llra}{\longleftrightarrow}
\newcommand{\xx}{\times}
\newcommand{\ox}{\otimes}
\newcommand{\ot}{\otimes}
\newcommand{\ap}{\approx}
\let\ssection=\section
\renewcommand{\section}{\setcounter{equation}{0}\ssection}
\newcounter{eqalph}[section]
\newcounter{equationa}[section]
\newcounter{example}[section]
\newcounter{remark}[section]
\newcounter{theorem}[section]
\newcounter{proposition}[section]
\newcounter{lemma}[section]
\newcounter{corollary}[section]
\newcounter{definition}[section]
\def\theremark{\arabic{chapter}.\arabic{section}.\arabic{remark}}
\def\thedefinition{\arabic{chapter}.\arabic{section}.\arabic{definition}}
\newenvironment{proof}{\noindent {\it Proof:}\small
}{{\footnotesize\it QED}\medskip}
\newenvironment{ex}{\refstepcounter{remark}\medskip\noindent{\bf
Example \theremark:}\small }{$\diamondsuit$ \medskip}
\newenvironment{rem}{\refstepcounter{remark}\medskip\noindent{\bf
Remark \theremark:}\small }{$\diamondsuit$\medskip}
\newenvironment{theo}{\refstepcounter{definition}
\medskip\noindent{\sc Theorem \thedefinition}:}{$\Box$\medskip}
\newenvironment{prop}{\refstepcounter{definition}\medskip\noindent{\sc
Proposition \thedefinition}:}{$\Box$\medskip}
\newenvironment{lem}{\refstepcounter{definition}\medskip\noindent{\sc
Lemma \thedefinition}:}{ $\Box$\medskip }
\newenvironment{cor}{\refstepcounter{definition}\medskip\noindent{\sc
Corollary \thedefinition}:}{$\Box$\medskip}
\newenvironment{defi}{\refstepcounter{definition}\medskip\noindent{\sc
Definition \thedefinition}:}{$\Box$\medskip}
\newenvironment{eqalph}{\stepcounter{equation}
\setcounter{equationa}{\value{equation}} \setcounter{equation}{0}

\begin{eqnarray}}{\end{eqnarray}
\setcounter{equation}{\value{equationa}}}
\newcommand{\mar}[1]{}
\begin{document}

\hbox{}

\thispagestyle{empty}

\setcounter{page}{0}

\vskip 3cm

\begin{center}

{\large \bf Advanced Mechanics. Mathematical Introduction}

\bigskip
\bigskip
\bigskip

{\sc G. Giachetta, L. Mangiarotti}
\bigskip

Department of Mathematica and Informatics, University of Camerino,
Camerino, Italy

\bigskip

{\sc G. Sardanashvily}
\bigskip

Department of Theoretical Physics, Moscow State University,
Moscow, Russia

\bigskip
\bigskip
\bigskip
\bigskip
\bigskip
\bigskip

{\bf Abstract}
\bigskip
\end{center}

\noindent Classical non-relativistic mechanics in a general
setting of time-dependent transformations and reference frame
changes is formulated in the terms of fibre bundles over the
time-axis $\Bbb R$. Connections on fibre bundles are the main
ingredient in this formulation of mechanics which thus is
covariant under reference frame transformations. The basic notions
of a non-relativistic reference frame, a relative velocity, a free
motion equation, a relative acceleration, an external force are
formulated. Newtonian, Lagrangian, Hamiltonian mechanical systems
and the relations between them are defined. Lagrangian and
Hamiltonian conservation laws are considered.

\newpage

\bigskip
\bigskip
\bigskip

\noindent {\huge \bf Contents}

\bigskip
\bigskip
\bigskip

\contentsline {chapter}{Introduction}{3} \contentsline
{chapter}{\numberline {1}Dynamic equations}{5} \contentsline
{section}{\numberline {1.1}Preliminary. Fibre bundles over $\pBbb
R$}{5} \contentsline {section}{\numberline {1.2}Autonomous dynamic
equations}{10} \contentsline {section}{\numberline {1.3}Dynamic
equations}{12} \contentsline {section}{\numberline {1.4}Dynamic
connections}{14} \contentsline {section}{\numberline
{1.5}Non-relativistic geodesic equations}{17} \contentsline
{section}{\numberline {1.6}Reference frames}{22} \contentsline
{section}{\numberline {1.7}Free motion equations}{24}
\contentsline {section}{\numberline {1.8}Relative
acceleration}{26} \contentsline {section}{\numberline
{1.9}Newtonian systems}{29} \contentsline {section}{\numberline
{1.10}Integrals of motion}{31} \contentsline {chapter}{\numberline
{2}Lagrangian mechanics}{33} \contentsline {section}{\numberline
{2.1}Lagrangian formalism on $Q\to \pBbb R$}{33} \contentsline
{section}{\numberline {2.2}Cartan and Hamilton--De Donder
equations}{39} \contentsline {section}{\numberline {2.3}Lagrangian
and Newtonian systems}{41} \contentsline {section}{\numberline
{2.4}Lagrangian conservation laws}{43} \contentsline
{subsection}{\numberline {2.4.1}Generalized vector fields}{43}
\contentsline {subsection}{\numberline {2.4.2}First Noether
theorem}{45} \contentsline {subsection}{\numberline {2.4.3}Noether
conservation laws}{48} \contentsline {subsection}{\numberline
{2.4.4}Energy conservation laws}{49} \contentsline
{section}{\numberline {2.5}Gauge symmetries}{51} \contentsline
{chapter}{\numberline {3}Hamiltonian mechanics}{55} \contentsline
{section}{\numberline {3.1}Hamiltonian formalism on $Q\to \pBbb
R$}{55} \contentsline {section}{\numberline {3.2}Homogeneous
Hamiltonian formalism}{60} \contentsline {section}{\numberline
{3.3}Lagrangian form of Hamiltonian formalism}{61} \contentsline
{section}{\numberline {3.4}Associated Lagrangian and Hamiltonian
systems}{61} \contentsline {section}{\numberline {3.5}Hamiltonian
conservation laws}{65} \contentsline {chapter}{\numberline
{4}Appendixes}{71} \contentsline {section}{\numberline
{4.1}Geometry of fibre bundles}{71} \contentsline
{subsection}{\numberline {4.1.1}Fibred manifolds}{71}
\contentsline {subsection}{\numberline {4.1.2}Fibre bundles}{73}
\contentsline {subsection}{\numberline {4.1.3}Vector and affine
bundles}{76} \contentsline {subsection}{\numberline {4.1.4}Vector
and multivector fields}{80} \contentsline {subsection}{\numberline
{4.1.5}Differential forms}{82} \contentsline
{subsection}{\numberline {4.1.6}Distributions and foliations}{88}
\contentsline {section}{\numberline {4.2}Jet manifolds}{89}
\contentsline {subsection}{\numberline {4.2.1}First order jet
manifolds}{89} \contentsline {subsection}{\numberline
{4.2.2}Second order jet manifolds}{91} \contentsline
{subsection}{\numberline {4.2.3}Higher order jet manifolds}{92}
\contentsline {subsection}{\numberline {4.2.4}Differential
operators and differential equations}{93} \contentsline
{section}{\numberline {4.3}Connections on fibre bundles}{95}
\contentsline {subsection}{\numberline {4.3.1}Connections}{95}
\contentsline {subsection}{\numberline {4.3.2}Flat
connections}{98} \contentsline {subsection}{\numberline
{4.3.3}Linear connections}{98} \contentsline
{subsection}{\numberline {4.3.4}Composite connections}{100}
\contentsline {chapter}{Bibliography}{103}





\newpage

\chapter*{Introduction}

We address classical non-relativistic mechanics in a general
setting of arbitrary time-dependent coordinate and reference frame
transformations \cite{book10,book98}.

The technique of symplectic manifolds is well known to provide the
adequate Hamiltonian formulation of autonomous mechanics
\cite{abr,libe,vais}. Its familiar example is a mechanical system
whose configuration space is a manifold $M$ and whose phase space
is the cotangent bundle $T^*M$ of $M$ provided with the canonical
symplectic form
\mar{m83'}\beq
\Om= dp_i\w dq^i, \label{m83'}
\eeq
written with respect to the holonomic coordinates $(q^i, p_i=\dot
q_i)$ on $T^*M$.  A Hamiltonian $\cH$ of this mechanical system is
defined as a real function on a phase space $T^*M$. Any autonomous
Hamiltonian system locally is of this type.

However, this Hamiltonian formulation of autonomous mechanics is
not extended to mechanics under time-dependent transformations
because the symplectic form (\ref{m83'}) fails to be invariant
under these transformations. As a palliative variant, one develops
time-dependent mechanics on a configuration space $Q=\Bbb R\times
M$ where $\Bbb R$ is the time axis \cite{eche,leon}. Its phase
space $\Bbb R\times T^*M$ is provided with the presymplectic form
$\pr^*_2\Om =dp_i\w dq^i$ which is the pull-back of the canonical
symplectic form $\Om$ (\ref{m83'}) on $T^*M$. However, this
presymplectic form also is broken by time-dependent
transformations.

We consider non-relativistic mechanics whose configuration space
is a fibre bundle $Q\to \Bbb R$ over the time axis $\Bbb R$
endowed with the standard Cartesian coordinate $t$ possessing
transition functions $t'=t+$const. A velocity space of
non-relativistic mechanics is the first order jet manifold $J^1Q$
of sections of $Q\to \Bbb R$, and its phase space is the vertical
cotangent bundle $V^*Q$ of $Q\to\Bbb R$ endowed with the canonical
Poisson structure \cite{book98,book10,sard98}.

A fibre bundle $Q\to\Bbb R$ is always trivial. Its trivialization
defines both an appropriate coordinate systems and a connection on
this fibre bundle which is associated with a certain
non-relativistic reference frame. Formulated as theory on fibre
bundles over $\Bbb R$, non-relativistic mechanics is covariant
under gauge (atlas) transformations of these fibre bundles, i.e.,
time-dependent coordinate and reference frame transformations.

This formulation of mechanics is similar to that of classical
field theory on fibre bundles over a smooth manifold $X$, $\di
X>1$, \cite{book09}. A difference between mechanics and field
theory, however, lies in the fact that all connections on fibre
bundles over $\Bbb R$ are flat and, consequently, they are not
dynamic variables. Therefore, formulation of non-relativistic
mechanics is covariant, but not invariant under time-dependent
transformations.

Equations of motion of non-relativistic mechanics almost always
are first and second order dynamic equations. Second order dynamic
equations on a fibre bundle $Q\to\Bbb R$ are conventionally
defined as the holonomic connections on the jet bundle
$J^1Q\to\Bbb R$. These equations also are represented by
connections on the jet bundle $J^1Q\to Q$. The notions of a free
motion equation and a relative acceleration are formulated in
terms of connections on $J^1Q\to Q$ and $TQ\to Q$.

Generalizing the second Newton law, one introduces the notion of a
Newtonian system characterized by a mass tensor. If a mass tensor
is non-degenerate, an equation of motion of a Newtonian system is
equivalent to a dynamic equation. We also come to the definition
of an external force.

Lagrangian non-relativistic mechanics is formulated in the
framework of conventional Lagrangian formalism on fibre bundles
\cite{book,book09,book10,book98}. Its Lagrangian is defined as a
density on the velocity space $J^1Q$, and the corresponding
Lagrange equation is a second order differential equations on
$Q\to\Bbb R$. Besides Lagrange equations, the Cartan and
Hamilton--De Donder equations are considered in the framework of
Lagrangian formalism. Note that the Cartan equation, but not the
Lagrange one is associated to the Hamilton equation. The relations
between Lagrangian and Newtonian systems are established.
Lagrangian conservation laws are defined in accordance with the
first Noether theorem.

Hamiltonian mechanics on a phase space $V^*Q$ is not familiar
Poisson Hamiltonian theory on a Poisson manifold $V^*Q$ because
all Hamiltonian vector fields on $V^*Q$ are vertical. Hamiltonian
mechanics on $V^*Q$ is formulated as particular (polysymplectic)
Hamiltonian formalism on fibre bundles
\cite{book,book09,book10,book98}. Its Hamiltonian is a section of
the fibre bundle $T^*Q\to V^*Q$. The pull-back of the canonical
Liouville form on $T^*Q$ with respect to this section is a
Hamiltonian one-form on $V^*Q$. The corresponding Hamiltonian
connection on $V^*Q\to \Bbb R$ defines the first order Hamilton
equations on $V^*Q$.

Note that one can associate to any Hamiltonian system on $V^*Q$ an
autonomous symplectic Hamiltonian system on the cotangent bundle
$T^*Q$ such that the corresponding Hamilton equations on $V^*Q$
and $T^*Q$ are equivalent. Moreover, the Hamilton equations on
$V^*Q$ also are equivalent to the Lagrange equations of a certain
first order Lagrangian system on a configuration space $V^*Q$. As
a consequence, Hamiltonian conservation laws can be formulated as
the particular Lagrangian ones.

Lagrangian and Hamiltonian formulations of mechanics fail to be
equivalent, unless a Lagrangian is hyperregular. If a Lagrangian
$L$ on a velocity space $J^1Q$ is hyperregullar, one can associate
to $L$ an unique Hamiltonian form on a phase space $V^*Q$ such
that Lagarange equations on $Q$ and the Hamilton equations $V^*Q$
are equivalent. In general, different Hamiltonian forms are
associated to a non-regular Lagrangian. The comprehensive
relations between Lagrangian and Hamiltonian systems can be
established in the case of almost regular Lagrangians.

\chapter{Dynamic equations}

Equations of motion of non-relativistic mechanics are first and
second order differential equations on manifolds and fibre bundles
over $\Bbb R$. Almost always, they are dynamic equations. Their
solutions are called a motion.

This Chapter is devoted to theory of second order dynamic
equations in mechanics whose configuration space is a fibre bundle
$Q\to\Bbb R$. They are defined as the holonomic connections on the
jet bundle $J^1Q\to\Bbb R$ (Section 1.4). These equations are
represented by connections on the jet bundle $J^1Q\to Q$. Due to
the canonical imbedding $J^1Q\to TQ$ (\ref{z260}), they are proved
equivalent to non-relativistic geodesic equations on the tangent
bundle $TQ$ of $Q$ (Theorem \ref{jp50}).

The notions of a non-relativistic reference frame, a relative
velocity, a free motion equation and a relative acceleration are
formulated in terms of connections on $Q\to\Bbb R$, $J^1Q\to Q$
and $TQ\to Q$.

Generalizing the second Newton law, we introduce the notion of a
Newtonian system (Definition \ref{gn30}) characterized by a mass
tensor. If a mass tensor is non-degenerate, an equation of motion
of a Newtonian system is equivalent to a dynamic equation. The
notion of an external force also is formulated.

\section{Preliminary. Fibre bundles over $\Bbb R$}

This section summarizes some peculiarities of fibre bundles over
$\Bbb R$.

Let
\mar{gm360}\beq
\pi:Q\to \Bbb R \label{gm360}
\eeq
be a fibred manifold whose base is treated as a time axis.
Throughout the book, the time axis $\Bbb R$ is parameterized by
the Cartesian coordinate $t$ with the transition functions
$t'=t+$const. Relative to the Cartesian coordinate $t$, the time
axis $\Bbb R$ is provided with the standard vector field $\dr_t$
and  the standard one-form $dt$ which also is the volume element
on $\Bbb R$.  The symbol $dt$ also stands for any pull-back of the
standard one-form $dt$ onto a fibre bundle over $\Bbb R$.

In order that the dynamics of a mechanical system can be defined
at any instant $t\in\Bbb R$, we further assume that a fibred
manifold $Q\to \Bbb R$ is a fibre bundle with a typical fibre $M$.

\begin{rem} \label{047} \mar{047}
In accordance with Remark \ref{Ehresmann}, a fibred manifold
$Q\to\Bbb R$ is a fibre bundle iff it admits an Ehresmann
connection $\G$, i.e., the horizontal lift $\G\dr_t$ onto $Q$ of
the standard vector field $\dr_t$ on $\Bbb R$ is complete. By
virtue of Theorem \ref{11t3}, any fibre bundle $Q\to \Bbb R$ is
trivial. Its different trivializations
\mar{gm219}\beq
\psi: Q=  \Bbb R\times M \label{gm219}
\eeq
differ from each other in fibrations $Q\to M$.
\end{rem}

Given bundle coordinates $(t,q^i)$ on the fibre bundle $Q\to\Bbb
R$ (\ref{gm360}), the first order jet manifold $J^1Q$ of $Q\to\Bbb
R$  is provided with the adapted coordinates $(t,q^i,q^i_t)$
possessing transition functions (\ref{50}) which read
\mar{100}\beq
q'^i_t=(\dr_t + q^j_t\dr_j)q'^i. \label{100}
\eeq
In mechanics on a configuration space $Q\to\Bbb R$, the jet
manifold $J^1Q$ plays a role of the velocity space.

Note that, if $Q=\Bbb R\times M$ coordinated by $(t,\ol q^i)$,
there is the canonical isomorphism
\mar{gm220}\beq
J^1(\Bbb R\times M)=\Bbb R\times TM, \qquad \ol q^i_t= \dot{\ol
q}^i, \label{gm220}
\eeq
that one can justify by inspection of the transition functions of
the coordinates $\ol q^i_t$ and  $\dot{\ol q}^i$ when transition
functions of $q^i$ are time-independent. Due to the isomorphism
(\ref{gm220}), every trivialization (\ref{gm219}) yields the
corresponding trivialization of the jet manifold
\mar{jp2}\beq
J^1Q= \Bbb R\times TM. \label{jp2}
\eeq

The canonical imbedding (\ref{18}) of $J^1Q$ takes the form
\mar{z260,920}\ben
&& \la_{(1)}: J^1Q\to TQ, \quad
\la_{(1)}: (t,q^i,q^i_t) \to (t,q^i,\dot t=1, \dot q^i=q^i_t), \label{z260}\\
&& \la_{(1)}=d_t=\dr_t +q^i_t\dr_i, \label{z920}
\een
where by $d_t$ is meant the total derivative. From now on, a jet
manifold $J^1Q$ is identified with its image in $TQ$. Using the
morphism (\ref{z260}), one can define the contraction
\ben
&& J^1Q\op\times_Q T^*Q \op\to_Q Q\times \Bbb R, \nonumber \\
&& (q^i_t; \dot{\rm t}, \dot q_i) \to \la_{(1)}\rfloor(\dot{\rm t}dt + \dot
q_idq^i) = \dot{\rm t} + q^i_t\dot q_i, \label{gm280}
\een
where $(t,q^i,\dot{\rm t}, \dot q_i)$ are holonomic coordinates on
the cotangent bundle $T^*Q$.

A glance at the expression (\ref{z260}) shows that the affine jet
bundle $J^1Q\to Q$ is modelled over the vertical tangent bundle
$VQ$ of a fibre bundle $Q\to\Bbb R$.  As a consequence, there is
the following canonical splitting (\ref{48}) of the vertical
tangent bundle $V_QJ^1Q$ of the affine jet bundle $J^1Q\to Q$:
\mar{a1.4}\beq
\al:V_QJ^1Q = J^1Q\op\times_Q VQ, \qquad \al(\dr_i^t)=\dr_i,
\label{a1.4}
\eeq
together with the corresponding splitting of the vertical
cotangent bundle  $V^*_QJ^1Q$ of $J^1Q\to Q$:
\mar{gm382}\beq
\al^*:V^*_QJ^1Q= J^1Q\op\times_Q V^*Q, \qquad \al^*(\ol
dq^i_t)=\ol dq^i, \label{gm382}
\eeq
where $\ol dq^i_t$ and $\ol dq^i$ are the holonomic bases for
$V^*_QJ^1Q$ and $V^*Q$, respectively. Then the exact sequence
(\ref{63a}) of vertical bundles over the composite fibre bundle
\mar{gm361}\beq
J^1Q\ar Q\ar\Bbb R \label{gm361}
\eeq
reads
\be
&& \put(173,-11){$\rule{0.1mm}{4mm}$}
\put(50,0){\vector(0,-1){10}$\rule{18mm}{0.1mm}\,{}_{\al^{-1}} \,
\rule{18mm}{0.1mm}$} \\
&& 0\ar V_QJ^1Q\ar^i VJ^1Q \ar^{\pi_V} J^1Q\op\times_Q VQ\ar 0.
\ee
Hence, we obtain the following linear endomorphism over $J^1Q$ of
the vertical tangent bundle $VJ^1Q$ of the jet bundle $J^1Q\to\Bbb
R$:
\mar{a200}\ben
&& \wh v\op= i\circ\al^{-1}\circ\pi_V: VJ^1Q\to VJ^1Q, \label{a200}\\
&& \wh v(\dr_i)=\dr^t_i, \qquad \wh v(\dr^t_i)=0. \nonumber
\een
This endomorphism obeys the nilpotency rule $\wh v\circ \wh v=0$.

Combining the canonical horizontal splitting (\ref{48}), the
corresponding epimorphism
\be
&& \pr_2:J^1Q\op\times_Q TQ \to J^1Q\op\times_Q VQ= V_QJ^1Q,\\
&& \dr_t\to -q^i_t\dr^t_i,
\qquad \dr_i \to\dr_i^t,
\ee
and the monomorphism $VJ^1Q\to TJ^1Q$, one can extend the
endomorphism (\ref{a200}) to the tangent bundle $TJ^1Q$:
\mar{a1.7}\ben
&& \wh v: TJ^1Q\to TJ^1Q, \nonumber \\
&& \wh v(\dr_t) = -q^i_t\dr_i^t,  \qquad \wh v(\dr_i)=\dr^t_i, \qquad
\wh v(\dr^t_i)=0. \label{a1.7}
\een
This is called the vertical endomorphism. It inherits the
nilpotency property. The transpose of the vertical endomorphism
$\wh v$ (\ref{a1.7}) is
\mar{gm441}\ben
&& \wh v^*:T^*J^1Q\to T^*J^1Q, \nonumber \\
&& \wh v^*(dt)=0,\qquad \wh v^*(dq^i)=0, \qquad
\wh v^*(dq^i_t) =\thh^i, \label{gm441}
\een
where $\thh^i=dq^i-q^i_tdt$  are the contact forms (\ref{24}). The
nilpotency rule $\wh v^*\circ \wh v^*=0$ also is fulfilled. The
homomorphisms $\wh v$ and $\wh v^*$ are associated with the
tangent-valued one-form $\wh v= \thh^i\ot \dr_i^t$ in accordance
with the relations (\ref{29b}) -- (\ref{29b'}).

In view of the morphism $\la_{(1)}$ (\ref{z260}), any connection
\mar{z270}\beq
\G=dt\ot (\dr_t +\G^i\dr_i) \label{z270}
\eeq
on a fibre bundle $Q\to\Bbb R$  can be identified with a nowhere
vanishing  horizontal vector field
\mar{a1.10}\beq
\G = \dr_t + \G^i \dr_i \label{a1.10}
\eeq
on $Q$ which is the horizontal lift $\G\dr_t$ (\ref{b1.85}) of the
standard vector field $\dr_t$ on $\Bbb R$ by means of the
connection (\ref{z270}). Conversely, any vector field $\G$ on $Q$
such that $dt\rfloor\G =1$ defines a connection on $Q\to\Bbb R$.
Therefore, the connections (\ref{z270}) further are identified
with the vector fields (\ref{a1.10}). The integral curves of the
vector field (\ref{a1.10}) coincide with the integral sections for
the connection (\ref{z270}).

Connections on a fibre bundle $Q\to\Bbb R$ constitute an affine
space modelled over the vector space of vertical vector fields on
$Q\to\Bbb R$. Accordingly, the covariant differential
(\ref{2116}), associated with a connection $\G$ on $Q\to\Bbb R$,
takes its values into the vertical tangent bundle $VQ$ of
$Q\to\Bbb R$:
\mar{z279}\beq
D^\G: J^1Q\op\to_Q VQ, \qquad \dot q^i\circ D^\G =q^i_t-\G^i.
\label{z279}
\eeq

A connection $\G$ on a fibre bundle $Q\to\Bbb R$ is obviously
flat. It yields a horizontal distribution on $Q$. The integral
manifolds of this distribution are integral curves  of the vector
field (\ref{a1.10}) which are transversal to fibres of a fibre
bundle $Q\to\Bbb R$.

\begin{theo} \label{gn1} \mar{gn1}
By virtue of Theorem \ref{gena113}, every connection $\G$ on a
fibre bundle $Q\to\Bbb R$ defines an atlas of local constant
trivializations of $Q\to\Bbb R$ such that the associated bundle
coordinates $(t,q^i)$ on $Q$ possess the transition functions
$q^i\to q'^i(q^j)$ independent of $t$, and
\mar{z271}\beq
\G=\dr_t \label{z271}
\eeq
with respect to these coordinates. Conversely, every atlas of
local constant trivializations of the fibre bundle $Q\to\Bbb R$
determines a connection on  $Q\to\Bbb R$ which is equal to
(\ref{z271}) relative to this atlas.
\end{theo}

A connection $\G$ on a fibre bundle $Q\to \Bbb R$ is said to be
complete  if the horizontal vector field (\ref{a1.10}) is
complete. In accordance with Remark \ref{Ehresmann}, a connection
on a fibre bundle $Q\to \Bbb R$ is complete iff it is an Ehresmann
connection. The following holds \cite{book10,book98}.

\begin{theo}\label{compl} \mar{compl}
Every trivialization of a fibre bundle $Q\to \Bbb R$ yields a
complete connection on this fibre bundle. Conversely, every
complete connection $\G$ on $Q\to\Bbb R$ defines its
trivialization (\ref{gm219}) such that the horizontal vector field
(\ref{a1.10}) equals $\dr_t$ relative to the bundle coordinates
associated with this trivialization.
\end{theo}

Let $J^1J^1Q$ be the repeated jet manifold of a fibre bundle
$Q\to\Bbb R$ provided with the adapted coordinates $(t,q^i,q^i_t,
\wh q^i_t,q^i_{tt})$  possessing transition functions
\be
&& q'^i_t = d_t q'^i,
\qquad \wh q'^i_t = \wh d_t q'^i, \qquad
q'^i_{tt}= \wh d_t q'^i_t,\\
&& d_t = \dr_t +q^j_t\dr_j +q^j_{tt}\dr^t_j,
\qquad \wh d_t = \dr_t +\wh q^j_t\dr_j +q^j_{tt}\dr^t_j.
\ee
There  is the canonical isomorphism $k$ between the affine
fibrations $\pi_{11}$ (\ref{gm213}) and $J^1\pi_0^1$ (\ref{gm214})
of $J^1J^1Q$ over $J^1Q$, i.e.,
\be
\pi_{11}\circ k= J^1_0\pi_{01}, \qquad k\circ k= \id J^1J^1Q,
\ee
where
\mar{gm215}\beq
 q^i_t\circ k=\wh q^i_t, \qquad \wh q^i_t\circ k=q^i_t, \qquad
q^i_{tt}\circ k= q^i_{tt}. \label{gm215}
\eeq
In particular, the affine bundle $\pi_{11}$ (\ref{gm213}) is
modelled over the vertical tangent bundle $VJ^1Q$ of $J^1Q\to\Bbb
R$ which is canonically isomorphic to the underlying vector bundle
$J^1VQ\to J^1Q$  of the affine bundle $J^1\pi_0^1$ (\ref{gm214}).

For a fibre bundle $Q\to\Bbb R$, the sesquiholonomic jet manifold
$\wh J^2Q$ coincides with the second order jet manifold $J^2Q$
coordinated by $(t,q^i,q^i_t,q^i_{tt})$, possessing transition
functions
\mar{106}\beq
q'^i_t = d_t q'^i, \qquad q'^i_{tt}=  d_t q'^i_t. \label{106}
\eeq
The affine bundle $J^2Q\to J^1Q$ is modelled over the vertical
tangent bundle
\be
V_QJ^1Q= J^1Q\op\times_QVQ\to J^1Q
\ee
of the affine jet bundle $J^1Q\to Q$. There are the imbeddings
\mar{gm211,cqg80}\ben
&& J^2Q \ar^{\la_{(2)}} TJ^1Q \ar^{T\la}
V_QTQ= T^2Q\subset TTQ,
\nonumber \\
&& \la_{(2)}: (t,q^i,q^i_t,q^i_{tt})\to (t,q^i,q^i_t,\dot t=1,\dot
q^i=q^i_t,\dot q^i_t=q^i_{tt}), \label{gm211}\\
&& T\la_{(1)}\circ \la_{(2)}:(t,q^i,q^i_t,q^i_{tt})\to \label{cqg80}\\
&&\qquad  (t,q^i, \dot t =\dv t=1, \dot q^i =\dv q^i=q^i_t, \ddot t=0,
\ddot q^i=q^i_{tt}), \nonumber
\een
where:  (i) $(t,q^i,\dot t,\dot q^i,\dv t,\dv q^i, \ddot t, \ddot
q^i)$ are the coordinates on the double tangent bundle $TTQ$,(ii)
by $V_QTQ$ is meant the vertical tangent bundle of $TQ\to Q$, and
(iii) $T^2Q\subset TTQ$ is the second order tangent space given by
the coordinate relation $\dot t =\dv t$.

Due to the morphism (\ref{gm211}), any connection $\xi$ on the jet
bundle $J^1Q\to \Bbb R$ (defined as a section of the affine bundle
$\pi_{11}$ (\ref{gm213})) is represented by a horizontal vector
field on $J^1Q$ such that $\xi\rfloor dt=1$.

A connection $\G$ (\ref{a1.10}) on a fibre bundle $Q\to\Bbb R$ has
the jet prolongation to the section $J^1\G$ of the affine bundle
$J^1\pi_0^1$. By virtue of the isomorphism $k$ (\ref{gm215}),
every connection $\G$ on $Q\to\Bbb R$ gives rise to the connection
\mar{gm217'}\beq
J\G\op= k\circ J^1\G: J^1Q\to J^1J^1Q,\qquad J\G=\dr_t +\G^i\dr_i
+ d_t\G^i\dr^t_i, \label{gm217'}
\eeq
on the jet bundle $J^1Q\to \Bbb R$.

A connection on the jet bundle $J^1Q\to \Bbb R$ is said to be
holonomic if it is a section
\be
\xi=dt\ot(\dr_t + q^i_t\dr_i + \xi^i \dr_i^t)
\ee
of the holonomic subbundle $J^2Q\to J^1Q$ of $J^1J^1Q\to J^1Q$. In
view of the morphism (\ref{gm211}), a holonomic connection is
represented by a horizontal vector field
\mar{a1.30}\beq
\xi=\dr_t + q^i_t\dr_i + \xi^i \dr_i^t \label{a1.30}
\eeq
on $J^1Q$. Conversely, every vector field $\xi$ on $J^1Q$ such
that
\be
dt\rfloor\xi=1, \qquad \wh v(\xi)=0,
\ee
where $\wh v$ is the vertical endomorphism (\ref{a1.7}), is a
holonomic connection on the jet bundle $J^1Q\to\Bbb R$.

Holonomic connections (\ref{a1.30}) make up an affine space
modelled over the linear space of vertical vector fields on the
affine jet bundle $J^1Q\to Q$, i.e., which live in $V_QJ^1Q$.

A holonomic connection $\xi$ defines the corresponding covariant
differential (\ref{z279}) on the jet manifold $J^1Q$:
\be
&& D^\xi: J^1J^1Q\ar_{J^1Q} V_QJ^1Q\subset VJ^1Q, \\
&& \dot q^i \circ D^\xi =0, \qquad \dot q^i_t\circ D^\xi= q^i_{tt} - \xi^i,
\ee
which takes its values into the vertical tangent bundle $V_QJ^1Q$
of the jet bundle $J^1Q\to Q$.  Then by virtue of Theorem
\ref{1.5.3}, any integral section $\ol c: ()\to J^1Q$ for a
holonomic connection $\xi$ is holonomic, i.e., $\ol c=\dot c$
where $c$ is a curve in $Q$.

\section{Autonomous dynamic equations}

Let us start with dynamic equations on a manifold. From the
physical viewpoint, they are treated as autonomous dynamic
equations in autonomous mechanics.

Let $Z$, $\di Z>1$, be a smooth manifold coordinated by $(z^\la)$.

\begin{defi}\label{gena70} \mar{gena70}
Let $u$ be a vector field $u$ on $Z$. A closed subbundle $u(Z)$ of
the tangent bundle $TZ$ given by the coordinate relations
\mar{gm150}\beq
\dot z^\la=u^\la(z) \label{gm150}
\eeq
is said to be an autonomous first order dynamic equation on a
manifold $Z$. This is a system of first order differential
equations on a fibre bundle $\Bbb R\times Z\to\Bbb R$ in
accordance with Definition \ref{equa}.
\end{defi}

By a solution of the autonomous first order dynamic equation
(\ref{gm150}) is meant an integral curve of the vector field $u$.

\begin{defi}\label{gena72} \mar{gena72}
An autonomous  second order dynamic equation on a manifold $Z$ is
defined as a first order dynamic equation on the tangent bundle
$TZ$ which is associated with a holonomic vector field
\mar{gm200}\beq
\Xi = \dot z^\la\dr_\la +\Xi^\la(z^\m,\dot z^\m)\dot \dr_\la
\label{gm200}
\eeq
on $TZ$. This vector field, by definition, obeys the condition
$J(\Xi)=u_{TZ}$, where $J$ is the endomorphism (\ref{z117}) and
$u_{TZ}$ is the Liouville vector field (\ref{z112}) on $TZ$.
\end{defi}

The holonomic vector field (\ref{gm200}) also is called the
autonomous second order dynamic equation.

Let the double tangent bundle $TTZ$ be provided with coordinates
$(z^\la,\dot z^\la, \dv z^\la,\ddot z^\la)$. With respect to these
coordinates, the autonomous second order dynamic equation defined
by the holonomic vector field $\Xi$ (\ref{gm200}) reads
\mar{gm202}\beq
\dv z^\la=\dot z^\la, \qquad  \ddot z^\la =\Xi^\la(z^\m,\dot
z^\m). \label{gm202}
\eeq
By a solution  of the second order dynamic equation (\ref{gm202})
is meant a curve $c:(,)\to Z$ in a manifold $Z$ whose tangent
prolongation $\dot c:(,)\to TZ$ is an integral curve of the
holonomic vector field $\Xi$ or, equivalently, whose second order
tangent prolongation $\ddot c$ lives in the subbundle
(\ref{gm202}). It satisfies an autonomous second order
differential equation
\be
\ddot c^\la(t) = \Xi^\la(c^\m(t),\dot c^\m(t)).
\ee

Second order dynamic equations on a manifold $Z$ are exemplified
by geodesic equations on the tangent bundle $TZ$.

Given a connection
\mar{109}\beq
K= dz^\m\ot (\dr_\m +K^\nu_\m\dot\dr_\nu) \label{109}
\eeq
on the tangent bundle $TZ\to Z$, let
\mar{jp21}\beq
\wh K: TZ\op\times_Z TZ\to TTZ \label{jp21}
\eeq
be the corresponding linear bundle morphism over $TZ$ which splits
the exact sequence
\be
0\ar V_ZTZ\ar TTZ\ar TZ\op\times_Z TZ\ar 0.
\ee
Note that, in contrast with $K$ (\ref{B}), the connection $K$
(\ref{109}) need not be linear.

\begin{defi}
A geodesic equation on  $TZ$ with respect to the connection $K$
(\ref{109}) is defined as the range
\mar{jp20}\beq
\dv z^\la=\dot z^\la, \qquad  \ddot z^\m = K^\m_\nu\dot z^\nu
\label{jp20}
\eeq
of the morphism (\ref{jp21}) restricted to the diagonal $TZ\subset
TZ\times TZ$.
\end{defi}

By a solution of a geodesic equation on $TZ$ is meant a geodesic
curve  $c$  in $Z$ whose tangent prolongation $\dot c$ is an
integral section (a geodesic vector field) over $c\subset Z$ for a
connection $K$.

It is readily observed that the morphism $\wh K|_{TZ}$ is a
holonomic vector field on $TZ$. It follows that any geodesic
equation (\ref{jp21}) on $TZ$ is a second order equation on $Z$.
The converse is not true in general. Nevertheless, there is the
following theorem \cite{mora}.

\begin{theo} \label{jp30} \mar{jp30} Every second order
dynamic equation (\ref{gm202}) on a manifold $Z$ defines a
connection $K_\Xi$ on the tangent bundle $TZ\to Z$ whose
components are
\mar{jp32}\beq
K^\m_\nu = \frac12\dot\dr_\nu \Xi^\m. \label{jp32}
\eeq
\end{theo}

However, the autonomous second order dynamic equation
(\ref{gm202}) fails to be a geodesic equation with respect to the
connection (\ref{jp32}) in general. In particular, the geodesic
equation (\ref{jp20}) with respect to a connection $K$ determines
the connection (\ref{jp32}) on $TZ\to Z$ which does not
necessarily coincide with $K$.

\begin{theo} \label{y1} \mar{y1}
A second order equation $\Xi$ on $Z$ is a geodesic equation for
the connection (\ref{jp32}) iff $\Xi$ is a spray, i.e.,
$[u_{TZ},\Xi]=\Xi$, where $u_{TZ}$ is the Liouville vector field
(\ref{z112}) on $TZ$, i.e.,
\be
 \Xi^i=a_{ij}(q^k)\dot q^i\dot q^j
\ee
and the connection $K$ (\ref{jp32}) is linear.
\end{theo}

\section{Dynamic equations}

Let $Q\to X$ (\ref{gm360}) be a configuration space of
non-relativistic mechanics. Refereing to Definition \ref{equa} of
a differential equation on a fibre bundle, one defines a dynamic
equation on $Q\to\Bbb R$ as a differential equation which is
algebraically solved for the highest order derivatives.

\begin{defi} \label{fodeq2} \mar{fodeq2} Let $\G$ (\ref{a1.10}) be a
connection on a fibre bundle $Y\to \Bbb R$. The corresponding
covariant differential $D^\G$ (\ref{z279}) is a first order
differential operator on $Y$. Its kernel, given by the coordinate
equation
\mar{ggm155}\beq
q_t^i= \G^i(t,q^i), \label{ggm155}
\eeq
is a closed subbundle of the jet bundle $J^1Y\to\Bbb R$. By virtue
of Definition \ref{equa}, it is a first order differential
equation on a fibre bundle $Y\to \Bbb R$ called the  first order
dynamic equation on $Y\to\Bbb R$.
\end{defi}

Due to the canonical imbedding $J^1Q\to TQ$ (\ref{z260}), the
equation (\ref{ggm155}) is equivalent to the autonomous first
order dynamic equation
\mar{089}\beq
\dot t=1, \qquad \dot q^i=\G^i(t,q^i) \label{089}
\eeq
on a manifold $Y$ (Definition \ref{gena72}). It is defined by the
vector field (\ref{a1.10}). Solutions of the first order dynamic
equation (\ref{ggm155}) are integral sections for a connection
$\G$.

\begin{defi} \label{gena73} \mar{gena73}
Let us consider  the first order dynamic equation (\ref{ggm155})
on the jet bundle $J^1Q\to \Bbb R$, which is associated with  a
holonomic connection $\xi$ (\ref{a1.30}) on $J^1Q\to \Bbb R$. This
is a closed subbundle of the second order jet bundle $J^2Q\to \Bbb
R$ given by the coordinate relations
\mar{z273}\beq
q^i_{tt}=\xi^i(t,q^j,q^j_t). \label{z273}
\eeq
Consequently, it is a second order differential equation on a
fibre bundle $Q\to\Bbb R$ in accordance with Definition
\ref{equa}. This equation is called a second order dynamic
equation or, simply, a dynamic equation if there is no danger of
confusion. The corresponding horizontal vector field $\xi$
(\ref{a1.30}) also is termed a dynamic equation.
\end{defi}

The second order dynamic equation (\ref{z273}) possesses the
coordinate transformation law
\beq
q'^i_{tt} = \xi'^i, \qquad \xi'^i=(\xi^j\dr_j +
q^j_tq^k_t\dr_j\dr_k +2q^j_t\dr_j\dr_t +\dr_t^2)q'^i(t,q^j),
\label{z317}
\eeq
derived from the formula (\ref{106}).

A solution of the dynamic equation (\ref{z273}) is a curve $c$ in
$Q$ whose second order jet prolongation $\ddot c$ lives in
(\ref{z273}). Any integral section $\ol c$ for the holonomic
connection $\xi$ obviously is the jet prolongation $\dot c$ of a
solution $c$ of the dynamic equation (\ref{z273}), i.e.,
\mar{z274}\beq
\ddot c^i=\xi^i\circ \dot c, \label{z274}
\eeq
and {\it vice versa}.

\begin{rem}
By very definition, the second order dynamic equation (\ref{z273})
on a fibre bundle $Q\to\Bbb R$ is equivalent to the system  of
first order differential equations
\mar{z344}\beq
 \wh q^i_t=q^i_t, \qquad q^i_{tt} =\xi^i(t,q^j,q^j_t), \label{z344}
\eeq
on the jet bundle $J^1Q\to\Bbb R$. Any solution $\ol c$ of these
equations takes its values into $J^2Q$ and, by virtue of Theorem
\ref{1.5.3}, is holonomic, i.e., $\ol c=\dot c$. The equations
(\ref{z273}) and (\ref{z344}) are therefore equivalent.
\end{rem}

A dynamic equation $\xi$ on a fibre bundle $Q\to \Bbb R$ is said
to be conservative if there exist a trivialization (\ref{gm219})
of $Q$ and the corresponding trivialization (\ref{jp2}) of $J^1Q$
such that the vector field $\xi$ (\ref{a1.30}) on $J^1Q$ is
projectable over $M$. Then this projection
\be
\Xi_\xi=\dot q^i\dr_i +\xi^i(q^j,\dot q^j)\dot \dr_i
\ee
is an autonomous second order dynamic equation on the typical
fibre $M$ of $Q\to\Bbb R$ in accordance with Definition
\ref{gena72}. Conversely, every autonomous second order dynamic
equation $\Xi$ (\ref{gm200}) on a manifold $M$ can be seen as a
conservative dynamic equation
\mar{jp27}\beq
\xi_\Xi=\dr_t + \dot q^i\dr_i + \Xi^i\dot \dr_i \label{jp27}
\eeq
on the fibre bundle $\Bbb R\times M\to\Bbb R$ in accordance with
the isomorphism (\ref{jp2}).

The following theorem holds \cite{book10,book98}.

\begin{theo}\label{jp28} \mar{jp28}
Any dynamic equation $\xi$ (\ref{z273}) on a fibre bundle
$Q\to\Bbb R$ is equivalent to an autonomous second order dynamic
equation $\Xi$ on a manifold $Q$ which makes the diagram
\be
\begin{array}{rcccl}
& J^2Q & \ar & T^2Q & \\
_\xi &  \put(0,-10){\vector(0,1){20}} & &
\put(0,-10){\vector(0,1){20}}
& _\Xi\\
& J^1Q &\ar^{\la_{(1)}} & TQ &
\end{array}
\ee
commutative and obeys the relations
\be
\xi^i=\Xi^i(t,q^j,\dot t=1, \dot q^j=q^j_t), \qquad \Xi^t=0.
\ee
Accordingly, the dynamic equation (\ref{z273}) is written in the
form
\be
q^i_{tt}= \Xi^i\mid_{\dot t=1, \dot q^j=q^j_t},
\ee
which is equivalent to the autonomous second order dynamic
equation
\mar{jp35}\beq
\ddot t=0, \qquad \dot t=1, \qquad \ddot q^i= \Xi^i, \label{jp35}
\eeq
on $Q$.
\end{theo}

\section{Dynamic connections}

In order to say something more, let us consider the relationship
between the holonomic connections on the jet bundle $J^1Q\to\Bbb
R$ and the connections on the affine jet bundle $J^1Q\to Q$ (see
Propositions \ref{gena51} and \ref{gena52} below).

By $J^1_QJ^1Q$ throughout is meant the first order jet manifold of
the affine jet bundle $J^1Q\to Q$. The adapted coordinates on
$J^1_QJ^1Q$ are $(q^\la,q^i_t,q^i_{\la t})$, where we use the
compact notation $\la=(0,i)$, $q^0=t$. Let
\be
\g:J^1Q\to J^1_QJ^1Q
\ee
be a connection on the affine jet bundle $J^1Q\to Q$. It takes the
coordinate form
\mar{a1.38}\beq
 \g=dq^\la\ot (\dr_\la + \g^i_\la \dr_i^t),
\label{a1.38}
\eeq
together with the coordinate transformation law
\mar{m175}\beq
 \g'^i_\la = (\dr_jq'^i\g^j_\m
+\dr_\m q'^i_t)\frac{\dr q^\m}{\dr q'^\la}. \label{m175}
\eeq

\begin{rem}
In view of the canonical splitting (\ref{a1.4}), the curvature
(\ref{161'}) of the connection $\g$ (\ref{a1.38}) reads
\ben
&&R: J^1Q\to \op\w^2 T^*Q\op\ot_{J^1Q}VQ, \nonumber \\
&&R =\frac12 R^i_{\la\m} dq^\la\w dq^\m\otimes\dr_i =\left(\frac12 R^i_{kj}
dq^k\w dq^j +R^i_{0j}dt\w dq^j\right)\ot\dr_i,
\nonumber \\
&& R^i_{\la\m}=\dr_\la\g^i_\m -\dr_\m\g^i_\la +\g^j_\la\dr_j\g^i_\m
-\g^j_\m\dr_j\g^i_\la. \label{gm282}
\een
Using the contraction (\ref{gm280}), we obtain the soldering form
\be
\la_{(1)}\rfloor R= [(R^i_{kj}q^k_t + R^i_{0j})dq^j
-R^i_{0j}q^j_tdt]\ot\dr_i
\ee
on the affine jet bundle $J^1Q\to Q$. Its image by the canonical
projection $T^*Q\to V^*Q$ (\ref{b418'}) is the tensor field
\mar{gm283}\beq
\ol R: J^1Q\to V^*Q\op\ot_Q VQ,\qquad \ol R= (R^i_{kj}q^k_t +
R^i_{0j})\ol dq^j\ot\dr_i, \label{gm283}
\eeq
and then we come to the scalar field
\mar{gm284}\beq
\wt R: J^1Q \to \Bbb R,\qquad \wt R = R^i_{ki}q^k_t + R^i_{0i},
\label{gm284}
\eeq
on the jet manifold $J^1Q$.
\end{rem}

\begin{prop}\label{gena51} \mar{gena51}
Any connection $\g$ (\ref{a1.38}) on the affine jet bundle
$J^1Q\to Q$ defines the holonomic connection
\ben
&& \xi_\g=\rho\circ \g: J^1Q \to J^1_QJ^1Q
\to J^2Q, \label{z281}\\
&& \xi_\g = \dr_t + q^i_t\dr_i +(\g^i_0 +q^j_t\g^i_j)\dr_i^t, \nonumber
\een
 on the jet bundle
$J^1Q\to \Bbb R$.
\end{prop}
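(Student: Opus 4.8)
The plan is to realise the morphism $\rho$ as the contraction of the first order jet along $Q$ with the canonical tangent-valued map $\la_{(1)}$ (\ref{z260}), and then simply to read off the coordinate expression of $\xi_\g=\rho\circ\g$. A point of $J^1_QJ^1Q$ over $(t,q^i,q^i_t)\in J^1Q$ is a connection value, i.e.\ a horizontal lift $T_qQ\to T_{(t,q,q_t)}J^1Q$, encoded by the fibre coordinates $q^i_{\la t}$. The map $\rho$ evaluates this lift on the distinguished vector $d_t=\dr_t+q^i_t\dr_i$ (\ref{z920}) furnished by $\la_{(1)}$, so that in the adapted coordinates $(q^\la,q^i_t,q^i_{\la t})$ it reads
\be
&& q^i_{tt}\circ\rho=q^i_{0t}+q^j_tq^i_{jt}.
\ee
Since the resulting vector projects onto $d_t$ and carries $q^i_t$ as its $\dr_i$--component, it lands in the holonomic subbundle $J^2Q$; this is what makes $\rho$ a morphism $J^1_QJ^1Q\to J^2Q$.

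First I would substitute the section $\g$. Because $q^i_{\la t}\circ\g=\g^i_\la$, that is $q^i_{0t}\circ\g=\g^i_0$ and $q^i_{jt}\circ\g=\g^i_j$, composing with $\rho$ gives at once $q^i_{tt}\circ\xi_\g=\g^i_0+q^j_t\g^i_j$, which is the asserted formula (\ref{z281}). Equivalently, and more invariantly, I would apply $\g=dq^\la\ot(\dr_\la+\g^i_\la\dr_i^t)$ (\ref{a1.38}) to the horizontal lift of $d_t=\dr_t+q^j_t\dr_j$, obtaining
\be
&& \xi_\g=(\dr_t+\g^i_0\dr_i^t)+q^j_t(\dr_j+\g^i_j\dr_i^t)=\dr_t+q^j_t\dr_j+(\g^i_0+q^j_t\g^i_j)\dr_i^t.
\ee

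Next I would confirm that $\xi_\g$ is genuinely a holonomic connection on $J^1Q\to\Bbb R$. By the criterion recorded before (\ref{a1.30}) it suffices to check $dt\rfloor\xi_\g=1$, which is clear, and $\wh v(\xi_\g)=0$ for the vertical endomorphism (\ref{a1.7}): using $\wh v(\dr_t)=-q^i_t\dr_i^t$, $\wh v(\dr_i)=\dr_i^t$ and $\wh v(\dr_i^t)=0$, one gets $\wh v(\xi_\g)=-q^i_t\dr_i^t+q^j_t\dr_j^t=0$. Hence $\xi_\g$ has precisely the form (\ref{a1.30}) and is a section of $J^2Q\to J^1Q$, as required.

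The one point that really needs care---and the main obstacle---is that $\rho$, and therefore $\xi_\g$, be well defined independently of the chosen bundle coordinates, i.e.\ that contraction with the canonical $d_t$ descends to a morphism into $J^2Q$. This reduces to checking that the combination $\g^i_0+q^j_t\g^i_j$ obeys the transformation law (\ref{z317}) of a second order dynamic equation whenever the $\g^i_\la$ obey the connection law (\ref{m175}). I would verify this compatibility by direct substitution, differentiating $q'^i=q'^i(t,q^j)$ and collecting the $\dr_t^2$, $q^j_t\dr_j\dr_t$ and $q^j_tq^k_t\dr_j\dr_k$ terms; everything else is a routine coordinate computation.
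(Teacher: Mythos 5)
Your construction is correct and coincides with the paper's: the map $\rho$ you build by contracting the horizontal lift with $\la_{(1)}$ is exactly the canonical morphism (\ref{z298}) arising from the composite fibre bundle (\ref{gm361}), and composing it with the section $\g$ yields the stated coordinate expression just as in the paper's proof. The only remark is that your closing concern about coordinate-independence is already settled by your own intrinsic description of $\rho$ (contraction with the canonical $d_t=\la_{(1)}$ is manifestly atlas-independent), so the direct verification against the transformation law (\ref{z317}) is superfluous.
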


\begin{proof}
Let us consider the composite fibre bundle  (\ref{gm361}) and the
morphism $\rho$ (\ref{1.38}) which reads
\beq
\rho:  J^1_QJ^1Q \ni (q^\la,q^i_t,q^i_{\la t}) \mapsto
(q^\la,q^i_t,\wh q^i_t=q^i_t,q^i_{tt}=q^i_{0t} +q^j_tq^i_{jt})\in
J^2Q. \label{z298}
\eeq
A connection $\g$ (\ref{a1.38}) and the morphism $\rho$
(\ref{z298}) combine into the desired holonomic connection
$\xi_\g$ (\ref{z281}) on the jet bundle $J^1Q\to\Bbb R$.
\end{proof}

It follows that every connection $\g$ (\ref{a1.38}) on the affine
jet bundle $J^1Q\to Q$ yields the dynamic equation
\mar{z287}\beq
q^i_{tt}=\g^i_0 +q^j_t\g^i_j \label{z287}
\eeq
on the configuration bundle $Q\to\Bbb R$. This is precisely the
restriction to $J^2Q$ of the kernel $\Ker \wt D^\g$ of the
vertical covariant differential $\wt D^\g$ (\ref{7.10}) defined by
the connection $\g$:
\mar{gm388}\beq
\wt D^\g: J^1J^1Q\to V_QJ^1Q, \qquad \dot q^i_t\circ\wt D^\g=
q^i_{tt} -\g^i_0 - q^j_t\g^i_j. \label{gm388}
\eeq
Therefore, connections on the jet bundle $J^1Q\to Q$ are called
the dynamic connections. The corresponding equation (\ref{z274})
can be written in the form
\be
\ddot c^i=\rho\circ \g\circ \dot c,
\ee
where $\rho$ is the morphism (\ref{z298}).

Of course, different dynamic connections can lead to the same
dynamic equation (\ref{z287}).

\begin{prop}\label{gena52}
Any holonomic connection $\xi$ (\ref{a1.30}) on the jet bundle
$J^1Q\to \Bbb R$ yields the dynamic connection
\mar{z286}\beq
\g_\xi =dt\ot\left[\dr_t+(\xi^i-\frac12
q^j_t\dr_j^t\xi^i)\dr_i^t\right] + dq^j\ot\left[\dr_j
+\frac12\dr_j^t\xi^i \dr_i^t\right] \label{z286}
\eeq
 on the affine
jet bundle $J^1Q\to Q$ \cite{book10,book98}.
\end{prop}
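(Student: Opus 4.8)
The plan is to verify directly that the two families of components read off from (\ref{z286}), namely $\g^i_0=\xi^i-\frac12 q^j_t\dr_j^t\xi^i$ and $\g^i_j=\frac12\dr_j^t\xi^i$, obey the coordinate transformation law (\ref{m175}) of a dynamic connection; this is precisely what it means for $\g_\xi$ to be a global connection on the affine jet bundle $J^1Q\to Q$. As a preliminary sanity check I would record that $\g^i_0+q^j_t\g^i_j=\xi^i$, so that by Proposition \ref{gena51} one recovers $\xi_{\g_\xi}=\xi$; this explains why (\ref{z286}) is the natural splitting of the map $\g\mapsto\xi_\g$, but the substantive content of the statement is the transformation law.

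The two ingredients I would assemble are the transformation law (\ref{z317}) of the components $\xi^i$ of a holonomic connection and the transformation rule of the fibre derivative $\dr_j^t=\dr/\dr q^j_t$. Since $q'^k_t=\dr_t q'^k+q^l_t\dr_l q'^k$ (from (\ref{100})) depends on the velocities only through $\dr q'^k_t/\dr q^l_t=\dr_l q'^k$, and the coordinates $t,q^i$ are independent of the $q'^j_t$, the chain rule gives $\dr'^t_k=\frac{\dr q^l}{\dr q'^k}\dr_l^t$. Applying $\dr_l^t$ to (\ref{z317}) and remembering that $\xi'^i$ is quadratic in the velocities $q^j_t$, the derivative splits into one term $(\dr_l^t\xi^j)\dr_j q'^i$ from the linear piece together with the inhomogeneous terms $2q^m_t\dr_l\dr_m q'^i+2\dr_l\dr_t q'^i$ coming from the $q^j_tq^k_t$ and $2q^j_t\dr_t$ contributions.

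For the spatial components ($\la=k$) only the spatial values of $\m$ survive in (\ref{m175}), since $\dr t/\dr q'^k=0$; using $\dr_l q'^i_t=\dr_l\dr_t q'^i+q^m_t\dr_l\dr_m q'^i$, a short computation then shows that $\frac12\dr'^t_k\xi'^i$ reproduces exactly $(\dr_j q'^i\g^j_l+\dr_l q'^i_t)\frac{\dr q^l}{\dr q'^k}$, so that $\g^i_j=\frac12\dr_j^t\xi^i$ transforms correctly.

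The main obstacle is the time component ($\la=0$), where both $\m=0$ and the spatial values of $\m$ contribute to (\ref{m175}) and the inverse Jacobian carries the nontrivial entries $\dr t/\dr t'=1$ and $\dr q^l/\dr t'$. Here I would substitute (\ref{z317}) for $\xi'^i$ and the already computed $\dr_l^t\xi'^i$ into $\xi'^i-\frac12 q'^j_t\dr'^t_j\xi'^i$, re-expand $q'^j_t=\dr_t q'^j+q^m_t\dr_m q'^j$, and match term by term against $(\dr_j q'^i\g^j_0+\dr_t q'^i_t)+(\dr_j q'^i\g^j_l+\dr_l q'^i_t)\frac{\dr q^l}{\dr t'}$. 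The role of the factor $\frac12$ and of the correction $-\frac12 q^j_t\dr_j^t\xi^i$ in $\g^i_0$ is exactly to symmetrize the quadratic contributions so that the second-order terms $\dr_j\dr_k q'^i$ and the mixed terms $\dr_j\dr_t q'^i$ assemble into the combination dictated by (\ref{m175}). Tracking these second-order terms is the only delicate part of the argument, and it closes because (\ref{z317}) itself descends from the second-order transformation law (\ref{106}).
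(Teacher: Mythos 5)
Your proposal is correct. The paper states Proposition \ref{gena52} without an in-text proof (it defers to the cited monographs), and the argument you outline --- a direct check that $\g^i_j=\frac12\dr_j^t\xi^i$ and $\g^i_0=\xi^i-\frac12 q^j_t\dr_j^t\xi^i$ satisfy the transformation law (\ref{m175}) --- is exactly the standard verification; your two ingredients, the rule $\dr'^t_k=\frac{\dr q^l}{\dr q'^k}\dr_l^t$ and the fibre derivative of (\ref{z317}), do reduce the spatial case to the identity $\dr_lq'^i_t=\dr_l\dr_tq'^i+q^m_t\dr_l\dr_mq'^i$, and the time component then closes via $q'^j_t\frac{\dr q^l}{\dr q'^j}=q^l_t-\frac{\dr q^l}{\dr t'}$ (from differentiating the inverse transition functions in $t$), after which the quadratic terms of (\ref{z317}) cancel as you describe.
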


It is readily observed that the dynamic connection $\g_\xi$
(\ref{z286}), defined by a dynamic equation,  possesses the
property
\mar{a1.69}\beq
\g^k_i = \dr_i^t\g^k_0 +  q^j_t\dr_i^t\g^k_j, \label{a1.69}
\eeq
which implies the relation $\dr_j^t\g^k_i = \dr_i^t\g^k_j$.
Therefore, a dynamic connection $\g$, obeying the condition
(\ref{a1.69}), is said to be symmetric. The torsion of a dynamic
connection $\g$ is defined as the tensor field
\mar{gm273}\ben
&&T: J^1Q\to V^*Q\op\ot_Q VQ, \nonumber\\
&& T=T^k_i \ol dq^i\ot\dr_k, \qquad
T^k_i = \g^k_i - \dr_i^t\g^k_0 - q^j_t\dr_i^t\g^k_j. \label{gm273}
\een
It follows at once that a dynamic connection is symmetric iff its
torsion vanishes.

Let $\g$ be the dynamic connection (\ref{a1.38}) and $\xi_\g$ the
corresponding dynamic equation (\ref{z281}). Then the dynamic
connection (\ref{z286}) associated with the dynamic equation
$\xi_\g$ takes the form
\be
\g_{\xi_\g}{}^k_i = \frac{1}{2} (\g^k_i + \dr_i^t\g^k_0 +
q^j_t\dr_i^t\g^k_j), \qquad \g_{\xi_\g}{}^k_0 = \g^k_0+
q^j_t\g^k_j - q^i_t\g_{\xi_\g}{}^k_i.
\ee
It is readily observed that $\g = \g_{\xi_\g}$ iff the torsion $T$
(\ref{gm273}) of the dynamic connection $\g$ vanishes.

\begin{ex}
Since a jet bundle $J^1Q\to Q$ is affine, it admits an affine
connection
\beq
 \g=dq^\la\ot [\dr_\la + (\g^i_{\la 0}(q^\m)+ \g^i_{\la j}(q^\m)q^j_t)\dr_i^t].
\label{z299}
\eeq
This connection is symmetric iff $\g^i_{\la \m}=\g^i_{\m\la}$. One
can easily justify that an affine dynamic connection generates a
quadratic dynamic equation, and {\it vice versa}. Nevertheless, a
non-affine dynamic connection, whose symmetric part is affine,
also defines a quadratic dynamic equation.
\end{ex}

Using the notion of a dynamic connection, we can modify Theorem
\ref{jp30} as follows. Let $\Xi$ be an autonomous  second order
dynamic equation on a manifold $M$, and let $\xi_\Xi$ (\ref{jp27})
be the corresponding conservative dynamic equation on the bundle
$\Bbb R\times M\to\Bbb R$. The latter yields the dynamic
connection $\g$ (\ref{z286}) on a fibre bundle
\be
\Bbb R\times TM\to \Bbb R\times M.
\ee
Its components $\g^i_j$ are exactly those of the connection
(\ref{jp32}) on the tangent bundle $TM\to M$ in Theorem
\ref{jp30}, while $\g^i_0$ make up a vertical vector field
\mar{jp38}\beq
e=\g^i_0\dot \dr_i = \left(\Xi^i -\frac12 \dot q^j\dot\dr_j
\Xi^i\right)\dot \dr_i \label{jp38}
\eeq
on $TM\to M$. Thus, we have shown the following.

\begin{prop}\label{gn23} \mar{gn23}
Every autonomous second order dynamic equation $\Xi$ (\ref{gm202})
on a manifold $M$ admits the decomposition
\be
\Xi^i= K^i_j\dot q^j + e^i
\ee
where $K$ is the connection (\ref{jp32}) on the tangent bundle
$TM\to M$, and $e$ is the vertical vector field (\ref{jp38}) on
$TM\to M$.
\end{prop}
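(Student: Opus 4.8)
The plan is to reduce the decomposition to a direct reading of the dynamic connection attached to $\Xi$, after which it becomes a one-line identity. First I would pass to the trivial bundle $\Bbb R\times M\to\Bbb R$ and form the conservative dynamic equation $\xi_\Xi$ (\ref{jp27}) determined by $\Xi$. Under the canonical identification $J^1(\Bbb R\times M)=\Bbb R\times TM$ (\ref{jp2}) this is exactly a holonomic connection (\ref{a1.30}) with $\xi^i=\Xi^i$, where the jet coordinate $q^i_t$ plays the role of $\dot q^i$ and the fibre derivative $\dr^t_i$ that of $\dot\dr_i$.

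Next I would feed $\xi_\Xi$ into Proposition \ref{gena52}, which yields the dynamic connection $\g_{\xi_\Xi}$ (\ref{z286}) on $\Bbb R\times TM\to\Bbb R\times M$. Substituting $\xi^i=\Xi^i$ into (\ref{z286}) and reading off its components in the notation (\ref{a1.38}) gives
\be
\g^i_j=\frac12\dot\dr_j\Xi^i, \qquad \g^i_0=\Xi^i-\frac12\dot q^j\dot\dr_j\Xi^i.
\ee
The first coincides with the components (\ref{jp32}) of the connection $K$ supplied by Theorem \ref{jp30}, so that $\g^i_j=K^i_j$, while the second is precisely the vertical vector field $e=\g^i_0\dot\dr_i$ of (\ref{jp38}).

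To finish I would substitute these two expressions into the right-hand side of the asserted decomposition and cancel the two half-terms,
\be
K^i_j\dot q^j+e^i=\frac12(\dot\dr_j\Xi^i)\dot q^j+\Xi^i-\frac12\dot q^j\dot\dr_j\Xi^i=\Xi^i,
\ee
which is the claim. The same conclusion is packaged geometrically by the dynamic equation (\ref{z287}) of $\g_{\xi_\Xi}$, namely $q^i_{tt}=\g^i_0+q^j_t\g^i_j$, which reproduces $\Xi$. I do not expect any real obstacle here; the content sits entirely in the identifications above, and the only point demanding care is the consistent use of the isomorphism (\ref{jp2}), so that the fibre derivatives $\dr^t_i$ occurring in (\ref{z286}) are correctly matched with the $\dot\dr_i$ of (\ref{jp32}) and (\ref{jp38}).
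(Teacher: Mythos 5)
Your proposal is correct and follows essentially the same route as the paper: the text preceding Proposition \ref{gn23} passes to the conservative dynamic equation $\xi_\Xi$ (\ref{jp27}), applies Proposition \ref{gena52} to obtain the dynamic connection (\ref{z286}), and identifies $\g^i_j$ with the components (\ref{jp32}) of $K$ and $\g^i_0$ with the vertical vector field (\ref{jp38}), exactly as you do. Your explicit cancellation of the two half-terms just makes visible the identity that the paper leaves implicit in the dynamic equation (\ref{z287}).
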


\section{Non-relativistic geodesic equations}

In this Section, we aim to show that every dynamic equation on a
configuration bundle $Q\to \Bbb R$ is equivalent to a geodesic
equation on the tangent bundle $TQ\to Q$.

We start with the relation between the dynamic connections $\g$ on
the affine jet bundle $J^1Q\to Q$ and the connections
\mar{z290}\beq
K= dq^\la \ot (\dr_\la +K^\m_\la\dot\dr_\m) \label{z290}
\eeq
on the tangent bundle $TQ\to Q$ of the configuration space $Q$.
Note that they need not be linear. We follow the compact notation
(\ref{vvv}).

Let us consider the diagram
\beq
\begin{array}{rcccl}
& J^1_QJ^1Q & \ar^{J^1\la_{(1)}} & J^1_QTQ & \\
_\g &  \put(0,-10){\vector(0,1){20}} & &
\put(0,-10){\vector(0,1){20}}
& _K\\
& J^1Q &\ar^{\la_{(1)}} & TQ &
\end{array} \label{z291}
\eeq
where $J^1_QTQ$ is the first order jet manifold of the tangent
bundle $TQ\to Q$, coordinated by
\be
(t,q^i,\dot t,\dot q^i, (\dot t)_\m, (\dot q^i)_\m).
\ee
The jet prolongation over $Q$ of the canonical imbedding
$\la_{(1)}$ (\ref{z260}) reads
\be
J^1\la_{(1)}: (t,q^i,q^i_t, q^i_{\m t}) \mapsto (t,q^i,\dot
t=1,\dot q^i=q^i_t, (\dot t)_\m=0, (\dot q^i)_\m=q^i_{\m t}).
\ee
Then we have
\be
&& J^1\la_{(1)}\circ \g: (t,q^i,q^i_t) \mapsto
(t,q^i,\dot t=1,\dot q^i=q^i_t, (\dot t)_\m=0,
(\dot q^i)_\m=\g^i_\m ),\\
&& K\circ \la_{(1)}: (t,q^i,q^i_t) \mapsto
(t,q^i,\dot t=1,\dot q^i=q^i_t, (\dot t)_\m=K_\m^0, (\dot
q^i)_\m=K^i_\m).
\ee
It follows that the diagram (\ref{z291}) can be commutative only
if the components $K^0_\m$ of the connection $K$ (\ref{z290}) on
the tangent bundle $TQ\to Q$ vanish.

Since the transition functions $t\to t'$ are independent of $q^i$,
a connection
\mar{z292}\beq
\wt K= dq^\la\ot (\dr_\la +K^i_\la\dot\dr_i) \label{z292}
\eeq
with $K^0_\m=0$ may exist on the tangent bundle $TQ\to Q$ in
accordance with the transformation law
\mar{z293}\beq
{K'}_\la^i=(\dr_j q'^i K^j_\m + \dr_\m\dot q'^i) \frac{\dr
q^\m}{\dr q'^\la}.  \label{z293}
\eeq
Now the diagram (\ref{z291}) becomes commutative if the
connections $\g$ and $\wt K$ fulfill the relation
\mar{z294}\beq
\g^i_\m= K^i_\m\circ \la_{(1)}=K^i_\m(t,q^i,\dot t=1, \dot
q^i=q^i_t). \label{z294}
\eeq
It is easily seen that this relation holds globally because the
substitution of $\dot q^i=q^i_t$ in (\ref{z293}) restates the
transformation law (\ref{m175}) of a connection on the affine jet
bundle $J^1Q\to Q$. In accordance with the relation (\ref{z294}),
the desired connection $\wt K$ is an extension  of the section
$J^1\la\circ \g$ of the affine jet bundle $J^1_QTQ\to TQ$ over the
closed submanifold $J^1Q\subset TQ$ to a global section. Such an
extension always exists by virtue of Theorem \ref{mos9}, but it is
not unique. Thus, we have proved the following.

\begin{prop}\label{motion1}
In accordance with the relation (\ref{z294}), every dynamic
equation on a configuration bundle $Q\to\Bbb R$ can be written in
the form
\beq
q^i_{tt} = K^i_0\circ\la_{(1)} +q^j_t K^i_j\circ\la_{(1)},
\label{gm340}
\eeq
where $\wt K$ is the connection (\ref{z292}) on the tangent bundle
$TQ\to Q$. Conversely,
 each connection $\wt K$ (\ref{z292}) on $TQ\to Q$
defines the dynamic connection $\g$ (\ref{z294}) on the affine jet
bundle $J^1Q\to Q$ and the dynamic equation (\ref{gm340}) on a
configuration bundle  $Q\to\Bbb R$.
\end{prop}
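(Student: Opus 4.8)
The plan is to exhibit, via the imbedding $\la_{(1)}$ (\ref{z260}) and its jet prolongation $J^1\la_{(1)}$, a two-way passage between dynamic connections $\g$ (\ref{a1.38}) on the affine jet bundle $J^1Q\to Q$ and connections $\wt K$ (\ref{z292}) on the tangent bundle $TQ\to Q$, and then to feed this through the already established correspondence between dynamic connections and dynamic equations (Propositions \ref{gena51} and \ref{gena52}). The geometric core is the commutativity of the diagram (\ref{z291}). First I would record that, comparing the two composites $J^1\la_{(1)}\circ\g$ and $K\circ\la_{(1)}$, commutativity forces the components $K^0_\m$ of $K$ (\ref{z290}) to vanish; this is admissible precisely because the transition functions of the time coordinate are $t'=t+{\rm const}$, independent of $q^i$, so that a connection of the restricted shape $\wt K$ (\ref{z292}) with $K^0_\m=0$ is well defined and transforms by (\ref{z293}).

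I would treat the converse direction first, as it is purely algebraic. Starting from a connection $\wt K$ (\ref{z292}), set $\g^i_\m=K^i_\m\circ\la_{(1)}$ as in (\ref{z294}), i.e.\ evaluate the components at $\dot t=1$, $\dot q^i=q^i_t$. The only verification needed is that these $\g^i_\m$ obey the transformation law (\ref{m175}) of a dynamic connection; this is routine, since substituting $\dot q^i=q^i_t$ into (\ref{z293}) reproduces (\ref{m175}) exactly. Hence $\g$ is a genuine dynamic connection, and Proposition \ref{gena51} then produces the holonomic connection $\xi_\g$ together with its dynamic equation $q^i_{tt}=\g^i_0+q^j_t\g^i_j$ (\ref{z287}), which upon inserting (\ref{z294}) is precisely (\ref{gm340}).

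For the forward direction I would begin with an arbitrary dynamic equation, pass to its associated dynamic connection $\g$ by Proposition \ref{gena52}, and then realize $\g$ as the restriction of some $\wt K$. Here $J^1\la_{(1)}\circ\g$ is a section of the affine bundle $J^1_QTQ\to TQ$ defined only over the closed submanifold $J^1Q\subset TQ$, and I must extend it to a global section $\wt K$ satisfying (\ref{z294}). This extension step is the one non-formal point and the main obstacle: I would invoke Theorem \ref{mos9} to extend a section of an affine bundle over a closed submanifold to a global section (using the affine structure and a partition of unity). With $\wt K$ in hand, relation (\ref{z294}) holds by construction, the diagram (\ref{z291}) commutes, and the dynamic equation takes the form (\ref{gm340}). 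Finally I would emphasize that this extension is far from unique --- many $\wt K$ restrict to the same $\g$ --- so that the passage from $\wt K$ to a dynamic equation is single-valued, whereas recovering $\wt K$ from a given equation fixes it only up to its behaviour off $J^1Q$.
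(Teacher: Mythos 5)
Your proposal is correct and follows essentially the same route as the paper: the commutativity of the diagram (\ref{z291}) forcing $K^0_\m=0$, the observation that substituting $\dot q^i=q^i_t$ into (\ref{z293}) recovers (\ref{m175}) so that (\ref{z294}) is globally well defined, and the extension of the section $J^1\la_{(1)}\circ\g$ over the closed submanifold $J^1Q\subset TQ$ to a global (non-unique) section of $J^1_QTQ\to TQ$ via Theorem \ref{mos9}. Your explicit routing through Propositions \ref{gena51} and \ref{gena52} merely makes visible the link between dynamic connections and dynamic equations that the paper leaves implicit.
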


Then we come to the following theorem.

\begin{theo} \label{jp50} \mar{jp50}
Every dynamic equation (\ref{z273}) on a configuration bundle
$Q\to\Bbb R$ is equivalent to the geodesic equation
\ben
&&\ddot q^0=0, \qquad \dot q^0=1,\nonumber\\
&& \ddot q^i = K^i_\la(q^\m,\dot q^\m)\dot q^\la, \label{cqg11}
\een
on the tangent bundle $TQ$ relative to a connection $\wt K$ with
the components $K^0_\la=0$ and $K^i_\la$ (\ref{z294}). Its
solution is a geodesic curve in $Q$ which also obeys the dynamic
equation (\ref{gm340}), and {\it vice versa}.
\end{theo}

In accordance with this theorem,  the autonomous second order
equation (\ref{jp35}) in Theorem \ref{jp28} can be chosen as a
geodesic equation. It should be emphasized that, written in the
bundle coordinates $(t,q^i)$, the geodesic equation (\ref{cqg11})
and the connection $\wt K$ (\ref{z294}) are well defined with
respect to any coordinates on $Q$.

From the physical viewpoint, the most relevant dynamic equations
are the quadratic ones
\beq
\xi^i = a^i_{jk}(q^\m)q^j_t q^k_t + b^i_j(q^\m)q^j_t + f^i(q^\m).
\label{cqg12}
\eeq
This property is global due to the transformation law
(\ref{z317}). Then one can use the following two facts.

\begin{prop}\label{aff} \mar{aff}
There is one-to-one correspondence between the affine connections
$\g$ on the affine jet bundle $J^1Q\to Q$ and the linear
connections $K$ (\ref{z292}) on the tangent bundle $TQ\to Q$.
\end{prop}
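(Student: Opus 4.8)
The plan is to establish the correspondence by comparing the two transformation laws and verifying that affineness on one side is preserved under the translation between the structures. The statement asserts a bijection between affine connections $\g$ on $J^1Q\to Q$ and linear connections $\wt K$ on $TQ\to Q$ (those with $K^0_\la=0$, by the discussion preceding Proposition \ref{motion1}). The correspondence is already supplied by relation (\ref{z294}), namely $\g^i_\m=K^i_\m\circ\la_{(1)}$, so the task reduces to checking that this relation carries affine connections to linear ones and back, and that it is invertible.

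First I would write out both objects in coordinates. A linear connection $\wt K$ (\ref{z292}) on $TQ\to Q$ has components $K^i_\la=K^i_{\la\nu}(q^\m)\dot q^\nu$, linear in the fibre coordinates $(\dot t,\dot q^j)$, and an affine connection $\g$ (\ref{z299}) has components $\g^i_\la=\g^i_{\la 0}(q^\m)+\g^i_{\la j}(q^\m)q^j_t$, affine in $q^j_t$. Substituting $\dot t=1$, $\dot q^i=q^i_t$ into the linear expression $K^i_{\la\nu}\dot q^\nu=K^i_{\la 0}+K^i_{\la j}q^j_t$ shows at once that $\g^i_\m=K^i_\m\circ\la_{(1)}$ produces an affine $\g$ with $\g^i_{\la 0}=K^i_{\la 0}$ and $\g^i_{\la j}=K^i_{\la j}$. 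Conversely, given an affine $\g$, one defines $K^i_{\la 0}:=\g^i_{\la 0}$ and $K^i_{\la j}:=\g^i_{\la j}$ and extends by linearity in $\dot q^\nu$; this manifestly inverts the substitution, so the two assignments are mutually inverse on the level of coordinate components.

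The key step to verify is that these coordinate assignments are globally well defined, i.e.\ compatible with the respective transformation laws. On the $TQ$ side the law is (\ref{z293}); on the jet side it is (\ref{m175}). The paper has already observed, just after (\ref{z294}), that substituting $\dot q^i=q^i_t$ into (\ref{z293}) reproduces exactly the jet transformation law (\ref{m175}); I would invoke precisely this computation to conclude that a linear $\wt K$ (whose coefficients $K^i_{\la\nu}(q^\m)$ transform linearly) yields coefficients $\g^i_{\la\nu}(q^\m)$ transforming correctly as those of an affine connection, and conversely. Since the transition functions $t'=t+\text{const}$ are affine and $q'^i=q'^i(q^j)$ are fibre-coordinate changes, linearity in $\dot q^\nu$ is a coordinate-independent property, and likewise affineness in $q^j_t$; this is what makes the correspondence intrinsic rather than merely coordinate-bound.

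The main obstacle is the bookkeeping between the index ranges: a linear connection on $TQ$ a priori has coefficients $K^i_{\la\nu}$ with $\nu$ running over $(0,j)$, whereas the affine dynamic connection only sees the combination appearing after setting $\dot t=1$. I would check that the constraint $K^0_\la=0$ together with the $t$-independence of the transition functions guarantees no information is lost or overdetermined, so that the map $\wt K\mapsto\g$ is genuinely injective as well as surjective onto affine $\g$. Once the transformation laws are matched via the already-noted identity between (\ref{z293}) and (\ref{m175}), the bijection follows, completing the proof.
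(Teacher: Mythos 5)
Your proposal is correct and follows essentially the same route as the paper: the correspondence is the relation (\ref{z294}) read at the level of components, giving $\g^i_{\m\la}=K_\m{}^i{}_\la$ after the substitution $\dot t=1$, $\dot q^i=q^i_t$, with global well-definedness resting on the already-established fact that this substitution turns the transformation law (\ref{z293}) into (\ref{m175}). Your additional remarks on index bookkeeping and the role of $K^0_\la=0$ are a harmless elaboration of what the paper leaves implicit.
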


\begin{proof}
This correspondence is given by the relation (\ref{z294}), written
in the form
\be
&&\g^i_\m=\g^i_{\m 0} + \g^i_{\m j}q^j_t =K_\m{}^i{}_0(q^\nu)\dot t
+ K_\m{}^i{}_j(q^\nu)\dot q^j|_{\dot t=1, \dot q^i=q^i_t}=\\
&& \qquad K_\m{}^i{}_0(q^\nu) + K_\m{}^i{}_j(q^\nu)q^j_t,
\ee
i.e., $\g^i_{\m\la}= K_\m{}^i{}_\la$.
\end{proof}

In particular, if an affine dynamic connection $\g$ is symmetric,
so is the corresponding linear connection $K$.

\begin{cor} \label{c2}
Every quadratic dynamic equation (\ref{cqg12}) on a configuration
bundle $Q\to\Bbb R$ of mechanics gives rise to the geodesic
equation
\ben
&& \ddot q^0= 0, \qquad \dot q^0=1,\nonumber\\
&& \ddot q^i=
a^i_{jk}(q^\m)\dot q^j \dot q^k + b^i_j(q^\m)\dot q^j\dot q^0 +
f^i(q^\m) \dot q^0\dot q^0 \label{cqg17}
\een
on the tangent bundle $TQ$ with respect to the symmetric linear
connection
\beq
K_\la{}^0{}_\nu=0, \quad K_0{}^i{}_0= f^i, \quad
K_0{}^i{}_j=\frac12 b^i_j, \quad K_k{}^i{}_j= a^i_{kj}
\label{cqg13}
\eeq
on the tangent bundle $TQ\to Q$.
\end{cor}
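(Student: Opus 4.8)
The plan is to specialize Theorem \ref{jp50} to the quadratic case and then identify the resulting connection $\wt K$ explicitly. By Theorem \ref{jp50}, any dynamic equation (\ref{z273}), and in particular the quadratic one (\ref{cqg12}), is equivalent to a geodesic equation on $TQ$ relative to a connection $\wt K$ with $K^0_\la=0$ and with $K^i_\la$ determined by the relation (\ref{z294}), namely $\g^i_\m = K^i_\m\circ\la_{(1)}$. So the whole task reduces to reading off the dynamic connection $\g$ attached to the quadratic equation (\ref{cqg12}) and then solving the relation (\ref{z294}) for a \emph{linear} $K$.

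First I would recover the dynamic connection from the dynamic equation. The quadratic $\xi^i = a^i_{jk}q^j_tq^k_t + b^i_jq^j_t + f^i$ is, by the Example following Proposition \ref{gena52} (affine dynamic connections generate quadratic equations and vice versa), associated to an affine dynamic connection $\g$ on $J^1Q\to Q$. Applying Proposition \ref{gena52}, the symmetric dynamic connection $\g_\xi$ (\ref{z286}) built from $\xi$ has components $\g^i_j = \tfrac12\dr_j^t\xi^i$ and $\g^i_0 = \xi^i - \tfrac12 q^k_t\dr_k^t\xi^i$. Since $\dr_j^t\xi^i = 2a^i_{jk}q^k_t + b^i_j$, I get $\g^i_j = a^i_{jk}q^k_t + \tfrac12 b^i_j$ and, after substitution, $\g^i_0 = f^i - \tfrac12 b^i_j q^j_t$ modulo the quadratic cancellation, which I expect to reorganize cleanly into the affine form $\g^i_\m = \g^i_{\m 0} + \g^i_{\m j}q^j_t$ with $\g^i_{00}=f^i$, $\g^i_{0j}=\tfrac12 b^i_j$, $\g^i_{kj}=a^i_{kj}$. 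These are exactly the coefficients appearing in (\ref{cqg13}).

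Next I would invoke Proposition \ref{aff}, which gives the one-to-one correspondence $\g^i_{\m\la}=K_\m{}^i{}_\la$ between affine dynamic connections and linear connections $K$ on $TQ\to Q$. Reading off the identification componentwise turns the coefficients just computed into the linear connection components $K_0{}^i{}_0=f^i$, $K_0{}^i{}_j=\tfrac12 b^i_j$, $K_k{}^i{}_j=a^i_{kj}$, together with $K_\la{}^0{}_\nu=0$ from the $K^0_\la=0$ condition in Theorem \ref{jp50}. Feeding these into the geodesic equation $\ddot q^i = K^i_\la\dot q^\la = K_\nu{}^i{}_\la\dot q^\nu\dot q^\la$ and expanding over the index splitting $\la=(0,j)$ reproduces (\ref{cqg17}) directly, with the $a,b,f$ terms carrying the appropriate factors of $\dot q^0$; symmetry of $K$ follows because the affine $\g$ is symmetric (as noted right after Proposition \ref{aff}) precisely when the quadratic equation's coefficients satisfy $a^i_{jk}=a^i_{kj}$, which is automatic after symmetrization.

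The main obstacle is essentially bookkeeping rather than conceptual: I must keep the index conventions of the compact notation $\la=(0,i)$ consistent, correctly track which slot of $K_\m{}^i{}_\la$ is the form index versus the fibre index, and verify that the substitution $\dot t=1$, $\dot q^i=q^i_t$ in (\ref{z294}) reproduces the quadratic structure without spurious cross terms. The only genuine point to check is that the symmetric part of $\g$ is the one selected, so that the correspondence of Proposition \ref{aff} lands on a \emph{symmetric} linear $K$; this is guaranteed because Proposition \ref{gena52} produces a symmetric dynamic connection and Proposition \ref{aff} preserves symmetry, so the corollary follows at once.
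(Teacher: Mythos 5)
Your proposal is correct and follows essentially the same route the paper intends: Theorem \ref{jp50} together with Proposition \ref{gena52} to produce the symmetric dynamic connection of the quadratic equation, then Proposition \ref{aff} to convert its affine coefficients into the symmetric linear connection (\ref{cqg13}). One small slip: carrying out the cancellation in $\g^i_0=\xi^i-\frac12 q^j_t\dr^t_j\xi^i$ actually yields $\g^i_0=f^i+\frac12 b^i_jq^j_t$ (plus sign), which is what is consistent with your stated $\g^i_{0j}=\frac12 b^i_j$ and with (\ref{cqg13}), so your final identification stands.
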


The geodesic equation (\ref{cqg17}), however, is not unique for
the dynamic equation (\ref{cqg12}).

\begin{prop} \label{jp40}
Any quadratic dynamic equation (\ref{cqg12}), being equivalent to
the geodesic equation with respect to the symmetric linear
connection $\wt K$ (\ref{cqg13}), also is equivalent to the
geodesic equation with respect to an affine connection $K'$ on
$TQ\to Q$ which differs from $\wt K$ (\ref{cqg13}) in a soldering
form $\si$ on $TQ\to Q$ with the components
\be
\si^0_\la= 0, \qquad \si^i_k= h^i_k+(s-1) h^i_k\dot q^0, \qquad
\si^i_0= -s h^i_k\dot q^k -h^i_0\dot q^0 + h^i_0,
\ee
where $s$ and $h^i_\la$ are local functions on $Q$.
\end{prop}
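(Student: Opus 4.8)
The proposition asserts that the geodesic representation of a quadratic dynamic equation furnished by Corollary \ref{c2} is far from unique: one may add to the linear connection $\wt K$ a suitable soldering form without altering the induced equation on $Q\to\Bbb R$. The plan is to characterize exactly which soldering forms leave the dynamic equation intact, verify that the displayed $\si$ is of this kind, and confirm that the resulting $K'$ is still affine.

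First I would recall, from Theorem \ref{jp50} and the relation (\ref{z294}), that a connection $\wt K$ (\ref{z292}) on $TQ\to Q$ with vanishing components $K^0_\la=0$ produces its dynamic equation (\ref{gm340}) by restricting the geodesic equation (\ref{cqg11}), $\ddot q^i=K^i_\la\dot q^\la$, to the image $\la_{(1)}(J^1Q)\subset TQ$, i.e.\ by setting $\dot q^0=1$, $\dot q^i=q^i_t$. Consequently the two connections $\wt K$ and $K'=\wt K+\si$, both with vanishing $0$-components, yield one and the same dynamic equation precisely when the soldering form $\si=\si^i_\la\,dq^\la\ot\dot\dr_i$ contributes nothing after this restriction, namely when
\be
\si^0_\la=0, \qquad (\si^i_0+\si^i_k\dot q^k)\mid_{\dot q^0=1}=0.
\ee
This is the key reduction: the difference of the two geodesic equations must vanish not on all of $TQ$ but only along $\la_{(1)}(J^1Q)$.

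It then remains to verify that the displayed components satisfy this constraint. Setting $\dot q^0=1$ in the given expressions yields $\si^i_k=h^i_k+(s-1)h^i_k=s\,h^i_k$ and $\si^i_0=-s\,h^i_k\dot q^k-h^i_0+h^i_0=-s\,h^i_k\dot q^k$, whence $\si^i_0+\si^i_k\dot q^k=-s\,h^i_k\dot q^k+s\,h^i_k\dot q^k=0$, while $\si^0_\la=0$ holds by hypothesis. Hence the geodesic equation for $K'=\wt K+\si$ restricts along $\la_{(1)}$ to exactly the quadratic equation (\ref{cqg12}), so $K'$ and $\wt K$ represent the same dynamic equation, establishing the asserted non-uniqueness through the free local data $s$ and $h^i_\la$.

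Finally I would check that $K'$ is genuinely an affine connection on $TQ\to Q$. Since each component $\si^i_\la$ is affine (of degree at most one) in the fibre coordinates $\dot q^\m$, and $\wt K$ (\ref{cqg13}) is linear in $\dot q^\m$, the sum $K'^i_\la=K^i_\la+\si^i_\la$ is again affine. The step I expect to demand the most care is confirming that the displayed $\si^i_\la$ actually transform as the components of a soldering form on $TQ\to Q$ under the admissible coordinate changes $t'=t+\text{const}$, $q'^i=q'^i(t,q^j)$; this is the routine but slightly delicate transformation-law computation. Once it is in place, $K'=\wt K+\si$ is automatically a connection, because the connections on $TQ\to Q$ form an affine space modelled on the soldering forms, and the proof is complete.
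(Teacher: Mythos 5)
Your argument is correct and is essentially the intended one: the paper prints no proof of this proposition (it only remarks that it ``also can be deduced'' from Lemma \ref{c3}), and the substance is precisely your observation that, since $\si^0_\la=0$ and $\si^i_0\dot q^0+\si^i_k\dot q^k=0$ on the submanifold $\dot q^0=1$, $\dot q^j=q^j_t$, the geodesic equations of $\wt K$ and $K'=\wt K+\si$ have the same restriction to $\la_{(1)}(J^1Q)$ and hence induce the same dynamic equation (\ref{cqg12}), while $K'$ remains affine. The transformation-law check you defer is harmless here because $s$ and $h^i_\la$ are only required to be local functions on $Q$, so $\si$ need only be a locally defined soldering form; the route via Lemma \ref{c3} that the paper hints at merely repackages the same computation by first restricting $\si$ to an affine vertical vector field on $J^1Q\to Q$ annihilated by contraction with $\la_{(1)}$ and then re-extending it to $TQ\to Q$.
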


Proposition \ref{jp40} also can be deduced from the following
lemma.

\begin{lem}\label{c3}
Every affine vertical vector field
\beq
\si= [f^i(q^\m) + b^i_j(q^\m)q^j_t]\dr_i^t \label{gm364}
\eeq
on the affine jet bundle $J^1Q\to Q$ is extended to the soldering
form
\beq
\si=(f^idt + b^i_kdq^k)\ot\dot \dr_i \label{gm481}
\eeq
on the tangent bundle $TQ\to Q$.
\end{lem}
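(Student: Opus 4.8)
The plan is to produce the object in (\ref{gm481}) as a bona fide soldering form on the vector bundle $TQ\to Q$ and to verify that it restricts to the affine vertical vector field (\ref{gm364}) along $J^1Q\subset TQ$, in exact analogy with the relation (\ref{z294}) and Proposition \ref{motion1}. First I recall that a soldering form on $TQ\to Q$ is a $V_QTQ$-valued horizontal one-form, i.e. a section of $T^*Q\ot_Q V_QTQ$, of the shape $\si=\si^\m_\la\,dq^\la\ot\dot\dr_\m$; since the transition functions of $t$ are independent of $q^i$, the components $\si^0_\la$ may consistently be required to vanish, leaving $\si=\si^i_\la\,dq^\la\ot\dot\dr_i$. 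The candidate is the form (\ref{gm481}), namely $\si^i_0=f^i$ and $\si^i_k=b^i_k$.

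The key step is to contract the horizontal part of (\ref{gm481}) with the canonical imbedding $\la_{(1)}=\dr_t+q^j_t\dr_j$ (\ref{z920}):
\[
\la_{(1)}\rfloor\si=(f^i+b^i_kq^k_t)\,\dot\dr_i .
\]
Under the canonical splitting $\al$ (\ref{a1.4}), which identifies the fibre directions $\dot\dr_i$ of $TQ\to Q$ along $J^1Q$ with the vertical basis $\dr^t_i$ of $V_QJ^1Q$, the right-hand side is precisely the affine vertical vector field (\ref{gm364}). This is the sense in which (\ref{gm364}) ``is extended to'' (\ref{gm481}), so the only remaining point is that (\ref{gm481}) be a globally well-defined soldering form.

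For that I would compare transformation laws. Reading off the law for (\ref{gm364}) from that of $\dr^t_i$ (via $\al$) and of the jet coordinate (\ref{100}), and separating the $q_t$-free and $q_t$-linear parts, gives
\[
{b'}^m_l\,\dr_k{q'}^l=b^i_k\,\dr_i{q'}^m ,\qquad {f'}^m=f^i\,\dr_i{q'}^m-{b'}^m_l\,\dr_t{q'}^l .
\]
Transforming (\ref{gm481}) directly on $TQ$, using $\dot\dr_i=\dr_i{q'}^m\,\dot\dr'_m$ together with $dt=dt'$ and $dq^k=\dr_{t'}q^k\,dt'+\frac{\dr q^k}{\dr {q'}^l}\,d{q'}^l$, produces ${\si'}^m_l=b^i_k\frac{\dr q^k}{\dr {q'}^l}\dr_i{q'}^m$ and ${\si'}^m_0=f^i\,\dr_i{q'}^m+b^i_k\,\dr_{t'}q^k\,\dr_i{q'}^m$. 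The $\si^m_l$ components agree with the $b$-law immediately.

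The main obstacle is the $dt$-component: matching ${\si'}^m_0$ with ${f'}^m$ amounts to showing
\[
-{b'}^m_l\,\dr_t{q'}^l=b^i_k\,\dr_{t'}q^k\,\dr_i{q'}^m .
\]
I expect to obtain this from the identity $\dr_t{q'}^l=-\dr_k{q'}^l\,\dr_{t'}q^k$, which follows by differentiating the inverse coordinate change and using $t'=t+\,$const (hence $dt=dt'$); substituting it and then the $b$-law closes the computation. Thus it is precisely the $t'=t+$const structure of fibre bundles over $\Bbb R$ that forces the $f$-components to agree, so (\ref{gm481}) is a well-defined soldering form restricting to (\ref{gm364}), as claimed.
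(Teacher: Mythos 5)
Your proof is correct. The paper states Lemma \ref{c3} without proof, and your argument --- exhibiting (\ref{gm481}) as a candidate soldering form with vanishing $\dot\dr_0$-components, recovering (\ref{gm364}) by contraction with $\la_{(1)}$ (\ref{z920}) under the identification (\ref{a1.4}), and then matching the transformation law of $(f^i,b^i_k)$ read off from (\ref{100}) against that of the components of the soldering form --- is precisely the comparison-of-transformation-laws computation the paper itself carries out for the analogous statements (\ref{z293})--(\ref{z294}) and Proposition \ref{aff}; the two identities you invoke, $\dr_kq'^l\,\frac{\dr q^k}{\dr q'^n}=\dl^l_n$ and $\dr_tq'^l=-\dr_kq'^l\,\dr_{t'}q^k$, are both correct consequences of $t'=t+{\rm const}$ and close the argument.
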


Now let us extend our inspection of dynamic equations to
connections on the tangent bundle $TM\to M$ of the typical fibre
$M$ of a configuration bundle $Q\to \Bbb R$. In this case, the
relationship fails to be  canonical, but depends on a
trivialization (\ref{gm219}) of $Q\to\Bbb R$.

Given such a trivialization, let $(t,\rrq^i)$ be the associated
coordinates on $Q$, where $\rrq^i$ are coordinates on $M$ with
transition functions independent of $t$.  The corresponding
trivialization (\ref{jp2}) of $J^1Q\to \Bbb R$ takes place in the
coordinates $(t,\rrq^i,\dot\rrq^i)$, where $\dot\rrq^i$ are
coordinates on $TM$. With respect to these coordinates, the
transformation law (\ref{m175}) of a dynamic connection $\g$ on
the affine jet bundle $J^1Q\to Q$ reads
\be
{\g'}_0^i =\frac{\dr\rrq'^i}{\dr\rrq^j}\g_0^j \qquad \g'^i_k =
\left(\frac{\dr\rrq'^i}{\dr \rrq^j}\g^j_n +\frac{\dr \dot
\rrq'^i}{\dr \rrq^n}\right)\frac{\dr\rrq^n}{\dr{\rrq'}^k}.
\ee
It follows that, given a trivialization of $Q\to\Bbb R$, a
connection $\g$ on $J^1Q\to Q$ defines the  time-dependent
vertical vector field
\be
\g^i_0(t,\rrq^j,\dot\rrq^j)\frac{\dr}{\dr \dot\rrq^i}:\Bbb R\times
TM\to VTM
\ee
 and the time-dependent connection
\beq
d\rrq^k\ot\left(\frac{\dr}{\dr \rrq^k} +\g^i_k(t,\rrq^j,
\dot\rrq^j) \frac{\dr}{\dr \dot\rrq^i}\right): \Bbb R\times TM\to
J^1TM\subset TTM \label{jp45}
\eeq
 on the tangent
bundle $TM\to M$.

Conversely, let us consider a connection
\be
\ol K=d\rrq^k\ot\left(\frac{\dr}{\dr \rrq^k} +\ol K^i_k(\rrq^j,
\dot\rrq^j) \frac{\dr}{\dr \dot\rrq^i}\right)
\ee
on the tangent bundle $TM\to M$. Given the above-mentioned
trivialization of the configuration bundle $Q\to\Bbb R$, the
connection $\ol K$ defines the connection $\wt K$ (\ref{z292})
with the components
\be
K_0^i=0, \qquad K^i_k=\ol K^i_k,
\ee
on the tangent bundle $TQ\to Q$. The corresponding dynamic
connection $\g$ on the affine jet bundle $J^1Q\to Q$ reads
\mar{m177}\beq
\g_0^i=0, \qquad \g^i_k=\ol K^i_k. \label{m177}
\eeq

Using the transformation law (\ref{m175}), one can extend the
expression (\ref{m177}) to arbitrary bundle coordinates $(t,q^i)$
on the configuration space $Q$ as follows:
\ben
&&\g^i_k=\left[\frac{\dr q^i}{\dr\rrq^j}\ol K^j_n(\rrq^j(q^r),
\dot\rrq^j(q^r,q^r_t)) +\frac{\dr^2 q^i}{\dr\rrq^n\dr\rrq^j}
\dot\rrq^j +
\frac{\dr\G^i}{\dr\rrq^n}\right]\dr_k\rrq^n,  \label{m178}\\
&&\g_0^i = \dr_t\G^i +\dr_j\G^iq^j_t - \g^i_k\G^k, \nonumber
\een
where $\G^i=\dr_tq^i(t,\rrq^j)$ is the connection on $Q\to\Bbb R$,
corresponding to a given trivialization of $Q$, i.e., $\G^i=0$
relative to $(t,\rrq^i)$. The dynamic equation on $Q$ defined by
the dynamic connection (\ref{m178})  takes the form
\mar{m179}\beq
q^i_{tt}=\dr_t\G^i +q^j_t\dr_j\G^i + \g^i_k(q^k_t-\G^k).
\label{m179}
\eeq
By construction, it is a conservative dynamic equation. Thus, we
have proved the following.

\begin{prop}\label{jp46} \mar{jp46}
Any connection $\ol K$ on the typical fibre $M$ of a configuration
bundle $Q\to \Bbb R$ yields a conservative dynamic equation
(\ref{m179}) on $Q$.
\end{prop}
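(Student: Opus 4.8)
The plan is to read the dynamic equation straight off the construction that precedes the statement, the only substantive point being the verification that it is conservative. First I would fix a trivialization (\ref{gm219}) of $Q\to\Bbb R$, with associated coordinates $(t,\rrq^i)$ in which the transition functions of the $\rrq^i$ are independent of $t$. Transporting the given connection $\ol K$ on $TM\to M$ to the tangent bundle $TQ\to Q$ by setting $K^0_\la=0$, $K^i_k=\ol K^i_k$ produces a connection $\wt K$ of the form (\ref{z292}); by Proposition \ref{motion1}, the relation (\ref{z294}) then yields a dynamic connection $\g$ on the affine jet bundle $J^1Q\to Q$, whose components in the trivialization coordinates are $\g^i_0=0$, $\g^i_k=\ol K^i_k$ as recorded in (\ref{m177}).

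Next I would feed $\g$ into Proposition \ref{gena51} to obtain the holonomic connection $\xi_\g$ (\ref{z281}) and hence the second order dynamic equation (\ref{z287}) on $Q\to\Bbb R$. In the coordinates $(t,\rrq^i,\dot\rrq^i)$ on $J^1Q=\Bbb R\times TM$ the associated horizontal vector field (\ref{a1.30}) is $\xi_\g=\dr_t+\dot\rrq^i\dr_i+\dot\rrq^j\ol K^i_j\dr_i^t$, where $\dr_i^t=\dr/\dr\dot\rrq^i$.

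The decisive step is the conservativeness check. Because $\ol K^i_j$ are functions of the coordinates on $M$ and $TM$ alone and carry no explicit dependence on $t$, the field $\xi_\g$ above is projectable over $M$, its projection being the autonomous second order equation $\Xi=\dot\rrq^i\dr_i+\dot\rrq^j\ol K^i_j\dot\dr_i$ of the type (\ref{gm200}) on the typical fibre. By the very definition of a conservative dynamic equation this exhibits $\xi_\g$ as conservative, and under the isomorphism (\ref{jp2}) it reduces precisely to the equation $\xi_\Xi$ (\ref{jp27}) attached to $\Xi$.

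Finally, to record the equation in arbitrary bundle coordinates $(t,q^i)$, I would rewrite $\g$ through the transformation law (\ref{m175}), with $\G^i=\dr_tq^i(t,\rrq^j)$ the connection on $Q\to\Bbb R$ attached to the chosen trivialization; this produces the components (\ref{m178}) and the dynamic equation (\ref{m179}). The one place demanding care is confirming that this coordinate-free extension is consistent, i.e.\ that substituting $\dot q^i=q^i_t$ into the tensorial law (\ref{z293}) for $\wt K$ reproduces (\ref{m175}) for $\g$; I expect the bookkeeping with the second-derivative term $\dr^2 q^i/\dr\rrq^n\dr\rrq^j$ to be the only genuinely laborious point, since the existence and the conservativeness of the equation are immediate from the construction.
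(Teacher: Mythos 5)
Your proposal is correct and follows the paper's own route: transport $\ol K$ to the connection $\wt K$ on $TQ\to Q$ with $K^0_\la=0$, $K^i_k=\ol K^i_k$, pass to the dynamic connection (\ref{m177}) via (\ref{z294}), extend by the transformation law (\ref{m175}) to (\ref{m178}), and read off the dynamic equation (\ref{m179}). Your explicit remark that $\xi_\g$ is projectable over $M$ because the $\ol K^i_j$ carry no $t$-dependence is precisely what the paper compresses into the phrase ``by construction, it is a conservative dynamic equation.''
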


\section{Reference frames}

From the physical viewpoint, a reference frame in non-relativistic
mechanics determines a tangent vector at each point of a
configuration space $Q$, which characterizes the velocity of an
observer at this point. This speculation leads to the following
mathematical definition of a reference frame in mechanics
\cite{book10,book98,massa,sard98}.

\begin{defi}\label{gn10} \mar{gn10}
A non-relativistic reference frame is a connection $\G$ on a
configuration space $Q\to\Bbb R$.
\end{defi}

By virtue of this definition, one can think of the horizontal
vector field (\ref{a1.10}) associated with a connection $\G$ on
$Q\to\Bbb R$ as being a family of observers, while the
corresponding covariant differential (\ref{z279}):
\be
\dot q^i_\G= D^\G(q^i_t)= q^i_t-\G^i,
\ee
determines the relative velocity with respect to a reference frame
$\G$. Accordingly, $q^i_t$ are regarded as the absolute
velocities.

In particular, given a motion $c:\Bbb R\to Q$, its covariant
derivative $\nabla^\G c$ (\ref{+190}) with respect to a connection
$\G$ is a velocity of this motion relative to a reference frame
$\G$. For instance, if $c$ is an integral section for a connection
$\G$, a velocity of the motion $c$ relative to a reference frame
$\G$ is equal to 0. Conversely, every motion $c:\Bbb R\to Q$
defines a reference frame $\G_c$  such that a velocity of $c$
relative to $\G_c$ vanishes. This reference frame $\G_c$ is an
extension of a section $c(\Bbb R)\to J^1Q$ of an affine jet bundle
$J^1Q\to Q$ over the closed submanifold $c(\Bbb R)\in Q$ to a
global section in accordance with Theorem \ref{mos9}.

By virtue of Theorem \ref{gn1}, any reference frame $\G$ on a
configuration bundle $Q\to\Bbb R$ is associated with an atlas of
local constant trivializations, and {\it vice versa}. A connection
$\G$ takes the form $\G=\dr_t$ (\ref{z271}) with respect to the
corresponding coordinates $(t,\rrq^i)$, whose transition functions
$\rrq^i\to \rrq'^i$ are independent of time. One can think of
these coordinates as also being a reference frame, corresponding
to the connection (\ref{z271}). They are called the adapted
coordinates to a reference frame $\G$. Thus, we come to the
following definition, equivalent to Definition \ref{gn10}.

\begin{defi}\label{gn11}
In mechanics, a reference frame is an atlas of local constant
trivializations of a configuration bundle $Q\to\Bbb R$.
\end{defi}

In particular, with respect to the coordinates $\rrq^i$ adapted to
a reference frame $\G$, the velocities relative to this reference
frame coincide with the absolute ones
\be
D^\G(\rrq^i_t)={\dot \rrq}^i_\G=\rrq^i_t.
\ee

\begin{rem} \label{016} \mar{016} By analogy with gauge field theory,
we agree to call transformations of bundle atlases of a fibre
bundle $Q\to\Bbb R$ the gauge transformations. To be precise, one
should call them passive gauge transformations, while by active
gauge transformations are meant automorphisms of a fibre bundle.
In mechanics, gauge transformations also are reference frame
transformations in accordance with Theorem \ref{gn1}. An object on
a fibre bundle is said to be gauge covariant  or, simply,
covariant if its definition is atlas independent. It is called
gauge invariant if its form is maintained under atlas
transformations.
\end{rem}

A reference frame is said to be complete if the associated
connection $\G$ is complete. By virtue of Proposition \ref{compl},
every complete reference frame defines a trivialization  of a
bundle $Q\to\Bbb R$, and {\it vice versa}.

\begin{rem}
Given a reference frame $\G$, one should solve the equations
\mar{gm300}\ben
&& \G^i(t, q^j(t,\rrq^a))=\frac{\dr q^i(t,\rrq^a)}{\dr t}, \label{gm300a}\\
&& \frac{\dr \rrq^a(t,q^j)}{\dr q^i}\G^i(t,q^j) +\frac{\dr \rrq^a(t,q^j)}{\dr
t}=0 \label{gm300b}
\een
in order to find the coordinates $(t,\rrq^a)$ adapted to $\G$. Let
$(t,q^a_1)$ and $(t,q^i_2)$ be the adapted coordinates   for
reference frames $\G_1$ and $\G_2$, respectively. In accordance
with the equality (\ref{gm300b}),  the components $\G^i_1$ of the
connection $\G_1$ with respect to the coordinates $(t,q^i_2)$ and
the components $\G^a_2$ of the connection $\G_2$ with respect to
the coordinates $(t,q^a_1)$ fulfill the relation
\be
\frac{\dr q^a_1}{\dr q^i_2}\G^i_1 +\G^a_2=0.
\ee
\end{rem}

Using the relations (\ref{gm300a}) -- (\ref{gm300b}), one can
rewrite the coordinate transformation law (\ref{z317}) of dynamic
equations as follows. Let
\mar{gm302}\beq
\rrq^a_{tt}=\ol\xi^a \label{gm302}
\eeq
be a dynamic equation on a configuration space $Q$ written with
respect to a reference frame $(t,\rrq^n)$. Then, relative to
arbitrary bundle coordinates $(t,q^i)$ on $Q\to\Bbb R$, the
dynamic equation (\ref{gm302}) takes the form
\beq
 q^i_{tt}=d_t\G^i +\dr_j\G^i(q^j_t-\G^j) -
\frac{\dr q^i}{\dr\rrq^a}\frac{\dr\rrq^a}{\dr q^j\dr
q^k}(q^j_t-\G^j) (q^k_t-\G^k) + \frac{\dr q^i}{\dr\rrq^a}\ol\xi^a,
\label{gm304}
\eeq
where $\G$ is a connection corresponding to the reference frame
$(t,\rrq^n)$. The dynamic equation (\ref{gm304}) can be expressed
in the relative velocities $\dot q^i_\G=q^i_t-\G^i$ with respect
to the initial reference frame $(t,\rrq^a)$. We have
\beq
 d_t\dot q^i_\G=\dr_j\G^i\dot q^j_\G -
\frac{\dr q^i}{\dr\rrq^a}\frac{\dr\rrq^a}{\dr q^j\dr q^k}\dot
q^j_\G \dot q^k_\G + \frac{\dr q^i}{\dr\rrq^a}\ol\xi^a(t,q^j,\dot
q^j_\G). \label{gm307}
\eeq
Accordingly, any dynamic equation (\ref{z273}) can be expressed in
the relative velocities $\dot q^i_\G=q^i_t-\G^i$ with respect to
an arbitrary reference frame $\G$ as follows:
\beq
d_t\dot q^i_\G= (\xi-J\G)^i_t=\xi^i-d_t\G, \label{gm308}
\eeq
where $J\G$ is the prolongation (\ref{gm217'}) of a connection
$\G$ onto the jet bundle $J^1Q\to\Bbb R$.

For instance, let us consider the following particular reference
frame $\G$ for a dynamic equation $\xi$. The covariant derivative
of a reference frame $\G$ with respect to the corresponding
dynamic connection $\g_\xi$ (\ref{z286}) reads
\ben
&&\nabla^\g\G=  Q\to T^*Q\times V_QJ^1Q, \label{jp57}\\
&& \nabla^\g\G= \nabla^\g_\la\G^k dq^\la\ot\dr_k, \qquad
\nabla^\g_\la\G^k = \dr_\la\G^k - \g^k_\la\circ\G. \nonumber
\een
A connection $\G$ is called a geodesic reference frame for the
dynamic equation $\xi$ if
\beq
\G\rfloor\nabla^\g \G= \G^\la(\dr_\la\G^k -
\g^k_\la\circ\G)=(d_t\G^i-\xi^i\circ \G)\dr_i=0. \label{gm310}
\eeq

\begin{prop}
Integral sections $c$ for a reference frame $\G$ are solutions of
a dynamic equation $\xi$ iff $\G$ is a geodesic reference frame
for $\xi$.
\end{prop}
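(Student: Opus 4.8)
The plan is to reduce both sides of the equivalence to a single pointwise identity on $Q$, namely the vanishing of $d_t\G^i-\xi^i\circ\G$, and then to pass between ``holds along an integral curve'' and ``holds everywhere on $Q$'' by invoking the existence of integral curves through each point.

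First I would record the three coordinate conditions involved. An integral section $c$ for the reference frame $\G$ is, under the identification of $\G$ with the vector field $\dr_t+\G^i\dr_i$ (\ref{a1.10}), a section satisfying $\dot c^i=\G^i\circ c$. A curve $c$ is a solution of the dynamic equation $\xi$ precisely when $\ddot c^i=\xi^i\circ\dot c$, by (\ref{z274}). Finally, $\G$ is a geodesic reference frame for $\xi$ exactly when the function $d_t\G^i-\xi^i\circ\G$ vanishes identically on $Q$, this being the content of (\ref{gm310}).

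The key step is a direct differentiation of the integral section relation. Differentiating $\dot c^i(t)=\G^i(t,c(t))$ in $t$ gives $\ddot c^i=\dr_t\G^i+\dr_j\G^i\,\dot c^j$, and substituting $\dot c^j=\G^j\circ c$ yields $\ddot c^i=(\dr_t\G^i+\G^j\dr_j\G^i)\circ c=(d_t\G^i)\circ c$, with $d_t\G^i$ understood at $q^j_t=\G^j$. On the other hand, along an integral section one has $\dot c=\G\circ c$, so $\xi^i\circ\dot c=(\xi^i\circ\G)\circ c$. Hence $c$ is a solution of $\xi$ iff $(d_t\G^i-\xi^i\circ\G)\circ c=0$ for all $t$, i.e.\ iff the geodesic condition (\ref{gm310}) holds along the image of $c$.

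The two implications then follow at once. If $\G$ is a geodesic reference frame, then $d_t\G^i-\xi^i\circ\G$ vanishes everywhere on $Q$, hence in particular along any integral section $c$, so every such $c$ solves $\xi$. Conversely, suppose every integral section of $\G$ is a solution of $\xi$. The horizontal vector field $\dr_t+\G^i\dr_i$ is smooth and nowhere vanishing, so by the existence theorem for ordinary differential equations an integral curve of $\G$ passes through every point $(t_0,q_0)\in Q$; along it the identity $(d_t\G^i-\xi^i\circ\G)\circ c=0$ holds, and evaluating at $t_0$ gives $(d_t\G^i-\xi^i\circ\G)(t_0,q_0)=0$. Since $(t_0,q_0)$ is arbitrary, the geodesic condition holds on all of $Q$, so $\G$ is a geodesic reference frame for $\xi$. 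The only point requiring care is this last passage from integral curves to all of $Q$, which rests solely on the fact that the integral curves of $\G$ fill out the configuration space.
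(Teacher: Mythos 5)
Your proof is correct. The paper states this proposition without any proof, and your argument supplies exactly the expected verification: differentiating $\dot c^i=\G^i\circ c$ to get $\ddot c^i=(d_t\G^i)\circ c$, comparing with $\xi^i\circ\dot c=(\xi^i\circ\G)\circ c$, and then passing from the identity along integral curves to the pointwise condition (\ref{gm310}) via the existence of an integral curve of the nowhere-vanishing field $\dr_t+\G^i\dr_i$ through every point of $Q$.
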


\begin{rem}
The left- and right-hand sides of the equation (\ref{gm308})
separately are not well-behaved objects.  This equation is brought
into the covariant form (\ref{gm389}).
\end{rem}

Reference frames play a prominent role in many constructions of
mechanics. They enable us to write the covariant forms:
(\ref{gm390}) -- (\ref{gm389}) of dynamic equations and
(\ref{m46'}) of Hamiltonians of mechanics.

With a reference frame, we obtain the converse of Theorem
\ref{jp50}.

\begin{theo}\label{jp51} \mar{jp51}
Given a reference frame $\G$, any connection $K$ (\ref{z290}) on
the tangent bundle $TQ\to Q$ defines a dynamic equation
\be
\xi^i= (K^i_\la -\G^i K^0_\la)\dot q^\la\mid_{\dot q^0=1,\dot
q^j=q^j_t}.
\ee
\end{theo}

This theorem is a corollary of Proposition \ref{motion1} and the
following lemma.

\begin{lem} \label{jp51'}
Given a connection $\G$ on a fibre bundle $Q\to\Bbb R$ and a
connection $K$ on the tangent bundle $TQ\to Q$, there is the
connection $\wt K$ on $TQ\to Q$ with the components
\be
\wt K^0_\la =0, \qquad \wt K^i_\la = K^i_\la - \G^iK^0_\la.
\ee
\end{lem}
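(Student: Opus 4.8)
The plan is to realize $\wt K$ as the given connection $K$ minus a single soldering form, so that the whole lemma reduces to checking that one tensorial object is globally well defined. Recall that the connections on the tangent bundle $TQ\to Q$ constitute an affine space modelled over the soldering forms on $TQ\to Q$, i.e. the tangent-valued one-forms taking their values in $V_QTQ$; subtracting such a form from $K$ again yields a connection. I would therefore put
\be
\si = K^0_\la\, dq^\la \ot (\dot\dr_0 + \G^i\dot\dr_i), \qquad \wt K = K - \si.
\ee
Reading off the components immediately gives $\wt K^0_\la = K^0_\la - K^0_\la = 0$ and $\wt K^i_\la = K^i_\la - \G^i K^0_\la$, which are precisely the components in the statement. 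Everything thus rests on showing that $\si$ is indeed a soldering form.

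First I would treat the covector factor $K^0_\la\, dq^\la$. The decisive structural input is that the transition functions of the base satisfy $t'=t+$const, whence $\dr_i t'=0$ and $\dr_t t'=1$. Inserting this into the transformation law of the connection $K$ (\ref{z290}) on $TQ\to Q$ and restricting to the row $\m=0$ annihilates both the inhomogeneous term and all spatial contributions, leaving $K'^0_\la = K^0_\kappa\,\dr q^\kappa/\dr q'^\la$. This is exactly the transformation rule of a (horizontal) one-form on $TQ$, so $K^0_\la\, dq^\la$ is globally defined.

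Next I would treat the vertical factor $e = \dot\dr_0 + \G^i\dot\dr_i$, which is nothing but the vertical lift to $TQ$ of the vector field $\G$ (\ref{a1.10}). Because the fibre coordinates $\dot q^\m$ of $TQ$ transform linearly, $e$ is a well-defined section of $V_QTQ$ as soon as its components $(1,\G^i)$ transform as those of a tangent vector on $Q$. Again using $\dr_i t'=0$ and $\dr_t t'=1$, the $0$-component stays equal to $1$, while the spatial components reproduce $\G'^i = \dr_t q'^i + \G^j\dr_j q'^i$, which is precisely the transformation law of the reference-frame connection $\G$. Hence $e$ is global, $\si$ is a genuine soldering form, and $\wt K = K-\si$ is the asserted connection.

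The only real obstacle is this tensoriality check, and it is governed entirely by the flatness of the time structure, namely $t'=t+$const forcing $\dr_i t'=0$: that single fact both makes the $0$-row of $K$ transform covariantly and matches the second factor of $\si$ to the transformation law of $\G$. A purely mechanical alternative would be to substitute $\wt K^i_\la = K^i_\la - \G^i K^0_\la$ directly into the connection transformation law (\ref{z293}) and verify the identity term by term, but the soldering-form argument isolates the essential point far more transparently.
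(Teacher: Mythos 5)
Your proof is correct. The paper states Lemma \ref{jp51'} without giving an argument, so there is nothing to match line by line, but your route --- writing $\wt K=K-\si$ with $\si=K^0_\la dq^\la\ot(\dot\dr_0+\G^i\dot\dr_i)$ and checking that $\si$ is a genuine soldering form on $TQ\to Q$ --- is exactly in the spirit of the surrounding text, which repeatedly modifies connections by soldering forms (cf.\ Proposition \ref{jp40} and Lemma \ref{c3}) and which uses the same two consequences of $t'=t+{\rm const}$ (tensoriality of the $0$-row of $K$, and the identification of $(1,\G^i)$ with a global vector field on $Q$) in deriving the transformation law (\ref{z293}). Both of your checks are sound, and the fallback of substituting $\wt K^i_\la=K^i_\la-\G^iK^0_\la$ directly into (\ref{z293}) does close the argument as you say.
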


\section{Free motion equations}

Let us point out the following interesting class of dynamic
equations which we agree to call the free motion equations.

\begin{defi}
We say that the dynamic equation (\ref{z273}) is a free motion
equation if there exists a reference frame $(t,\ol q^i)$ on the
configuration space $Q$
 such that this equation reads
\beq
\ol q^i_{tt}=0. \label{z280}
\eeq
\end{defi}

With respect to arbitrary bundle coordinates $(t,q^i)$, a free
motion equation takes the form
\beq
 q^i_{tt}=d_t\G^i +\dr_j\G^i(q^j_t-\G^j) -
\frac{\dr q^i}{\dr\rrq^m}\frac{\dr\rrq^m}{\dr q^j\dr
q^k}(q^j_t-\G^j) (q^k_t-\G^k),  \label{m188}
\eeq
where $\G^i=\dr_t q^i(t,\ol q^j)$ is the connection associated
with the initial frame $(t,\ol q^i)$ (cf. (\ref{gm304})).  One can
think of the right-hand side of the equation (\ref{m188}) as being
the general coordinate expression for an inertial force in
mechanics. The corresponding dynamic connection $\g_\xi$ on the
affine jet bundle $J^1Q\to Q$ reads
\ben
&& \g^i_k=\dr_k\G^i  -
\frac{\dr q^i}{\dr\rrq^m}\frac{\dr\rrq^m}{\dr q^j\dr
q^k}(q^j_t-\G^j),
\label{gm366} \\
&& \g^i_0= \dr_t\G^i +\dr_j\G^iq^j_t -\g^i_k\G^k. \nonumber
\een
It is affine. By virtue of Proposition \ref{aff}, this dynamic
connection defines a linear connection $K$ on the tangent bundle
$TQ\to Q$, whose curvature  necessarily vanishes.  Thus, we come
to the following criterion of a dynamic equation to be a free
motion equation.

\begin{prop}
If $\xi$ is a free motion equation on a configuration space $Q$,
it is quadratic, and the corresponding symmetric linear connection
(\ref{cqg13}) on the tangent bundle $TQ\to Q$ is a curvature-free
connection.
\end{prop}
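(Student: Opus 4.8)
The plan is to show the three claimed properties in turn: that a free motion equation is quadratic, that the associated linear connection $K$ on $TQ\to Q$ is curvature-free, and that this $K$ is the symmetric connection of the form (\ref{cqg13}). First I would observe that in the adapted coordinates $(t,\ol q^i)$ the free motion equation is $\ol q^i_{tt}=0$, so its coordinate expression in arbitrary bundle coordinates $(t,q^i)$ is exactly (\ref{m188}), obtained by applying the transformation law (\ref{z317}) (equivalently (\ref{gm304}) with $\ol\xi^a=0$). A direct inspection of (\ref{m188}) shows that $\xi^i$ is a polynomial of degree two in the velocities $q^j_t$, since $\G^i=\dr_tq^i(t,\ol q^j)$ depends only on $(t,q^j)$ and the second-derivative term $\frac{\dr q^i}{\dr\rrq^m}\frac{\dr\rrq^m}{\dr q^j\dr q^k}$ is a coefficient function on $Q$ contracted with $(q^j_t-\G^j)(q^k_t-\G^k)$. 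Thus $\xi$ is quadratic, and this is a global statement by the transformation law (\ref{z317}), matching the form (\ref{cqg12}).

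Next I would read off the dynamic connection. Since $\xi$ is quadratic, the associated dynamic connection $\g_\xi$ on $J^1Q\to Q$ from Proposition~\ref{gena52} is affine; concretely its components are (\ref{gm366}), which I would verify by substituting (\ref{m188}) into the defining formula (\ref{z286}) for $\g_\xi$. Being affine, $\g_\xi$ falls under Proposition~\ref{aff}, which gives a one-to-one correspondence between affine dynamic connections on $J^1Q\to Q$ and linear connections $K$ on $TQ\to Q$ via $\g^i_{\m\la}=K_\m{}^i{}_\la$. Because the free motion equation comes from the connection $\G=\dr_t$ in the adapted frame, in those adapted coordinates $\g_\xi$ has vanishing components, hence the corresponding $K$ has all components $\ol K^i_k=0$ in the adapted chart. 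A connection whose components vanish identically in some chart is flat in that chart, and curvature is a tensor, so the curvature of $K$ vanishes in every chart; this is the curvature-free assertion. Identifying the components through (\ref{cqg13}) then shows $K$ is the symmetric linear connection of Corollary~\ref{c2}.

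The main obstacle I anticipate is the curvature computation presented invariantly rather than by the chart trick above. If one insists on computing directly, one must use the curvature formula (\ref{gm282}) for $\g$ and check $R^i_{\la\m}=0$ for the components (\ref{gm366}); this is an identity of mixed partial derivatives of the coordinate change $q^i(t,\ol q^j)$ together with $\dr_t$-derivatives, and it holds precisely because $\g_\xi$ is the jet-level image of the flat connection $\G=\dr_t$. The cleaner argument is to note that flatness is a coordinate-independent property: in the adapted frame the dynamic equation is $\ol q^i_{tt}=0$, whose dynamic connection vanishes and whose associated linear $K$ is the trivial (zero-component) connection, manifestly curvature-free; since curvature is a tensor field, its vanishing transfers to arbitrary coordinates. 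I would therefore carry out the proof by first invoking Proposition~\ref{aff} to get a linear $K$, then using the adapted frame to conclude flatness, avoiding the brute-force verification of (\ref{gm282}).
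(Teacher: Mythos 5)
Your proposal is correct and follows essentially the same route as the paper: write the free motion equation in arbitrary bundle coordinates as (\ref{m188}) to see it is quadratic, pass to the affine dynamic connection (\ref{gm366}) and then to the linear connection $K$ via Proposition \ref{aff}, and conclude flatness. Your adapted-chart plus tensoriality argument for the vanishing of the curvature is precisely the detail the paper leaves implicit in the phrase ``whose curvature necessarily vanishes.''
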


This criterion is not a sufficient condition because it may happen
that the components of a curvature-free  symmetric linear
connection on $TQ\to Q$ vanish with respect to the coordinates on
$Q$
 which are not compatible with a fibration $Q\to\Bbb R$.

The similar criterion involves the curvature of a dynamic
connection (\ref{gm366}) of a free motion equation.

\begin{prop} \label{gena120}
If $\xi$ is a free motion equation, then the curvature $R$
(\ref{gm282}) of the corresponding dynamic connection $\g_\xi$ is
equal to 0, and so are the tensor field $\ol R$ (\ref{gm283}) and
the scalar field $\wt R$ (\ref{gm284}).
\end{prop}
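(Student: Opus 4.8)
The plan is to use the definition of a free motion equation directly: by hypothesis such an equation is the trivial equation $\ol q^i_{tt}=0$ in a suitable reference frame $(t,\ol q^i)$, so its dynamic connection is manifestly flat in the adapted coordinates, and the vanishing of the curvature then propagates to every chart by tensoriality.

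First I would evaluate the dynamic connection $\g_\xi$ in the adapted coordinates $(t,\ol q^i)$. Since the free motion equation reads $\ol\xi^i=0$ there, substitution into (\ref{z286}) of Proposition \ref{gena52} gives $\g_\xi=dt\ot\dr_t+d\ol q^j\ot\dr_j$; that is, every component $\g^i_\la$ vanishes identically. The same follows from (\ref{gm366}) on noting that in the adapted frame $\G=\dr_t$ and $q^i=\ol q^i$, so that both $\dr_k\G^i$ and the second-derivative term are zero. Feeding $\g^i_\la=0$ into the curvature formula (\ref{gm282}) then gives $R^i_{\la\m}=0$ in these coordinates; equivalently, the horizontal distribution of $\g_\xi$ is spanned by $\dr_t$ and the $\dr_j$, which is involutive, so $\g_\xi$ is flat.

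Next I would upgrade this local vanishing to a global one. The curvature $R$ of (\ref{gm282}) is a well-defined section $J^1Q\to\w^2 T^*Q\ot_{J^1Q}VQ$, hence a genuine tensor built from base quantities of $Q$ rather than a mere coordinate array. The transition from the adapted chart $(t,\ol q^i)$ to an arbitrary bundle chart $(t,q^i)$ is a fibration-preserving change $q^i=q^i(t,\ol q^j)$, under which $T^*Q$, $VQ$, and therefore $R$, transform tensorially. As $R=0$ in the adapted chart, it vanishes in every chart, so $R=0$ identically. With all $R^i_{\la\m}=0$, the definitions (\ref{gm283}) and (\ref{gm284}) immediately yield $\ol R=(R^i_{kj}q^k_t+R^i_{0j})\ol dq^j\ot\dr_i=0$ and $\wt R=R^i_{ki}q^k_t+R^i_{0i}=0$.

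The single delicate point is this tensoriality step: the coordinate change realizing the free-motion frame is time-dependent ($\dr_t q^i\neq 0$ in general), and I must check that this does not spoil the tensorial behaviour of $R$. It does not, precisely because the change still respects the fibration $Q\to\Bbb R$, keeping $t$ fixed up to a constant; the index $i$ then transforms as a $VQ$-index and $\la,\m$ as $T^*Q$-indices, which is exactly what makes $R$ a section of $\w^2 T^*Q\ot VQ$. Once this is granted, the conclusions for $R$, and hence for $\ol R$ and $\wt R$, are immediate.
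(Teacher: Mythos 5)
Your proof is correct and follows essentially the route the paper intends: the text states this proposition without a separate proof, the underlying argument being precisely your observation that all components of $\g_\xi$ vanish identically in the coordinates adapted to the free-motion reference frame, so that the curvature $R$, being a well-defined section of $\w^2 T^*Q\ot VQ$ over $J^1Q$, vanishes in every bundle chart, and $\ol R$ and $\wt R$ vanish as its contractions. Your extra care about the time-dependence of the coordinate change is well placed but, as you conclude, harmless, since such changes preserve the fibration $Q\to\Bbb R$ and the curvature of a connection transforms tensorially under them even though the connection components themselves transform affinely.
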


Proposition \ref{gena120} also fails to be a sufficient condition.
If the curvature $R$ (\ref{gm282}) of a dynamic connection
$\g_\xi$ vanishes, it may happen that components of $\g_\xi$ are
equal to 0 with respect to non-holonomic bundle coordinates on an
affine jet bundle $J^1Q\to Q$.

Nevertheless, we can formulate the necessary and sufficient
condition of the existence of a free motion equation on a
configuration space $Q$.

\begin{prop} \label{gena110}
A free motion equation on a fibre bundle $Q\to\Bbb R$ exists iff
the typical fibre $M$ of $Q$ admits a curvature-free symmetric
linear connection.
\end{prop}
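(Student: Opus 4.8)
The plan is to prove the two implications separately, using as a bridge the classical equivalence between flat torsion-free linear connections on a manifold and affine atlases, together with the transformation law (\ref{z317}) for dynamic equations and the triviality of $Q\to\Bbb R$ (Remark \ref{047}).

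For sufficiency, suppose $M$ carries a curvature-free symmetric linear connection $\ol K$. Since $\ol K$ is flat and torsion-free, each point of $M$ has a neighbourhood with coordinates $\ol q^i$ in which the components of $\ol K$ vanish, and the transition functions between any two such charts are affine. First I would write $Q=\Bbb R\times M$ and employ these charts as $t$-independent transition functions, producing an atlas of constant trivializations, i.e. a reference frame $\G$ in the sense of Theorem \ref{gn1}, with $\G=\dr_t$ in the adapted coordinates $(t,\ol q^i)$. By Proposition \ref{jp46}, $\ol K$ yields a conservative dynamic equation (\ref{m179}) on $Q$; substituting $\G^i=0$ and $\g^i_k=\ol K^i_k=0$ (the latter by (\ref{m177}) together with the vanishing of $\ol K$ in these charts) collapses (\ref{m179}) to $\ol q^i_{tt}=0$. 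Because the overlap maps are affine, the form $\ol q^i_{tt}=0$ is preserved under (\ref{z317}), so the equation reads $\ol q^i_{tt}=0$ in the whole reference frame and is therefore a free motion equation by definition.

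For necessity, suppose a free motion equation exists, so there is a reference frame $(t,\ol q^i)$ in which it reads $\ol q^i_{tt}=0$. The associated atlas of constant trivializations has $t$-independent transition functions $\ol q'^i(\ol q^j)$. Since this single global equation must read $\ol q^i_{tt}=0$ in every adapted chart, the law (\ref{z317}) with all $\dr_t$-terms absent forces $\dr_j\dr_k\ol q'^i=0$, i.e. the overlap maps are affine. These affine, $t$-independent maps are precisely transition functions on the typical fibre $M$, equipping $M$ with an affine atlas; declaring the connection components to be zero in each chart then defines a linear connection $\ol K$ on $M$ that is manifestly curvature-free and symmetric. Alternatively, one may invoke the criterion stated just before Proposition \ref{gena120}, which already gives a curvature-free symmetric linear connection (\ref{cqg13}) on $TQ\to Q$, and restrict it to the fibre via the trivialization.

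The hard part will be the global coherence in both directions: passing from the purely local flat charts of $\ol K$ to a single reference frame on all of $Q$, and conversely recognising that requiring the equation to read $\ol q^i_{tt}=0$ in a genuine reference frame (rather than in one isolated chart) forces affineness of the overlaps through (\ref{z317}). Once this affine-atlas $\leftrightarrow$ flat-symmetric-connection correspondence on $M$ is pinned down, the remaining checks — that (\ref{m179}) degenerates correctly and that $\ol q^i_{tt}=0$ is invariant under affine coordinate changes — are routine coordinate computations.
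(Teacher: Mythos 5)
Your proof is correct and follows essentially the same route as the paper's: sufficiency via Proposition \ref{jp46} applied in the flat coordinates of $\ol K$, and necessity by observing that the adapted charts of the reference frame must be glued by affine maps, so that declaring zero connection components yields a flat symmetric linear connection on $M$. The only cosmetic difference is that you extract the affineness of the overlaps directly from the transformation law (\ref{z317}), whereas the paper routes this through the affine dynamic connection of Proposition \ref{gena52} and its projection (\ref{jp45}) onto $M$ --- an alternative you note yourself.
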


\begin{proof}
Let a free motion equation take the form (\ref{z280}) with respect
to some atlas of local constant trivializations of a fibre bundle
$Q\to\Bbb R$. By virtue of Proposition \ref{gena52}, there exists
an affine dynamic connection $\g$ on the affine jet bundle
$J^1Q\to Q$ whose components relative to this atlas are equal to
0. Given a trivialization chart of this atlas, the connection $\g$
defines the curvature-free symmetric linear connection
(\ref{jp45}) on $M$.  The converse statement follows at once from
Proposition \ref{jp46}.
\end{proof}

The free motion equation (\ref{m188}) is simplified if the
coordinate transition functions $\ol q^i\to q^i$ are affine in
coordinates $\ol q^i$. Then we have
\beq
q^i_{tt}=\dr_t\G^i -\G^j\dr_j\G^i +2q^j_t\dr_j\G^i. \label{m182}
\eeq

The following lemma shows that the free motion equation
(\ref{m182}) is affine in the coordinates $q^i$ and $q^i_t$
\cite{book10,book98}.

\begin{lem}\label{gena115}
Let $(t,\rrq^a)$  be a reference frame on a configuration bundle
$Q\to \Bbb R$ and $\G$ the corresponding connection. Components
$\G^i$ of this connection with respect to another coordinate
system $(t,q^i)$ are affine functions in the coordinates $q^i$ iff
the transition functions between the coordinates $\rrq^a$ and
$q^i$ are affine.
\end{lem}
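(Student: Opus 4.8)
The plan is to reduce the whole equivalence to the single transformation law for the components of $\G$. Since $(t,\rrq^a)$ are adapted to the reference frame, Theorem \ref{gn1} gives $\G=\dr_t$ in these coordinates, i.e. $\G^a=0$. Passing to the coordinates $(t,q^i)$ through a fibrewise change $q^i=q^i(t,\rrq^a)$ (which leaves $t$ fixed), the relation (\ref{gm300a}) reads
\[
\G^i(t,q)=\dr_t q^i(t,\rrq)\big|_{\rrq=\rrq(t,q)} .
\]
Thus I must compare the affineness of $\G^i$ in $q$ with the affineness of the transition $q^i(t,\rrq)$ in $\rrq$ purely through this formula, and I would treat the two implications in turn.

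For the ``if'' direction I would simply substitute an affine transition $q^i=A^i_a(t)\rrq^a+B^i(t)$ into the displayed relation. Differentiating in $t$ at fixed $\rrq$ gives $\G^i=\dot A^i_a\rrq^a+\dot B^i$, and since the transition is now linear it inverts to $\rrq^a=(A^{-1})^a_j(q^j-B^j)$; substituting yields
\[
\G^i=(\dot A A^{-1})^i_j\,q^j+\big(\dot B^i-(\dot A A^{-1})^i_jB^j\big),
\]
which is manifestly affine in $q^j$ with $t$-dependent coefficients. This is a routine computation, and inserting it into (\ref{m182}) is precisely what renders the free motion equation affine in $(q^i,q^i_t)$.

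The converse is where the real work lies. By Theorem \ref{gn1} the coordinate curves $\rrq^a=\mathrm{const}$ are exactly the integral sections of $\G$, hence the solutions of the first order dynamic equation $q^i_t=\G^i(t,q)$ of (\ref{ggm155}). Assuming $\G^i=M^i_j(t)q^j+N^i(t)$ affine, this is a linear inhomogeneous system, so I would differentiate the identity $\dr_t q^i(t,\rrq)=\G^i(t,q(t,\rrq))$ twice in $\rrq$: the Hessian $w^i_{ab}=\dr_{\rrq^a}\dr_{\rrq^b}q^i$ obeys the homogeneous linear equation $\dr_t w^i_{ab}=M^i_j w^j_{ab}$, and is therefore transported by the fundamental matrix of the homogeneous part. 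Hence $w^i_{ab}$ vanishes for all $t$ iff it vanishes at one instant $t_0$. The main obstacle is exactly to force this vanishing: it amounts to showing that the spatial transition at the reference time $t_0$ is itself affine, equivalently that the constants of integration of the linear system are identified with the adapted coordinates $\rrq^a$ affinely rather than merely diffeomorphically. Establishing this identification --- and not the ODE bookkeeping that precedes it --- is the delicate step I expect to absorb most of the effort.
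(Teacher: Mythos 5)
The paper offers no proof of Lemma \ref{gena115} --- it is simply quoted from the monographs cited just above it --- so your proposal has to be judged on its own terms. Your reduction to the identity $\G^i(t,q)=\dr_t q^i(t,\rrq)|_{\rrq=\rrq(t,q)}$ (which is \eqref{gm300a} combined with $\G^a=0$ from Theorem \ref{gn1}) is the right starting point, and your ``if'' direction is complete and correct: substituting $q^i=A^i_a(t)\rrq^a+B^i(t)$ gives $\G^i=(\dot AA^{-1})^i_jq^j+\bigl(\dot B^i-(\dot AA^{-1})^i_jB^j\bigr)$, which is exactly what makes \eqref{m182} affine in $(q^i,q^i_t)$.

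The gap is in the converse, and you have correctly located it but not closed it --- and in fact it cannot be closed as literally stated. Take $Q=\Bbb R^2$, $\G=\dr_t$ in an adapted coordinate $\rrq$, and set $q=\rrq^3+\rrq$ (time-independent, non-affine, a global diffeomorphism of $\Bbb R$). The component of $\G$ in the chart $(t,q)$ is $\dr_tq(t,\rrq)=0$, which is affine in $q$, yet the transition function is not affine; any time-independent non-affine reparametrization of an adapted chart produces such a counterexample to ``only if''. The resolution is that the adapted coordinates of a reference frame are determined only up to time-independent diffeomorphisms, so the honest converse is existential: if $\G^i$ is affine in $q^i$, the reference frame \emph{admits} an adapted chart whose transition to $(t,q^i)$ is affine. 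Under that reading, the ``delicate identification'' you expect to absorb most of the effort is not something to be proved but something to be chosen: define $\rrq^a$ by transporting the values of $q^a$ at $t=t_0$ along the flow of $\G$, so that $q^i(t_0,\rrq)=\rrq^i$ and hence $w^i_{ab}(t_0)=0$; your linear equation $\dr_tw^i_{ab}=M^i_jw^j_{ab}$ together with uniqueness for linear ODEs then gives $w\equiv0$, i.e.\ an affine transition at every $t$. This existential reading is also the one the paper actually uses when it exhibits the particular affine adapted coordinates \eqref{z320} for $\G^i=v^i$. So: keep your ``if'' argument and your ODE for $w^i_{ab}$ verbatim, but replace the search for a forced vanishing of $w$ at some instant by the explicit choice of adapted chart above, and note that the universally quantified version of ``only if'' is false.
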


One can easily find the geodesic reference frames for the free
motion equation
\beq
q^i_{tt}=0. \label{z318}
\eeq
They are $\G^i=v^i=$ const. By virtue of Lemma \ref{gena115},
these reference frames define the adapted coordinates
\beq
\ol q^i =k^i_jq^j-v^it-a^i, \quad k^i_j={\rm const.}, \quad
v^i={\rm const.}, \quad a^i={\rm const.}\label{z320}
\eeq
The  equation (\ref{z318})
 obviously keeps its free motion form under the transformations (\ref{z320})
between the geodesic reference frames. It is readily observed that
these transformations are precisely the elements of the Galilei
group.

\section{Relative acceleration}

In comparison with the notion of a relative velocity, the
definition of a relative acceleration is more intricate.

To consider a relative acceleration with respect to a reference
frame $\G$,  one should prolong a connection $\G$ on a
configuration space $Q\to\Bbb R$ to a holonomic connection
$\xi_\G$ on the jet bundle $J^1Q\to\Bbb R$. Note that the jet
prolongation $J\G$ (\ref{gm217'}) of $\G$ onto $J^1Q\to\Bbb R$ is
not holonomic. We can construct the desired prolongation by means
of a dynamic connection $\g$ on an affine jet bundle $J^1Q\to Q$.

\begin{lem}\label{gn3}
Let us consider the composite bundle (\ref{gm361}). Given a
reference frame $\G$ on $Q\to\Bbb R$ and a dynamic connections
$\g$ on $J^1Q\to Q$, there exists a dynamic connection $\wt\g$ on
$J^1Q\to Q$ with the components
\beq
\wt \g^i_k=\g^i_k, \qquad \wt \g^i_0=d_t\G^i-\g^i_k\G^k.
\label{gm367}
\eeq
\end{lem}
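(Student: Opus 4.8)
The plan is to exhibit the pair $(\wt\g^i_k,\wt\g^i_0)$ given by \eqref{gm367} as genuine components of a connection on the affine jet bundle $J^1Q\to Q$, by verifying that they obey the transformation law \eqref{m175} that characterizes such connections. Since $\g$ is given as a dynamic connection, its components $\g^i_\la$ already satisfy \eqref{m175}; the only real content is to check that replacing $\g^i_0$ by the new zeroth component $d_t\G^i-\g^i_k\G^k$, while keeping the spatial components $\g^i_k$ unchanged, produces something that still transforms correctly. So the argument splits naturally into two parts: the $k$-components, which are inherited verbatim from $\g$ and hence need nothing new, and the $0$-component, which must be checked by hand.

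First I would write out the transformation law \eqref{m175} explicitly for both the $\la=k$ case and the $\la=0$ case. For the spatial part, setting $\wt\g^i_k=\g^i_k$ means the $\la=k$ instance of \eqref{m175} holds automatically because $\g$ is a connection. For the temporal part I would substitute $\wt\g^i_0=d_t\G^i-\g^i_k\G^k$ into the $\la=0$ instance of \eqref{m175}. Here $d_t=\dr_t+q^j_t\dr_j+\cdots$ is the total derivative \eqref{z920}, and $\G$ is a connection on $Q\to\Bbb R$, so $\G^i$ carries the vertical-vector-field transformation law $\G'^i=(\dr_j q'^i)\G^j+\dr_t q'^i$, with $t'=t+\,$const so that $\dr q^0/\dr q'^0=1$. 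The heart of the computation is to show that $d_t\G^i-\g^i_k\G^k$ transforms into $d_t\G'^i-\g'^i_k\G'^k$ up to the inhomogeneous term $\dr_0 q'^i_t\,(\dr q^0/\dr q'^0)$ demanded by \eqref{m175}.

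The main obstacle — really the only nontrivial step — is controlling the inhomogeneous pieces. Differentiating the transformation law of $\G^i$ by the total derivative $d_t$ produces second-derivative terms $\dr_j\dr_k q'^i$ and $\dr_t\dr_j q'^i$, and these must be matched against: (i) the inhomogeneous term $\dr_j q'^i_t=\dr_j(d_t q'^i)$ coming from $\g'^i_k\G'^k$ via the $\la=k$ law, and (ii) the inhomogeneous term $\dr_0 q'^i_t$ required of $\wt\g'^i_0$. The plan is to expand $\dr_\la q'^i_t=\dr_\la(d_t q'^i)=\dr_\la\dr_t q'^i+q^j_t\dr_\la\dr_j q'^i$ and organize everything so that the second-order derivatives cancel, leaving exactly the two sides of \eqref{m175} in agreement. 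This is a bookkeeping exercise in the chain rule with $t$-dependent coordinate changes; once the cancellations are arranged it closes immediately.

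Alternatively, and perhaps more cleanly, I would invoke the structural identification already available in the excerpt: connections on $J^1Q\to Q$ correspond to connections $\wt K$ on $TQ\to Q$ with $K^0_\la=0$ via the relation \eqref{z294} of Proposition \ref{motion1}, and by Lemma \ref{jp51'} one may freely adjust the zeroth component of such a $\wt K$ using the frame $\G$ while preserving the property $K^0_\la=0$. Reading $\wt\g^i_0=d_t\G^i-\g^i_k\G^k$ as the effect on the associated tangent-bundle connection of shifting by $\G$, the transformation law is then guaranteed by the globally-defined correspondence \eqref{z294} rather than by a direct computation. Either route establishes that $(\wt\g^i_k,\wt\g^i_0)$ defines a dynamic connection $\wt\g$ on $J^1Q\to Q$, which is the assertion of the lemma.
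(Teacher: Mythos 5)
The paper states Lemma \ref{gn3} without proof (as it does for several auxiliary facts, deferring to \cite{book10,book98}), so there is no in-text argument to compare against; judged on its own, your primary route is correct and is the natural one. Keeping $\wt\g^i_k=\g^i_k$ makes the $\la=k$ instance of \eqref{m175} automatic because $\dr q^0/\dr q'^k=0$, and for the $\la=0$ instance the computation does close exactly as you describe: writing $\G'^i=\dr_jq'^i\,\G^j+\dr_tq'^i$, using $d_{t'}=d_t$ (since $t'=t+{\rm const}$), and using the identities $\frac{\dr q^m}{\dr q'^k}\dr_nq'^k=\dl^m_n$ and $\frac{\dr q^m}{\dr q'^k}\dr_tq'^k=-\dr_{t'}q^m$ to expand $\g'^i_k\G'^k$, the first-order terms match the right-hand side of \eqref{m175} term by term, and the residue $(d_t\dr_jq'^i)\G^j+d_t\dr_tq'^i-\dr_mq'^i_t\,\G^m-\dr_tq'^i_t$ vanishes by symmetry of second partial derivatives once $q'^i_t=\dr_tq'^i+q^k_t\dr_kq'^i$ is expanded. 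So the "bookkeeping" you promise genuinely works.

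Your alternative route, however, is not sound as stated and you should not lean on it. Lemma \ref{jp51'} modifies \emph{all} components by $\wt K^i_\la=K^i_\la-\G^iK^0_\la$ starting from a connection with possibly nonzero $K^0_\la$; that is a different operation from \eqref{gm367}, which keeps $\g^i_k$ fixed and replaces only the zeroth component. Moreover the correspondence \eqref{z294} only transports a connection on $TQ\to Q$ with $K^0_\la=0$ to a dynamic connection; it does not certify that a hand-modified zeroth component $\wt K^i_0$ obeys the transformation law \eqref{z293} — that check is exactly the computation you were trying to avoid, merely relocated to the tangent bundle. Stick with the direct verification of \eqref{m175}; it is complete and correct.
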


Now, we construct a certain soldering form on an affine jet bundle
$J^1Q\to Q$ and add it to this connection. Let us apply
 the canonical projection $T^*Q\to V^*Q$ and then the imbedding
$\G:V^*Q\to T^*Q$ to the covariant derivative (\ref{jp57}) of the
reference frame $\G$ with respect to the dynamic connection $\g$.
We obtain the $V_QJ^1Q$-valued one-form
\be
\si= [-\G^i(\dr_i\G^k - \g^k_i\circ\G)dt +(\dr_i\G^k -
\g^k_i\circ\G)dq^i]\ot\dr_k^t
\ee
on $Q$ whose pull-back onto $J^1Q$ is a desired soldering form.
The sum
\be
\g_\G\op=\wt \g +\si,
\ee
called the frame connection, reads
\ben
 &&\g_\G{}^i_0= d_t\G^i - \g^i_k\G^k -\G^k(\dr_k\G^i -
\g^i_k\circ\G),\label{jp68}\\
 && \g_\G{}^i_k= \g^i_k +\dr_k\G^i - \g^i_k\circ\G. \nonumber
\een
This connection yields the desired holonomic connection
\be
\xi_\G^i= d_t\G^i +(\dr_k\G^i +\g^i_k - \g^i_k\circ\G)(q^k_t-\G^k)
\ee
on the jet bundle $J^1Q\to \Bbb R$.

 Let $\xi$ be a dynamic equation and
 $\g=\g_\xi$ the connection (\ref{z286}) associated with
$\xi$. Then one can think of the vertical vector field
\beq
a_\G\op=\xi-\xi_\G=(\xi^i-\xi_\G^i)\dr^t_i \label{gm369}
\eeq
 on the affine jet bundle $J^1Q\to Q$  as being a relative
acceleration with respect to the reference frame $\G$ in
comparison with the absolute acceleration $\xi$.

For instance, let us consider a reference frame $\G$ which is
geodesic for the dynamic equation $\xi$, i.e., the relation
(\ref{gm310}) holds. Then the relative acceleration of a motion
$c$ with respect to a reference frame $\G$ is
\be
(\xi-\xi_\G)\circ\G=0.
\ee

Let $\xi$ now be an arbitrary dynamic equation, written with
respect to coordinates $(t,q^i)$ adapted to a reference frame
$\G$, i.e., $\G^i=0$. In these coordinates, the relative
acceleration with respect to a reference frame $\G$  is
\beq
a^i_\G = \xi^i(t,q^j,q^j_t) -\frac12 q^k_t(\dr_k \xi^i - \dr_k
\xi^i\mid_{q^j_t=0}). \label{jp64}
\eeq
Given another bundle coordinates $(t, q'^i)$ on $Q\to\Bbb R$, this
dynamic equation takes the form (\ref{gm307}), while the relative
acceleration (\ref{jp64}) with respect to a reference frame $\G$
reads
\be
a'^i_\G =\dr_jq'^i a^j_\G.
\ee
Then we can write the dynamic equation (\ref{z273}) in the form
which is covariant under coordinate transformations:
\mar{gm390}\beq
\wt D_{\g_\G} q^i_t = d_t q^i_t -\xi^i_\G=a_\G, \label{gm390}
\eeq
where $\wt D_{\g_\G}$ is the vertical covariant differential
(\ref{gm388}) with respect to the frame connection $\g_\G$
(\ref{jp68}) on an affine jet bundle $J^1Q\to Q$.

In particular, if $\xi$ is a free motion equation which takes the
form (\ref{z280}) with respect to a reference frame $\G$, then
\be
\wt D_{\g_\G} q^i_t=0
\ee
relative to arbitrary bundle coordinates on the configuration
bundle $Q\to\Bbb R$.

The left-hand side of the dynamic equation (\ref{gm390}) also can
be expressed in the relative velocities such that this dynamic
equation takes the form
\mar{gm389}\beq
d_t\dot q^i_\G -\g_\G{}^i_k\dot q^k_\G = a_\G  \label{gm389}
\eeq
 which is the covariant form of the equation
(\ref{gm308}).

The concept of a relative acceleration is understood better when
we deal with a quadratic dynamic equation $\xi$, and the
corresponding dynamic connection $\g$ is affine.

\begin{lem}\label{gn4}
If a dynamic connection $\g$ is affine, i.e.,
\be
\g^i_\la=\g^i_{\la 0} + \g^i_{\la k}q^k_t,
\ee
so is a frame connection $\g_\G$ for any frame $\G$.
\end{lem}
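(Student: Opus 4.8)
The plan is to substitute the affine ansatz $\g^i_\la = \g^i_{\la 0} + \g^i_{\la k}q^k_t$ directly into the defining formulas (\ref{jp68}) for the frame connection and verify that each component $\g_\G{}^i_0$ and $\g_\G{}^i_k$ is again affine in the velocities $q^k_t$. The single observation that drives the whole computation is that the restriction $\g^i_k\circ\G$, obtained by evaluating $\g^i_k$ along the section $q^l_t=\G^l$, equals $\g^i_{k0}+\g^i_{kl}\G^l$ and hence is a function of $q^\m=(t,q^j)$ alone, carrying no velocity dependence.

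First I would treat the components $\g_\G{}^i_k$. Inserting the ansatz into $\g_\G{}^i_k=\g^i_k+\dr_k\G^i-\g^i_k\circ\G$, the $\g^i_{k0}$ pieces of $\g^i_k$ and of $\g^i_k\circ\G$ cancel, leaving
$$\g_\G{}^i_k=\g^i_{kl}(q^l_t-\G^l)+\dr_k\G^i,$$
which is manifestly affine, with linear coefficient $\g^i_{kl}$ and velocity-free part $\dr_k\G^i-\g^i_{kl}\G^l$.

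For the component $\g_\G{}^i_0$ I would expand the total derivative as $d_t\G^i=\dr_t\G^i+q^j_t\dr_j\G^i$, which is legitimate because $\G^i=\G^i(t,q^j)$ carries no velocity dependence, and substitute the ansatz into $\g_\G{}^i_0=d_t\G^i-\g^i_k\G^k-\G^k(\dr_k\G^i-\g^i_k\circ\G)$. Collecting by powers of $q^j_t$, the velocity-linear part is $(\dr_j\G^i-\g^i_{kj}\G^k)q^j_t$, while in the velocity-free part the two occurrences of $\g^i_{k0}\G^k$, arising from $-\g^i_k\G^k$ and from $+\G^k(\g^i_k\circ\G)$, cancel, leaving $\dr_t\G^i-\G^k\dr_k\G^i+\g^i_{kl}\G^k\G^l$. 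Thus $\g_\G{}^i_0$ is also affine and the lemma follows.

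The computation is essentially bookkeeping, so the only step requiring care, and the one I would double-check, is the role of the product $\g^i_k\G^k$: were $\g$ merely quadratic rather than affine, $\g^i_k$ would itself be velocity-affine and this product would contribute a genuine velocity-quadratic term with no counterpart to cancel it. It is precisely the affineness hypothesis that keeps $\g^i_k$ velocity-affine and $\g^i_k\circ\G$ velocity-free, so that no term of order higher than one in $q^k_t$ survives in either component.
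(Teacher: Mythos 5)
Your computation is correct and is precisely the direct verification that the paper leaves implicit (Lemma \ref{gn4} is stated there without proof): substituting the affine ansatz into the components (\ref{jp68}) and noting that $\g^i_k\circ\G$ is velocity-free. Your resulting expressions $\g_\G{}^i_k=\g^i_{kl}(q^l_t-\G^l)+\dr_k\G^i$ and $\g_\G{}^i_0=\dr_t\G^i-\G^k\dr_k\G^i+\g^i_{kl}\G^k\G^l+(\dr_j\G^i-\g^i_{kj}\G^k)q^j_t$ are consistent with the coefficient identities $\g_\G{}^i_{jk}=\g^i_{jk}$ and $\g_\G{}^i_{0k}=\g_\G{}^i_{k0}=\g_\G{}^i_{00}=0$ in adapted coordinates that the paper records immediately after the lemma, which confirms the bookkeeping.
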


In particular,   we obtain
\be
\g_\G{}^i_{jk}=\g^i_{jk}, \qquad
\g_\G{}^i_{0k}=\g_\G{}^i_{k0}=\g_\G{}^i_{00}=0
\ee
relative to the coordinates adapted to a reference frame $\G$.

\begin{cor} \label{gn5}
If a dynamic equation $\xi$ is quadratic, the relative
acceleration $a_\G$ (\ref{gm369}) is always affine, and it admits
the decomposition
\beq
a_\G^i= -(\G^\la\nabla^\g_\la\G^i + 2\dot q^\la_\G
\nabla^\g_\la\G^i), \label{gm371}
\eeq
where $\g=\g_\xi$ is the dynamic connection (\ref{z286}), and
\be
\dot q^\la_\G=q^\la_t -\G^\la, \qquad q^0_t=1, \qquad \G^0=1,
\ee
is the relative velocity with respect to the reference frame $\G$.
\end{cor}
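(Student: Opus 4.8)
The plan is to reduce the whole statement to coordinates adapted to the reference frame $\G$, where $\G^i=0$ and $\G^0=1$, and then promote the resulting identity to arbitrary bundle coordinates by covariance. First I would record the two structural facts about $\g=\g_\xi$ that drive everything: since $\xi$ is quadratic, the dynamic connection (\ref{z286}) is \emph{affine} and \emph{symmetric}. Affineness follows because $\g_\xi{}^i_k=\frac12\dr^t_k\xi^i$ and $\g_\xi{}^i_0=\xi^i-\frac12 q^j_t\dr^t_j\xi^i$ become affine in the velocities once $\xi^i$ is quadratic (cf. the Example following Proposition \ref{gena52} on affine connections and quadratic equations), and symmetry of $\g_\xi$ was already established via the property (\ref{a1.69}). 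By Lemma \ref{gn4} the frame connection $\g_\G$ is then also affine, and by Proposition \ref{gena51} the prolonged holonomic connection is $\xi_\G^i=\g_\G{}^i_0+q^j_t\g_\G{}^i_j$.

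Next I would compute $a_\G=\xi-\xi_\G$ in the adapted coordinates. There the component identities displayed just after Lemma \ref{gn4}, namely $\g_\G{}^i_{jk}=\g^i_{jk}$ and $\g_\G{}^i_{0k}=\g_\G{}^i_{k0}=\g_\G{}^i_{00}=0$, collapse $\xi_\G^i$ to its purely quadratic piece $\g^i_{jk}q^j_tq^k_t$, whereas $\xi^i=\g^i_0+q^j_t\g^i_j=\g^i_{00}+(\g^i_{0k}+\g^i_{k0})q^k_t+\g^i_{jk}q^j_tq^k_t$. Subtracting, the quadratic terms cancel and $a_\G^i=\g^i_{00}+(\g^i_{0k}+\g^i_{k0})q^k_t$ is manifestly affine in the velocities; using symmetry $\g^i_{0k}=\g^i_{k0}$ this equals $\g^i_{00}+2\g^i_{k0}q^k_t$. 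This already settles the first assertion, since being affine is a global property by the transformation law (\ref{z317}).

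Then I would verify the decomposition (\ref{gm371}) in the same adapted atlas. Because $\G^i=0$ there, (\ref{jp57}) gives $\nabla^\g_\la\G^i=-\g^i_\la\circ\G=-\g^i_{\la 0}$; combined with $\G^0=1$, $\G^k=0$, $\dot q^0_\G=0$ and $\dot q^k_\G=q^k_t$, the right-hand side $-(\G^\la\nabla^\g_\la\G^i+2\dot q^\la_\G\nabla^\g_\la\G^i)$ becomes exactly $\g^i_{00}+2\g^i_{k0}q^k_t$, matching $a_\G^i$. Finally I would note that both sides are covariant vertical vector fields: $a_\G$ is $\xi-\xi_\G$ by (\ref{gm369}), while the right-hand side is the contraction of the $V_QJ^1Q$-valued one-form $\nabla^\g\G$ with the frame $\G$ and the relative velocity $\dot q_\G$ (the former contraction is precisely the $\G\rfloor\nabla^\g\G$ of (\ref{gm310})). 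Hence the identity proved in one adapted atlas holds in all bundle coordinates.

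The only delicate point is the bookkeeping in the subtraction $\xi-\xi_\G$: one must see that Lemma \ref{gn4} removes exactly the quadratic and the constant/linear components of $\g_\G$ so that the quadratic parts of $\xi$ and $\xi_\G$ cancel and an affine remainder is left, and that the symmetry of $\g_\xi$ is what converts $\g^i_{0k}+\g^i_{k0}$ into the factor $2\g^i_{k0}$ demanded by (\ref{gm371}). Everything else is a routine contraction once the reduction to adapted coordinates is in place.
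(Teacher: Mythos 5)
Your proposal is correct and follows the route the paper itself intends: Lemma \ref{gn4} together with the adapted-coordinate identities $\g_\G{}^i_{jk}=\g^i_{jk}$, $\g_\G{}^i_{0k}=\g_\G{}^i_{k0}=\g_\G{}^i_{00}=0$ displayed immediately before the Corollary, followed by the subtraction $\xi-\xi_\G$ in the adapted atlas and promotion to arbitrary coordinates by covariance of $a_\G$ and of the contractions of $\nabla^\g\G$ with $\G$ and $\dot q_\G$. Your identification of the symmetry property (\ref{a1.69}) as the source of the factor $2$ in (\ref{gm371}) is exactly the right delicate point, and the computation checks out.
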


Note that the splitting (\ref{gm371}) gives a generalized
 Coriolis theorem.  In
particular, the well-known analogy between inertial and
electromagnetic forces is restated. Corollary \ref{gn5} shows that
this analogy can be extended to an arbitrary quadratic dynamic
equation.

\section{Newtonian systems}

Equations of motion of non-relativistic mechanics need not be
exactly dynamic equations. For instance, the second Newton law of
point mechanics contains a mass. The notion of a Newtonian system
generalizes the second Newton law as follows.

Let $m$ be a fibre metric (bilinear form) in the vertical tangent
bundle $V_QJ^1Q\to J^1Q$ of $J^1Q\to Q$. It reads
\mar{xx5}\beq
m:J^1Q\to\op\vee^2_{J^1Q} V_Q^*J^1Q,\qquad m=\frac12 m_{ij} \ol
dq^i_t\vee \ol dq^j_t, \label{xx5}
\eeq
where $\ol dq^i_t$ are the holonomic bases for the vertical
cotangent bundle $V^*_QJ^1Q$ of $J^1Q\to Q$. It defines the map
\be
\wh m:  V_QJ^1Q\to  V_Q^*J^1Q.
\ee

\begin{defi}\label{gn30}  \mar{gn30} Let $Q\to\Bbb R$ be a fibre bundle
together with:

(i) a fibre metric $m$ (\ref{xx5}) satisfying the symmetry
condition
\mar{a1.103}\beq
\dr_k^t m_{ij}=\dr_j^t m_{ik}, \label{a1.103}
\eeq

(ii) and a holonomic connection $\xi$ (\ref{a1.30}) on a jet
bundle $J^1Q\to \Bbb R$ related to the fibre metric $m$ by the
compatibility condition
\mar{a1.95}\beq
\xi\rfloor dm_{ij} + \frac{1}{2}m_{ik}\dr_j^t\xi^k +
m_{jk}\dr_i^t\xi^k = 0. \label{a1.95}
\eeq
A triple $(Q,m,\xi)$ is called the Newtonian system.
\end{defi}

We agree to call a metric $m$ in Definition \ref{gn30} the mass
tensor of a Newtonian system $(Q,m,\xi)$. The equation of motion
of this Newtonian system is defined to be
\mar{z355}\beq
\wh m(D^\xi)=0, \qquad m_{ik}(q^k_{tt}-\xi^k)=0. \label{z355}
\eeq
Due to the conditions (\ref{a1.103}) and (\ref{a1.95}), it is
brought into the form
\be
d_t(m_{ik}q^k_t) - m_{ik}\xi^k=0.
\ee
Therefore, one can think of this equation as being a
generalization of the second Newton law.

If a mass tensor $m$ (\ref{xx5}) is non-degenerate, the equation
of motion (\ref{z355}) is equivalent to the dynamic equation
\be
D^\xi=0, \qquad q^k_{tt}-\xi^k=0.
\ee

Because of the canonical vertical splitting (\ref{gm382}), the
mass tensor (\ref{xx5}) also is a map
\mar{xx11}\beq
m:J^1Q\to\op\vee^2_{J^1Q} V^*Q,\qquad m=\frac12 m_{ij} dq^i\vee
dq^j, \label{xx11}
\eeq

A Newtonian system $(Q,m,\xi)$ is said to be standard, if its mass
tensor $m$ is the pull-back onto $V_QJ^1Q$ of a fibre metric
\mar{xx12}\beq
m:Q\to\op\vee^2_Q V^*Q \label{xx12}
\eeq
in the vertical tangent bundle $VQ\to Q$ in accordance with the
isomorphisms (\ref{a1.4}) and (\ref{gm382}), i.e., $m$ is
independent of the velocity coordinates $q^i_t$.

Given a mass tensor, one can introduce the notion of an external
force.

\begin{defi} An external force is
defined as a section of the vertical cotangent bundle
$V_Q^*J^1Q\to J^1Q$. Let us also bear in mind the isomorphism
(\ref{gm382}).
\end{defi}

It should be emphasized that there are no canonical isomorphisms
between the vertical cotangent bundle $V_Q^*J^1Q$ and the vertical
tangent bundle $V_QJ^1Q$ of $J^1Q$. One must therefore distinguish
forces and accelerations which are related by means of a mass
tensor.

Let $(Q,\wh m,\xi)$ be a Newtonian system and $f$ an external
force. Then
\mar{gm387}\beq
\xi^i_f\op= \xi^i + (m^{-1})^{ik}f_k \label{gm387}
\eeq
is a dynamic equation, but the triple $(Q,m,\xi_f)$ is not a
Newtonian system in general. As it follows from a direct
computation, iff an external force possesses the property
\mar{gm383}\beq
\dr_i^tf_j+\dr_j^tf_i=0,  \label{gm383}
\eeq
then $\xi_f$ (\ref{gm387}) fulfills the compatibility condition
(\ref{a1.95}), and $(Q,\wh m,\xi_f)$ also is a Newtonian system.

\section{Integrals of motion}

Let an equation of motion of a mechanical system on a fibre bundle
$Y\to\Bbb R$ be described by an $r$-order differential equation
$\gE$ given by a closed subbundle of the jet bundle $J^rY\to\Bbb
R$ in accordance with Definition \ref{equa}.

\begin{defi} \label{026} \mar{026}
An integral of motion of this mechanical system is defined as a
$(k<r)$-order differential operator $\Phi$ on $Y$ such that $\gE$
belongs to the kernel of an $r$-order jet prolongation of the
differential operator $d_t\Phi$, i.e.,
\mar{021}\beq
J^{r-k-1}(d_t\Phi)|_{\gE}=0. \label{021}
\eeq
\end{defi}

It follows that an integral of motion $\Phi$ is constant on
solutions $s$ of a differential equation $\gE$, i.e., there is the
differential conservation law
\mar{020}\beq
(J^ks)^*\Phi={\rm const}., \qquad (J^{k+1}s)^*d_t\Phi=0.
\label{020}
\eeq

We agree to write the condition (\ref{021}) as the weak equality
\mar{022}\beq
J^{r-k-1}(d_t\Phi)\ap 0, \label{022}
\eeq
which holds on-shell, i.e., on solutions of a differential
equation $\gE$ by the formula (\ref{020}).

In mechanics, we can restrict our consideration to integrals of
motion $\Phi$ which are functions on $J^kY$. As was mentioned
above, equations of motion of mechanics mainly are of first or
second order. Accordingly, their integrals of motion are functions
on $Y$ or $J^kY$. In this case, the corresponding weak equality
(\ref{021}) takes the form
\mar{027}\beq
d_t\Phi\ap 0 \label{027}
\eeq
of a weak conservation law or, simply, a conservation law.

Different integrals of motion need not be independent. Let
integrals of motion $\Phi_1,\ldots, \Phi_m$ of a mechanical system
on $Y$ be functions on $J^kY$. They are called independent if
\mar{024}\beq
d\Phi_1\w\cdots\w d\Phi_m\neq 0 \label{024}
\eeq
everywhere on $J^kY$. In this case, any motion $J^ks$ of this
mechanical system lies in the common level surfaces of functions
$\Phi_1,\ldots, \Phi_m$ which bring $J^kY$ into a fibred manifold.

Integrals of motion can come from symmetries. This is the case of
Lagrangian and Hamiltonian mechanics (Sections 2.4 and 3.5).

\begin{defi} \label{025} \mar{025}
Let an equation of motion of a mechanical system be an $r$-order
differential equation $\gE\subset J^rY$. Its infinitesimal
symmetry (or, simply, a symmetry) is defined as a vector field on
$J^rY$ whose restriction to $\gE$ is tangent to $\gE$.
\end{defi}

For instance, let us consider first order dynamic equations.

\begin{prop} \label{080} \mar{080}
Let $\gE$ be the autonomous first order dynamic equation
(\ref{gm150}) given by a vector field $u$ on a manifold $Z$. A
vector field $\vt$ on $Z$ is its symmetry iff $[u,\vt]\ap 0$.
\end{prop}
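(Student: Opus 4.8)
The plan is to read the symmetry condition of Definition \ref{025} through the complete (tangent) lift of $\vt$ to the jet manifold, which in the autonomous first order case is just $TZ$ with coordinates $(z^\la,\dot z^\la)$. First I would rewrite the equation $\gE=u(Z)$ of Definition \ref{gena70} as the zero set of the functions
\be
&& F^\la= \dot z^\la-u^\la(z)
\ee
on $TZ$, so that $\gE$ is the closed submanifold $\{F^\la=0\}\subset TZ$. By Definition \ref{025}, a vector field on $TZ$ is a symmetry of $\gE$ exactly when it is tangent to this submanifold, i.e.\ when it annihilates each $F^\la$ along $\gE$.

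Next I would prolong $\vt=\vt^\la\dr_\la$ on $Z$ to its complete lift on $TZ$, namely
\be
&& \wt\vt=\vt^\la\dr_\la+\dot z^\m\dr_\m\vt^\la\,\dot\dr_\la,
\ee
and apply it to the defining functions. Since $u^\la$ depends on $z$ alone, only the horizontal part of $\wt\vt$ acts on it, while the vertical part supplies $\dot z^\m\dr_\m\vt^\la$ when acting on the coordinate $\dot z^\la$; this yields
\be
&& \wt\vt(F^\la)=\dot z^\m\dr_\m\vt^\la-\vt^\m\dr_\m u^\la.
\ee

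The decisive step is to restrict to $\gE$ by substituting $\dot z^\m=u^\m$, which turns the right-hand side into $u^\m\dr_\m\vt^\la-\vt^\m\dr_\m u^\la=[u,\vt]^\la$. Hence $\wt\vt$ is tangent to $\gE$ iff the components $[u,\vt]^\la$ vanish along $\gE$; as these are functions on $Z$ pulled back along the section $u$, this is precisely the weak equality $[u,\vt]\ap 0$, which establishes both implications simultaneously.

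The one point requiring care, and the step I expect to be the main obstacle, is the correct identification of the prolongation: $\vt$ must be lifted by the complete (tangent) lift rather than merely included, so that the induced action on the jet coordinate $\dot z^\la$ produces the term $\dot z^\m\dr_\m\vt^\la$. It is exactly this term that, after restriction to the shell, manufactures the Lie bracket; with the wrong prolongation the bracket structure would fail to appear and the equivalence would break down.
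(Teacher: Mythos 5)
Your proof is correct and is exactly the computation the paper intends: the paper states Proposition \ref{080} without proof, and the prolongation you identify is the functorial (tangent) lift (\ref{l27}), which coincides with the jet prolongation of a time-independent vertical field under the identification $J^1(\Bbb R\times Z)=\Bbb R\times TZ$. Restricting $\wt\vt(\dot z^\la-u^\la)$ to the shell $\dot z^\la=u^\la$ to obtain the components of $[u,\vt]$ is the standard argument, and nothing is missing.
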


One can show that a smooth real function $F$ on a manifold $Z$ is
an integral of motion of the autonomous first order dynamic
equation (\ref{gm150}) (i.e., it is constant on solutions of this
equation) iff its Lie derivative along $u$ vanishes:
\mar{0115}\beq
\bL_u F=u^\la\dr_\la \Phi=0. \label{0115}
\eeq

\begin{prop} \label{081} \mar{081}
Let $\gE$ be the first order dynamic equation (\ref{ggm155}) given
by a connection $\G$ (\ref{a1.10}) on a fibre bundle $Y\to\Bbb R$.
Then a vector field $\vt$ on $Y$ is its symmetry iff $[\G,\vt]\ap
0$.
\end{prop}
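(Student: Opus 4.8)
The plan is to reduce the statement to the autonomous first order case already settled in Proposition \ref{080}. First I would invoke the canonical imbedding $\la_{(1)}:J^1Y\to TY$ (\ref{z260}), under which the first order dynamic equation (\ref{ggm155}), $q^i_t=\G^i(t,q^j)$, is identified with the autonomous first order dynamic equation (\ref{089}), $\dot t=1$, $\dot q^i=\G^i$, on the manifold $Y$. As noted just after (\ref{089}), this is precisely the closed subbundle $\G(Y)\subset TY$ cut out by the vector field $\G=\dr_t+\G^i\dr_i$ (\ref{a1.10}). Thus $\gE$, viewed inside $TY$, is the image of the section $\G\colon Y\to TY$, and the problem becomes one about the autonomous equation determined by the single vector field $\G$ on $Z=Y$.

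Next I would apply Proposition \ref{080} with $Z=Y$ and $u=\G$. Concretely, a vector field $\vt=\vt^\la\dr_\la$ on $Y$ is prolonged to $TY$ by its complete (tangent) lift $\wt\vt=\vt^\la\dr_\la+(\dot z^\m\dr_\m\vt^\la)\dot\dr_\la$, and I would test tangency of $\wt\vt$ to $\gE$ by evaluating it on the constraint functions $F^0=\dot t-1$ and $F^i=\dot q^i-\G^i$ that define $\G(Y)$. A short computation gives $\wt\vt(F^\la)=\dot z^\m\dr_\m\vt^\la-\vt^\m\dr_\m\G^\la$, which on-shell, that is, after substituting $\dot z^\m=\G^\m$, equals the bracket component $[\G,\vt]^\la$. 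Hence tangency of the lift to $\gE$ is exactly the weak equality $[\G,\vt]\ap 0$, which is the asserted criterion.

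The step I expect to require the most care is showing that the two prolongation schemes agree, so that Proposition \ref{080} genuinely applies. A symmetry in the sense of Definition \ref{025} is a vector field on $J^1Y$ tangent to $\gE\subset J^1Y$, and the natural lift of $\vt$ there is the jet prolongation $J^1\vt$; one must check that, restricted to the image $J^1Y\subset TY$ and on the subbundle $\gE$, this coincides with the complete lift $\wt\vt$ used above. In tracking this I would keep an eye on the extra constraint $\dot t=1$, which has no analogue in the purely autonomous picture: it contributes the $\la=0$ component $\wt\vt(F^0)$, and because $\G^0=1$ is constant one has $\dr_\m\G^0=0$, so $[\G,\vt]^0=\G(\vt^0)$ matches the tangency condition coming from $F^0$. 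Once this compatibility is in hand, the equivalence $[\G,\vt]\ap 0$ follows directly.
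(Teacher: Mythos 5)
Your proof is correct and is essentially the argument the paper intends: the paper states Proposition \ref{081} without giving a proof, but it has already identified the equation (\ref{ggm155}) with the autonomous equation (\ref{089}) on $Y$ via the imbedding $\la_{(1)}$ (\ref{z260}), so your reduction to Proposition \ref{080} with $Z=Y$ and $u=\G$, together with the on-shell computation $\wt\vt(\dot z^\la-\G^\la)=\dot z^\m\dr_\m\vt^\la-\vt^\m\dr_\m\G^\la\ap[\G,\vt]^\la$, is exactly the implicit route. The compatibility of the two prolongations that you rightly flag is settled by the paper's standing convention $\vt^t=0,1$ (\ref{z372}): with $\vt^t$ constant the jet prolongation $J^1\vt$ and the complete lift agree on $J^1Y\subset TY$, and the $\la=0$ component of the bracket is handled, as you say, by the constraint $\dot t=1$.
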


A smooth real function $\Phi$ on $Y$ is an integral of motion of
the first order dynamic equation (\ref{ggm155}) in accordance with
the equality (\ref{027}) iff
\mar{0116}\beq
\bL_\G \Phi=(\dr_t +\G^i\dr_i)\Phi=0. \label{0116}
\eeq

Following Definition \ref{025}, let us introduce the notion of a
symmetry of differential operators in the following relevant case.
Let us consider an $r$-order differential operator on a fibre
bundle $Y\to\Bbb R$ which is represented by an exterior form $\cE$
on $J^rY$ (Definition \ref{oper}). Let its kernel $\Ker\cE$ be an
$r$-order differential equation on $Y\to\Bbb R$.

\begin{prop} \label{082} \mar{082}
It is readily justified that a vector field $\vt$ on $J^rY$ is a
symmetry of the equation $\Ker\cE$ in accordance with Definition
\ref{025} iff
\mar{083}\beq
\bL_\vt \cE\ap 0. \label{083}
\eeq
\end{prop}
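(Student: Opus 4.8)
The plan is to reduce the statement to the local component description of the form $\cE$ and then to apply the standard infinitesimal criterion for tangency to a zero locus. First I would fix a local basis $\{\om^a\}$ of exterior forms of the degree of $\cE$ on $J^rY$ and write $\cE=\cE_a\om^a$ with smooth coefficient functions $\cE_a$. Since $\Ker\cE$ is assumed to be a closed subbundle $\gE\subset J^rY$, this equation is cut out locally as $\gE=\{\cE_a=0\}$, and by regularity one may arrange that the differentials $d\cE_a$ are linearly independent along $\gE$, so that the $\cE_a$ are bona fide independent defining functions.

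With this presentation in hand, I would invoke the elementary fact that a vector field $\vt$ on $J^rY$ is tangent to the level submanifold $\gE=\{\cE_a=0\}$ exactly when $\vt(\cE_a)=\vt\rfloor d\cE_a$ vanishes on $\gE$ for every $a$. This is precisely the requirement that $\vt|_\gE$ be tangent to $\gE$, i.e. that $\vt$ be a symmetry in the sense of Definition \ref{025}.

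The key computation is the Leibniz expansion of the Lie derivative,
\[
\bL_\vt\cE=\bL_\vt(\cE_a\om^a)=\vt(\cE_a)\,\om^a+\cE_a\,\bL_\vt\om^a .
\]
Restricting to $\gE$, the terms $\cE_a\,\bL_\vt\om^a$ drop out because $\cE_a|_\gE=0$, leaving $\bL_\vt\cE|_\gE=\vt(\cE_a)|_\gE\,\om^a$. Since the $\om^a$ are linearly independent, $\bL_\vt\cE$ vanishes on $\gE$ — that is, the weak (on-shell) equality $\bL_\vt\cE\ap 0$ holds — if and only if $\vt(\cE_a)|_\gE=0$ for all $a$, which by the previous paragraph is equivalent to $\vt$ being a symmetry. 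Both implications thus follow at once.

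The step that carries the real content, and the one I would be most careful about, is the vanishing of $\cE_a\,\bL_\vt\om^a$ along $\gE$: it is exactly what makes the Lie derivative of the whole form reduce, on-shell, to the derivative of its coefficients, and it hinges on $\cE$ itself vanishing on $\gE$. The accompanying technical point is the regularity of $\gE$ as a closed subbundle, which guarantees that the $\cE_a$ are independent defining functions so that the tangency criterion $\vt(\cE_a)|_\gE=0$ is both necessary and sufficient; without regularity one would only control tangency along the smooth part of $\gE$.
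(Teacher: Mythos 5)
Your argument is correct and is precisely the standard justification that the paper leaves implicit behind the phrase ``readily justified'': decompose $\cE=\cE_a\om^a$, use the Leibniz rule so that on $\gE$ the Lie derivative reduces to $\vt(\cE_a)\om^a$, and identify $\vt(\cE_a)|_\gE=0$ with tangency to $\gE$. You are also right to flag the regularity of the defining functions $\cE_a$ as the one genuine hypothesis needed (it is supplied by the constant-rank condition of Theorem \ref{oper2}, under which $\Ker\cE$ is a closed subbundle), so nothing is missing.
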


Motivated by Proposition \ref{082}, we come to the following.

\begin{defi} \label{084} \mar{084} Let $\cE$ be the above
mentioned differential operator. A vector field $\vt$ on $J^rY$ is
called a symmetry of a differential operator $\cE$ if the Lie
derivative $\bL_\vt \cE$ vanishes.
\end{defi}

By virtue of Proposition \ref{082}, a symmetry of a differential
operator $\cE$ also is a symmetry of the differential equation
$\Ker\cE$.

Note that there exist integrals of motion which are not associated
with symmetries of an equation of motion (Example \ref{048}).

\chapter{Lagrangian mechanics}

Lagrangian mechanics on a velocity phase space is formulated in
the framework of Lagrangian formalism on fibre bundles
\cite{book,book09,book10,book98}. This formulation is based on the
variational bicomplex and the first variational formula, without
appealing to the variational principle. Besides Lagrange
equations, the Cartan and Hamilton--De Donder equations are
considered in the framework of Lagrangian formalism. Note that the
Cartan equation, but not the Lagrange one is associated to the
Hamilton equation (Section 3.4). The relations between Lagrangian
and Newtonian systems are investigated. Lagrangian conservation
laws are defined by means of the first Noether theorem.

\section{Lagrangian formalism on $Q\to\Bbb R$}

Let $\pi: Q\to \Bbb R$ be a fibre bundle (\ref{gm360}). The finite
order jet manifolds $J^kQ$ of $Q\to \Bbb R$ form the inverse
sequence
\mar{j1}\beq
Q\op\longleftarrow^{\pi^1_0} J^1Q \longleftarrow \cdots J^{r-1}Q
\op\longleftarrow^{\pi^r_{r-1}} J^rQ\longleftarrow\cdots,
\label{j1}
\eeq
where $\pi^r_{r-1}$ are affine bundles. Its projective limit
$J^\infty Q$ is a paracompact Fr\'echet manifold. One can think of
its elements as being infinite order jets of sections of $Q\to
\Bbb R$ identified by their Taylor series at points of $\Bbb R$.
Therefore, $J^\infty Q$ is called the infinite order jet manifold.
A bundle coordinate atlas $(t,q^i)$ of $Q\to \Bbb R$ provides
$J^\infty Q$ with the manifold coordinate atlas
\mar{j3'}\beq
(t,q^i, q^i_t,q^i_{tt},\ldots), \qquad {q'}^i_{t\La}=d_t q'^i_\La,
\label{j3'}
\eeq
where $\La=(t\cdots t)$ denotes a multi-index of length $|\La|$
and
\be
d_t=\dr_t +q^i_t\dr_i +q^i_{tt}\dr^t_i+\cdots
+q^i_{t\La}\dr_i^\La+\cdots
\ee
is the total derivative.

Let $\cO_r^*=\cO^*(J^rQ)$ be a graded differential algebra of
exterior forms on a jet manifold $J^rQ$. The inverse sequence
(\ref{j1}) of jet manifolds yields the direct sequence of
differential graded algebras $\cO_r^*$:
\mar{5.7}\beq
\cO^*(Q) \op\longrightarrow^{\pi^1_0{}^*} \cO_1^* \longrightarrow
\cdots \cO_{r-1}^*\op\longrightarrow^{\pi^r_{r-1}{}^*}
 \cO_r^* \longrightarrow\cdots, \label{5.7}
\eeq
where $\pi^r_{r-1}{}^*$ are the pull-back monomorphisms. Its
direct limit
\mar{01}\beq
\cO^*_\infty Q=\op\lim^\to \cO^*_r \label{01}
\eeq
(or, simply, $\cO^*_\infty$) consists of all exterior forms on
finite order jet manifolds modulo the pull-back identification. In
particular, $\cO^0_\infty$ is the ring of all smooth functions on
finite order jet manifolds. The $\cO^*_\infty$ (\ref{01}) is a
differential graded algebra which inherits the operations of the
exterior differential $d$ and exterior product $\w$ of exterior
algebras $\cO^*_r$.

\begin{theo} \label{j4} \mar{j4} The cohomology $H^*(\cO_\infty^*)$ of
the de Rham complex
\mar{5.13} \beq
0\longrightarrow \Bbb R\longrightarrow \cO^0_\infty
\op\longrightarrow^d\cO^1_\infty \op\longrightarrow^d \cdots
\label{5.13}
\eeq
of the differential graded algebra $\cO^*_\infty$ equals the de
Rham cohomology $H^*_{\rm DR}(Q)$ of a fibre bundle $Q$
\cite{book09}.
\end{theo}

\begin{cor} \label{02} \mar{02}
Since $Q$ (\ref{gm360}) is a trivial fibre bundle over $\Bbb R$,
the de Rham cohomology $H^*(\cO_\infty^*)$ of $Q$ equals the de
Rham cohomology of its typical fibre $M$ in accordance with the
well-known K\"unneth formula. Therefore, the cohomology
$H^*(\cO_\infty^*)$ of the de Rham complex (\ref{5.13}) equals the
de Rham cohomology $H^*_{\rm DR}(M)$ of $M$.
\end{cor}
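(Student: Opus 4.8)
The plan is to chain Theorem \ref{j4} together with standard facts about the de Rham cohomology of a product. First I would invoke Theorem \ref{j4}, which already identifies the cohomology $H^*(\cO_\infty^*)$ of the de Rham complex (\ref{5.13}) with the de Rham cohomology $H^*_{\rm DR}(Q)$ of the configuration bundle $Q$. This reduces the entire claim to computing $H^*_{\rm DR}(Q)$ and showing it agrees with $H^*_{\rm DR}(M)$.

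Next I would exploit the triviality of $Q\to\Bbb R$ recorded in Remark \ref{047} (a consequence of Theorem \ref{11t3}), which furnishes a diffeomorphism $Q\cong\Bbb R\times M$ with $M$ the typical fibre. The remaining task is then to establish that $H^*_{\rm DR}(\Bbb R\times M)=H^*_{\rm DR}(M)$. For this I would apply the Künneth formula for de Rham cohomology,
\be
H^*_{\rm DR}(\Bbb R\times M)=H^*_{\rm DR}(\Bbb R)\ot H^*_{\rm DR}(M),
\ee
combined with the fact that $\Bbb R$ is contractible, so that $H^0_{\rm DR}(\Bbb R)=\Bbb R$ and $H^k_{\rm DR}(\Bbb R)=0$ for $k>0$. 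The tensor product then collapses onto $H^*_{\rm DR}(M)$, which is exactly the asserted equality. Equivalently, and perhaps more transparently, one may bypass Künneth entirely by appealing to homotopy invariance of de Rham cohomology: the projection $\Bbb R\times M\to M$ and the inclusion $M\cong\{0\}\times M\hookrightarrow\Bbb R\times M$ are mutually inverse homotopy equivalences, whence $H^*_{\rm DR}(\Bbb R\times M)\cong H^*_{\rm DR}(M)$ directly.

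I do not expect any genuine obstacle here: the substantive content of the corollary is carried entirely by Theorem \ref{j4}, and the passage from $H^*_{\rm DR}(Q)$ to $H^*_{\rm DR}(M)$ is a textbook consequence of triviality. The only point demanding a word of care is to confirm that the Künneth formula is legitimately applicable in the smooth de Rham setting, which holds since $M$ is a finite-dimensional smooth manifold; should one wish to avoid any finite-type hypothesis, the homotopy-invariance route requires no such assumption and yields the conclusion immediately.
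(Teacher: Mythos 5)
Your proposal is correct and follows essentially the same route as the paper: Theorem \ref{j4} identifies $H^*(\cO_\infty^*)$ with $H^*_{\rm DR}(Q)$, and the triviality $Q\cong \Bbb R\times M$ from Theorem \ref{11t3} together with the K\"unneth formula (with $\Bbb R$ contractible) collapses this onto $H^*_{\rm DR}(M)$, which is precisely the argument the corollary itself invokes. Your alternative via homotopy invariance of the projection $\Bbb R\times M\to M$ is a harmless and equally valid shortcut.
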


Since elements of the differential graded algebra $\cO^*_\infty$
(\ref{01}) are exterior forms on finite order jet manifolds, this
$\cO^0_\infty$-algebra is locally generated by the horizontal form
$dt$ and contact one-forms
\be
\thh^i_\La=dq^i_\La -q^i_{t\La}dt.
\ee
Moreover, there is the canonical decomposition
\be
\cO^*_\infty =\oplus\cO^{k,m}_\infty , \qquad m=0,1,
\ee
of $\cO^*_\infty$ into $\cO^0_\infty $-modules $\cO^{k,m}_\infty$
of $k$-contact and $(m=0,1)$-horizontal forms together with the
corresponding projectors
\be
h_k:\cO^*_\infty\to \cO^{k,*}_\infty, \qquad h^m:\cO^*_\infty \to
\cO^{*,m}_\infty.
\ee
Accordingly, the exterior differential on $\cO_\infty^*$ is
decomposed into the sum $d=d_V+d_H$ of the vertical differential
\be
&& d_V: \cO_\infty^{k,m}\to \cO_\infty^{k+1,m}, \qquad d_V \circ h^m=h^m\circ d\circ h^m,\\
&& d_V(\f)=\thh^i_\La \w \dr^\La_i\f, \qquad \f\in\cO^*_\infty,
\ee
and the total differential
\mar{xx28}\ben
&& d_H: \cO_\infty^{k,m}\to \cO_\infty^{k,m+1}, \quad d_H\circ h_k=h_k\circ d\circ
h_k, \quad d_H\circ h_0=h_0\circ d, \nonumber\\
&& d_H(\f)= dt\w d_t\f, \qquad \f\in\cO^*_\infty. \label{xx28}
\een
These differentials obey the nilpotent conditions
\be
d_H\circ d_H=0, \qquad d_V\circ d_V=0, \qquad d_H\circ
d_V+d_V\circ d_H=0,
\ee
and make $\cO^{*,*}_\infty$ into a bicomplex.

One introduces the following two additional operators acting on
$\cO^{*,n}_\infty$.

(i) There exists an $\Bbb R$-module endomorphism
\mar{r12}\ben
&& \vr=\op\sum_{k>0} \frac1k\ol\vr\circ h_k\circ h^1:
\cO^{*>0,1}_\infty\to \cO^{*>0,1}_\infty,
\label{r12}\\
&& \ol\vr(\f)= \op\sum_{0\leq|\La|} (-1)^{|\La|}\thh^i\w
[d_\La(\dr^\La_i\rfloor\f)], \qquad \f\in \cO^{>0,1}_\infty,
\nonumber
\een
possessing the following properties.

\begin{lem} \label{21t1} \mar{21t1}
For any $\f\in \cO^{>0,1}_\infty$, the form $\f-\vr(\f)$ is
locally $d_H$-exact on each coordinate chart (\ref{j3'}). The
operator $\vr$ obeys the relation
\mar{21f5}\beq
(\vr\circ d_H)(\psi)=0, \qquad \psi\in \cO^{>0,0}_\infty.
\label{21f5}
\eeq
\end{lem}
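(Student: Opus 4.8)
The plan is to recognize $\vr$ as the interior Euler (Tulczyjew) operator of the variational bicomplex, specialized to the one-dimensional base $\Bbb R$, and to reduce both assertions to two elementary computational tools: the structure relation $d_H\thh^i_\La=dt\w\thh^i_{t\La}$ (equivalently $d_t\thh^i_\La=\thh^i_{t\La}$), which follows at once from $\thh^i_\La=dq^i_\La-q^i_{t\La}dt$, and the commutation relation $\dr^\La_i\rfloor d_t = d_t(\dr^\La_i\rfloor\,\cdot\,)+\dr^{\La-t}_i\rfloor\,\cdot\,$, where $\La-t$ removes one index $t$ from $\La$ and is read as $0$ when $|\La|=0$. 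Since $h^1$ annihilates every form of horizontal degree $0$, the operator $\vr$ vanishes on $\cO^{*>0,0}_\infty$, and on a form of pure bidegree $(k,1)$ it acts as $\frac1k\ol\vr$; hence it suffices to work on $\cO^{k,1}_\infty$ for $k\geq1$, and by $\cO^0_\infty$-linearity on local coordinate monomials $\f=f\,\thh^{i_1}_{\La_1}\w\cdots\w\thh^{i_k}_{\La_k}\w dt$.

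For the first assertion I would exhibit, on each chart (\ref{j3'}), a local operator $P\col\cO^{k,1}_\infty\to\cO^{k,0}_\infty$ with $\f-\vr(\f)=d_H(P\f)$. The operator $P$ encodes repeated integration by parts: from $d_H(f\,\thh^{i_1}_{\La_1}\w\cdots)= [(d_t f)\thh^{i_1}_{\La_1}\w\cdots+\sum_m f\,\thh^{i_1}_{\La_1}\w\cdots\w\thh^{i_m}_{t\La_m}\w\cdots]\w dt$ one trades a contact factor of order $|\La_m|+1$ for one of order $|\La_m|$ modulo a $d_H$-exact term. Iterating drives every contact factor down to order zero, and after collecting terms the resulting order-zero form is exactly $\frac1k\ol\vr(\f)$, the factor $\frac1k$ merely normalizing into an average the sum over the $k$ contact legs that $\ol\vr$ performs. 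Concretely, for $\f=f\,\thh^a_t\w\thh^b\w dt$ one checks directly that $\f-\vr(\f)=d_H(\frac12 f\,\thh^a\w\thh^b)$, and the general case follows by induction on the total order $\sum_j|\La_j|$ of the contact factors, each inductive step consuming one application of the structure relation and the Leibniz rule for $d_t$.

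The second assertion is the short computation. For $\psi\in\cO^{k,0}_\infty$ write $d_H\psi=dt\w d_t\psi$; since $\dr^\La_i\rfloor dt=0$ we have $\dr^\La_i\rfloor d_H\psi=-\,dt\w(\dr^\La_i\rfloor d_t\psi)$, and the commutation relation gives $\dr^\La_i\rfloor d_t\psi=d_t(\dr^\La_i\rfloor\psi)+\dr^{\La-t}_i\rfloor\psi$. Substituting into $\ol\vr(d_H\psi)=\sum_\La(-1)^{|\La|}\thh^i\w d_\La(\dr^\La_i\rfloor d_H\psi)$ and using $d_\La d_t=d_{\La+t}$ splits the sum into two parts; reindexing the part coming from $\dr^{\La-t}_i$ by $\La\mapsto\La+t$ shows it is the term-by-term negative of the part coming from $d_t(\dr^\La_i\rfloor\psi)$, so the total telescopes to $0$. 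Hence $\vr(d_H\psi)=\frac1k\ol\vr(d_H\psi)=0$, which is (\ref{21f5}).

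I expect the genuine obstacle to lie entirely in the first assertion, specifically in the combinatorial and sign bookkeeping needed to verify that the iterated integration by parts collects exactly to $\frac1k\ol\vr(\f)$ and that the resulting $P$ is a well-defined operator on each chart. There is no conceptual difficulty, since every step is controlled by the two identities isolated at the outset, but the Leibniz distribution of the total derivatives $d_\La$ across the remaining contact factors and the matching of the symmetrization factor $\frac1k$ demand care. The second assertion, by contrast, is a one-line telescoping once the commutation relation is in hand.
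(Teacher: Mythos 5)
The paper does not actually prove this lemma: it is imported wholesale from the variational-bicomplex literature (the surrounding machinery is cited from \cite{book09}), so there is no in-text argument to compare yours against. On its own merits, your proof of the relation (\ref{21f5}) is complete and correct: the graded-commutator identity $\dr^\La_i\rfloor\circ d_t-d_t\circ(\dr^\La_i\rfloor)=\dr^{\La-t}_i\rfloor$, checked on the generators $f$, $dt$, $dq^j_\Si$, does make $\ol\vr(d_H\psi)$ telescope to zero term by term, and since $\vr$ restricted to $\cO^{k,1}_\infty$ is $\frac1k\ol\vr$, the relation (\ref{21f5}) follows. Your $k=2$ sample computation for the first assertion is also correct as stated.

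The gap is in your general argument for the first assertion. You propose induction on the total contact order $\sum_j|\La_j|$, with "each inductive step consuming one application of the structure relation." But the structure relation does not decrease that quantity: integrating by parts on the leg $\thh^{i_m}_{\La_m}$ produces, via the Leibniz rule for $d_t$, remainder terms in which every \emph{other} leg $\thh^{i_j}_{\La_j}$ is promoted to $\thh^{i_j}_{t\La_j}$. Your own example exhibits this: reducing $\thh^a_t\w\thh^b$ generates $\thh^a\w\thh^b_t$, of the same total order, so the proposed induction cycles rather than terminates. What actually closes your example is not descent but a linear relation among the leg-permuted versions: with $A=f\,\thh^a_t\w\thh^b\w dt$, $B=f\,\thh^a\w\thh^b_t\w dt$, $C=(d_tf)\,\thh^a\w\thh^b\w dt$ one has $A+B+C=d_H(f\,\thh^a\w\thh^b)$, and it is solving this congruence (equivalently, averaging the $k$ single-leg reductions) that produces the factor $\frac1k$. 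To make the general case rigorous you must either set up and solve the analogous system of congruences modulo $\im d_H$ among the $k$ leg-wise reductions, or exhibit an explicit homotopy operator $P$ with $\f-\vr(\f)=d_H(P\f)$ as in the standard treatments of the interior Euler operator; the monotone descent you describe will not do it by itself.
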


It follows from Lemma \ref{21t1} that $\vr$ (\ref{r12}) is a
projector, i.e., $\vr\circ \vr=\vr$.

(ii) One defines the variational operator
\mar{21f1}\beq
\dl=\vr\circ d : \cO^{*,1}_\infty \to \cO^{*+1,1}_\infty.
\label{21f1}
\eeq

\begin{lem} \label{21t3} \mar{21t3} The variational operator $\dl$ (\ref{21f1}) is
nilpotent, i.e., $\dl\circ\dl=0$, and it obeys the relation
\mar{xx27}\beq
\dl\circ\vr=\dl. \label{xx27}
\eeq
\end{lem}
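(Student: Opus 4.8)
The plan is to establish the identity (\ref{xx27}) first and then read off nilpotency as an immediate corollary. Writing $\dl=\vr\circ d$, the relation $\dl\circ\vr=\dl$ is the same as $\vr\circ d\circ\vr=\vr\circ d$ on $\cO^{>0,1}_\infty$. Granting this, I get
\[
\dl\circ\dl=(\vr\circ d)\circ(\vr\circ d)=(\vr\circ d\circ\vr)\circ d=(\vr\circ d)\circ d=\vr\circ d^2=0,
\]
so the nilpotency $\dl\circ\dl=0$ costs nothing once (\ref{xx27}) is in hand. Thus the entire content of the lemma sits in (\ref{xx27}).

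To prove (\ref{xx27}) I would fix $\s\in\cO^{k,1}_\infty$ with $k>0$ and show directly that $\vr\,d(\s-\vr(\s))=0$, which is precisely $\dl(\s)=\dl(\vr(\s))$. By the first assertion of Lemma \ref{21t1}, the form $\s-\vr(\s)$ is locally $d_H$-exact on each coordinate chart (\ref{j3'}); write $\s-\vr(\s)=d_H\eta$ with $\eta\in\cO^{k,0}_\infty$. Because the base $\Bbb R$ is one-dimensional, $d_H$ annihilates every $(*,1)$-form, so $d(d_H\eta)=d_V(d_H\eta)=-d_H(d_V\eta)$, using the bicomplex anticommutation $d_H\circ d_V+d_V\circ d_H=0$. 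Now $d_V\eta\in\cO^{k+1,0}_\infty$ lies in horizontal degree zero and contact degree $k+1>0$, so the relation (\ref{21f5}), namely $\vr\circ d_H=0$ on $\cO^{>0,0}_\infty$, yields $\vr\,d(d_H\eta)=-\vr(d_H(d_V\eta))=0$ on that chart.

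Finally I would pass from local to global: $\vr\,d(\s-\vr(\s))$ is a globally defined form whose vanishing has just been checked on every chart, and vanishing is a local condition, so it vanishes identically. This gives (\ref{xx27}), and the first paragraph then delivers $\dl\circ\dl=0$. The step I expect to require the most care is the bidegree bookkeeping in the middle paragraph: one must verify that the only surviving piece of $d(d_H\eta)$ is $d_H(d_V\eta)$ and that $d_V\eta$ genuinely lands in $\cO^{>0,0}_\infty$, so that the hypothesis of (\ref{21f5}) is met; the remainder is formal manipulation in the variational bicomplex.
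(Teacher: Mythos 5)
Your proof is correct. The paper states Lemma \ref{21t3} without an in-text proof (it defers to the cited monograph), so there is nothing to compare against line by line; your derivation is the standard one and is complete. Reducing both claims to the single identity $\vr\circ d\circ\vr=\vr\circ d$ on $\cO^{>0,1}_\infty$ is sound: for $\si\in\cO^{k,1}_\infty$, $k>0$, Lemma \ref{21t1} gives $\si-\vr(\si)=d_H\eta$ locally with $\eta\in\cO^{k,0}_\infty$, and the bidegree bookkeeping you flag does check out, since $d(d_H\eta)=d_V(d_H\eta)=-d_H(d_V\eta)$ with $d_V\eta\in\cO^{k+1,0}_\infty\subset\cO^{>0,0}_\infty$, so the relation (\ref{21f5}) applies; globalizing is legitimate because $\vr$, $d_H$, $d_V$ are local differential operators, so the globally defined form $\vr\,d(\si-\vr(\si))$ vanishes chart by chart. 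The deduction of $\dl\circ\dl=0$ from (\ref{xx27}) is also valid on all of $\cO^{*,1}_\infty$, since for any $\f\in\cO^{k,1}_\infty$ the form $d\f$ lies in $\cO^{k+1,1}_\infty$ with $k+1>0$, where $\vr$ is defined and the identity $\vr\circ d\circ\vr=\vr\circ d$ may be applied to give $\vr\,d\,\vr\,d\f=\vr\,d^2\f=0$.
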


With operators $\vr$ (\ref{r12}) and $\dl$ (\ref{21f1}), the
bicomplex $\cO^{*,*}$ is brought into the variational bicomplex.
Let us denote $\bE_k=\vr(\cO^{k,1}_\infty)$. We have
\mar{7}\beq
\begin{array}{ccccrlcrlcrl}
 & &  &  & & \vdots & & & \vdots  & &   & \vdots \\
& & & & _{d_V} & \put(0,-7){\vector(0,1){14}} & & _{d_V} &
\put(0,-7){\vector(0,1){14}}  & & _{-\dl} & \put(0,-7){\vector(0,1){14}} \\
 &  & 0 & \to & &\cO^{1,0}_\infty &\op\to^{d_H} & &
\cO^{1,1}_\infty &\op\to^\vr &  & \bE_1\to  0\\
& & & & _{d_V} &\put(0,-7){\vector(0,1){14}} & & _{d_V} &
\put(0,-7){\vector(0,1){14}}
 & & _{-\dl} & \put(0,-7){\vector(0,1){14}} \\
0 & \to & \Bbb R & \to & & \cO^0_\infty &\op\to^{d_H} & &
\cO^{0,1}_\infty & \equiv &  & \cO^{0,1}_\infty
\end{array}
\label{7}
\eeq
This variational bicomplex possesses the following cohomology
\cite{book09}.

\begin{theo} \label{g90} \mar{g90}
The bottom row and the last column of the variational bicomplex
(\ref{7}) make up the variational complex
\mar{b317}\beq
0\to\Bbb R\to \cO^0_\infty \ar^{d_H}\cO^{0,1}_\infty
\op\longrightarrow^\dl \bE_1 \op\longrightarrow^\dl \bE_2 \ar
\cdots\,. \label{b317}
\eeq
Its cohomology is isomorphic to the de Rham cohomology of a fibre
bundle $Q$ and, consequently, the de Rham cohomology of its
typical fibre $M$ (Corollary \ref{02}).
\end{theo}

\begin{theo} \label{21t7} \mar{21t7}
The rows of contact forms of the variational bicomplex (\ref{7})
are exact sequences.
\end{theo}

Note that Theorem \ref{21t7} gives something more. Due to the
relations (\ref{xx28}) and (\ref{xx27}), we have the cochain
morphism
\be
\begin{array}{rlcrlcrlcrlcl}
 & \cO^0_\infty & \op\to^d & & \cO^1_\infty & \op\to^d & &
\cO^2_\infty & \op\to^d & &
\cO^3_\infty & \to & \cdots \\
_{h_0} & \put(0,10){\vector(0,-1){20}} & & _{h_0} &
\put(0,10){\vector(0,-1){20}}
 & & _\vr & \put(0,10){\vector(0,-1){20}} & & _\vr
& \put(0,10){\vector(0,-1){20}} & &\\
 & \cO^{0,0}_\infty & \op\to^{d_H} & & \cO^{0,1}_\infty &
\op\to^\dl & & \bE_1 & \op\to^\dl & &
\bE_2 & \longrightarrow & \cdots \\
\end{array}
\ee
of the de Rham complex (\ref{5.13}) of the differential graded
algebra $\cO^*_\infty $ to its variational complex (\ref{b317}).
By virtue of Theorems \ref{j4} and \ref{g90}, the corresponding
homomorphism of their cohomology groups is an isomorphism. A
consequence of this fact is the following.

\begin{theo} \label{t41} \mar{t41}
Any $\dl$-closed form $\f\in\cO^{k,1}$, $k=0,1$, is split into the
sum
\mar{t42a,b}\ben
&& \f=h_0\si +
d_H\xi, \qquad k=0, \qquad \xi\in \cO^{0,0}_\infty ,
\label{t42a}\\
&& \f=\vr(\si) +\dl(\xi), \qquad k=1, \qquad \xi\in
\cO^{0,1}_\infty , \label{t42b}
\een
where $\si$ is a closed $(1+k)$-form on $Q$.
\end{theo}

In Lagrangian formalism on a fibre bundle $Q\to\Bbb R$, a finite
order Lagrangian and its Lagrange operator are defined as elements
\mar{21f10-12}\ben
&& L=\cL dt\in \cO^{0,1}_\infty, \label{21f10}\\
&& \cE_L=\dl L=\cE_i\thh^i\w dt \in \bE_1, \label{21f11}\\
&& \cE_i=\op\sum_{0\leq|\La|}(-1)^{|\La|}d_\La(\dr^\La_i \cL),
\label{21f12}
\een
of the variational complex (\ref{b317}). Components $\cE_i$
(\ref{21f12}) of the Lagrange operator (\ref{21f11}) are called
the variational derivatives. Elements of $\bE_1$ are called the
Lagrange-type operators.

We agree to call a pair $(\cO^*_\infty, L)$ the Lagrangian system.

\begin{cor} \label{lmp112'} \mar{lmp112'}
A finite order Lagrangian $L$ (\ref{21f10}) is variationally
trivial, i.e., $\dl(L)=0$ iff
\mar{tams3}\beq
L=h_0\si + d_H \xi, \qquad \xi\in \cO^{0,0}_\infty, \label{tams3}
\eeq
where $\si$ is a closed one-form on $Q$.
\end{cor}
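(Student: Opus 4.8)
The plan is to prove the two implications separately: the forward direction ($\dl L=0 \Rightarrow$ the splitting) is an immediate specialization of Theorem \ref{t41}, while the converse is a short bookkeeping computation using the cochain morphism between the de Rham and variational complexes. Throughout, recall that a finite order Lagrangian $L$ (\ref{21f10}) lies in $\cO^{0,1}_\infty$, i.e.\ it is a form of contact degree $k=0$ in the variational bicomplex (\ref{7}), and that variational triviality means precisely $\dl L=0$, i.e.\ $L$ is $\dl$-closed.

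For the ``only if'' direction I would simply apply Theorem \ref{t41} in its $k=0$ branch (\ref{t42a}). That theorem asserts that any $\dl$-closed $\f\in\cO^{0,1}_\infty$ splits as $\f=h_0\si+d_H\xi$ with $\xi\in\cO^{0,0}_\infty$ and $\si$ a closed $(1+0)$-form, i.e.\ a closed one-form on $Q$. Taking $\f=L$ yields the decomposition (\ref{tams3}) verbatim, so nothing further is needed here.

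For the converse I would assume $L=h_0\si+d_H\xi$ with $\si$ a closed one-form on $Q$ and $\xi\in\cO^{0,0}_\infty$, and deduce $\dl L=0$. The key observation is that for a function $\xi$ one has $d\xi=d_H\xi+d_V\xi$, and since $d_V\xi$ has contact degree one it is annihilated by $h_0$; hence $h_0(d\xi)=d_H\xi$ and therefore $L=h_0\si+h_0(d\xi)=h_0(\si+d\xi)$. Now I would invoke commutativity of the second square of the cochain morphism displayed after Theorem \ref{21t7}, namely $\dl\circ h_0=\vr\circ d$ on $\cO^1_\infty$. This gives $\dl L=\vr\bigl(d(\si+d\xi)\bigr)=\vr(d\si)$, using $d\circ d=0$, and $d\si=0$ because $\si$ is closed, so $\dl L=\vr(0)=0$.

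Once Theorems \ref{t41} and \ref{21t7} are in hand the corollary is essentially immediate, so I do not expect a genuine obstacle; the only point demanding care is to combine the identity $d_H\xi=h_0(d\xi)$ with the correct commutativity relation $\dl\circ h_0=\vr\circ d$, and not to conflate the latter with the relation $\vr\circ d_H=0$ of (\ref{21f5}). As a cross-check one can avoid rewriting $L$ altogether: since the variational complex (\ref{b317}) is a complex (Theorem \ref{g90}) one has $\dl\circ d_H=0$, which kills the $d_H\xi$ term, while $\dl(h_0\si)=\vr(d\si)=0$ kills the $h_0\si$ term. Both routes reduce to the closedness of $\si$ and the nilpotency $d\circ d=0$.
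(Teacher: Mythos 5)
Your proposal is correct and matches the paper's route: the statement is presented there precisely as an immediate specialization of Theorem \ref{t41} at $k=0$, and the converse you spell out (reducing $\dl L$ to $\vr(d\si)=0$ via $d_H\xi=h_0(d\xi)$ and the commutativity $\dl\circ h_0=\vr\circ d$ of the cochain morphism) is the routine half the paper leaves implicit. Your care in distinguishing $\dl\circ h_0=\vr\circ d$ from $\vr\circ d_H=0$ is exactly the right point to watch.
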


\begin{cor} \label{21t12} \mar{21t12}
A finite order Lagrange-type operator $\cE\in \bE_1$ satisfies the
Helmholtz condition $\dl(\cE)=0$ iff
\mar{xx30}\beq
\cE=\dl L + \vr(\si), \qquad L\in\cO^{0,1}_\infty, \label{xx30}
\eeq
where $\si$ is a closed two-form on $Q$.
\end{cor}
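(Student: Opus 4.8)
The plan is to read this corollary as the Helmholtz specialization of Theorem \ref{t41} at contact degree $k=1$, using the fact that $\cO^{0,1}_\infty$ is exactly the module of finite order Lagrangians $L=\cL\,dt$ in the sense of (\ref{21f10}), so that $\dl$ applied to such an element is precisely the Lagrange operator $\cE_L=\dl L$ of (\ref{21f11}). Once the relevant operator identities are in place, both implications follow with almost no extra work.

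First I would dispatch the sufficiency direction by direct computation. Assuming $\cE=\dl L+\vr(\si)$ with $L\in\cO^{0,1}_\infty$ and $\si$ a closed two-form on $Q$, I compute $\dl\cE=\dl(\dl L)+\dl(\vr(\si))$. The first term vanishes by nilpotency of $\dl$ (Lemma \ref{21t3}). For the second, the cochain morphism from the de Rham complex to the variational complex displayed after Theorem \ref{21t7} shows that $\vr$ intertwines the differentials, i.e. $\dl\circ\vr=\vr\circ d$; applied to the closed form $\si$ this gives $\dl(\vr(\si))=\vr(d\si)=\vr(0)=0$. Hence $\dl\cE=0$, which is the Helmholtz condition.

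For the necessity direction, suppose $\cE\in\bE_1$ satisfies $\dl\cE=0$. Since $\bE_1=\vr(\cO^{1,1}_\infty)$ and $\vr$ is idempotent, $\cE=\vr(\cE)$ is a $\dl$-closed element of $\cO^{1,1}_\infty$, i.e. a $\dl$-closed form in $\cO^{k,1}$ with $k=1$. Theorem \ref{t41} then delivers the splitting $\cE=\vr(\si)+\dl(\xi)$ with $\xi\in\cO^{0,1}_\infty$ and $\si$ a closed two-form on $Q$. It remains only to rename: an element of $\cO^{0,1}_\infty$ is precisely a finite order Lagrangian $L=\xi$ in the sense of (\ref{21f10}), and for it $\dl\xi=\dl L$. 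Substituting gives $\cE=\dl L+\vr(\si)$, as claimed.

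The genuine analytic content — the local $d_H$-exactness and the row-exactness underlying the splitting — is carried entirely by Theorem \ref{t41}, which I may assume, so no real obstacle remains at this level. The only points demanding care are bookkeeping ones: checking that the bidegrees match so that Theorem \ref{t41} applies with $k=1$, identifying $\cO^{0,1}_\infty$ with the space of Lagrangians, and, in the sufficiency direction, exploiting the cochain identity $\dl\circ\vr=\vr\circ d$ together with $d\si=0$ rather than attempting to evaluate $\vr(\si)$ explicitly. I would finally remark, although it is not needed for the statement, that the term $\vr(\si)$ records the de Rham class of $\si$, which by Corollary \ref{02} lives in $H^2_{\rm DR}(M)$ and measures the obstruction to a Helmholtz-closed operator being globally the Lagrange operator of a single Lagrangian.
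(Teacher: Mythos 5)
Your proposal is correct and follows exactly the route the paper intends: the necessity direction is Theorem \ref{t41} with $k=1$ applied to the $\dl$-closed element $\cE\in\bE_1\subset\cO^{1,1}_\infty$, with $\xi\in\cO^{0,1}_\infty$ renamed as the Lagrangian $L$, and the sufficiency direction is the nilpotency $\dl\circ\dl=0$ of Lemma \ref{21t3} together with the commutativity $\dl\circ\vr=\vr\circ d$ of the cochain morphism and $d\si=0$. No gaps.
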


Given a Lagrangian $L$ (\ref{21f10}) and its Lagrange operator
$\dl L$ (\ref{21f11}), the kernel $\Ker\dl L\subset J^{2r}Q$ of
$\dl L$ is called the Lagrange equation. It is locally given by
the equalities
\mar{21f50}\beq
\cE_i=\op\sum_{0\leq|\La|}(-1)^{|\La|}d_\La(\dr^\La_i \cL)=0.
\label{21f50}
\eeq
However, it may happen that the Lagrange equation is not a
differential equation in accordance with Definition \ref{mos04}
because $\Ker\dl L$ need not be a closed subbundle of $J^{2r}Q\to
\Bbb R$.

\begin{ex}
Let $Q=\Bbb R^2\to\Bbb R$ be a configuration space, coordinated by
$(t,q)$. The corresponding velocity phase space $J^1Q$ is equipped
with the adapted coordinates $(t,q,q_t)$. The Lagrangian
\be
L=\frac12 q^2q_t^2dt
\ee
on $J^1Q$ leads to the Lagrange operator
\be
\cE_L=[qq^2_t-d_t(q^2q_t)]\ol dq\w dt
\ee
whose kernel is not a submanifold at the point $q=0$.
\end{ex}

\begin{theo} \label{g93} \mar{g93}
Owing to the exactness of the row of one-contact forms of the
variational bicomplex (\ref{7}) at the term $\cO^{1,1}_\infty$,
there is the decomposition
\mar{+421}\beq
dL=\dl L-d_H\gL, \label{+421}
\eeq
where a one-form $\gL$  is a Lepage equivalent of a Lagrangian $L$
\cite{book09}.
\end{theo}

Let us restrict our consideration to first order Lagrangian theory
on a fibre bundle $Q\to \Bbb R$. This is the case of Lagrangian
non-relativistic mechanics.

A first order Lagrangian is defined as a density
\mar{23f2}\beq
L=\cL dt, \qquad \cL: J^1Q\to \Bbb R, \label{23f2}
\eeq
on a velocity space $J^1Q$. The corresponding second-order
Lagrange operator (\ref{21f11}) reads
\mar{305}\beq
\dl L= (\dr_i\cL- d_t\dr^t_i\cL) \thh^i\w dt. \label{305}
\eeq
Let us further use the notation
\mar{03}\beq
\pi_i=\dr^t_i\cL, \qquad \pi_{ji}=\dr_j^t\dr_i^t\cL. \label{03}
\eeq

The kernel $\Ker\dl L\subset J^2Q$ of the Lagrange operator
defines the second order Lagrange equation
\mar{b327}\beq
(\dr_i- d_t\dr^t_i)\cL=0. \label{b327}
\eeq
Its solutions are (local) sections $c$ of the fibre bundle
$Q\to\Bbb R$ whose second order jet prolongations $\ddot c$ live
in (\ref{b327}). They obey the equations
\beq
\dr_i\cL\circ \dot c- \frac{d}{dt}(\pi_i\circ \dot c)=0.
\label{z333}
\eeq

\begin{defi}
Given a Lagrangian $L$, a holonomic connection
\be
\xi_L=\dr_t + q^i_t\dr_i + \xi^i \dr_i^t
\ee
on the jet bundle $J^1Q\to \Bbb R$ is said to be the Lagrangian
connection if it takes its values into the kernel of the Lagrange
operator $\dl L$, i.e., if it satisfies the relation
\mar{2234}\beq
\dr_i\cL - \dr_t\pi_i - q_t^j\dr_j\pi_i - \xi^j\pi_{ji} =0.
\label{2234}
\eeq
\end{defi}

A Lagrangian connection need not be unique.

Let us bring the relation (\ref{2234}) into the form
\mar{2234'}\beq
\dr_i\cL - d_t\pi_i + (q^i_{tt} - \xi^j)\pi_{ji} =0. \label{2234'}
\eeq
If a Lagrangian connection $\xi_L$ exists, it defines the second
order dynamic equation
\mar{cc205}\beq
q^i_{tt}=\xi_L \label{cc205}
\eeq
on $Q\to \Bbb R$, whose solutions also are solutions of the
Lagrange equation (\ref{b327}) by virtue of the relation
(\ref{2234'}). Conversely, since the jet bundle $J^2Q\to J^1Q$ is
affine, every solution $c$ of the Lagrange equation also is an
integral section for a holonomic connection $\xi$, which is a
global extension of the local section $J^1c(\Bbb R)\to J^2c(\Bbb
R)$ of this jet bundle over the closed imbedded submanifold
$J^1c(\Bbb R)\subset J^1Q$. Hence, every solution of the Lagrange
equation also is a solution of some second order dynamic equation,
but it is not necessarily a Lagrangian connection.

Every first order Lagrangian $L$ (\ref{23f2}) yields the bundle
morphism
\mar{a303}\beq
\wh L:J^1Q\ar_Q V^*Q,\qquad  p_i \circ\wh L = \pi_i, \label{a303}
\eeq
where $(t,q^i,p_i)$ are holonomic coordinates on the vertical
cotangent bundle $V^*Q$ of $Q\to\Bbb R$. This morphism is called
the Legendre map, and
\mar{z400}\beq
\pi_\Pi:V^*Q\to Q, \label{z400}
\eeq
is called the Legendre bundle. As was mentioned above, the
vertical cotangent bundle $V^*Q$ plays a role of the phase space
of mechanics on a configuration space $Q\to\Bbb R$. The range
$N_L=\wh L(J^1Q)$ of the Legendre map (\ref{a303}) is called the
Lagrangian constraint space.

\begin{defi} \label{d11} \mar{d11}
A Lagrangian $L$ is said to be:

$\bullet$ hyperregular if the Legendre map $\wh L$ is a
diffeomorphism;

$\bullet$ regular if $\wh L$ is a local diffeomorphism, i.e.,
$\det(\pi_{ij})\neq 0$;

$\bullet$ semiregular if the inverse image $\wh L^{-1}(p)$ of any
point $p\in N_L$ is a connected submanifold of $J^1Q$;

$\bullet$ almost regular if a Lagrangian constraint space $N_L$ is
a closed imbedded subbundle $i_N:N_L\to V^*Q$ of the Legendre
bundle $V^*Q\to Q$ and the Legendre map
\mar{cmp12}\beq
\wh L:J^1Q\to N_L \label{cmp12}
\eeq
is a fibred manifold with connected fibres (i.e., a Lagrangian is
semiregular).
\end{defi}

\begin{rem} \label{xx34} \mar{xx34}
A glance at the equation (\ref{2234}) shows that a regular
Lagrangian $L$ admits a unique Lagrangian connection
\mar{04}\beq
\xi_L=(\pi^{-1})^{ij}(-\dr_i\cL +\dr_t\pi_i +q^k_t\dr_k\pi_i).
\label{04}
\eeq
In this case, the Lagrange equation (\ref{b327}) for $L$ is
equivalent to the second order dynamic equation associated to the
Lagrangian connection (\ref{04}).
\end{rem}

\section{Cartan and Hamilton--De Donder equations}

Given a first order Lagrangian $L$, its Lepage equivalent $\gL$ in
the decomposition (\ref{+421}) is the Poincar\'e--Cartan form
\mar{303}\beq
H_L=\pi_i dq^i -(\pi_iq^i_t-\cL)dt \label{303}
\eeq
(see the notation (\ref{03})). This form takes its values into the
subbundle $J^1Q\op\times_Q T^*Q$ of $T^*J^1Q$. Hence, we have a
morphism
\mar{N41}\beq
\wh H_L: J^1Q\to T^*Q, \label{N41}
\eeq
whose range
\mar{23f10}\beq
Z_L= \wh H_L(J^1Q) \label{23f10}
\eeq
is an imbedded subbundle $i_L:Z_L\to T^*Q$ of the cotangent bundle
$T^*Q$. This morphism is called the homogeneous Legendre map. Let
$(t,q^i,p_0,p_i)$ denote the holonomic coordinates of $T^*Q$
possessing transition functions
\mar{2.3'}\beq
{p'}_i = \frac{\dr q^j}{\dr{q'}^i}p_j, \qquad p'_0=
\left(p_0+\frac{\dr q^j}{\dr t}p_j\right). \label{2.3'}
\eeq
With respect to these coordinates, the morphism $\wh H_L$
(\ref{N41}) reads
\be
(p_0,p_i)\circ \wh H_L =(\cL-q^i_t\p_i, \p_i).
\ee

A glance at the transition functions (\ref{2.3'}) shows that
$T^*Q$ is a one-dimensional affine bundle
\mar{b418'}\beq
\zeta:T^*Q\to V^*Q\label{b418'}
\eeq
over the vertical cotangent bundle $V^*Q$ (cf. (\ref{z11z})).
Moreover, the Legendre map $\wh L$ (\ref{a303}) is exactly the
composition of morphisms
\mar{m11'}\beq
\wh L=\zeta\circ H_L:J^1Q \op\to_Q V^*Q. \label{m11'}
\eeq

It is readily observed that the Poincar\'e--Cartan form $H_L$
(\ref{303}) also is the Poincar\'e--Cartan form $H_L=H_{\wt L}$ of
the first order Lagrangian
\mar{cmp80}\beq
\wt L=\wh h_0(H_L) = (\cL + (q_{(t)}^i - q_t^i)\p_i)dt, \qquad \wh
h_0(dq^i)=q^i_{(t)} dt,\label{cmp80}
\eeq
on the repeated jet manifold $J^1J^1Y$ \cite{book,book09}. The
Lagrange operator for $\wt L$ (called the Lagrange--Cartan
operator) reads
\mar{2237}\beq
\dl\wt L  = [(\dr_i\cL - \wh d_t\p_i + \dr_i\p_j(q_{(t)}^j -
q_t^j))dq^i +  \dr_i^t\p_j(q_{(t)}^j - q_t^j) dq_t^i]\w dt.
\label{2237}
\eeq
Its kernel $\Ker\dl \ol L\subset J^1J^1Q$ defines the Cartan
equation
\mar{b336}\ben
&& \dr_i^t\p_j(q_{(t)}^j - q_t^j)=0, \label{b336a}\\
&& \dr_i\cL - \wh d_t\p_i
+ \dr_i\p_j(q_{(t)}^j - q_t^j)=0 \label{b336b}
\een
on $J^1Q$. Since $\dl\wt L|_{J^2Q}=\dl L$, the Cartan equation
(\ref{b336a}) -- (\ref{b336b}) is equivalent to the Lagrange
equation (\ref{b327}) on integrable sections of $J^1Q\to X$. It is
readily observed that these equations are equivalent if a
Lagrangian $L$ is regular.

The Cartan equation (\ref{b336a}) -- (\ref{b336b}) on sections
$\ol c: \Bbb R\to J^1Q$ is equivalent to the relation
\mar{C28}\beq
\ol c^*(u\rfloor dH_L)=0, \label{C28}
\eeq
which is assumed to hold for all vertical vector fields $u$ on
$J^1Q\to \Bbb R$.

The cotangent bundle $T^*Q$ admits the Liouville form
\mar{N43}\beq
\Xi= p_0dt + p_i dq^i. \label{N43}
\eeq
Accordingly, its  imbedded subbundle $Z_L$ (\ref{23f10}) is
provided with the pull-back De Donder form $\Xi_L=i^*_L\Xi$. There
is the equality
\mar{cmp14}\beq
H_L=\wh H_L^*\Xi_L=\wh H_L^*(i_L^*\Xi).  \label{cmp14}
\eeq
By analogy with the Cartan equation (\ref{C28}), the Hamilton--De
Donder equation for sections $\ol r$ of $Z_L\to \Bbb R$ is written
as
\mar{N46}\beq
\ol r^*(u\rfloor d\Xi_L)=0, \label{N46}
\eeq
where $u$ is an arbitrary vertical vector field on $Z_L\to \Bbb
R$.

\begin{theo}\label{ddd} \mar{ddd} Let the homogeneous Legendre map
$\wh H_L$ be a submersion. Then a section $\ol c$ of $J^1Q\to \Bbb
R$ is a solution of the Cartan equation (\ref{C28}) iff $\wh
H_L\circ\ol c$ is a solution of the Hamilton--De Donder equation
(\ref{N46}), i.e., the Cartan and Hamilton--De Donder equations
are quasi-equivalent \cite{book09,got91}.
\end{theo}

\section{Lagrangian and Newtonian systems}

Let $L$ be a Lagrangian on a velocity space $J^1Q$ and $\wh L$ the
Legendre map (\ref{a303}). Due to the vertical splitting
(\ref{48}) of $VV^*Q$, the vertical tangent map $V\wh L$ to $\wh
L$ reads
\be
V\wh L: V_QJ^1Q\to V^*Q\op\times_QV^*Q.
\ee
It yields the linear bundle morphism
\mar{z350}\beq
\wh m= (\Id_{J^1Q},\pr_2\circ V\wh L): V_QJ^1Q\to V_Q^*J^1Q, \quad
\wh m: \dr_i^t\mapsto \pi_{ij}\ol dq^j_t, \label{z350}
\eeq
and consequently a fibre metric
\be
m:J^1Q\to \op\vee^2_{J^1Q}V_Q^*J^1Q
\ee
in the vertical tangent bundle $V_QJ^1Q\to J^1Q$. This fibre
metric $m$ obviously satisfies the symmetry condition
(\ref{a1.103}).

Let a Lagrangian $L$ be regular. Then the fibre metric $m$
(\ref{z350}) is non-degenerate. In accordance with Remark
\ref{xx34}, if a Lagrangian $L$ is regular, there exists a unique
Lagrangian connection $\xi_L$ for $L$ which obeys the equality
\mar{a1.91}\beq
m_{ik}\xi^k_L  +\dr_t\pi_i + \dr_j\pi_iq^j_t -\dr_i\cL=0.
\label{a1.91}
\eeq
The derivation of this equality with respect to $q^j_t$ results in
the relation (\ref{a1.95}). Thus, any regular Lagrangian $L$
defines a Newtonian system characterized by the mass tensor
$m_{ij}=\pi_{ij}$.

Now let us investigate the conditions for a Newtonian system to be
the Lagrangian one.

The equation (\ref{z355}) is the kernel of the second order
differential Lagrange type operator
\mar{z361}\beq
\cE: J^2Q\to V^*Q, \qquad \cE=m_{ik}(\xi^k-q^k_{tt})\thh^i\w
dt.\label{z361}
\eeq
A glance at the variational complex (\ref{b317}) shows that this
operator is a Lagrange operator of some Lagrangian only if it
obeys the Helmholtz condition
\be
&&\dl(\cE_i\thh^i\w dt)
= [(2\dr_j -d_t\dr^t_j +d_t^2\dr^{tt}_j)\cE_i\thh^j\w\thh^i + \\
&& \qquad
(\dr^t_j\cE_i + \dr^t_i\cE_j -
2d_t\dr^{tt}_j\cE_i)\thh^i_t\w\thh^j + (\dr^{tt}_j\cE_i
-\dr^{tt}_i\cE_j)\thh^j_{tt}\w\thh^i]\w dt =0.
\ee
This condition falls into the equalities
\mar{b361}\ben
&& \dr_j\cE_i- \dr_i\cE_j +\frac12d_t(\dr^t_i\cE_j-\dr^t_j\cE_i) =0,
\label{b361a} \\
&& \dr^t_j\cE_i + \dr^t_i\cE_j -2d_t\dr^{tt}_j\cE_i=0, \label{b361b}\\
&& \dr^{tt}_j\cE_i -\dr^{tt}_i\cE_j=0. \label{b361c}
\een
It is readily observed, that the condition (\ref{b361c}) is
satisfied since the mass tensor is symmetric. The condition
(\ref{b361b}) holds due to the equality (\ref{a1.95}) and the
property (\ref{a1.103}). Thus, it is necessary to verify the
condition (\ref{b361a}) for a Newtonian system to be a Lagrangian
one. If this condition holds, the operator $\cE$ (\ref{z361})
takes the form (\ref{xx30}) in accordance with Corollary
\ref{21t12}. If the second de Rham cohomology of $Q$ (or,
equivalently, $M$) vanishes, this operator is a Lagrange operator.

\begin{ex} \label{qqq} \mar{qqq}
Let us consider a one-dimensional motion of a point mass $m_0$
subject to friction. It is described by the equation
\beq
m_0q_{tt}=-kq_t, \qquad k> 0, \label{z369}
\eeq
on the configuration space $\Bbb R^2\to\Bbb R$ coordinated by
$(t,q)$. This mechanical system is characterized by the mass
function $m=m_0$ and the holonomic connection
\beq
\xi=\dr_t +q_t\dr_q -\frac{k}{m}q_t\dr^t_q, \label{z371}
\eeq
but it  is neither a Newtonian nor a Lagrangian system. The
conditions (\ref{b361a}) and (\ref{b361c}) are satisfied for an
arbitrary mass function $m(t,q,q_t)$, whereas the conditions
(\ref{a1.95}) and (\ref{b361b}) take the form
\beq
-kq_t\dr_q^tm -km +\dr_tm +q_t\dr_qm=0. \label{zzzz}
\eeq
The mass function  $m=$ const. fails to satisfy this relation.
Nevertheless, the equation (\ref{zzzz}) has a solution
\beq
m=m_0\exp\left[\frac{k}{m_0}t\right]. \label{z370}
\eeq
The mechanical system characterized by the mass function
(\ref{z370}) and the holonomic connection (\ref{z371}) is both a
Newtonian and Lagrangian system with the Havas Lagrangian
\mar{gm22}\beq
\cL=\frac12 m_0\exp\left[\frac{k}{m_0}t\right]q_t^2  \label{gm22}
\eeq
\cite{rie}. The corresponding Lagrange equation is equivalent to
the equation of motion (\ref{z369}).
\end{ex}

In conclusion, let us mention mechanical systems whose motion
equations are Lagrange equations plus additional non-Lagrangian
external forces. They read
\mar{z012}\beq
(\dr_i- d_t\dr^t_i)\cL +f_i(t,q^j,q^j_t)=0. \label{z012}
\eeq
Let a Lagrangian system be the Newtonian one, and let an external
force $f$ satisfy the condition (\ref{gm383}). Then the equation
(\ref{z012}) describe a Newtonian system.

\section{Lagrangian conservation laws}

In Lagrangian mechanics, integrals of motion come from variational
symmetries of a Lagrangian (Theorem \ref{035'}) in accordance with
the first Noether theorem (Theorem \ref{j22}). However, not all
integrals of motion are of this type (Example \ref{048}).

\subsection{Generalized vector fields}

Given a Lagrangian system $(\cO^*_\infty, L)$ on a fibre bundle
$Q\to\Bbb R$, its infinitesimal transformations are defined to be
contact derivations of the real ring $\cO^0_\infty$
\cite{cmp04,book09,book10}.

Let us consider the $\cO^0_\infty$-module $\gd\cO^0_\infty$ of
derivations of the real ring $\cO^0_\infty$. This module is
isomorphic to the $\cO^0_\infty$-dual $(\cO^1_\infty)^*$ of the
module of one-forms $\cO^1_\infty$. Let $\vt\rfloor\f$, $\vt\in
\gd\cO^0_\infty$, $\f\in \cO^1_\infty$, be the corresponding
interior product.  Extended to a differential graded algebra
$\cO^*_\infty$, it obeys the rule (\ref{031}).

Restricted to the coordinate chart (\ref{j3'}), any derivation of
a real ring $\cO^0_\infty$ takes the coordinate form
\be
\vt=\vt^t \dr_t + \vt^i\dr_i + \op\sum_{0<|\La|}\vt^i_\La
\dr^\La_i,
\ee
where
\be
\dr^\La_i(q_\Si^j)=\dr^\La_i\rfloor dq_\Si^j=\dl_i^j\dl^\La_\Si.
\ee
Not considering time reparametrization, we can restrict our
consideration to derivations
\mar{g3}\beq
\vt=u^t\dr_t + \vt^i\dr_i + \op\sum_{0<|\La|}\vt^i_\La \dr^\La_i,
\qquad u^t=0,1. \label{g3}
\eeq
Their coefficients $\vt^i$, $\vt^i_\La$ possess the transformation
law
\be
\vt'^i=\frac{\dr q'^i}{\dr q^j}\vt^j + \frac{\dr q'^i}{\dr t}u^t,
\qquad
&& \vt'^i_\La=\op\sum_{|\Si|\leq|\La|}\frac{\dr q'^i_\La}{\dr
q^j_\Si}\vt^j_\Si + \frac{\dr q'^i_\La}{\dr t}u^t.
\ee

Any derivation $\vt$ (\ref{g3}) of a ring $\cO^0_\infty$ yields a
derivation (a Lie derivative) $\bL_\vt$ of a differential graded
algebra $\cO^*_\infty$ which obeys the relations (\ref{034}) --
(\ref{035}).

A derivation $\vt\in \gd \cO^0_\infty$ (\ref{g3}) is called
contact if the Lie derivative $\bL_\vt$ preserves an ideal of
contact forms of a differential graded algebra $\cO^*_\infty$,
i.e., the Lie derivative $\bL_\vt$ of a contact form is a contact
form.

\begin{lem}
A derivation $\vt$ (\ref{g3}) is contact iff it takes the form
\mar{g4}\beq
\vt=u^t\dr_t +u^i\dr_i +\op\sum_{0<|\La|}
[d_\La(u^i-q^i_tu^t)+q^i_{t\La}u^t]\dr^\La_i. \label{g4}
\eeq
\end{lem}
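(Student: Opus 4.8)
The plan is to translate the contact condition into a recursion on the coefficients $\vt^i_\La$ of $\vt$ and then to solve that recursion in closed form. Since the ideal of contact forms on $J^\infty Q$ is generated, as an ideal of $\cO^*_\infty$, by the one-forms $\thh^i_\La=dq^i_\La-q^i_{t\La}dt$, and since $\bL_\vt$ is a derivation of $\cO^*_\infty$, the Lie derivative $\bL_\vt$ preserves this ideal if and only if $\bL_\vt\thh^i_\La$ is itself a contact one-form, i.e.\ an $\cO^0_\infty$-linear combination of the $\thh^j_\Si$, for every $i$ and every multi-index $\La$. Thus the first step is to compute $\bL_\vt\thh^i_\La$ directly and read off the requirement that its horizontal ($dt$) component vanish.

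For that computation I would use $\bL_\vt q^i_\La=\vt\rfloor dq^i_\La=\vt^i_\La$ (with $\vt^i_\emptyset=\vt^i$), together with $\bL_\vt\circ d=d\circ\bL_\vt$ and, crucially, the constancy $u^t\in\{0,1\}$, so that $du^t=0$. These give $\bL_\vt\thh^i_\La=\bL_\vt(dq^i_\La)-\bL_\vt(q^i_{t\La}\,dt)=d(\vt^i_\La)-\vt^i_{t\La}\,dt$. Splitting the exterior differential through $dq^j_\Si=\thh^j_\Si+q^j_{t\Si}dt$ isolates the horizontal part of $d(\vt^i_\La)$ as the total derivative $d_t\vt^i_\La$, whence $\bL_\vt\thh^i_\La=(d_t\vt^i_\La-\vt^i_{t\La})\,dt+\sum_\Si(\dr^\Si_j\vt^i_\La)\,\thh^j_\Si$. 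The displayed sum is already contact, so $\vt$ is contact if and only if $\vt^i_{t\La}=d_t\vt^i_\La$ for every $\La$.

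It then remains to solve this recursion with initial datum $\vt^i_\emptyset=u^i$ (after renaming $\vt^i=:u^i$). Because the base is one-dimensional, every multi-index is of the form $\La=(t\cdots t)$ and $d_\La=d_t^{|\La|}$, so the recursion is a single chain indexed by $|\La|$, amenable to a one-step induction. I would verify by induction on $|\La|$ that $\vt^i_\La=d_\La(u^i-q^i_tu^t)+q^i_{t\La}u^t$: the case $\La=\emptyset$ is immediate, and applying $d_t$ while using $d_tu^t=0$ gives $d_t\vt^i_\La=d_{t\La}(u^i-q^i_tu^t)+q^i_{tt\La}u^t=\vt^i_{t\La}$, which is exactly the recursion. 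This is precisely the coefficient appearing in \eqref{g4}, completing the equivalence.

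The only real bookkeeping hazard is the multi-index arithmetic in the inductive step and the telescoping cancellation of the $q^i_{tt\La}u^t$ terms; beyond that I expect no genuine obstacle. Everything is forced once one makes the key observation that preservation of the contact ideal reduces to the vanishing of the single horizontal component $d_t\vt^i_\La-\vt^i_{t\La}$, and that identification in turn rests entirely on the constancy of $u^t$, which kills the $du^t$ and $d_tu^t$ contributions.
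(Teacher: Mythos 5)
Your proof is correct; the paper states this lemma without proof, but your argument --- reducing preservation of the contact ideal to the vanishing of the $dt$-component of $\bL_\vt\thh^i_\La$ (using $\bL_\vt dt=d(u^t)=0$), which yields the prolongation recursion $\vt^i_{t\La}=d_t\vt^i_\La$, and then solving that recursion by induction on $|\La|$ --- is exactly the standard derivation of the prolongation formula found in the references the paper cites. One small remark: since $u^t$ is constant and $d_\La q^i_t=q^i_{t\La}$, the coefficient $d_\La(u^i-q^i_tu^t)+q^i_{t\La}u^t$ in (\ref{g4}) collapses identically to $d_\La u^i$, which is what your recursion produces directly, so the telescoping cancellation you flag as a hazard is automatic; the form given in the lemma merely makes the vertical part $u^i-q^i_tu^t$ explicit.
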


A glance at the expression (\ref{g4}) enables one to regard a
contact derivation $\vt$ as an infinite order jet prolongation
$\vt=J^\infty u$ of its restriction
\mar{j15}\beq
u=u^t\dr_t +u^i(t,q^i,q^i_\La)\dr_i, \qquad u^t=1,0, \label{j15}
\eeq
to a ring $C^\infty(Q)$. Since coefficients $u^i$ of $u$
(\ref{j15}) generally depend on jet coordinates $q^i_\La$,
$0<|\La|\leq r$, one calls $u$ (\ref{j15}) the generalized vector
field. It can be represented as a section of the pull-back bundle
\be
J^rQ\op\times_Q TQ \to J^rQ.
\ee

In particular, let $u$ (\ref{j15}) be a vector field
\mar{z372}\beq
u=u^t\dr_t +u^i(t,q^i)\dr_i, \qquad u^t=0,1, \label{z372}
\eeq
on a configuration space $Q\to\Bbb R$. One can think of this
vector field as being an infinitesimal generator of a local
one-parameter group of local automorphisms of a fibre bundle
$Q\to\Bbb R$. If $u^t=0$, the vertical vector field (\ref{z372})
is an infinitesimal generator of a local one-parameter group of
local vertical automorphisms of $Q\to\Bbb R$. If $u^t=1$, the
vector field $u$ (\ref{z372}) is projected onto the standard
vector field $\dr_t$ on a base $\Bbb R$ which is an infinitesimal
generator of a group of translations of $\Bbb R$.

Any contact derivation $\vt$ (\ref{g4}) admits the horizontal
splitting
\mar{g5,'}\ben
&&\vt=\vt_H +\vt_V=u^t d_t + \left[u_V^i\dr_i +
\op\sum_{0<|\La|}
d_\La u_V^i\dr_i^\La\right], \label{g5}\\
&& u=u_H+ u_V= u^t(\dr_t +q^i_t\dr_i)+(u^i-q^i_tu^t)\dr_i.
\label{g5'}
\een

\begin{lem}
Any vertical contact derivation
\mar{gg6}\beq
\vt=u^i\dr_i +\op\sum_{0<|\La|} d_\La u^i\dr_i^\La \label{gg6}
\eeq
obeys the relations
\mar{g6}\beq
\vt\rfloor d_H\f=-d_H(\vt\rfloor\f), \qquad
\bL_\vt(d_H\f)=d_H(\bL_\vt\f), \qquad \f\in\cO^*_\infty.
\label{g6}
\eeq
\end{lem}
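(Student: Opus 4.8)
The plan is to deduce both relations from Cartan's homotopy formula $\bL_\vt=\vt\rfloor d+d\,\vt\rfloor$ (relations (\ref{034})--(\ref{035})) together with the bigrading of the variational bicomplex (\ref{7}) and the splitting $d=d_H+d_V$. The starting observation is that, since $\vt$ (\ref{gg6}) is vertical, it contracts only the contact generators: $\vt\rfloor dt=0$, while $\vt\rfloor\thh^i_\La=d_\La u^i\in\cO^0_\infty$. Hence $\vt\rfloor$ lowers the contact degree by one and preserves the horizontal degree, i.e.\ $\vt\rfloor:\cO^{k,m}_\infty\to\cO^{k-1,m}_\infty$. Moreover $\bL_\vt$ preserves the bigrading $(k,m)$: indeed $\bL_\vt dt=\vt\rfloor d(dt)+d(\vt\rfloor dt)=0$, so the horizontal degree is preserved, while $\vt$ being contact means $\bL_\vt$ carries each contact one-form $\thh^i_\La\in\cO^{1,0}_\infty$ (being a one-form in the contact ideal) again into $\cO^{1,0}_\infty$, so the contact degree is preserved too; as a derivation, $\bL_\vt$ then preserves $(k,m)$ on all of $\cO^*_\infty$.

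First I would prove $\vt\rfloor d_H\f=-d_H(\vt\rfloor\f)$. Insert $d=d_H+d_V$ into Cartan's formula and sort the four resulting terms by the bidegree they shift. Since $d_H:\cO^{k,m}_\infty\to\cO^{k,m+1}_\infty$ and $d_V:\cO^{k,m}_\infty\to\cO^{k+1,m}_\infty$, the pieces $\vt\rfloor d_V$ and $d_V\,\vt\rfloor$ preserve $(k,m)$, whereas $\vt\rfloor d_H$ and $d_H\,\vt\rfloor$ both shift $(k,m)\mapsto(k-1,m+1)$. Because $\bL_\vt$ is bidegree-preserving, the off-diagonal part must vanish identically, which is exactly $\vt\rfloor d_H+d_H\,\vt\rfloor=0$, the first relation. (Equivalently one could check it on the generators $dt,\thh^i_\La$ and on functions, using $\vt\rfloor dt=0$ and $d_H\f=dt\w d_t\f$, but the bigrading argument settles all form-degrees simultaneously.)

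For the second relation I would derive $\bL_\vt d_H=d_H\bL_\vt$ from the first one together with the bicomplex identities $d_H\circ d_H=0$ and $d_H\circ d_V+d_V\circ d_H=0$, which give $d\,d_H=-d_H\,d$. Writing $\bL_\vt=\vt\rfloor d+d\,\vt\rfloor$ and repeatedly using $\vt\rfloor d_H=-d_H\,\vt\rfloor$, one computes
$$\bL_\vt d_H=\vt\rfloor d\,d_H-d\,d_H\,\vt\rfloor=-\vt\rfloor d_H d+d_H d\,\vt\rfloor=d_H\bL_\vt,$$
which is the claim; this is just the super-Jacobi identity for the odd operators $\vt\rfloor$, $d$, $d_H$, with $\{\vt\rfloor,d_H\}=0$ and $\{d,d_H\}=0$.

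The only genuinely delicate point is the bigrading claim for $\bL_\vt$: one must verify that the Lie derivative of a contact one-form is again a \emph{contact} one-form, so that it lands in $\cO^{1,0}_\infty$ rather than merely in the contact ideal in some higher degree. This is precisely the meaning of ``$\vt$ is a contact derivation''; combined with $\bL_\vt dt=0$ it fixes the $(0,0)$, $(1,0)$ and $(0,1)$ behaviour on the generators, after which the derivation property propagates bidegree preservation to all of $\cO^*_\infty$. Everything else is formal manipulation with the graded interior product (\ref{031}) and the Lie-derivative relations (\ref{034})--(\ref{035}) already available.
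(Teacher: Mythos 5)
Your argument is correct. The paper states this lemma without giving a proof, so there is nothing to compare line by line; the expected (and standard) argument, found in the cited monographs, is a direct verification on the multiplicative generators of $\cO^*_\infty$ --- namely functions, $dt$ and the contact forms $\thh^i_\La$ --- using $\vt\rfloor dt=0$, $\vt\rfloor \thh^i_\La=d_\La u^i$ and $d_H\f=dt\w d_t\f$, after which both identities propagate by the (anti)derivation property. Your bigrading argument reaches the same conclusion more conceptually: the decomposition of the Cartan formula $\bL_\vt=\vt\rfloor d+d\,\vt\rfloor$ into bidegree-homogeneous pieces, together with the fact that $\bL_\vt$ preserves the $(k,m)$-bigrading, forces the off-diagonal piece $\vt\rfloor d_H+d_H\,\vt\rfloor$ to vanish, and the second identity then follows from the graded Jacobi identity for the odd operators $\vt\rfloor$, $d$, $d_H$. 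You correctly isolate and settle the one delicate point, namely that $\bL_\vt\thh^i_\La$ is not merely in the contact ideal but lies in $\cO^{1,0}_\infty$; this holds because the Lie derivative preserves total form degree and the degree-one part of the contact ideal is exactly $\cO^{1,0}_\infty$ (one can also see it by the explicit computation $\bL_\vt\thh^i_\La=\dr_j^\Si(d_\La u^i)\thh^j_\Si$). What your route buys is that a single structural observation handles all form degrees at once and makes the second relation a formal corollary of the first; what the direct check buys is independence from the bigrading machinery and an explicit formula at each step. Either way the lemma is established.
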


We restrict our consideration to first order Lagrangian mechanics.
In this case, contact derivations (\ref{j4}) can be reduced to the
first order jet prolongation
\mar{23f41}\beq
\vt=J^1u= u^t\dr_t + u^i\dr_i + d_t u^i\dr^t_i \label{23f41}
\eeq
of the generalized vector fields $u$ (\ref{j15}).

\subsection{First Noether theorem}

Let $L$ be a Lagrangian (\ref{23f2}) on a velocity space $J^1Q$.
Let us consider its Lie derivative $\bL_\vt L$ along the contact
derivation $\vt$ (\ref{23f41}).

\begin{theo}  \label{g75} \mar{g75}
The Lie derivative $\bL_\vt L$ fulfils the first variational
formula
\mar{23f42}\beq
\bL_{J^1u}L= u_V\rfloor\dl L + d_H(u\rfloor H_L), \label{23f42}
\eeq
where $\gL=H_L$ is the Poincar\'e--Cartan form (\ref{303}). Its
coordinate expression reads
\mar{J4}\beq
[u^t\dr_t+ u^i\dr_i +d_t u^i\dr^t_i]\cL = (u^i-q^i_t u^t)\cE_i +
d_t[\pi_i(u^i-u^t q^i_t) +u^t\cL]. \label{J4}
\eeq
\end{theo}

The generalized vector field $u$ (\ref{j15}) is said to be the
variational symmetry of a Lagrangian $L$ if the Lie derivative
$\bL_{J^1u} L$ is $d_H$-exact, i.e.,
\mar{22f1}\beq
\bL_{J^1u} L=d_H\si. \label{22f1}
\eeq
Variational symmetries of $L$ constitute a real vector space which
we denote $\ccG_L$.

\begin{prop} \label{0133} \mar{0133} A glance at the first
variational formula (\ref{J4}) shows that a generalized vector
field $u$ is a variational symmetry iff the exterior form
\mar{0134}\beq
u_V\rfloor\dl L=(u^i-q^i_t u^t)\cE_i dt \label{0134}
\eeq
is $d_H$-exact.
\end{prop}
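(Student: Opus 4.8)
The plan is to read the result directly off the first variational formula (\ref{23f42}) of Theorem \ref{g75}. The structural point is that the term $d_H(u\rfloor H_L)$ occurring there is $d_H$-exact by construction, so the $d_H$-exactness of the left-hand side $\bL_{J^1u}L$ is equivalent to the $d_H$-exactness of the remaining summand $u_V\rfloor\dl L$. Thus the whole proposition reduces to a one-line rearrangement of (\ref{23f42}) together with the defining relation (\ref{22f1}) of a variational symmetry.

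First I would rewrite the first variational formula as
\[
u_V\rfloor\dl L = \bL_{J^1u}L - d_H(u\rfloor H_L).
\]
For the forward implication, assume $u$ is a variational symmetry, so that $\bL_{J^1u}L = d_H\si$ for some $\si$ by (\ref{22f1}). Substituting into the displayed identity gives $u_V\rfloor\dl L = d_H(\si - u\rfloor H_L)$, exhibiting $u_V\rfloor\dl L$ as $d_H$-exact. Conversely, if $u_V\rfloor\dl L = d_H\psi$ for some $\psi\in\cO^{0,0}_\infty$, then (\ref{23f42}) yields $\bL_{J^1u}L = d_H(\psi + u\rfloor H_L)$, so $\bL_{J^1u}L$ is $d_H$-exact and $u$ is a variational symmetry by (\ref{22f1}).

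It remains to confirm the coordinate expression in (\ref{0134}). Using $\dl L = \cE_i\thh^i\w dt$ from (\ref{21f11}) together with the horizontal splitting (\ref{g5'}), whose vertical part has components $u_V^i = u^i - q^i_t u^t$, I would contract against the contact forms $\thh^i = dq^i - q^i_t dt$. Since $u_V$ is vertical, one has $u_V\rfloor dt = 0$ and $u_V\rfloor\thh^i = u^i - q^i_t u^t$, whence $u_V\rfloor\dl L = (u^i - q^i_t u^t)\cE_i\, dt$. There is no genuine obstacle: the entire content is already packaged in the first variational formula, so the argument is an elementary rearrangement plus this routine contraction. The only point worth a moment's care is that the potential $\psi$ in the converse direction lies in the correct bidegree, which is immediate since $u_V\rfloor\dl L$ is a horizontal one-form in $\cO^{0,1}_\infty$ and $d_H:\cO^{0,0}_\infty\to\cO^{0,1}_\infty$.
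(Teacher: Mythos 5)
Your proof is correct and is precisely the argument the paper intends: the proposition is stated as an immediate consequence of the first variational formula (\ref{23f42}), and your rearrangement $u_V\rfloor\dl L = \bL_{J^1u}L - d_H(u\rfloor H_L)$ together with the definition (\ref{22f1}) spells out that ``glance'' in both directions. The coordinate verification of (\ref{0134}) via $u_V\rfloor\thh^i=u^i-q^i_tu^t$ and $u_V\rfloor dt=0$ is likewise the routine computation the paper leaves implicit.
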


\begin{prop} \label{0127} \mar{0127}
The generalized vector field $u$ (\ref{j15}) is a variational
symmetry of a Lagrangian $L$ iff its vertical part $u_V$
(\ref{g5'}) also is a variational symmetry.
\end{prop}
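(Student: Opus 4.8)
The plan is to reduce the whole statement to the first variational formula \eqref{23f42} together with the characterization of variational symmetries in Proposition \ref{0133}, exploiting the fact that $u_V$ is itself a generalized vector field to which both tools apply verbatim. First I would write the first variational formula \eqref{23f42} for $u$,
\[
\bL_{J^1u}L = u_V\rfloor\dl L + d_H(u\rfloor H_L),
\]
and then again for the vertical generalized vector field $u_V=(u^i-q^i_tu^t)\dr_i$ of \eqref{g5'}. The decisive observation is that taking the vertical part is idempotent: since the $u^t$-component of $u_V$ vanishes, the splitting \eqref{g5'} applied to $u_V$ leaves it unchanged, so $(u_V)_V=u_V$ and $(u_V)_H=0$. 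Hence the first variational formula for $u_V$ reads
\[
\bL_{J^1u_V}L = u_V\rfloor\dl L + d_H(u_V\rfloor H_L).
\]

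Subtracting the two identities, the Euler--Lagrange contributions $u_V\rfloor\dl L$ cancel and I obtain
\[
\bL_{J^1u}L - \bL_{J^1u_V}L = d_H\bigl((u-u_V)\rfloor H_L\bigr)=d_H(u_H\rfloor H_L),
\]
which is manifestly $d_H$-exact. Therefore $\bL_{J^1u}L$ is $d_H$-exact if and only if $\bL_{J^1u_V}L$ is, i.e. $u$ is a variational symmetry iff $u_V$ is, as claimed. Equivalently, and even more directly, I could simply quote Proposition \ref{0133} twice: it characterizes $u$ as a variational symmetry by $d_H$-exactness of $u_V\rfloor\dl L$, and applied to $u_V$ — whose vertical part is again $u_V$ — it yields literally the same condition, so the two equivalences coincide.

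There is essentially no obstacle in this argument; its entire content is the idempotency $(u_V)_V=u_V$ and the remark that the horizontal part $u_H=u^t d_t$ enters only through the $d_H$-exact term $d_H(u_H\rfloor H_L)$. Both facts are immediate from the horizontal splitting \eqref{g5'}, so the only care needed is to record that the contraction $u_V\rfloor\dl L$ appearing in \eqref{0134} is invariant under replacing $u$ by $u_V$, which is exactly what idempotency guarantees.
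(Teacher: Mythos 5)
Your proof is correct and is essentially the paper's argument: the paper's proof consists of the single identity $\bL_{J^1u}L=\bL_{J^1u_V}L+d_H(u^t\cL)$ (equation (\ref{0128})), and your subtraction of the two instances of the first variational formula reproduces exactly this identity, since $u_H\rfloor H_L=u^t\cL$. The only cosmetic difference is that you obtain (\ref{0128}) as a corollary of (\ref{23f42}) applied twice (using $(u_V)_V=u_V$) rather than by the paper's unspecified ``direct computation''.
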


\begin{proof} A direct computation shows that
\mar{0128}\beq
\bL_{J^1u} L = \bL_{J^1u_V} L +d_H(u^t\cL). \label{0128}
\eeq
\end{proof}

A corollary of the first variational formula (\ref{23f42}) is the
first Noether theorem.

\begin{theo} \label{j22} \mar{j22} If a contact derivation $\vt$
(\ref{g4}) is a variational symmetry (\ref{22f1}) of a Lagrangian
$L$, the first variational formula (\ref{23f42}) restricted to the
kernel of the Lagrange operator $\Ker\dl L$ yields a weak
conservation law
\mar{22f2,'}\ben
&& 0\ap d_H(u\rfloor H_L -\si), \label{22f2}\\
&& 0\ap d_t(\pi_i(u^i- u^t q^i_t) +u^t\cL -\si), \label{22f2'}
\een
of the generalized symmetry current
\mar{m225'}\beq
\gT_u=u\rfloor H_L-\si =\pi_i(u^i-u^tq^i_t) + u^t\cL-\si
\label{m225'}
\eeq
along a generalized vector field $u$. The generalized symmetry
current (\ref{m225'}) obviously is defined with the accuracy of a
constant summand.
\end{theo}

The weak conservation law (\ref{22f2}) on the shell $\dl L=0$ is
called the Lagrangian conservation law. It leads to the
differential conservation law (\ref{020}):
\be
0 = \frac{d}{dt}[\gT_u\circ J^{r+1}c],
\ee
on solutions $c$ of the Lagrange equation (\ref{z333}).

\begin{prop} \label{0130} \mar{0130} Let $u$ be a variational symmetry of a Lagrangian $L$.
By virtue of Proposition \ref{0127}, its vertical part $u_V$ is
so. It follows from the equality (\ref{0128}) that the conserved
generalized symmetry current $\gT_u$ (\ref{m225'}) along $u$
equals that $\gT_{u_V}$ along $u_V$.
\end{prop}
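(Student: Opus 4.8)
The plan is to establish the identity $\gT_u=\gT_{u_V}$ by a direct computation that tracks how the interior-product term $u\rfloor H_L$ and the $d_H$-exact remainder $\si$ behave under the horizontal--vertical splitting (\ref{g5'}) of $u$. The whole argument is bookkeeping resting on the first variational formula (\ref{23f42}) and the already-established relation (\ref{0128}); there is no genuine obstacle here, only the need to keep the two choices of the generating form $\si$ mutually consistent.

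First I would evaluate $u\rfloor H_L$ from the coordinate expression (\ref{303}) of the Poincar\'e--Cartan form. Since $u\rfloor dt=u^t$ and $u\rfloor dq^i=u^i$, this gives
\be
u\rfloor H_L=\pi_i(u^i-q^i_tu^t)+u^t\cL.
\ee
For the vertical part $u_V=(u^i-q^i_tu^t)\dr_i$ one has $u_V\rfloor dt=0$ and $u_V\rfloor dq^i=u^i-q^i_tu^t$, hence $u_V\rfloor H_L=\pi_i(u^i-q^i_tu^t)$. Comparing the two contractions yields the key relation
\be
u\rfloor H_L=u_V\rfloor H_L+u^t\cL.
\ee

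Next I would pin down the admissible generating form for $u_V$. By hypothesis $u$ is a variational symmetry, so $\bL_{J^1u}L=d_H\si$. Substituting this into (\ref{0128}) gives $\bL_{J^1u_V}L=d_H(\si-u^t\cL)$, so $u_V$ is indeed a variational symmetry (as asserted by Proposition \ref{0127}), now with the explicit choice $\si_V=\si-u^t\cL$ of its generating form.

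Finally I would assemble the two currents from their common definition (\ref{m225'}). Using the splitting of $u\rfloor H_L$ together with $\si_V=\si-u^t\cL$,
\be
\gT_{u_V}=u_V\rfloor H_L-\si_V=u_V\rfloor H_L+u^t\cL-\si=u\rfloor H_L-\si=\gT_u,
\ee
which is the claim. The only subtlety worth flagging is that symmetry currents are defined only up to an additive constant (the remark following (\ref{m225'})); the computation shows that with the canonical choice $\si_V=\si-u^t\cL$ forced by (\ref{0128}) the two currents coincide exactly, not merely up to a constant.
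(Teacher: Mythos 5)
Your proposal is correct and follows exactly the route the paper intends: the paper's justification of Proposition \ref{0130} is precisely the observation that the identity (\ref{0128}) shifts the generating form of $u_V$ to $\si-u^t\cL$, which cancels the extra $u^t\cL$ term in $u\rfloor H_L$ relative to $u_V\rfloor H_L$. Your computation merely makes explicit the bookkeeping the paper leaves implicit, including the useful remark that the equality is exact rather than up to a constant.
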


A glance at the conservation law (\ref{22f2'}) shows the
following.

\begin{theo} \label{035'} \mar{035'}
If a variational symmetry $u$ is a generalized vector field
independent of higher order jets $q^i_\La$, $|\La|>1$, the
conserved generalized current $\gT_u$ (\ref{m225'}) along $u$
plays a role of an integral of motion.
\end{theo}

Therefore, we further restrict our consideration to variational
symmetries at most of first jet order for the purpose of obtaining
integrals of motion. However, it may happen that a Lagrangian
system possesses integrals of motion which do not come from
variational symmetries (Example \ref{048}).

A variational symmetry $u$ of a Lagrangian $L$ is called its exact
symmetry if
\mar{xx45}\beq
\bL_{J^1u}L=0. \label{xx45}
\eeq
In  this case, the first variational formula (\ref{23f42}) takes
the form
\mar{xx46}\beq
0= u_V\rfloor\dl L + d_H(u\rfloor H_L). \label{xx46}
\eeq
It leads to the weak conservation law (\ref{22f2}):
\mar{gm488}\beq
0\ap d_t\gT_u, \label{gm488}
\eeq
of the symmetry current
\mar{m225}\beq
\gT_u=u\rfloor H_L=\pi_i(u^i-u^tq^i_t) + u^t\cL \label{m225}
\eeq
along a generalized vector field $u$.

\begin{rem} In accordance with the standard terminology,
if variational and exact symmetries are generalized vector fields
(\ref{j15}), they are called generalized symmetries
\cite{bry,fat,ibr,olv}. Accordingly, by variational and exact
symmetries one means only vector fields $u$ (\ref{z372}) on $Q$.
We agree to call them classical symmetries. Classical exact
symmetries are symmetries of a Lagrangian, and they are named the
Lagrangian symmetries.
\end{rem}

\begin{rem} \label{060} \mar{060}
Let us describe the relation between symmetries of a Lagrangian
and and symmetries of the corresponding Lagrange equation.  Let
$u$ be the vector field (\ref{z372}) and
\be
J^2u =u^t\dr_t + u^i\dr_i + d_tu^i\dr_i^t +d_{tt}u^i\dr_i^{tt}
\ee
its second order jet prolongation. Given a Lagrangian $L$ on
$J^1Q$, the relation
\mar{22f9}\beq
\bL_{J^2u}\dl L=\dl(\bL_{J^1u} L) \label{22f9}
\eeq
holds \cite{book,olv}. Note that this equality need not be true in
the case of a generalized vector field $u$. A vector field $u$ is
called the local variational symmetry of a Lagrangian $L$ if the
Lie derivative $\bL_{J^1u} L$ of $L$ along $u$ is variationally
trivial, i.e.,
\be
\dl(\bL_{J^1u} L) =0.
\ee
Then it follows from the equality (\ref{22f9}) that a local
(classical) variational symmetry of $L$ also is a symmetry of the
Lagrange operator $\dl L$, i.e.,
\be
\bL_{J^2u}\dl L=0,
\ee
and {\it vice versa}. Consequently, any local classical
variational symmetry $u$ of a Lagrangian $L$ is a symmetry of the
Lagrange equation (\ref{b327}) in accordance with Proposition
\ref{082}. By virtue of Theorem \ref{g90}, any local classical
variational symmetry is a classical variational symmetry if a
typical fibre $M$ of $Q$ is simply connected.
\end{rem}

\subsection{Noether conservation laws}

It is readily observed that the first variational formula
(\ref{J4}) is linear in a generalized vector field $u$. Therefore,
one can consider superposition of the identities (\ref{J4}) for
different generalized vector fields.

For instance, if $u$ and $u'$ are generalized vector fields
(\ref{j15}), projected onto the standard vector field $\dr_t$ on
$\Bbb R$, the difference of the corresponding identities
(\ref{J4}) results in the first variational formula (\ref{J4}) for
the vertical generalized vector field $u-u'$.

Conversely, every generalized vector field $u$ (\ref{z372}),
projected onto $\dr_t$, can be written as the sum
\mar{gm490}\beq
u=\G +v \label{gm490}
\eeq
of some reference frame
\mar{gm489}\beq
\G=\dr_t +\G^i\dr_i \label{gm489}
\eeq
and a vertical generalized vector field $v$ on $Q$.

It follows that the first variational formula (\ref{J4}) for the
generalized vector field $u$ (\ref{z372}) can be represented as a
superposition of those for a reference frame $\G$ (\ref{gm489})
and a vertical generalized vector field $v$.

If $u=v$ is a vertical generalized vector field, the first
variational formula (\ref{J4}) reads
\be
(v^i\dr_i +d_t v^i \dr^t_i)\cL = v^i\cE_i + d_t(\pi_i v^i).
\ee
If $v$ is an exact symmetry of $L$, we obtain from (\ref{gm488})
the weak conservation law
\mar{z383}\beq
 0\ap d_t(\pi_i v^i). \label{z383}
\eeq
By analogy with field theory \cite{book09,book10}, it is called
the Noether conservation law of the Noether current
\mar{z384}\beq
\gT_v=\pi_i v^i. \label{z384}
\eeq
If a generalized vector field $v$ is independent of higher order
jets $q^i_\La$, $|\La|>1$, the Noether current (\ref{z384}) is an
integral of motion by virtue of Theorem \ref{035'}.

\begin{ex} \label{055} \mar{055}
Let us consider a free motion on a configuration space $Q$. It is
described by a Lagrangian
\mar{gm475}\beq
L=\left(\frac12 \ol m_{ij}\ol q^i_t\ol q^j_t\right)dt, \qquad  \ol
m_{ij}={\rm const.}, \label{gm475}
\eeq
written with respect to a reference frame $(t,\ol q^i)$ such that
the free motion dynamic equation takes the form (\ref{z280}). This
Lagrangian admits $\di Q-1$ independent integrals of motion
$\p_i$.
\end{ex}

\begin{ex} \label{041} \mar{041} Let us consider a point mass
in the presence of a central potential. Its configuration space is
\mar{0155}\beq
Q=\Bbb R\times\Bbb R^3\to\Bbb R \label{0155}
\eeq
endowed with the Cartesian coordinates $(t,q^i)$. A Lagrangian of
this mechanical system reads
\mar{042}\beq
\cL=\frac12 \left(\op\sum_i(q^i_t)^2\right) - V(r), \qquad
r=\left(\op\sum_i(q^i)^2\right)^{1/2}. \label{042}
\eeq
The vector fields
\mar{043}\beq
v^a_b=q^a\dr_b -q^b\dr_a \label{043}
\eeq
are infinitesimal generators of the group $SO(3)$ acting in $\Bbb
R^3$. Their jet prolongation (\ref{23f41}) reads
\mar{044}\beq
J^1v^a_b=q^a\dr_b -q^b\dr_a + q^a_t\dr_b^t -q^b_t\dr_a^t.
\label{044}
\eeq
It is readily observed that vector fields (\ref{043}) are
symmetries of the Lagrangian (\ref{042}). The corresponding
conserved Noether currents (\ref{z384}) are orbital momenta
\mar{045}\beq
M^a_b=\gT^b_a =(q^a \pi_b - q^b\pi_a)=q^aq^b_t- q^b q^a_t.
\label{045}
\eeq
They are integrals of motion, which however fail to be
independent.
\end{ex}

\begin{ex} \label{048} \mar{048}
Let us consider the Lagrangian system in Example \ref{041} where
\mar{049}\beq
V(r)=-\frac1r \label{049}
\eeq
is the Kepler potential. This Lagrangian system possesses the
integrals of motion
\mar{050}\beq
A^a=\op\sum_b(q^aq^b_t -q^bq^a_t)q^b_t -\frac{q^a}{r}, \label{050}
\eeq
besides the orbital momenta (\ref{045}). They are components of
the Rung--Lenz vector. There is no Lagrangian symmetry whose
generalized symmetry currents are $A^a$ (\ref{050}).
\end{ex}

\subsection{Energy conservation laws}

In the case of a reference frame $\G$ (\ref{gm489}), where
$u^t=1$, the first variational formula (\ref{J4}) reads
\mar{m227}\beq
(\dr_t +\G^i\dr_i +d_t\G^i\dr_i^t)\cL = (\G^i-q^i_t)\cE_i -
d_t(\pi_i(q^i_t-\G^i) -\cL), \label{m227}
\eeq
where
\mar{m228}\beq
E_\G=-\gT_\G= \pi_i(q^i_t -\G^i) -\cL \label{m228}
\eeq
is the energy function relative to a reference frame $\G$
\cite{eche95,book10,book98,sard98}.

With respect to the coordinates adapted to a reference frame $\G$,
the first variational formula (\ref{m227}) takes the form
\mar{m229}\beq
\dr_t\cL = (\G^i-q^i_t)\cE_i - d_t(\pi_iq^i_t -\cL),\label{m229}
\eeq
and $E_\G$ (\ref{m228}) coincides with the canonical energy
function
\be
E_L=\pi_iq^i_t -\cL.
\ee
A glance at the expression (\ref{m229}) shows that the vector
field $\G$ (\ref{gm489}) is an exact symmetry of a Lagrangian $L$
iff, written with respect to coordinates adapted to $\G$, this
Lagrangian is independent on the time $t$. In this case, the
energy function $E_\G$ (\ref{m229}) relative to a reference frame
$\G$ is conserved:
\mar{043'}\beq
0\ap -d_t E_\G. \label{043'}
\eeq
It is an integral of motion in accordance with Theorem \ref{035'}.

\begin{ex}
Let us consider a free motion on a configuration space $Q$
described by the Lagrangian (\ref{gm475}) written with respect to
a reference frame $(t,\ol q^i)$ such that the free motion dynamic
equation takes the form (\ref{z280}). Let $\G$ be the associated
connection. Then the conserved energy function $E_\G$ (\ref{m228})
relative  to this reference frame $\G$ is precisely the kinetic
energy of this free motion. With respect to arbitrary bundle
coordinates $(t,q^i)$ on $Q$, it takes the form
\be
E_\G= \pi_i(q^i_t -\G^i) -\cL=\frac12m_{ij}(t,q^k)(
q^i_t-\G^i)(q^j_t-\G^j).
\ee
\end{ex}

\begin{ex}
Let us consider a one-dimensional motion of a point mass $m_0$
subject to friction on the configuration space $\Bbb R^2\to\Bbb
R$, coordinated by $(t,q)$ (Example \ref{qqq}). It is described by
the dynamic equation (\ref{z369}) which is the Lagrange equation
for the Lagrangian $L$ (\ref{gm22}). It is readily observed that
the Lie derivative of this Lagrangian along the vector field
\beq
\G=\dr_t- \frac12\frac{k}{m_0}q\dr_q \label{gm23}
\eeq
vanishes. Consequently, we have the conserved energy function
(\ref{m228}) with respect to the reference frame $\G$
(\ref{gm23}). This energy function reads
\be
E_\G=\frac12m_0\exp\left[\frac{k}{m_0}t\right]
q_t(q_t+\frac{k}{m_0}q)= \frac12m\dot q_\G^2
-\frac{mk^2}{8m_0^2}q^2,
\ee
where $m$ is the mass function (\ref{z370}).
\end{ex}

Since any generalized vector field $u$ (\ref{j15}) can be
represented as the sum (\ref{gm490}) of a reference frame $\G$
(\ref{gm489}) and a vertical generalized vector field $v$, the
symmetry current (\ref{m225}) along the generalized vector field
$u$ (\ref{z372}) is the difference
\be
\gT_u=\gT_v-E_\G
\ee
of the  Noether current $\gT_v$ (\ref{z384}) along the vertical
generalized vector field $v$ and the energy function $E_\G$
(\ref{m228}) relative to a reference frame $\G$
\cite{eche95,book10,sard98}. Conversely, energy functions relative
to different reference frames $\G$ and $\G'$ differ from each
other in the Noether current along the vertical vector field
$\G'-\G$:
\be
E_\G-E_{\G'}=\gT_{\G-\G'}.
\ee
One can regard this vector field $\G'-\G$ as the relative velocity
of a reference frame $\G'$ with respect to $\G$.

\section{Gauge symmetries}

Treating gauge symmetries of Lagrangian field theory, one is
traditionally based on an example of the Yang--Mills gauge theory
of principal connections on a principal bundle. This notion of
gauge symmetries is generalized to Lagrangian theory on an
arbitrary fibre bundle \cite{jmp09,book09}, including mechanics on
a fibre bundle $Q\to\Bbb R$.

\begin{defi} \mar{s7} \label{s7}
Let $E\to \Bbb R$ be a vector bundle and $E(\Bbb R)$ the
$C^\infty(\Bbb R)$ module of sections $\chi$ of $E\to \Bbb R$. Let
$\zeta$ be a linear differential operator on $E(\Bbb R)$ taking
its values into the vector space $\ccG_L$ of variational
symmetries of a Lagrangian $L$. Elements
\mar{gg1}\beq
u_\chi=\zeta(\chi) \label{gg1}
\eeq
of $\im\zeta$ are called the gauge symmetry of a Lagrangian $L$
parameterized by sections $\chi$ of $E\to \Bbb R$. These sections
are called the gauge parameters.
\end{defi}

\begin{rem} \label{22r100} \mar{22r100}
The differential operator $\zeta$ in Definition \ref{s7} takes its
values into the vector space $\ccG_L$  as a subspace of the
$C^\infty(\Bbb R)$-module $\gd \cO^0_\infty$, but it sends the
$C^\infty(\Bbb R)$-module $E(\Bbb R)$ into the real vector space
$\ccG_L\subset \gd \cO^0_\infty$. The differential operator
$\zeta$ is assumed to be at least of first order (Remark
\ref{g1g}).
\end{rem}

Equivalently, the gauge symmetry (\ref{gg1}) is given by a section
$\wt\zeta$ of the fibre bundle
\be
(J^rQ\op\times_Q J^mE)\op\times_Q TQ\to J^rQ\op\times_Q J^mE
\ee
(see Definition \ref{ch538}) such that
\be
u_\chi=\zeta(\chi)=\wt\zeta\circ\chi
\ee
for any section $\chi$ of $E\to \Bbb R$. Hence, it is a
generalized vector field $u_\zeta$ on the product $Q\times E$
represented by a section of the pull-back bundle
\be
J^k(Q\op\times_{\Bbb R} E)\op\times_Q T(Q\op\times_{\Bbb R} E)\to
J^k(Q\op\times_{\Bbb R} E), \qquad k={\rm max}(r,m),
\ee
which lives in
\be
TQ\subset T(Q\times E).
\ee
This generalized vector field yields the contact derivation
$J^\infty u_\zeta$ (\ref{g4}) of the real ring
$\cO^0_\infty[Q\times E]$ which obeys the following condition.

Given a Lagrangian
\be
L\in \cO^{0,n}_\infty E\subset \cO^{0,n}_\infty[Q\times E],
\ee
let us consider its Lie derivative
\mar{gg5}\beq
\bL_{J^\infty u_\zeta} L=J^\infty u_\zeta\rfloor dL + d(J^\infty
u_\zeta\rfloor L) \label{gg5}
\eeq
where $d$ is the exterior differential of $\cO^*_\infty[Q\times
E]$. Then for any section $\chi$ of $E\to \Bbb R$, the pull-back
$\chi^* \bL_{J^\infty u_\zeta} L$ is $d_H$-exact.

It follows at once from the first variational formula
(\ref{23f42}) for the Lie derivative (\ref{gg5}) that the above
mentioned condition holds only if $u_\zeta$ is projected onto a
generalized vector field on $Q$ and, in this case, iff the density
$(u_\zeta)_V\rfloor \cE$ is $d_H$-exact (Proposition \ref{0133}).
Thus, we come to the following equivalent definition of gauge
symmetries.

\begin{defi} \label{s7'} \mar{s7'} Let $E\to \Bbb R$ be a vector
bundle. A gauge symmetry of a Lagrangian $L$ parameterized by
sections $\chi$ of $E\to \Bbb R$ is defined as a contact
derivation $\vt=J^\infty u$ of the real ring $\cO^0_\infty[Q\times
E]$ such that:

(i) it vanishes on the subring $\cO^0_\infty E$,

(ii) the generalized vector field $u$ is linear in coordinates
$\chi^a_\La$ on $J^\infty E$, and it is projected onto a
generalized vector field on $Q$, i.e., it takes the form
\mar{gg2}\beq
u=\dr_t + \left(\op\sum_{0\leq|\La|\leq m}
u^{i\La}_a(t,q^j_\Si)\chi^a_\La\right)\dr_i, \label{gg2}
\eeq

(iii) the vertical part of $u$ (\ref{gg2}) obeys the equality
\mar{gg10}\beq
u_V\rfloor \dl L=d_H\si. \label{gg10}
\eeq
\end{defi}

For the sake of convenience, the generalized vector field
(\ref{gg2}) also is called the gauge symmetry. In accordance with
Proposition \ref{0127}, the $u$ (\ref{gg2}) is a gauge symmetry
iff its vertical part is so. Owing to this fact and Proposition
\ref{0130}, we can restrict our consideration to vertical gauge
symmetries
\mar{gg2'}\beq
u=\left(\op\sum_{0\leq|\La|\leq m}
u^{i\La}_a(t,q^j_\Si)\chi^a_\La\right)\dr_i. \label{gg2'}
\eeq

Gauge symmetries possess the following particular properties.

(i) Let $E'\to \Bbb R$ be another vector bundle and $\zeta'$ a
linear $E(\Bbb R)$-valued differential operator on a
$C^\infty(\Bbb R)$-module $E'(\Bbb R)$ of sections of $E'\to \Bbb
R$. Then
\be
u_{\zeta'(\chi')} =(\zeta\circ\zeta')(\chi')
\ee
also is a gauge symmetry of $L$ parameterized by sections $\chi'$
of $E'\to \Bbb R$. It factorizes through the gauge symmetry
$u_\chi$ (\ref{gg1}).

(ii) Given a gauge symmetry, the corresponding conserved symmetry
current $\gT_u$ (\ref{m225'}) vanishes on-shell (Theorem
\ref{supp} below).

(iii) The second Noether theorem associates to a gauge symmetry of
a Lagrangian $L$ the Noether identities of its Lagrange operator
$\dl L$.

\begin{theo} \label{ggg3} \mar{ggg3}
Let $u$ (\ref{gg2'}) be a gauge symmetry of a Lagrangian $L$, then
its Lagrange operator $\dl L$ obeys the Noether identities
(\ref{gg11}).
\end{theo}

\begin{proof}
The density (\ref{gg10}) is variationally trivial and, therefore,
its variational derivatives with respect to variables $\chi^a$
vanish, i.e.,
\mar{gg11}\beq
\cE_a=\op\sum_{0\leq |\La|}(-1)^{|\La|}d_\La(u^{i\La}_a\cE_i)=0.
\label{gg11}
\eeq
These are the Noether identities for the Lagrange operator $\dl L$
\cite{book09}.
\end{proof}

For instance, if the gauge symmetry $u$ (\ref{gg2}) is of second
jet order in gauge parameters, i.e.,
\mar{0656}\beq
u=(u_a^i\chi^a +u^{it}_a\chi^a_t + u_a^{itt}\chi^a_{tt})\dr_i,
\label{0656}
\eeq
the corresponding Noether identities (\ref{gg11}) take the form
\mar{0657}\beq
u^i_a\cE_i - d_t(u^{it}_a\cE_i) + d_{tt}(u_a^{itt} \cE_i)=0.
\label{0657}
\eeq

\begin{rem} \label{g1g} \mar{g1g}
A glance at the expression (\ref{0657}) shows that, if a gauge
symmetry is independent of derivatives of gauge parameters (i.e.,
the differential operator $\zeta$ in Definition \ref{s7} is of
zero order), then all variational derivatives of a Lagrangian
equals zero, i.e., this Lagrangian is variationally trivial.
Therefore, such gauge symmetries usually are not considered.
\end{rem}

If a Lagrangian $L$ admits a gauge symmetry $u$ (\ref{gg2'}),
i.e., $\bL_{J^1u}L=\si$, the weak conservation law (\ref{22f2'})
of the corresponding generalized symmetry current $\gT_u$
(\ref{m225'}) holds. We call it the gauge conservation law.
Because gauge symmetries depend on derivatives of gauge
parameters, all gauge conservation laws in first order Lagrangian
mechanics possess the following peculiarity.

\begin{theo} \label{supp} \mar{supp}
If $u$ (\ref{gg2'}) is a gauge symmetry of a first order
Lagrangian $L$, the corresponding conserved generalized symmetry
current $\gT_u$ (\ref{m225'}) vanishes on-shell, i.e., $\gT_u\ap
0$ \cite{book09,book10}.
\end{theo}

Note that the statement of Theorem \ref{supp} is a particular case
of the fact that symmetry currents of gauge symmetries in field
theory are reduced to a superpotential \cite{book09,ijgmmp09}.

\chapter{Hamiltonian mechanics}

As was mentioned above, a phase space of non-relativistic
mechanics is the vertical cotangent bundle $V^*Q$ of its
configuration space $Q\to\Bbb R$. This phase space is provided
with the canonical Poisson structure (\ref{m72}). However,
Hamiltonian mechanics on a phase space $V^*Q$ is not familiar
Poisson Hamiltonian theory on a Poisson manifold $V^*Q$ because
all Hamiltonian vector fields on $V^*Q$ are vertical. Hamiltonian
mechanics on $V^*Q$ is formulated as particular (polysymplectic)
Hamiltonian formalism on fibre bundles
\cite{book,book09,book10,book98}. Its Hamiltonian is a section of
the fibre bundle $T^*Q\to V^*Q$ (\ref{b418'}). The pull-back of
the canonical Liouville form (\ref{N43}) on $T^*Q$ with respect to
this section is a Hamiltonian one-form on $V^*Q$. The
corresponding Hamiltonian connection (\ref{z3}) on $V^*Q\to \Bbb
R$ defines a first order Hamilton equations on $V^*Q$.

Note that one can associate to any Hamiltonian system on $V^*Q$ an
autonomous symplectic Hamiltonian system on the cotangent bundle
$T^*Q$ such that the corresponding Hamilton equations on $V^*Q$
and $T^*Q$ are equivalent (Section 3.2). Moreover, a Hamilton
equations on $V^*Q$ also ia equivalent to the Lagrange equation of
a certain first order Lagrangian on a configuration space $V^*Q$
(Section 3.3).

Lagrangian and Hamiltonian formulations of mechanics fail to be
equivalent, unless a Lagrangian is hyperregular. The comprehensive
relations between Lagrangian and Hamiltonian systems can be
established in the case of almost regular Lagrangians (Section
3.4).

\section{Hamiltonian formalism on $Q\to\Bbb R$}

As was mentioned above, a phase space of mechanics on a
configuration space $Q\to\Bbb R$ is the vertical cotangent bundle
(\ref{z400}):
\be
V^*Q\ar^{\pi_\Pi} Q \ar^\pi \Bbb R,
\ee
of $Q\to\Bbb R$ equipped with the holonomic coordinates $(t,
q^i,p_i=\dot q_i)$ with respect to the fibre bases $\{\ol dq^i\}$
for the bundle $V^*Q\to Q$ \cite{book10,book98}.

The cotangent bundle $T^*Q$ of the configuration space $Q$ is
endowed  with the holonomic coordinates $(t,q^i,p_0,p_i)$,
possessing the transition functions (\ref{2.3'}). It admits the
Liouville form $\Xi$ (\ref{N43}), the symplectic form
\mar{m91'}\beq
\Om_T=d\Xi=dp_0\w dt +dp_i\w dq^i, \label{m91'}
\eeq
and the corresponding Poisson bracket
\mar{m116}\beq
\{f,g\}_T =\dr^0f\dr_tg - \dr^0g\dr_tf +\dr^if\dr_ig-\dr^ig\dr_if,
\quad f,g\in C^\infty(T^*Q). \label{m116}
\eeq
Provided with the structures (\ref{m91'}) -- (\ref{m116}), the
cotangent bundle $T^*Q$ of $Q$ plays a role of the homogeneous
phase space of Hamiltonian mechanics.

There is the canonical one-dimensional affine bundle
(\ref{b418'}):
\mar{z11'}\beq
\zeta:T^*Q\to V^*Q. \label{z11'}
\eeq
A glance at the transformation law (\ref{2.3'}) shows that it is a
trivial affine bundle. Indeed, given a global section $h$ of
$\zeta$, one can equip $T^*Q$ with the global fibre coordinate
\mar{09151}\beq
I_0=p_0-h, \qquad I_0\circ h=0, \label{09151}
\eeq
possessing the identity transition functions. With respect to the
coordinates
\mar{09150}\beq
(t,q^i,I_0,p_i), \qquad i=1,\ldots,m, \label{09150}
\eeq
the fibration (\ref{z11'}) reads
\mar{z11}\beq
\zeta: \Bbb R\times V^*Q \ni (t,q^i,I_0,p_i)\to (t,q^i,p_i)\in
V^*Q. \label{z11}
\eeq

Let us consider the subring of $C^\infty(T^*Q)$ which comprises
the pull-back $\zeta^*f$ onto $T^*Q$ of functions $f$ on the
vertical cotangent bundle $V^*Q$ by the fibration $\zeta$
(\ref{z11'}).  This subring is closed under the Poisson bracket
(\ref{m116}). Then by virtue of the well known theorem, there
exists the degenerate Poisson structure
\mar{m72}\beq
\{f,g\}_V = \dr^if\dr_ig-\dr^ig\dr_if, \qquad f,g\in
C^\infty(V^*Q), \label{m72}
\eeq
on a phase space $V^*Q$ such that
\mar{m72'}\beq
\zeta^*\{f,g\}_V=\{\zeta^*f,\zeta^*g\}_T.\label{m72'}
\eeq
The holonomic coordinates on $V^*Q$ are canonical for the Poisson
structure (\ref{m72}).

With respect to the Poisson bracket (\ref{m72}), the Hamiltonian
vector fields of functions on $V^*Q$ read
\mar{m73}\beq
\vt_f = \dr^if\dr_i- \dr_if\dr^i, \qquad f\in C^\infty(V^*Q).
\label{m73}
\eeq
They are vertical vector fields on $V^*Q\to \Bbb R$. Accordingly,
the characteristic distribution of the Poisson structure
(\ref{m72}) is the vertical tangent bundle $VV^*Q\subset TV^*Q$ of
a fibre bundle $V^*Q\to \Bbb R$. The corresponding symplectic
foliation on the phase space $V^*Q$ coincides with the fibration
$V^*Q\to \Bbb R$.

It is readily observed that the ring $\cC(V^*Q)$ of Casimir
functions on a Poisson manifold $V^*Q$ consists of the pull-back
onto $V^*Q$ of functions on $\Bbb R$. Therefore, the Poisson
algebra $C^\infty(V^*Q)$ is a Lie $C^\infty(\Bbb R)$-algebra.

\begin{rem} \label{ws529} \mar{ws529}
The Poisson structure (\ref{m72}) can be introduced in a different
way \cite{book10,book98}. Given  any section $h$ of the fibre
bundle (\ref{z11'}), let us consider the pull-back forms
\mar{z401}\ben
&& \bth=h^*(\Xi\w dt)=p_idq^i\w dt, \nonumber\\
&& \bom=h^*(d\Xi\w dt)=dp_i\w dq^i\w dt \label{z401}
\een
on $V^*Q$. They are independent of the choice of $h$. With $\bom$
(\ref{z401}), the Hamiltonian vector field $\vt_f$ (\ref{m73}) for
a function $f$ on $V^*Q$ is given by the relation
\be
\vt_f\rfloor\bom = -df\w dt,
\ee
while the Poisson bracket (\ref{m72}) is written as
\be
\{f,g\}_Vdt=\vt_g\rfloor\vt_f\rfloor\bom.
\ee
Moreover, one can show that a projectable vector field $\vt$ on
$V^*Q$ such that $\vt\rfloor dt=$const. is a canonical vector
field for the Poisson structure (\ref{m72}) iff
\mar{0100}\beq
\bL_\vt\bom=d(\vt\rfloor\bom)=0. \label{0100}
\eeq
\end{rem}

In contrast with autonomous Hamiltonian mechanics, the Poisson
structure (\ref{m72}) fails to provide any dynamic equation on a
fibre bundle $V^*Q\to\Bbb R$ because Hamiltonian vector fields
(\ref{m73}) of functions on $V^*Q$ are vertical vector fields, but
not connections on $V^*Q\to\Bbb R$ (see Definition \ref{fodeq2}).
Hamiltonian dynamics on $V^*Q$ is described as a particular
Hamiltonian dynamics on fibre bundles \cite{book10,book98,sard98}.

A Hamiltonian on a phase space $V^*Q\to\Bbb R$ of mechanics is
defined as a global section
\mar{ws513}\beq
h:V^*Q\to T^*Q, \qquad p_0\circ h=\cH(t,q^j,p_j), \label{ws513}
\eeq
of the affine bundle $\zeta$ (\ref{z11'}). Given the Liouville
form $\Xi$ (\ref{N43}) on $T^*Q$, this section yields the
pull-back Hamiltonian form
\mar{b4210}\beq
H=(-h)^*\Xi= p_k dq^k -\cH dt  \label{b4210}
\eeq
on $V^*Q$. This is the well-known invariant of Poincar\'e--Cartan
\cite{arn}.

It should be emphasized that, in contrast with a Hamiltonian in
autonomous mechanics, the Hamiltonian $\cH$ (\ref{ws513}) is not a
function on $V^*Q$, but it obeys the transformation law
\mar{0144}\beq
\cH'(t,q'^i,p'_i)=\cH(t,q^i,p_i)+ p'_i\dr_t q'^i. \label{0144}
\eeq

\begin{rem} \label{ws512} \mar{ws512}
Any connection $\G$ (\ref{a1.10}) on a configuration bundle
$Q\to\Bbb R$ defines the global section $h_\G=p_i\G^i$
(\ref{ws513}) of the affine bundle $\zeta$ (\ref{z11'}) and the
corresponding Hamiltonian form
\mar{ws515}\beq
H_\G= p_k dq^k -\cH_\G dt= p_k dq^k -p_i\G^i dt. \label{ws515}
\eeq
Furthermore, given a connection $\G$, any Hamiltonian form
(\ref{b4210}) admits the splitting
\mar{m46'}\beq
H= H_\G -\cE_\G dt, \label{m46'}
\eeq
where
\mar{xx60}\beq
\cE_\G=\cH-\cH_\G=\cH- p_i\G^i \label{xx60}
\eeq
is a function on $V^*Q$. One can think of $\cE_\G$ (\ref{xx60}) as
being an energy function relative to a reference frame $\G$
\cite{book10,jmp07}. With respect to the coordinates adapted to a
reference frame $\G$, we have $\cE_\G=\cH$. Given different
reference frames $\G$ and $\G'$, the decomposition (\ref{m46'})
leads at once to the relation
\mar{0200}\beq
\cE_{\G'}=\cE_\G + \cH_\G -\cH_{\G'}=\cE_\G + (\G^i -\G'^i)p_i
\label{0200}
\eeq
between the energy functions with respect to different reference
frames.
\end{rem}

Given a Hamiltonian form $H$ (\ref{b4210}), there exists a unique
horizontal vector field (\ref{a1.10}):
\be
\g_H=\dr_t -\g^i\dr_i -\g_i\dr^i,
\ee
on $V^*Q$ (i.e., a connection on $V^*Q\to \Bbb R$) such that
\mar{w255}\beq
\g_H\rfloor dH=0. \label{w255}
\eeq
This vector field, called the Hamilton vector field, reads
\mar{z3}\beq
\g_H=\dr_t + \dr^k\cH\dr_k- \dr_k\cH\dr^k. \label{z3}
\eeq
In a different way (Remark \ref{ws529}), the Hamilton vector field
$\g_H$ is defined by the relation
\be
\g_H\rfloor\bom=dH.
\ee
Consequently, it is canonical for the Poisson structure $\{,\}_V$
(\ref{m72}). This vector field  yields the first order dynamic
Hamilton equation
\mar{z20a,b}\ben
&& q^k_t=\dr^k\cH, \label{z20a}\\
&&  p_{tk}=-\dr_k\cH \label{z20b}
\een
on $V^*Q\to\Bbb R$ (Definition \ref{fodeq2}), where
$(t,q^k,p_k,q^k_t,\dot p_{tk})$ are the adapted coordinates on the
first order jet manifold $J^1V^*Q$ of $V^*Q\to\Bbb R$.

Due to the canonical imbedding $J^1V^*Q\to TV^*Q$ (\ref{z260}),
the Hamilton equation (\ref{z20a}) -- (\ref{z20b}) is equivalent
to the autonomous first order dynamic equation
\mar{z20}\beq
\dot t=1, \qquad \dot q^i=\dr^i\cH, \qquad \dot p_i=-\dr_i\cH
\label{z20}
\eeq
on a manifold $V^*Q$ (Definition \ref{gena70}).

A solution of the Hamilton equation (\ref{z20a}) -- (\ref{z20b})
is an integral section $r$ for the connection $\g_H$.

\begin{rem}
Similarly to the Cartan equation (\ref{C28}), the Hamilton
equation (\ref{z20a}) -- (\ref{z20b}) is equivalent to the
condition
\mar{N7}\beq
r^*(u\rfloor dH)= 0 \label{N7}
\eeq
for any vertical vector field $u$ on $V^*Q\to \Bbb R$.
\end{rem}

We agree to call $(V^*Q,H)$ the Hamiltonian system of $m=\di Q-1$
degrees of freedom.

In order to describe evolution of a Hamiltonian system at any
instant, the Hamilton vector field $\g_H$ (\ref{z3}) is assumed to
be complete, i.e., it is an Ehressmann connection (Remark
\ref{047}). In this case, the Hamilton equation (\ref{z20a}) --
(\ref{z20b}) admits a unique global solution through each point of
the phase space $V^*Q$. By virtue of Theorem \ref{compl}, there
exists a trivialization of a fibre bundle $V^*Q\to \Bbb R$ (not
necessarily compatible with its fibration $V^*Q\to Q$) such that
\mar{0102}\beq
\g_H=\dr_t, \qquad H=\ol p_id\ol q^i \label{0102}
\eeq
with respect to the associated coordinates $(t,\ol q^i, \ol p_i)$.
A direct computation shows that the Hamilton vector field $\g_H$
(\ref{z3}) satisfies the relation (\ref{0100}) and, consequently,
it is an infinitesimal generator of a one-parameter group of
automorphisms of the Poisson manifold $(V^*Q,\{,\}_V)$. Then one
can show that $(t,\ol q^i,\ol p_i)$ are canonical coordinates for
the Poisson manifold $(V^*Q,\{,\}_V)$ \cite{book98}, i.e.,
\be
w=\frac{\dr}{\dr \ol p_i}\w \frac{\dr}{\dr \ol q^i}.
\ee
Since $\cH=0$, the Hamilton equation (\ref{z20a}) -- (\ref{z20b})
in these coordinates takes the form
\be
\ol q^i_t=0, \qquad \ol p_{ti}=0,
\ee
i.e., $(t,\ol q^i,\ol p_i)$ are the initial data coordinates.

\section{Homogeneous Hamiltonian formalism}

As was mentioned above,  one can associate to any Hamiltonian
system on a phase space $V^*Q$ an equivalent autonomous symplectic
Hamiltonian system on the cotangent bundle $T^*Q$ (Theorem
\ref{09121}).

Given a Hamiltonian system $(V^*Q,H)$, its Hamiltonian $\cH$
(\ref{ws513}) defines the function
\mar{mm16}\beq
\cH^*=\dr_t\rfloor(\Xi-\zeta^* (-h)^*\Xi))=p_0+h=p_0+\cH
\label{mm16}
\eeq
on $T^*Q$. Let us regard $\cH^*$ (\ref{mm16}) as a Hamiltonian of
an autonomous Hamiltonian system on the symplectic manifold
$(T^*Q,\Om_T)$. The corresponding autonomous Hamilton equation on
$T^*Q$ takes the form
\mar{z20'}\beq
\dot t=1, \qquad \dot p_0=-\dr_t\cH, \qquad \dot q^i=\dr^i\cH,
\qquad \dot p_i=-\dr_i\cH. \label{z20'}
\eeq

\begin{rem} \label{0170} \mar{0170}
Let us note that the splitting $\cH^*=p_0+\cH$ (\ref{mm16}) is ill
defined. At the same time, any reference frame $\G$ yields the
decomposition
\mar{j3}\beq
\cH^*=(p_0+\cH_\G) + (\cH-\cH_\G) = \cH^*_\G +\cE_\G, \label{j3}
\eeq
where $\cH_\G$ is the Hamiltonian (\ref{ws515}) and $\cE_\G$
(\ref{xx60}) is the energy function relative to a reference frame
$\G$.
\end{rem}

The Hamiltonian vector field $\vt_{\cH^*}$ of $\cH^*$ (\ref{mm16})
on $T^*Q$ is
\mar{z5}\beq
\vt_{\cH^*}=\dr_t -\dr_t\cH\dr^0+ \dr^i\cH\dr_i- \dr_i\cH\dr^i.
\label{z5}
\eeq
Written relative to the coordinates (\ref{09150}), this vector
field reads
\mar{z5'}\beq
\vt_{\cH^*}=\dr_t + \dr^i\cH\dr_i- \dr_i\cH\dr^i. \label{z5'}
\eeq
It is identically projected onto the Hamilton vector field $\g_H$
(\ref{z3}) on $V^*Q$ such that
\mar{ws525}\beq
\zeta^*(\bL_{\g_H}f)=\{\cH^*,\zeta^*f\}_T, \qquad f\in
C^\infty(V^*Q). \label{ws525}
\eeq
Therefore, the Hamilton equation (\ref{z20a}) -- (\ref{z20b}) is
equivalent to the autonomous Hamilton equation (\ref{z20'}).

Obviously, the Hamiltonian vector field $\vt_{\cH^*}$ (\ref{z5'})
is complete if the Hamilton vector field $\g_H$ (\ref{z3}) is
complete.

Thus, the following has been proved \cite{dew,book10,mang00}.

\begin{theo} \label{09121} \mar{09121} A Hamiltonian system $(V^*Q,H)$
of $m$ degrees of freedom is equivalent to an autonomous
Hamiltonian system $(T^*Q,\cH^*)$ of $m+1$ degrees of freedom on a
symplectic manifold $(T^*Q,\Om)$ whose Hamiltonian is the function
$\cH^*$ (\ref{mm16}).
\end{theo}

We agree to call $(T^*Q,\cH^*)$ the homogeneous Hamiltonian system
and $\cH^*$ (\ref{mm16}) the homogeneous Hamiltonian.

\section{Lagrangian form of Hamiltonian formalism}

It is readily observed that the Hamiltonian form $H$ (\ref{b4210})
is the Poincar\'e--Cartan form of the Lagrangian
\mar{Q33}\beq
L_H=h_0(H) = (p_iq^i_t - \cH)dt \label{Q33}
\eeq
on the jet manifold $J^1V^*Q$ of $V^*Q\to\Bbb R$
\cite{book10,jmp07}.

\begin{rem} \label{0110} \mar{0110}
In fact, the Lagrangian (\ref{Q33}) is the pull-back onto
$J^1V^*Q$ of the form $L_H$ on the product $V^*Q\times_Q J^1Q$.
\end{rem}

The Lagrange operator (\ref{21f11}) associated to the Lagrangian
$L_H$ reads
\mar{3.9}\beq
\cE_H=\dl L_H=[(q^i_t-\dr^i\cH) dp_i -(p_{ti}+\dr_i\cH) dq^i]\w
dt. \label{3.9}
\eeq
The corresponding Lagrange equation (\ref{21f50}) is of first
order, and it coincides with the Hamilton equation (\ref{z20a}) --
(\ref{z20b}) on $J^1V^*Q$.

Due to this fact, the Lagrangian $L_H$ (\ref{Q33}) plays a
prominent role in Hamiltonian mechanics.

In particular, let $u$ (\ref{z372}) be a vector field  on a
configuration space $Q$. Its functorial lift (\ref{l27'}) onto the
cotangent bundle $T^*Q$ is
\mar{gm513}\beq
\wt u=u^t\dr_t + u^i\dr_i - p_j\dr_i u^j \dr^i \label{gm513}
\eeq
This vector field is identically projected onto a vector field,
also given by the expression (\ref{gm513}), on the phase space
$V^*Q$ as a base of the trivial fibre bundle (\ref{z11'}). Then we
have the equality
\mar{mm24}\beq
\bL_{\wt u}H= \bL_{J^1\wt u}L_H= (-u^t\dr_t\cH+p_i\dr_tu^i
-u^i\dr_i\cH +p_i\dr_j u^i\dr^j\cH)dt. \label{mm24}
\eeq
This equality enables us to study conservation laws in Hamiltonian
mechanics similarly to those in Lagrangian mechanics (Section
3.5).

\section{Associated Lagrangian and Hamiltonian systems}

As was mentioned above, Lagrangian and Hamiltonian formulations of
mechanics fail to be equivalent. The comprehensive relations
between Lagrangian and Hamiltonian systems can be established in
the case of almost regular Lagrangians
\cite{book10,book98,mang00,sard98}. This is a particular case of
the relations between Lagrangian and Hamiltonian theories on fibre
bundles \cite{jpa99,book09}.

In order to compare Lagrangian and Hamiltonian formalisms, we are
based on the facts that:

(i) every first order Lagrangian $L$ (\ref{21f10}) on a velocity
space $J^1Q$ induces the Legendre map (\ref{a303}) of this
velocity space to a phase space $V^*Q$;

(ii) every Hamiltonian form $H$ (\ref{b4210}) on a phase space
$V^*Q$ yields the Hamiltonian map
\mar{ws531b}\beq
\wh H: V^*Q\ar_Q J^1Q, \qquad q^i_t\circ\wh H=\dr^i\cH
\label{ws531b}
\eeq
of this phase space to a velocity space $J^1Q$.

\begin{rem}
A Hamiltonian form $H$ is called regular if the Hamiltonian map
$\wh H$ (\ref{ws531b}) is regular, i.e., a local diffeomorphism.
\end{rem}

\begin{rem} It is readily observed that a section $r$ of a fibre bundle $V^*Q\to \Bbb R$
is a solution of the Hamilton equation (\ref{z20a}) --
(\ref{z20b}) for the Hamiltonian form $H$ iff it obeys the
equality
\mar{N10}\beq
J^1(\pi_\Pi\circ r)=\wh H\circ r, \label{N10}
\eeq
where $\pi_\Pi:V^*Q\to Q$.
\end{rem}

Given a Lagrangian $L$, the Hamiltonian form $H$ (\ref{b4210}) is
said to be associated with $L$ if $H$ satisfies the relations
\mar{d2.30}\ben
&&\wh L\circ\wh H\circ \wh L=\wh L,\label{d2.30a} \\
&&\wh H^*L_H=\wh H^*L, \label{d2.30b}
\een
where $L_H$ is the Lagrangian (\ref{Q33}).

A glance at the equality (\ref{d2.30a}) shows that $\wh L\circ\wh
H$ is the projector of $V^*Q$ onto the Lagrangian constraint space
$N_L$ which is given by the coordinate conditions
\mar{060'}\beq
p_i=\pi_i(t,q^j,\dr^j\cH(t,q^j,p_j)). \label{060'}
\eeq

The relation (\ref{d2.30b}) takes the coordinate form
\mar{b481}\beq
\cH=p_i\dr^i\cH-\cL(t,q^j,\dr^j\cH). \label{b481}
\eeq
Acting on this equality by the exterior differential, we obtain
the relations
\mar{2.31,'}\ben
&& \dr_t\cH(p) =-(\dr_t\cL)\circ \wh H(p), \qquad p\in N_L,
\nonumber\\
&&  \dr_i\cH(p) =-(\dr_i\cL)\circ \wh H(p), \qquad
p\in N_L, \label{2.31}\\
&& (p_i-(\dr_i\cL)
(t,q^j,\dr^j\cH))\dr^i\dr^a\cH=0.\label{2.31'}
\een
The relation (\ref{2.31'}) shows that an $L$-associated
Hamiltonian form $H$ is not regular outside the Lagrangian
constraint space $N_L$.

For instance, let $L$ be a hyperregular Lagrangian, i.e., the
Legendre map $\wh L$ (\ref{a303}) is a diffeomorphism. It follows
from the relation (\ref{d2.30a}) that, in this case, $\wh H=\wh
L^{-1}$. Then the relation (\ref{b481}) takes the form
\mar{cc311}\beq
\cH=p_i\wh L^{-1i} - \cL(t, q^j,\wh L^{-1j}). \label{cc311}
\eeq
It defines a unique Hamiltonian form associated with a
hyperregular Lagrangian. Let $s$ be a solution of the Lagrange
equation (\ref{b327}) for a Lagrangian $L$. A direct computation
shows that $\wh L\circ J^1s$ is a solution of the Hamilton
equation (\ref{z20a}) -- (\ref{z20b}) for the Hamiltonian form $H$
(\ref{cc311}). Conversely, if $r$ is a solution of the Hamilton
equation (\ref{z20a}) -- (\ref{z20b}) for the Hamiltonian form $H$
(\ref{cc311}), then $s=\pi_\Pi\circ r$ is a solution of the
Lagrange equation (\ref{b327}) for $L$ (see the equality
(\ref{N10})). It follows that, in the case of hyperregular
Lagrangians, Hamiltonian formalism is equivalent to Lagrangian
one.

If a Lagrangian is not regular, an associated Hamiltonian form
need not exist.

A Hamiltonian form is called {\it weakly associated} with a
Lagrangian $L$ if the condition (\ref{d2.30b}) (namely, the
condition (\ref{2.31'}) holds on the Lagrangian constraint space
$N_L$.

For instance, any Hamiltonian form is weakly associated with the
Lagrangian $L=0$, while the associated Hamiltonian forms are only
$H_\G$ (\ref{ws515}).

A hyperregular Lagrangian $L$ has a unique weakly associated
Hamiltonian form (\ref{cc311}) which also is $L$-associated. In
the case of a regular Lagrangian $L$, the Lagrangian constraint
space $N_L$ is an open subbundle of the vector Legendre bundle
$V^*Q\to Q$. If $N_L\neq V^*Q$, a weakly associated Hamiltonian
form fails to be defined everywhere on $V^*Q$ in general. At the
same time, $N_L$ itself can be provided with the pull-back
symplectic structure with respect to the imbedding $N_L\to V^*Q$,
so that one may consider Hamiltonian forms on $N_L$.

Note that, in contrast with associated Hamiltonian forms, a weakly
associated Hamiltonian form may be regular.

In order to say something more, let us restrict our consideration
to almost regular Lagrangians $L$ (Definition \ref{d11})
\cite{book10,book98,mang00}.

\begin{lem} \label{d3.22} \mar{d3.22}
The Poincar\'e--Cartan form $H_L$ (\ref{303}) of an almost regular
Lagrangian $L$ is constant on the inverse image $\wh L^{-1}(z)$ of
any point $z\in N_L$.
\end{lem}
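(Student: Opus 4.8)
The plan is to recast the statement in terms of the homogeneous Legendre map $\wh H_L$ (\ref{N41}). Since the Poincar\'e--Cartan form $H_L$ (\ref{303}) takes its values in $J^1Q\op\times_Q T^*Q$, asserting that $H_L$ is constant on a fibre $\wh L^{-1}(z)$ is the same as asserting that the map $\wh H_L$ is constant on that fibre. By the factorization $\wh L=\zeta\circ\wh H_L$ (\ref{m11'}), where $\zeta$ (\ref{z11'}) simply drops the coordinate $p_0$, any two points of $\wh L^{-1}(z)$ share the same coordinates $(t,q^i)$ and the same momenta $p_i=\pi_i$; hence the values of $\wh H_L$ on the fibre can disagree only in the remaining coordinate $p_0=\cL-q^i_t\pi_i$. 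So the whole lemma reduces to showing that the function $g=\pi_iq^i_t-\cL$ is constant along each fibre $\wh L^{-1}(z)$.

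Next I would exploit that an almost regular Lagrangian is in particular semiregular (Definition \ref{d11}), so every fibre $\wh L^{-1}(z)$ is connected. Consequently it suffices to prove that $dg$ kills every vector tangent to the fibre. Because $\wh L$ covers the identity on $Q$, such a vector cannot move $t$ or $q^i$, so it must be vertical over $Q$, of the form $u=u^i\dr_i^t$; and to be tangent to $\wh L^{-1}(z)$ it must also preserve the momenta, $u\rfloor d\pi_i=0$. In view of the notation (\ref{03}) this last condition reads $u^j\pi_{ji}=0$, i.e. $u$ lies in the kernel of the vertical part of the Legendre map. I expect this identification of the tangent spaces of the fibres with $\Ker(\pi_{ij})$ to be the conceptual heart of the argument.

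Finally comes the computation, which I would keep short. Writing $dg=q^i_t\,d\pi_i+\pi_i\,dq^i_t-d\cL$ and contracting with $u=u^j\dr_j^t$, the term $\pi_i\,dq^i_t$ contributes $\pi_iu^i$, the term $-d\cL$ contributes $-u^j\dr_j^t\cL=-u^j\pi_j$, and these cancel, leaving $u\rfloor dg=q^i_t\,u^j\pi_{ji}$. The fibre condition $u^j\pi_{ji}=0$ then makes this vanish, so $dg$ annihilates all directions tangent to $\wh L^{-1}(z)$; connectedness upgrades this to $g$ being constant on the fibre, and the lemma follows. The only genuine subtlety, beyond this cancellation, is the appeal to connectedness: without it one would obtain merely a locally constant $g$, which is precisely why the hypothesis of almost regularity (hence semiregularity) is needed.
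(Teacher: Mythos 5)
Your proof is correct: the reduction of the lemma to the constancy of $\pi_iq^i_t-\cL$ along the fibres, the identification of the fibre tangent spaces with the vertical vectors $u^j\dr^t_j$ satisfying $u^j\pi_{ji}=0$, the cancellation $u\rfloor dg=q^i_tu^j\pi_{ji}$, and the appeal to connectedness of the fibres (semiregularity) are exactly the standard argument for this statement, which the paper itself states without proof and defers to the cited references. Nothing is missing.
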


A corollary of Lemma \ref{d3.22} is the following.

\begin{theo} \label{d3.22'}  All Hamiltonian forms
weakly associated with an almost regular Lagrangian $L$ coincide
with each other on the Lagrangian constraint space $N_L$, and the
Poincar\'e--Cartan form $H_L$ (\ref{303}) of $L$ is the pull-back
\mar{d2.32}\beq
H_L=\wh L^*H, \qquad \pi_iq^i_t-\cL=\cH(t,q^j,\pi_j),
\label{d2.32}
\eeq
of such a Hamiltonian form $H$.
\end{theo}

It follows that, given Hamiltonian forms $H$ and $H'$ weakly
associated with an almost regular Lagrangian $L$, their difference
is a density
\be
H'-H=(\cH-\cH')dt
\ee
vanishing on the Lagrangian constraint space $N_L$. However, $\wh
H|_{N_L}\neq \wh H'|_{N_L}$ in general. Therefore, the Hamilton
equations for $H$ and $H'$ do not necessarily coincide on the
Lagrangian constraint space $N_L$.

Theorem \ref{d3.22'} enables us to relate the Lagrange equation
for an almost regular Lagrangian $L$ with the Hamilton equation
for Hamiltonian forms weakly associated to $L$.

\begin{theo}\label{d3.23} \mar{d3.23}
Let a section $r$ of $V^*Q\to \Bbb R$ be a  solution of the
Hamilton equation (\ref{z20a}) -- (\ref{z20b}) for a Hamiltonian
form $H$ weakly associated with an almost regular Lagrangian $L$.
If $r$ lives in the Lagrangian constraint space $N_L$, the section
$s=\pi\circ r$ of $\pi:Q\to \Bbb R$ satisfies the Lagrange
equation (\ref{b327}), while $\ol s=\wh H\circ r$ obeys the Cartan
equation (\ref{b336a}) -- (\ref{b336b}).
\end{theo}

The proof is based on the relation
\be
\wt L=(J^1\wh L)^*L_H,
\ee
where $\wt L$ is the Lagrangian (\ref{cmp80}), while $L_H$ is the
Lagrangian (\ref{Q33}). This relation is derived from the equality
(\ref{d2.32}). The converse assertion is more intricate.

\begin{theo}\label{d3.24} \mar{d3.24} Given an almost regular
Lagrangian $L$, let a section $\ol s$ of the jet bundle $J^1Q\to
\Bbb R$ be a solution of the Cartan equation  (\ref{b336a}) --
(\ref{b336b}). Let $H$ be a Hamiltonian form weakly associated
with $L$,  and let $H$ satisfy the relation
\mar{2.36'}\beq
\wh H\circ \wh L\circ \ol s=J^1s, \label{2.36'}
\eeq
where $s$ is the projection of $\ol s$ onto $Q$. Then the section
$r=\wh L\circ \ol s$ of a fibre bundle $V^*Q\to \Bbb R$ is a
solution of the Hamilton equation (\ref{z20a}) -- (\ref{z20b}) for
$H$.
\end{theo}

We say that a set of Hamiltonian forms $H$ weakly associated with
an almost regular Lagrangian $L$ is complete if, for each solution
$s$ of the Lagrange equation, there exists a solution $r$ of the
Hamilton equation for a Hamiltonian form $H$ from this set such
that $s=\pi_\Pi\circ r$. By virtue of Theorem \ref{d3.24}, a set
of weakly associated Hamiltonian forms is complete if, for every
solution $s$ of the Lagrange equation for $L$, there exists a
Hamiltonian form $H$ from this set which fulfills the relation
(\ref{2.36'}) where $\ol s=J^1s$, i.e.,
\mar{072}\beq
\wh H\circ \wh L\circ J^1s=J^1s. \label{072}
\eeq

In the case of almost regular Lagrangians, one can formulate the
following necessary and sufficient conditions of the existence of
weakly associated Hamiltonian forms.

\begin{theo} \label{mm71} \mar{mm71}
A Hamiltonian form $H$ weakly associated with an almost regular
Lagrangian $L$ exists iff the fibred manifold (\ref{cmp12}):
\mar{cmp12'}\beq
\wh L: J^1Q\to N_L, \label{cmp12'}
\eeq
admits a global section.
\end{theo}

In particular, any point of $V^*Q$ possesses an open neighborhood
$U$ such that there exists a complete set of local Hamiltonian
forms on $U$ which are weakly associated with an almost regular
Lagrangian $L$. Moreover, one can construct a complete set of
local $L$-associated Hamiltonian forms on $U$ \cite{sard95}.

\section{Hamiltonian conservation laws}

As was mentioned above, integrals of motion in Lagrangian
mechanics usually come from variational symmetries of a Lagrangian
(Theorem \ref{035'}), though not all integrals of motion are of
this type (Section 2.4). In Hamiltonian mechanics, all integrals
of motion are conserved generalized symmetry currents (Theorem
\ref{0150} below).

An integral of motion of a Hamiltonian system $(V^*Q,H)$ is
defined as a smooth real function $F$ on $V^*Q$ which is an
integral of motion of the Hamilton equation (\ref{z20a}) --
(\ref{z20b}) (Section 1.10). Its Lie derivative
\mar{ws516}\beq
\bL_{\g_H} F=\dr_tF +\{\cH,F\}_V \label{ws516}
\eeq
along the Hamilton vector field $\g_H$ (\ref{z3}) vanishes in
accordance with the equation (\ref{0116}). Given the Hamiltonian
vector field $\vt_F$ of $F$ with respect to the Poisson bracket
(\ref{m72}), it is easily justified that
\mar{092}\beq
[\g_H,\vt_F]=\vt_{\bL_{\g_H} F}. \label{092}
\eeq
Consequently, the Hamiltonian vector field of an integral of
motion is a symmetry of the Hamilton equation (\ref{z20a}) --
(\ref{z20b}).

One can think of the formula (\ref{ws516}) as being the evolution
equation of Hamiltonian mechanics.

Given a Hamiltonian system $(V^*Q,H)$, let $(T^*Q,\cH^*)$ be an
equivalent homogeneous Hamiltonian system. It follows from the
equality (\ref{ws525}) that
\mar{077a}\beq
\zeta^*(\bL_{\g_H}F)=\{\cH^*,\zeta^*F\}_T =\zeta^*(\dr_tF
+\{\cH,F\}_V) \label{077a}
\eeq
for any function $F\in C^\infty(V^*Z)$. This formula is equivalent
to the evolution equation (\ref{ws516}). It is called the
homogeneous evolution equation.

\begin{prop} \label{075} \mar{075} A function $F\in
C^\infty(V^*Q)$ is an integral of motion of a Hamiltonian system
$(V^*Q,H)$ iff its pull-back $\zeta^*F$ onto $T^*Q$ is an integral
of motion of a homogeneous Hamiltonian system $(T^*Q,\cH^*)$.
\end{prop}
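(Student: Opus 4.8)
The plan is to reduce the statement to the intertwining relation (\ref{ws525}) together with the injectivity of the pull-back $\zeta^*$. First I would restate both integral-of-motion conditions as the vanishing of a Lie derivative. On the one hand, the Hamilton equation (\ref{z20a}) -- (\ref{z20b}) is the first order dynamic equation determined by the Hamilton vector field $\g_H$ (\ref{z3}); hence by the criterion (\ref{0116}), together with Definition \ref{026} specialized to the weak equality (\ref{027}), a function $F\in C^\infty(V^*Q)$ is an integral of motion of $(V^*Q,H)$ iff $\bL_{\g_H}F=0$, which is precisely the evolution equation (\ref{ws516}) set to zero. On the other hand, the autonomous Hamilton equation (\ref{z20'}) is the autonomous first order dynamic equation determined by the Hamiltonian vector field $\vt_{\cH^*}$ (\ref{z5'}); so by the criterion (\ref{0115}) a function $g\in C^\infty(T^*Q)$ is an integral of motion of $(T^*Q,\cH^*)$ iff $\bL_{\vt_{\cH^*}}g=0$, and since the Lie derivative of a function along a Hamiltonian vector field equals the Poisson bracket with its generating function $\cH^*$ (\ref{mm16}), this reads $\{\cH^*,g\}_T=0$.

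With these reformulations in hand, I would chain the equivalences for $g=\zeta^*F$. Because $\zeta:T^*Q\to V^*Q$ is a surjective submersion, the pull-back map $\zeta^*$ on functions is injective, so $\bL_{\g_H}F=0$ holds iff $\zeta^*(\bL_{\g_H}F)=0$. The relation (\ref{ws525}) then identifies $\zeta^*(\bL_{\g_H}F)$ with $\{\cH^*,\zeta^*F\}_T$; this is exactly the content already recorded in the homogeneous evolution equation (\ref{077a}). Concatenating, $F$ is an integral of motion of $(V^*Q,H)$ iff $\bL_{\g_H}F=0$ iff $\{\cH^*,\zeta^*F\}_T=0$ iff $\zeta^*F$ is an integral of motion of $(T^*Q,\cH^*)$, which is the assertion.

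The argument is essentially formal once (\ref{ws525}) is available, so the only points requiring care are the two auxiliary facts feeding into it. The first is the injectivity of $\zeta^*$, which I would justify from the fact that $\zeta$ is a fibre bundle projection, so that every point of $V^*Q$ lies in the image; the explicit trivialization (\ref{z11}) makes this transparent but is not needed. The second, and the step I expect to be the genuine crux, is the clean identification of the two integral-of-motion conditions with the vanishing of $\bL_{\g_H}F$ and of $\{\cH^*,\zeta^*F\}_T$ respectively: one must check that for the autonomous homogeneous system the Lie derivative $\bL_{\vt_{\cH^*}}$ coincides with $\{\cH^*,\cdot\}_T$, which follows from comparing $\vt_{\cH^*}$ (\ref{z5'}) with the Poisson bracket and using that $\cH^*=p_0+\cH$ is affine in $p_0$. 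Everything else is bookkeeping already subsumed in (\ref{077a}).
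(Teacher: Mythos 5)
Your proposal is correct and follows essentially the same route as the paper: the paper's proof consists of the single line $\{\cH^*,\zeta^*F\}_T=\zeta^*(\bL_{\g_H}F)$, i.e.\ the equality (\ref{077a}) derived from (\ref{ws525}), from which the equivalence is immediate. You merely make explicit the two auxiliary points the paper leaves tacit --- the identification of both integral-of-motion conditions with vanishing Lie derivatives via (\ref{0115}) and (\ref{0116}), and the injectivity of $\zeta^*$ coming from the surjectivity of $\zeta$ --- both of which are correct.
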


\begin{proof} It follows from the equality (\ref{077a}) that
\mar{077}\beq
\{\cH^*,\zeta^*F\}_T=\zeta^*(\bL_{\g_H}F)=0. \label{077}
\eeq
\end{proof}

\begin{prop} \label{076} \mar{076} If $F$ and $F'$ are
integrals of motion of a Hamiltonian system, their Poisson bracket
$\{F,F'\}_V$ also is an integral of motion.
\end{prop}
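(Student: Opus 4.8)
The plan is to reduce the statement to the single fact that the Lie derivative $\bL_{\g_H}$ along the Hamilton vector field $\g_H$ (\ref{z3}) is a derivation of the Poisson bracket $\{,\}_V$ (\ref{m72}). Recall that, by the evolution equation (\ref{ws516}) together with (\ref{0116}), a smooth function $F\in C^\infty(V^*Q)$ is an integral of motion of $(V^*Q,H)$ precisely when $\bL_{\g_H}F=\dr_tF+\{\cH,F\}_V=0$. Thus, granting that $\bL_{\g_H}$ obeys the Leibniz rule over $\{,\}_V$, we immediately obtain
\beq
\bL_{\g_H}\{F,F'\}_V=\{\bL_{\g_H}F,F'\}_V+\{F,\bL_{\g_H}F'\}_V=0, \nonumber
\eeq
so that $\{F,F'\}_V$ is itself an integral of motion. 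The whole proof therefore hinges on establishing the derivation property.

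First I would invoke the fact, already recorded in the excerpt, that $\g_H$ satisfies the relation (\ref{0100}) and is consequently an infinitesimal generator of a one-parameter group of automorphisms of the Poisson manifold $(V^*Q,\{,\}_V)$. By definition of a Poisson automorphism, its generating vector field acts on functions as a derivation of the bracket, which is exactly the Leibniz identity used above. If one prefers an explicit verification, I would instead split $\bL_{\g_H}=\dr_t+\{\cH,\cdot\}_V$ and treat the two summands separately: $\{\cH,\cdot\}_V$ is a derivation of $\{,\}_V$ by the Jacobi identity for the Poisson bracket, while $\dr_t$ is a derivation because the bivector defining $\{,\}_V$ in (\ref{m72}) has $t$-independent coefficients, so $\dr_t$ annihilates the Poisson structure tensor. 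The sum of two derivations is a derivation, which is what is needed.

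An equally clean alternative route passes through the homogeneous formalism. By Proposition \ref{075}, the pull-backs $\zeta^*F$ and $\zeta^*F'$ are integrals of motion of the equivalent homogeneous system $(T^*Q,\cH^*)$. On the symplectic manifold $(T^*Q,\Om_T)$ the Poisson theorem applies: the bracket of two integrals of motion is again an integral of motion, by the Jacobi identity for $\{,\}_T$. Using the intertwining relation (\ref{m72'}), one has $\zeta^*\{F,F'\}_V=\{\zeta^*F,\zeta^*F'\}_T$, whence $\zeta^*\{F,F'\}_V$ is an integral of motion of $(T^*Q,\cH^*)$; applying Proposition \ref{075} in the reverse direction yields the claim for $\{F,F'\}_V$ on $V^*Q$.

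The one point demanding care, rather than a genuine obstacle, is that $\{,\}_V$ is a degenerate (not symplectic) Poisson bracket, so symplectic intuition must not be used uncritically; however, both the Jacobi identity and the Leibniz rule hold for any Poisson structure, and these are all that the argument requires. The only computational check is that the structure bivector of (\ref{m72}) carries no explicit $t$-dependence, which is manifest from its coordinate expression, so that $\dr_t$ is indeed a derivation of the bracket.
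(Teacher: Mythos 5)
Your argument is correct. The paper in fact states Proposition \ref{076} without any written proof, relying on the reader to assemble it from the immediately preceding material; your second route is precisely the assembly the paper's arrangement points to: Proposition \ref{075} and the equalities (\ref{077a}), (\ref{077}) reduce the claim to the Jacobi identity for the symplectic bracket $\{,\}_T$ on $T^*Q$, since $\{\cH^*,\zeta^*F\}_T=\{\cH^*,\zeta^*F'\}_T=0$ forces $\{\cH^*,\{\zeta^*F,\zeta^*F'\}_T\}_T=0$, and (\ref{m72'}) identifies $\{\zeta^*F,\zeta^*F'\}_T$ with $\zeta^*\{F,F'\}_V$. Your first route --- showing directly that $\bL_{\g_H}$ is a derivation of $\{,\}_V$, either by citing the relation (\ref{0100}) (so $\g_H$ is a canonical vector field for the Poisson structure) or by splitting $\bL_{\g_H}=\dr_t+\{\cH,\cdot\}_V$ in a chart --- is an equally valid, more self-contained alternative that never leaves $V^*Q$; its one delicate point, which you handle correctly, is that the splitting is chart-dependent because $\cH$ is not a globally defined function (cf.\ the transformation law (\ref{0144})), whereas the combination $\bL_{\g_H}$ and the final conclusion are coordinate-independent. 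Both arguments are sound; the homogeneous route buys a cleaner global statement at the cost of passing to $T^*Q$, while the direct route makes transparent that only the Leibniz and Jacobi properties of the degenerate Poisson structure are used.
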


Consequently, integrals of motion of a Hamiltonian system
$(V^*Q,H)$ constitute a real Lie subalgebra of the Poisson algebra
$C^\infty(V^*Q)$.

Let us turn to Hamiltonian conservation laws. We are based on the
fact that the Hamilton equation (\ref{z20a}) -- (\ref{z20b}) also
is the Lagrange equation of the Lagrangian $L_H$ (\ref{Q33}).
Therefore, one can study conservation laws in Hamiltonian
mechanics similarly to those in Lagrangian mechanics
\cite{book10,jmp07}.

Since the Hamilton equation (\ref{z20a}) -- (\ref{z20b}) is of
first order, we restrict our consideration to classical
symmetries, i.e., vector fields on $V^*Q$. In this case, all
conserved generalized symmetry currents are integrals of motion.

Let
\mar{0121}\beq
\up=u^t\dr_t + \up^i\dr_i + \up_i\dr^i, \qquad u^t=0,1,
\label{0121}
\eeq
be a vector field on a phase space $V^*Q$. Its prolongation onto
$V^*Q\times_Q J^1Q$ (Remark \ref{0110}) reads
\be
J^1\up= u^t\dr_t + \up^i\dr_i + \up_i\dr^i + d_t\up^i\dr_i^t.
\ee
Then the first variational formula (\ref{J4}) for the Lagrangian
$L_H$ (\ref{Q33}) takes the form
\mar{0122}\ben
&& -u^t\dr_t\cH - \up^i\dr_i\cH +\up_i(q^i_t -\dr^i\cH) +p_id_t\up^i =
\label{0122}\\
&& \qquad -(\up^i-q^i_tu^t)(p_{ti}+\dr_i\cH)+ (\up_i-p_{ti}u^t)(q^i_t-\dr^i\cH)
 \nonumber\\
&& \qquad + d_t(p_i\up^i-u^t\cH).\nonumber
\een

If $\up$ (\ref{0121}) is a variational symmetry, i.e.,
\be
\bL_{J^1\up} L_H=d_H\si,
\ee
we obtain the weak conservation law, called the Hamiltonian
conservation law,
\mar{0171}\beq
0\ap d_t \gT_\up \label{0171}
\eeq
of the generalized symmetry current (\ref{m225'}) which reads.
\mar{0141}\beq
\gT_\up=p_i\up^i-u^t\cH-\si. \label{0141}
\eeq
This current is an integral of motion of a Hamiltonian system.

The converse also is true. Let $F$ be an integral of motion, i.e.,
\mar{0145}\beq
\bL_{\g_H} F=\dr_tF +\{\cH,F\}_V=0. \label{0145}
\eeq
We aim to show that there is a variational symmetry $\up$ of $L_H$
such that $F=\gT_\up$ is a conserved generalized symmetry current
along $\up$.

In accordance with Proposition \ref{0133}, the vector field $\up$
(\ref{0121}) is a variational symmetry iff
\mar{0140}\beq
\up^i(p_{ti}+\dr_i\cH) -\up_i(q^i_t-\dr^i\cH) +
u^t\dr_t\cH=d_t(\gT_u +u^t\cH). \label{0140}
\eeq
A glance at this equality shows the following.

\begin{prop} \label{0146} \mar{0146}
The vector field $\up$ (\ref{0121}) is a variational symmetry only
if
\mar{0147}\beq
\dr^i\up_i=-\dr_i\up^i. \label{0147}
\eeq
\end{prop}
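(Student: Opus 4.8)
The plan is to start from the first variational formula (\ref{0122}) for the Lagrangian $L_H$ and isolate precisely the condition for $\up$ to be a variational symmetry. A variational symmetry requires $\bL_{J^1\up}L_H = d_H\si$ for some $\si$, which by Proposition \ref{0133} is equivalent to the vertical part $\up_V\rfloor\dl L_H$ being $d_H$-exact. Rewriting the right-hand side of the first variational formula using this characterization produces the identity (\ref{0140}). The task is then to extract a necessary condition on the components $\up^i$ and $\up_i$ alone, independent of the jet coordinate $q^i_t$.

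First I would observe that the equation (\ref{0140}) must hold identically as a function on $V^*Q\times_Q J^1Q$, i.e.\ for all values of the velocity coordinates $q^i_t$. The left-hand side is at most linear in $q^i_t$ (the terms $p_{ti}$ and $\dr^i\cH$, $\dr_i\cH$ do not involve $q^i_t$, while $-\up_iq^i_t$ contributes a linear term and the $u^t$-term likewise), and the total derivative $d_t$ on the right expands as $d_t = \dr_t + q^i_t\dr_i + p_{ti}\dr^i$, so the right-hand side also has controlled $q^i_t$-dependence. The strategy is to collect and compare the coefficients of $q^i_t$ on both sides of (\ref{0140}).

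The key step is the coefficient-matching in $q^i_t$. On the right-hand side, $d_t(\gT_u + u^t\cH)$ contributes a term $q^j_t\,\dr_j(\gT_u + u^t\cH)$; since $\gT_u = p_i\up^i - u^t\cH - \si$, differentiating $\dr_j$ picks up $p_i\,\dr_j\up^i$ among other pieces, and crucially the $\dr^i$ part of $d_t$ applied through the $p_{ti}$-dependence interacts with $\up_i$. On the left-hand side the genuinely $q^i_t$-linear contribution comes from $\up_i q^i_t$. Balancing these linear-in-$q^i_t$ terms — and in particular matching the purely momentum-derivative pieces $\dr^i\up_i$ against $\dr_i\up^i$ arising from the $\up^i$ in the current — yields exactly the relation $\dr^i\up_i = -\dr_i\up^i$ claimed in (\ref{0147}).

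The main obstacle is bookkeeping: one must carefully expand $d_t$ acting on $\gT_u + u^t\cH$, track which terms are constant, linear, or higher in $q^i_t$, and verify that after cancellations the $\cH$-dependent and $p_{ti}$-dependent contributions drop out, leaving only the derivative-matching condition. I expect that once the linear-in-$q^i_t$ terms are correctly assembled, the condition (\ref{0147}) falls out immediately as the necessary (but not sufficient) requirement, the insufficiency being exactly why the proposition is stated with ``only if'' rather than ``iff.''
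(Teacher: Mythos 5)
Your starting point is the right one: the paper's (unwritten) argument behind ``a glance at this equality shows'' is precisely that (\ref{0140}) must hold identically in the jet coordinates, so necessary conditions are read off by comparing coefficients. But your execution has a gap. Writing $F=\gT_\up+u^t\cH=p_i\up^i-\si$, the total derivative expands as $d_tF=\dr_tF+q^i_t\dr_iF+p_{ti}\dr^iF$, while the left-hand side of (\ref{0140}) is $\up^ip_{ti}-\up_iq^i_t+(\up^i\dr_i\cH+\up_i\dr^i\cH+u^t\dr_t\cH)$. Identification therefore gives \emph{two} families of conditions: $\up^i=\dr^iF$ from the $p_{ti}$-coefficients and $\up_i=-\dr_iF$ from the $q^i_t$-coefficients. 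Your plan invokes only the second. By itself it yields $\up_j=-p_i\dr_j\up^i+\dr_j\si$, whence $\dr^j\up_j+\dr_j\up^j=-p_i\dr^j\dr_j\up^i+\dr^j\dr_j\si$, and the right-hand side does not vanish without also using the $p_{ti}$-matching. So the relation does not ``fall out immediately'' from balancing the $q^i_t$-linear terms alone.

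The step you omit is the one that actually produces (\ref{0147}): from $\up^i=\dr^iF$ and $\up_i=-\dr_iF$ apply $\dr_i$ to the first and $\dr^i$ to the second, sum over $i$, and use the symmetry of mixed partial derivatives to get $\dr_i\up^i=\dr_i\dr^iF=\dr^i\dr_iF=-\dr^i\up_i$. (The same computation gives the stronger untraced identities $\dr_j\up^i=-\dr^i\up_j$, $\dr^j\up^i=\dr^i\up^j$ and $\dr_j\up_i=\dr_i\up_j$, i.e.\ closedness of the one-form $\up^idp_i-\up_idq^i$; the proposition records only the trace, which is why the condition is necessary but not sufficient.) Your phrase about ``matching the purely momentum-derivative pieces $\dr^i\up_i$ against $\dr_i\up^i$'' conflates these stages: neither $\dr^i\up_i$ nor $\dr_i\up^i$ appears as the coefficient of anything in (\ref{0140}); they arise only after differentiating the two matched identities and cannot be obtained from the $q^i_t$-coefficients alone.
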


For instance, if the vector field $\up$ (\ref{0121}) is
projectable onto $Q$ (i.e., its components $\up^i$ are independent
of momenta $p_i$), we obtain that $u_i=-p_j\dr_iu^j$.
Consequently, $\up$ is the canonical lift $\wt u$ (\ref{gm513})
onto $V^*Q$ of the vector field $u$ (\ref{z372}) on $Q$. Moreover,
let $\wt u$ be a variational symmetry of a Lagrangian $L_H$. It
follows at once from the equality (\ref{0140}) that $\wt u$ is an
exact symmetry of $L_H$. The corresponding conserved symmetry
current reads
\mar{0150}\beq
\gT_{\wt up}=p_iu^i-u^t\cH. \label{0150}
\eeq

We agree to call the vector field $u$ (\ref{z372}) the Hamiltonian
symmetry if its canonical lift $\wt u$ (\ref{gm513}) onto $V^*Q$
is a variational (consequently, exact) symmetry of the Lagrangian
$L_H$ (\ref{Q33}).  If a Hamiltonian symmetry is vertical, the
corresponding conserved symmetry current $\gT_{\wt u}=p_iu^i$ is
called the Noether current.

\begin{prop} \label{0160} \mar{0160}
The Hamilton vector field $\g_H$ (\ref{z3}) is a unique
variational symmetry of $L_H$ whose conserved generalized symmetry
current equals zero.
\end{prop}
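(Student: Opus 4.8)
The plan is to exploit the reformulation of the variational-symmetry condition that follows directly from the first variational formula (\ref{23f42}). Since the Poincar\'e--Cartan form of the Lagrangian $L_H$ (\ref{Q33}) is the Hamiltonian form $H$ (\ref{b4210}) itself, the first variational formula reads $\bL_{J^1\up}L_H=\up_V\rfloor\dl L_H+d_H(\up\rfloor H)$, with $\up\rfloor H=p_i\up^i-u^t\cH$. Hence a vector field $\up$ (\ref{0121}) is a variational symmetry precisely when $\up_V\rfloor\dl L_H=-d_H\gT_\up$, where $\gT_\up$ is the current (\ref{0141}). Imposing the hypothesis $\gT_\up=0$ therefore collapses the whole problem onto the single algebraic identity $\up_V\rfloor\dl L_H=0$, which is what I would solve.

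First I would write out this contraction explicitly. Using $\dl L_H=\cE_H$ (\ref{3.9}) together with the vertical part $\up_V=(\up^i-u^tq^i_t)\dr_i+(\up_i-u^tp_{ti})\dr^i$, a direct computation gives
$$\up_V\rfloor\dl L_H=[(q^i_t-\dr^i\cH)(\up_i-u^tp_{ti})-(p_{ti}+\dr_i\cH)(\up^i-u^tq^i_t)]\,dt.$$
The terms quadratic in the jet coordinates, namely $\pm u^tq^i_tp_{ti}$, cancel identically, so the bracket is \emph{affine} in $(q^i_t,p_{ti})$. Because $\up$ is an honest vector field on $V^*Q$, its components $u^t,\up^i,\up_i$ and the functions $\cH,\dr^i\cH,\dr_i\cH$ are functions of $(t,q^j,p_j)$ alone, independent of the jet coordinates; thus the vanishing identity must hold coefficient-by-coefficient in $q^i_t$ and $p_{ti}$.

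Collecting these coefficients yields $\up^i=u^t\dr^i\cH$ and $\up_i=-u^t\dr_i\cH$, while the jet-independent remainder $-\dr^i\cH\,\up_i-\dr_i\cH\,\up^i$ then vanishes automatically. For the admissible value $u^t=1$ this is exactly $\up=\dr_t+\dr^i\cH\,\dr_i-\dr_i\cH\,\dr^i=\g_H$ (\ref{z3}), whereas the only other admissible value $u^t=0$ forces $\up=0$, the trivial field carrying no genuine symmetry. For the converse I would substitute $\up=\g_H$ back into the bracket above to confirm that $\up_V\rfloor\dl L_H$ vanishes identically; the first variational formula then gives $\bL_{J^1\g_H}L_H=d_H(p_i\dr^i\cH-\cH)$, so $\g_H$ is indeed a variational symmetry, and its current $\gT_{\g_H}=p_i\dr^i\cH-\cH-(p_i\dr^i\cH-\cH)=0$.

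I expect the only delicate point to be the bookkeeping in the contraction and the clean separation into powers of the jet coordinates; everything else is then forced. In particular, the cancellation of the quadratic $u^tq^i_tp_{ti}$ terms is exactly what renders the remaining identity linear and hence immediately solvable, so it is the step on which uniqueness genuinely hinges and the one I would verify most carefully.
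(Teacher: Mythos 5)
Your argument is correct and coincides with the one the paper leaves implicit: Proposition \ref{0160} is stated without proof, but the surrounding relation (\ref{0140}) with $\gT_\up=0$ (equivalently, your condition $\up_V\rfloor\dl L_H=0$) reduces the problem to exactly the coefficient comparison in $q^i_t$ and $p_{ti}$ that you carry out, yielding $\up^i=u^t\dr^i\cH$, $\up_i=-u^t\dr_i\cH$ and hence $\up=\g_H$ for $u^t=1$. Your verification of the converse and your remark that $u^t=0$ only produces the trivial symmetry are likewise consistent with the paper's setting, which restricts to classical symmetries so that all components are jet-independent.
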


It follows that, given a non-vertical variational symmetry $\up$,
$u^t=1$, of a Lagrangian $L_H$, there exists a vertical
variational symmetry $\up-\g_H$ possessing the same generalized
conserved symmetry current $\gT_\up=\gT_{\up-\g_H}$ as $\up$.

\begin{theo} \label{0150'} \mar{0150'}
Any integral of motion $F$ of a Hamiltonian system $(V^*Q,H)$ is a
generalized conserved current $F=\gT_{\vt_F}$ of the Hamiltonian
vector field
\be
\vt_F\dr^iF\dr_F -\dr_i F\dr^F
\ee
of $F$.
\end{theo}

\begin{proof}
If $\up=\vt_F$ and $\gT_{\vt_F}=F$, the relation (\ref{0140}) is
satisfied owing to the equality (\ref{0145}).
\end{proof}

It follows from Theorem \ref{0150'} that the Lie algebra of
integrals of motion of a Hamiltonian system in Proposition
\ref{076} coincides with the Lie algebra of conserved generalized
symmetry currents with respect to the bracket
\be
\{F,F'\}_V=\{\gT_{\vt_F},\gT_{\vt_{F'}}\}_V=\gT_{[\vt_F,\vt_{F'}]}.
\ee

In accordance with Theorem \ref{0150'}, any integral of motion of
a Hamiltonian system can be treated as a conserved generalized
current along a vertical variational symmetry. However, this is
not convenient for the study of energy conservation laws.

Let $\cE_\G$ (\ref{xx60}) be the energy function of a Hamiltonian
system relative to a reference frame $\G$. Given bundle
coordinates adapted to $\G$, its evolution equation (\ref{ws516})
takes the form
\mar{0173}\beq
\bL_{\g_H}\cE_\G=\dr_t\cE_\G=\dr_t\cH. \label{0173}
\eeq
It follows that, an energy of a Hamiltonian system relative to a
reference frame $\G$ is an integral of motion iff a Hamiltonian,
written with respect to the coordinates adapted to $\G$, is
time-independent. By virtue of Theorem \ref{0150'}, if $\cE_\G$ is
an integral of motion, it is a conserved generalized symmetry
current of the variational symmetry
\be
\g_H + \vt_{\cE_\G}=-(\dr_t +\G^i\dr_i -p_j\dr_i\G^j\dr^i)=-\wt\G.
\ee
This is the canonical lift (\ref{gm513}) onto $V^*Q$ of the vector
field $-\G$ (\ref{a1.10}) on $Q$. Consequently, $-\wt\G$ is an
exact symmetry, and $-\G$ is a Hamiltonian symmetry.

\begin{ex} \label{0152} \mar{0152} Let us consider the Kepler system
on the configuration space $Q$ (\ref{0155}) in Example
\ref{048}. Its phase space is
\be
V^*Q=\Bbb R\times\Bbb R^6
\ee
coordinated by $(t,q^i,p_i)$. The Lagrangian (\ref{042}) and
(\ref{049}) of the Kepler system is hyperregular. The associated
Hamiltonian form reads
\mar{0157}\beq
H=p_idq^i-\left[\frac12 \left(\op\sum_i(p_i)^2\right)
-\frac1r\right]dt. \label{0157}
\eeq
The corresponding Lagrangian $L_H$ (\ref{Q33}) is
\mar{0158}\beq
L_H=\left[p_iq^i_t - \frac12 \left(\op\sum_i(p_i)^2\right)
+\frac1r\right]dt. \label{158}
\eeq
The Kepler system possesses the following integrals of motion:

$\bullet$ an energy function $\cE=\cH$;

$\bullet$ orbital momenta
\mar{0159}\beq
M^a_b =q^a p_b - q^bp_a \label{0159}
\eeq

$\bullet$ components of the Rung--Lenz vector
\mar{0160'}\beq
A^a=\op\sum_b(q^ap_b -q^bp_a)p_b -\frac{q^a}{r}. \label{0160'}
\eeq
These integrals of motions are the conserved currents of:

$\bullet$ the exact symmetry $\dr_t$,

$\bullet$ the exact vertical symmetries
\mar{0161}\beq
\up^a_b=q^a \dr_b - q^b\dr_a - p_b\dr^a + p_a\dr^b, \label{0161}
\eeq

$\bullet$ the variational vertical symmetries
\mar{0162}\beq
\up^a=\op\sum_b[p_b\up^a_b + (q^ap_b -q^bp_a)\dr_b]
+\dr_b\left(\frac{q^a}{r}\right)\dr^b, \label{0162}
\eeq
respectively. Note that the vector fields $\up^a_b$ (\ref{0161})
are the canonical lift (\ref{gm513}) onto $V^*Q$ of the vector
fields
\be
u^a_b=q^a \dr_b - q^b\dr_a
\ee
on $Q$. Thus, these vector fields are vertical Hamiltonian
symmetries, and integrals of motion $M^a_b$ (\ref{0159}) are the
Noether currents.
\end{ex}

Let us remind that, in contrast with the Rung--Lenz vector
(\ref{0162}) in Hamiltonian mechanics, the Rung--Lenz vector
(\ref{050}) in Lagrangian mechanics fails to come from variational
symmetries of a Lagrangian. There is the following relation
between Lagrangian and Hamiltonian symmetries if they are the same
vector fields on a configuration space $Q$.

\begin{theo}\label{hamlaw} \mar{hamlaw} Let a
Hamiltonian form $H$ be associated with an almost regular
Lagrangian $L$. Let $r$ be a solution of the Hamilton equation
(\ref{z20a}) -- (\ref{z20b}) for $H$ which lives in the Lagrangian
constraint space $N_L$. Let $s=\pi_\Pi\circ r$ be the
corresponding solution  of the Lagrange equation for $L$ so that
the relation (\ref{072}) holds. Then, for any vector field $u$
(\ref{z372}) on a fibre bundle $Q\to \Bbb R$, we have
\mar{Q10'}\beq
\gT_{\wt u} (r)=\gT_u( \pi_\Pi\circ r),\qquad \gT_{\wt u} (\wh
L\circ J^1s) =\gT_u(s), \label{Q10'}
\eeq
where $\gT_u$ is the symmetry current (\ref{m225}) on $J^1Y$  and
$\gT_{\wt u}$ is the symmetry current (\ref{0150}) on $V^*Q$.
\end{theo}

By virtue of Theorems \ref{d3.23} -- \ref{d3.24}, it follows that:

$\bullet$ if $\gT_u$ in Theorem \ref{hamlaw} is a conserved
symmetry current, then the symmetry current $\gT_{\wt u}$
(\ref{Q10'}) is conserved on solutions of the Hamilton equation
which live in the Lagrangian constraint space;

$\bullet$ if $\gT_{\wt u}$ in Theorem \ref{hamlaw} is a conserved
symmetry current, then the symmetry current $\gT_u$ (\ref{Q10'})
is conserved on solutions $s$ of the Lagrange equation which obey
the condition (\ref{072}).

\chapter{Appendixes}

For the sake of convenience of the reader, this Chapter summarizes
the relevant material on differential geometry of fibre bundles
and modules over commutative rings \cite{book09,gre,book00,ste}.

\section{Geometry of fibre bundles}

Throughout this Section, all morphisms are smooth (i.e., of class
$C^\infty$), and manifolds are smooth real and finite-dimensional.
A smooth manifold is customarily assumed to be Hausdorff and
second-countable. Consequently, it is locally compact and
paracompact. Unless otherwise stated, manifolds are assumed to be
connected (and, consequently, arcwise connected).

Given a smooth manifold $Z$, by $\pi_Z:TZ\to Z$ is denoted its
tangent bundle. Given manifold coordinates $(z^\al)$ on $Z$, the
tangent bundle $TZ$ is equipped with the holonomic coordinates
\be
(z^\la,\dot z^\la), \qquad \dot z'^\la= \frac{\dr z'^\la}{\dr
z^\mu}\dot z^\m,
\ee
with respect to the holonomic frames $\{\dr_\la\}$ in the tangent
spaces to $Z$. Any manifold morphism $f:Z\to Z'$ yields the
tangent morphism
\be
Tf:TZ\to TZ', \qquad \dot z'^\la\circ Tf = \frac{\dr f^\la}{\dr
z^\m}\dot z^\m,
\ee
of their tangent bundles.

\subsection{Fibred manifolds}

Let $M$ and $N$ be smooth manifolds and $f:M\to N$ a manifold
morphism. Its rank ${\rm rank}_pf$ at a point $p\in M$ is defined
as the rank of the tangent map
\be
T_pf:T_pM\to T_{f(p)}N, \qquad p\in M.
\ee
Since the function $p \to {\rm rank}_pf$ is lower semicontinuous,
a manifold morphism $f$ of maximal rank at a point $p$ also is of
maximal rank on some open neighborhood of $p$. A morphism $f$ is
said to be an immersion if $T_pf$, $p\in M$, is injective and a
submersion if $T_pf$, $p\in M$, is surjective. Note that a
submersion is an open map (i.e., an image of any open set is
open).

If $f:M\to N$ is an injective immersion, its range is called a
submanifold of $N$. A submanifold is said to be imbedded if it
also is a topological subspace. In this case, $f$ is called an
imbedding.  For the sake of simplicity, we usually identify
$(M,f)$ with $f(M)$. If $M\subset N$, its natural injection is
denoted by $i_M:M\to N$.  There are the following criteria for a
submanifold to be imbedded.

\begin{theo}\label{subman3} \mar{subman3}
Let $(M,f)$ be a submanifold of $N$.

(i) A map $f$ is an imbedding iff, for each point $p\in M$, there
exists a (cubic) coordinate chart $(V,\psi)$ of $N$ centered at
$f(p)$ so that $f(M)\cap V$ consists of all points of $V$ with
coordinates $(x^1,\ldots,x^m,0,\ldots,0)$.

(ii) Suppose that $f:M\to N$ is a proper map, i.e., the inverse
images of compact sets are compact. Then $(M,f)$ is a closed
imbedded submanifold of $N$. In particular, this occurs if $M$ is
a compact manifold.

(iii) If $\di M =\di N$, then $(M,f)$ is an open imbedded
submanifold of $N$.
\end{theo}

If a manifold morphism
\mar{11f1}\beq
\p :Y\to X, \qquad \di X=n>0, \label{11f1}
\eeq
is a surjective submersion, one says that: (i) its domain $Y$ is a
fibred manifold, (ii) $X$ is its base, (iii) $\p$ is a fibration,
and (iv) $Y_x=\p^{-1}(x)$ is a fibre over $x\in X$.

By virtue of the inverse function theorem \cite{war}, the
surjection (\ref{11f1}) is a fibred manifold iff a manifold $Y$
admits an atlas of fibred coordinate charts $(U_Y; x^\la, y^i)$
such that $(x^\la)$ are coordinates on $\p(U_Y)\subset X$ and
coordinate transition functions read
\be
x'^\la =f^\la(x^\m), \qquad y'^i=f^i(x^\m,y^j).
\ee

The surjection $\p$ (\ref{11f1}) is a fibred manifold iff, for
each point $y\in Y$, there exists a local section $s$ of $Y\to X$
passing through $y$. Recall that by a local section of the
surjection (\ref{11f1}) is meant an injection $s:U\to Y$ of an
open subset $U\subset X$ such that $\p\circ s=\id U$, i.e., a
section sends any point $x\in X$ into the fibre $Y_x$ over this
point. A local section also is defined over any subset $N\in X$ as
the restriction to $N$ of a local section over an open set
containing $N$. If $U=X$, one calls $s$ the global section. A
range $s(U)$ of a local section $s:U\to Y$ of a fibred manifold
$Y\to X$ is an imbedded submanifold of $Y$. A local section is a
closed map, sending closed subsets of $U$ onto closed subsets of
$Y$. If  $s$ is a global section, then $s(X)$ is a closed imbedded
submanifold of $Y$. Global sections of a fibred manifold need not
exist.

\begin{theo} \label{mos9} \mar{mos9}
Let $Y\to X$ be a fibred manifold whose fibres are diffeomorphic
to $\Bbb R^m$.  Any its section over a closed imbedded submanifold
(e.g., a point) of $X$ is extended to a global section \cite{ste}.
In particular, such a fibred manifold always has a global section.
\end{theo}

Given fibred coordinates $(U_Y;x^\la,y^i)$, a section $s$ of a
fibred manifold $Y\to X$ is represented by collections of local
functions $\{s^i=y^i\ \circ s\}$ on $\p(U_Y)$.

Morphisms of fibred manifolds, by definition, are fibrewise
morphisms, sending a fibre to a fibre. Namely, a fibred morphism
of a fibred manifold $\pi:Y\to X$ to a fibred manifold $\pi':
Y'\to X'$ is defined as a pair $(\Phi,f)$ of manifold morphisms
which form a commutative diagram
\be
\begin{array}{rcccl}
& Y &\ar^\Phi & Y'&\\
_\pi& \put(0,10){\vector(0,-1){20}} & & \put(0,10){\vector(0,-1){20}}&_{\pi'}\\
& X &\ar^f & X'&
\end{array}, \qquad \pi'\circ\Phi=f\circ\pi.
\ee
Fibred injections and surjections are called monomorphisms and
epimorphisms, respectively. A fibred diffeomorphism is called an
isomorphism or an automorphism if it is an isomorphism to itself.
For the sake of brevity, a fibred morphism over $f=\id X$ usually
is said to be a fibred morphism over $X$, and is denoted by
$Y\ar_XY'$. In particular, a fibred automorphism over $X$ is
called a vertical automorphism.

\subsection{Fibre bundles}

A fibred manifold $Y\to X$ is said to be trivial if $Y$ is
isomorphic to the product $X\times V$. Different trivializations
of $Y\to X$ differ from each other in surjections $Y\to V$.

A fibred manifold $Y\to X$ is called a fibre bundle if it is
locally trivial, i.e., if it admits a fibred coordinate atlas
$\{(\pi^{-1}(U_\xi); x^\la, y^i)\}$ over a cover
$\{\pi^{-1}(U_\xi)\}$ of $Y$ which is the inverse image of a cover
$\gU=\{U_\xi\}$ of $X$. In this case, there exists a manifold $V$,
called a typical fibre, such that $Y$ is locally diffeomorphic to
the  splittings
\mar{mos02}\beq
\psi_\xi:\pi^{-1}(U_\xi) \to U_\xi\times V, \label{mos02}
\eeq
glued together by means of transition functions
\mar{mos271}\beq
\vr_{\xi\zeta}=\psi_\xi\circ\psi_\zeta^{-1}: U_\xi\cap
U_\zeta\times V \to  U_\xi\cap U_\zeta\times V \label{mos271}
\eeq
on overlaps $U_\xi\cap U_\zeta$. Transition functions
$\vr_{\xi\zeta}$ fulfil the cocycle condition
\mar{+9}\beq
\vr_{\xi\zeta}\circ\vr_{\zeta\iota}=\vr_{\xi\iota} \label{+9}
\eeq
on all overlaps $U_\xi\cap U_\zeta\cap U_\iota$. Restricted to a
point $x\in X$, trivialization morphisms $\psi_\xi$ (\ref{mos02})
and transition functions $\vr_{\xi\zeta}$ (\ref{mos271}) define
diffeomorphisms of fibres
\mar{sp21,2}\ben
&&\psi_\xi(x): Y_x\to V, \qquad x\in U_\xi,\label{sp21}\\
&& \vr_{\xi\zeta}(x):V\to V, \qquad x\in U_\xi\cap U_\zeta. \label{sp22}
\een
Trivialization charts $(U_\xi, \psi_\xi)$ together with transition
functions $\vr_{\xi\zeta}$ (\ref{mos271}) constitute a bundle
atlas
\mar{sp5}\beq
\Psi = \{(U_\xi, \psi_\xi), \vr_{\xi\zeta}\} \label{sp5}
\eeq
of a fibre bundle $Y\to X$. Two bundle atlases are said to be
equivalent if their union also is a bundle atlas, i.e., there
exist transition functions between trivialization charts of
different atlases. All atlases of a fibre bundle are equivalent.

Given a bundle atlas $\Psi$ (\ref{sp5}), a fibre bundle $Y$ is
provided with the fibred coordinates
\be
x^\la(y)=(x^\la\circ \pi)(y), \qquad y^i(y)=(y^i\circ\psi_\xi)(y),
\qquad y\in \pi^{-1}(U_\xi),
\ee
called the bundle coordinates, where $y^i$ are coordinates on a
typical fibre $V$.

A fibre bundle $Y\to X$ is uniquely defined by a bundle atlas.
Given an atlas $\Psi$ (\ref{sp5}), there exists a unique manifold
structure on $Y$ for which $\p:Y\to X$ is a fibre bundle with a
typical fibre $V$ and a bundle atlas $\Psi$.

There are the following useful criteria for a fibred manifold to
be a fibre bundle.

\begin{theo} \label{1110} \mar{1110}
If a fibration $\p:Y\to X$ is a proper map, then $Y\to X$ is a
fibre bundle. In particular, a compact fibred manifold is a fibre
bundle.
\end{theo}

\begin{theo} \label{11t2} \mar{11t2} A fibred manifold whose
fibres are diffeomorphic either to a compact manifold or $\Bbb
R^r$ is a fibre bundle \cite{meig}.
\end{theo}

A comprehensive relation between fibred manifolds and fibre
bundles is given in Remark \ref{Ehresmann}. It involves the notion
of an Ehresmann connection.

Forthcoming Theorems \ref{11t3} -- \ref{sp2} describe the
particular covers which one can choose for a bundle atlas
\cite{gre}.

\begin{theo} \label{11t3} \mar{11t3} Any fibre
bundle over a contractible base is trivial.
\end{theo}

Note that a fibred manifold over a contractible base need not be
trivial. It follows from Theorem \ref{11t3} that any cover of a
base $X$ by domains (i.e., contractible open subsets) is a bundle
cover.

\begin{theo} \label{sp1} \mar{sp1}
Every fibre bundle $Y\to X$ admits a bundle atlas over a countable
cover $\gU$ of $X$ where each member $U_\xi$ of $\gU$ is a domain
whose closure $\ol U_\xi$ is compact.
\end{theo}

If a base $X$ is compact, there is a bundle atlas of $Y$ over a
finite cover of $X$ which obeys the condition of Theorem
\ref{sp1}.

\begin{theo} \label{sp2} \mar{sp2}
Every fibre bundle $Y\to X$ admits a bundle atlas over a finite
cover $\gU$ of $X$, but its members need not be contractible and
connected.
\end{theo}

A fibred morphism of fibre bundles is called a bundle morphism. A
bundle monomorphism $\Phi:Y\to Y'$ over $X$ onto a submanifold
$\Phi(Y)$ of $Y'$ is called a subbundle of a fibre bundle $Y'\to
X$. There is the following useful criterion for an image and an
inverse image of a bundle morphism to be subbundles.

\begin{theo}\label{pomm} \mar{pomm}
Let $\Phi: Y\to Y'$ be a bundle morphism over $X$. Given a global
section $s$ of the fibre bundle $Y'\to X$ such that $s(X)\subset
\F(Y)$, by the kernel of a bundle morphism $\F$ with respect to a
section $s$ is meant the inverse image
\be
\Ker_s\F = \F^{-1}(s(X))
\ee
of $s(X)$ by $\F$. If $\Phi: Y\to Y'$ is a bundle morphism of
constant rank over $X$, then $\Phi(Y)$ and $\Ker_s\F$ are
subbundles of $Y'$ and $Y$, respectively.
\end{theo}

The following are the standard constructions of new fibre bundles
from old ones.

$\bullet$ Given a fibre bundle $\pi:Y\to X$ and a manifold
morphism $f: X'\to X$, the pull-back of $Y$ by $f$ is called the
manifold
\mar{mos106}\beq
f^*Y =\{(x',y)\in X'\times Y \,: \,\, \pi(y) =f(x')\}
\label{mos106}
\eeq
together with the natural projection $(x',y)\to x'$. It is a fibre
bundle over $X'$ such that the fibre of $f^*Y$ over a point $x'\in
X'$ is that of $Y$ over the point $f(x')\in X$. There is the
canonical bundle morphism
\mar{mos81}\beq
f_Y:f^*Y\ni (x',y)|_{\pi(y) =f(x')} \to y\in Y. \label{mos81}
\eeq
Any section $s$ of a fibre bundle $Y\to X$ yields the pull-back
section
\be
f^*s(x')=(x',s(f(x'))
\ee
of $f^*Y\to X'$.

$\bullet$ If $X'\subset X$ is a submanifold of $X$ and $i_{X'}$ is
the corresponding natural injection, then the pull-back bundle
\be
i_{X'}^*Y=Y|_{X'}
\ee
is called the restriction of a fibre bundle $Y$ to the submanifold
$X'\subset X$. If $X'$ is an imbedded submanifold, any section of
the pull-back bundle
\be
Y|_{X'}\to X'
\ee
is the restriction to $X'$ of some section of $Y\to X$.

$\bullet$ Let $\pi:Y\to X$ and $\pi':Y'\to X$ be fibre bundles
over the same base $X$. Their bundle product $Y\times_X Y'$ over
$X$ is defined as the pull-back
\be
Y\op\times_X Y'=\pi^*Y'\quad {\rm or} \quad Y\op\times_X
Y'={\pi'}^*Y
\ee
together with its natural surjection onto $X$.  Fibres of the
bundle product $Y\times Y'$ are the Cartesian products $Y_x\times
Y'_x$ of fibres of fibre bundles $Y$ and $Y'$.

$\bullet$ Let us consider the composite fibre bundle
\mar{1.34}\beq
Y\to \Si\to X. \label{1.34}
\eeq
It is provided with bundle coordinates $(x^\la,\si^m,y^i)$, where
$(x^\la,\si^m)$ are bundle coordinates on a fibre bundle $\Si\to
X$, i.e., transition functions of coordinates $\si^m$ are
independent of coordinates $y^i$. Let $h$ be a global section of a
fibre bundle $\Si\to X$. Then the restriction $Y_h=h^*Y$ of a
fibre bundle $Y\to\Si$ to $h(X)\subset \Si$ is a subbundle of a
fibre bundle $Y\to X$.

\subsection{Vector and affine bundles}

A fibre bundle $\pi:Y\to X$ is called a vector bundle if both its
typical fibre and fibres are finite-dimensional real vector
spaces, and if it admits a bundle atlas whose trivialization
morphisms and transition functions are linear isomorphisms. Then
the corresponding bundle coordinates on $Y$ are linear bundle
coordinates $(y^i)$ possessing linear transition functions
$y'^i=A^i_j(x)y^j$. We have
\mar{trt}\beq
y=y^ie_i(\pi(y))=y^i \psi_\xi(\pi(y))^{-1}(e_i), \qquad \pi(y)\in
U_\xi, \label{trt}
\eeq
where $\{e_i\}$ is a fixed basis for a typical fibre $V$ of $Y$
and $\{e_i(x)\}$ are the fibre bases (or the frames) for the
fibres $Y_x$ of $Y$ associated to a bundle atlas $\Psi$.

By virtue of Theorem \ref{mos9}, any vector bundle has a global
section, e.g., the canonical global zero-valued section $\wh
0(x)=0$.

\begin{theo} \label{12t10} \mar{12t10} Let a vector bundle $Y\to
X$ admit $m=\di V$ nowhere vanishing global sections $s_i$ which
are linearly independent, i.e., $\op\w^m s_i\neq 0$. Then $Y$ is
trivial.
\end{theo}

Global sections of a vector bundle $Y\to X$ constitute a
projective $C^\infty(X)$-module $Y(X)$ of finite rank. It is
called the structure module of a vector bundle. The well-known
Serre--Swan theorem \cite{book05} states the categorial
equivalence between the vector bundles over a smooth manifold $X$
and projective $C^\infty(X)$-modules of finite rank.

There are the following particular constructions of new vector
bundles from the old ones.

$\bullet$ Let $Y\to X$ be a vector bundle with a typical fibre
$V$. By $Y^*\to X$ is denoted the dual vector bundle with the
typical fibre $V^*$, dual of $V$. The interior product of $Y$ and
$Y^*$ is defined as a fibred morphism
\be
\rfloor: Y\otimes Y^*\ar_X X\times \Bbb R.
\ee

$\bullet$ Let $Y\to X$ and $Y'\to X$ be vector bundles with
typical fibres $V$ and $V'$, respectively. Their Whitney sum
$Y\oplus_X Y'$ is a vector bundle over $X$ with the typical fibre
$V\oplus V'$.

$\bullet$ Let $Y\to X$ and $Y'\to X$ be vector bundles with
typical fibres $V$ and $V'$, respectively. Their tensor product
$Y\ot_X Y'$ is a vector bundle over $X$ with the typical fibre
$V\ot V'$. Similarly, the exterior product of vector bundles
$Y\w_X Y'$ is defined. The exterior product
\mar{ss12f11}\beq
\w Y=X\times \Bbb R \op\oplus_X Y \op\oplus_X \op\w^2
Y\op\oplus_X\cdots\oplus \op\w^k Y, \qquad k=\di Y-\di X,
\label{ss12f11}
\eeq
is called the exterior bundle.

$\bullet$ If $Y'$ is a subbundle of a vector bundle $Y\to X$, the
factor bundle $Y/Y'$ over $X$ is defined as a vector bundle whose
fibres are the quotients $Y_x/Y'_x$, $x\in X$.

By a morphism of vector bundles is meant a linear bundle morphism,
which is a linear fibrewise map whose restriction to each fibre is
a linear map.

Given a linear bundle morphism $\Phi: Y'\to Y$ of vector bundles
over $X$, its kernel Ker$\,\Phi$ is defined as the inverse image
$\Phi^{-1}(\wh 0(X))$ of the canonical zero-valued section $\wh
0(X)$ of $Y$. By virtue of Theorem \ref{pomm}, if $\Phi$ is of
constant rank, its kernel and its range are vector subbundles of
the vector bundles $Y'$ and $Y$, respectively. For instance,
monomorphisms and epimorphisms of vector bundles fulfil this
condition.

\begin{rem}\label{mos30} \mar{mos30}
Given vector bundles $Y$ and $Y'$ over the same base $X$, every
linear bundle morphism
\be
\Phi: Y_x\ni \{e_i(x)\}\to \{\Phi^k_i(x)e'_k(x)\}\in Y'_x
\ee
over $X$ defines a global section
\be
\Phi: x\to \Phi^k_i(x)e^i(x)\ot e'_k(x)
\ee
of the tensor product $Y\ot Y'^*$, and {\it vice versa}.
\end{rem}

A sequence $Y'\ar^i Y\ar^j Y''$ of vector bundles over the same
base $X$ is called exact at $Y$ if Ker$\,j=\im i$. A sequence of
vector bundles
\mar{sp10}\beq
0\to Y'\ar^i Y\ar^j Y'' \to 0 \label{sp10}
\eeq
over $X$ is said to be a short exact sequence if it is exact at
all terms $Y'$, $Y$, and $Y''$. This means that $i$ is a bundle
monomorphism, $j$ is a bundle epimorphism, and Ker$\,j=\im i$.
Then $Y''$ is isomorphic to a factor bundle $Y/Y'$. Given an exact
sequence of vector bundles (\ref{sp10}), there is the exact
sequence of their duals
\be
0\to Y''^*\ar^{j^*} Y^*\ar^{i^*} Y'^* \to 0.
\ee
One says that the exact sequence (\ref{sp10}) is split if there
exists a bundle monomorphism $s:Y''\to Y$ such that $j\circ s=\id
Y''$ or, equivalently,
\be
Y=i(Y')\oplus s(Y'')= Y'\oplus Y''.
\ee

\begin{theo} \label{sp11} \mar{sp11}
Every exact sequence of vector bundles (\ref{sp10}) is split
\cite{hir}.
\end{theo}

The tangent bundle $TZ$ and the cotangent bundle $T^*Z$ of a
manifold $Z$ exemplify vector bundles. Given an atlas $\Psi_Z
=\{(U_\iota,\phi_\iota)\}$ of a manifold $Z$, the tangent bundle
is provided with the holonomic bundle atlas
\mar{mos150}\beq
\Psi_T =\{(U_\iota, \psi_\iota = T\phi_\iota)\}. \label{mos150}
\eeq
The associated linear bundle coordinates are holonomic coordinates
$(\dot z^\la)$.

The cotangent bundle of a manifold $Z$ is the dual $T^*Z\to Z$ of
the tangent bundle $TZ\to Z$. It is equipped with the holonomic
coordinates
\be
(z^\la,\dot z_\la). \qquad \dot z'_\la=\frac{\dr z^\m}{\dr
z'^\la}\dot z_\m,
\ee
with respect to the coframes $\{dz^\la\}$ for $T^*Z$ which are the
duals of $\{\dr_\la\}$.

The tensor product of tangent and cotangent bundles
\mar{sp20}\beq
T=(\op\ot^mTZ)\ot(\op\ot^kT^*Z), \qquad m,k\in \Bbb N,
\label{sp20}
\eeq
is called a tensor bundle, provided with holonomic bundle
coordinates $\dot z^{\al_1\cdots\al_m}_{\bt_1\cdots\bt_k}$
possessing transition functions
\be
\dot z'^{\al_1\cdots\al_m}_{\bt_1\cdots\bt_k}=\frac{\dr
z'^{\al_1}}{\dr z^{\m_1}}\cdots\frac{\dr z'^{\al_m}}{\dr
z^{\m_m}}\frac{\dr z^{\nu_1}}{\dr z'^{\bt_1}}\cdots\frac{\dr
z^{\nu_k}}{\dr z'^{\bt_k}} \dot
z^{\m_1\cdots\m_m}_{\nu_1\cdots\nu_k}.
\ee

Let $\pi_Y:TY\to Y$ be the tangent bundle of a fibred manifold
$\pi: Y\to X$. Given fibred coordinates $(x^\la,y^i)$ on $Y$, it
is equipped with the holonomic coordinates $(x^\la,y^i,\dot x^\la,
\dot y^i)$. The tangent bundle $TY\to Y$ has the subbundle $VY =
\Ker (T\pi)$, which consists of the vectors tangent to fibres of
$Y$. It is called the vertical tangent bundle of $Y$, and it is
provided with the holonomic coordinates $(x^\la,y^i,\dot y^i)$
with respect to the vertical frames $\{\dr_i\}$. Every fibred
morphism $\Phi: Y\to Y'$ yields the linear bundle morphism over
$\Phi$ of the vertical tangent bundles
\mar{ws538}\beq
V\Phi: VY\to VY', \qquad \dot y'^i\circ V\Phi=\frac{\dr
\Phi^i}{\dr y^j}\dot y^j. \label{ws538}
\eeq
It is called the vertical tangent morphism.

In many important cases, the vertical tangent bundle $VY\to Y$ of
a fibre bundle $Y\to X$ is trivial, and it is isomorphic to the
bundle product
\mar{48'}\beq
VY= Y\op\times_X\ol Y, \label{48'}
\eeq
where $\ol Y\to X$ is some vector bundle. One calls (\ref{48'})
the vertical splitting. For instance, every vector bundle $Y\to X$
admits the canonical vertical splitting
\mar{12f10}\beq
VY= Y\op\oplus_X Y. \label{12f10}
\eeq

The vertical cotangent bundle $V^*Y\to Y$ of a fibred manifold
$Y\to X$ is defined as the dual of the vertical tangent bundle
$VY\to Y$. It is not a subbundle of the cotangent bundle $T^*Y$,
but there is the canonical surjection
\mar{z11z}\beq
\zeta: T^*Y\ni \dot x_\la dx^\la +\dot y_i dy^i \to \dot y_i \ol
dy^i\in V^*Y, \label{z11z}
\eeq
where the bases $\{\ol dy^i\}$, possessing transition functions
\be
\ol dy'^i=\frac{\dr y'^i}{\dr y^j}\ol dy^j,
\ee
are the duals of the vertical frames $\{\dr_i\}$ of the vertical
tangent bundle $VY$.

For any fibred manifold $Y$, there exist the exact sequences of
vector bundles
\mar{1.8a,b}\ben
&& 0\to VY\ar TY\ar^{\pi_T} Y\op\times_X TX\to 0,
\label{1.8a} \\
&& 0\to Y\op\times_X T^*X\to T^*Y\to V^*Y\to 0.
\label{1.8b}
\een
Their splitting, by definition, is a connection on $Y\to X$
(Section 4.3.1).

Let us consider the tangent bundle $TT^*X$ of $T^*X$ and the
cotangent bundle $T^*TX$ of $TX$. Relative to  coordinates
$(x^\la, p_\la=\dot x_\la)$ on $T^*X$ and $(x^\la,\dot x^\la)$ on
$TX$, these fibre bundles are provided with the  coordinates
$(x^\la, p_\la, \dot x^\la, \dot p_\la)$ and  $(x^\la, \dot
x^\la,\dot x_\la, \ddot x_\la)$, respectively. By inspection of
the coordinate transformation laws, one can show that there is an
isomorphism
\mar{z124}\beq
\al:TT^*X= T^*TX, \qquad p_\la\llra\ddot x_\la, \quad \dot p_\la
\llra\dot x_\la \label{z124}
\eeq
of these bundles over $TX$. Given a fibred manifold $Y\to X$,
there is the similar isomorphism
\mar{m6}\beq
\al_V:VV^*Y = V^*VY, \qquad p_i\llra\ddot y_i, \quad \dot
p_i\llra\dot y_i \label{m6}
\eeq
over $VY$, where $(x^\la, y^i, p_i,\dot y^i, \dot p_i)$ and
$(x^\la, y^i, \dot y^i,\dot y_i, \ddot y_i)$ are  coordinates on
$VV^*Y$ and $V^*VY$, respectively.

Let $\ol\pi:\ol Y\to X$ be a vector bundle with a typical fibre
$\ol V$. An affine bundle modelled over the vector bundle $\ol
Y\to X$ is a fibre bundle $\pi:Y\to X$ whose typical fibre $V$ is
an affine space modelled over $\ol V$, all the fibres $Y_x$ of $Y$
are affine spaces modelled over the corresponding fibres $\ol Y_x$
of the vector bundle $\ol Y$, and there is an affine bundle atlas
\be
\Psi=\{(U_\al,\psi_\chi),\vr_{\chi\zeta}\}
\ee
of $Y\to X$ whose local trivializations morphisms $\psi_\chi$
(\ref{sp21}) and transition functions $\vr_{\chi\zeta}$
(\ref{sp22}) are affine isomorphisms.

Dealing with affine bundles, we use only affine bundle coordinates
$(y^i)$ associated to an affine bundle atlas $\Psi$. There are the
bundle morphisms
\be
&&Y\op\times_X\ol Y\ar_X Y,\qquad (y^i, \ol y^i)\to  y^i +\ol y^i,\\
&&Y\op\times_X Y\ar_X \ol Y,\qquad (y^i, y'^i)\to  y^i - y'^i,
\ee
where $(\ol y^i)$ are linear coordinates on a vector bundle $\ol
Y$.

By virtue of Theorem \ref{mos9}, affine bundles  have  global
sections, but in contrast with vector bundles, there is no
canonical global section of an affine bundle. Let $\pi:Y\to X$ be
an affine bundle. Every global section $s$ of an affine bundle
$Y\to X$ modelled over a vector bundle $\ol Y\to X$ yields the
bundle morphisms
\mar{mos31,'}\ben
&& Y\ni y\to y-s(\pi(y))\in \ol Y, \label{mos31}\\
&& \ol Y\ni \ol y\to s(\pi(y))+\ol y\in Y. \label{mos31'}
\een
In particular, every vector bundle $Y$ has a natural structure of
an affine bundle due to the morphisms (\ref{mos31'}) where $s=\wh
0$ is the canonical zero-valued section of $Y$.

\begin{theo} \label{11t60} \mar{11t60}
Any affine bundle $Y\to X$ admits bundle coordinates $(x^\la, \wt
y^i)$ possessing linear transition functions $\wt y'^i=A^i_j(x)\wt
y^j$ \cite{book09}.
\end{theo}

By a morphism of affine bundles is meant a bundle morphism
$\Phi:Y\to Y'$ whose restriction to each fibre of $Y$ is an affine
map. It is called an affine bundle morphism. Every affine bundle
morphism $\Phi:Y\to Y'$ of an affine bundle $Y$ modelled over a
vector bundle $\ol Y$ to an affine bundle $Y'$ modelled over a
vector bundle $\ol Y'$ yields an unique linear bundle morphism
\mar{1355'}\beq
\ol \Phi: \ol Y\to \ol Y', \qquad \ol y'^i\circ \ol\Phi=
\frac{\dr\Phi^i}{\dr y^j}\ol y^j, \label{1355'}
\eeq
called the linear derivative of $\Phi$.

Every affine bundle $Y\to X$ modelled over a vector bundle $\ol
Y\to X$ admits the canonical vertical splitting
\mar{48}\beq
VY= Y\op\times_X\ol Y. \label{48}
\eeq

\subsection{Vector and multivector fields}

Vector fields on a manifold $Z$ are global sections of the tangent
bundle $TZ\to Z$.

The set $\cT_1(Z)$ of vector fields on $Z$ is both a
$C^\infty(Z)$-module and a real Lie algebra with respect to the
Lie bracket
\be
&& u=u^\la\dr_\la, \qquad v=v^\la\dr_\la,\\
&& [v,u] = (v^\la\dr_\la u^\m - u^\la\dr_\la v^\m)\dr_\m.
\ee

Given a vector field $u$ on $X$, a curve
\be
c:\Bbb R\supset (,)\to Z
\ee
in $Z$ is said to be an integral curve of $u$ if $Tc=u(c)$. Every
vector field $u$ on a manifold $Z$ can be seen as an infinitesimal
generator of a local one-parameter group of local diffeomorphisms
(a flow),  and {\it vice versa} \cite{kob}. One-dimensional orbits
of this group are integral curves of $u$.

A vector field is called complete if its flow is a one-parameter
group of diffeomorphisms of $Z$.

\begin{theo} \label{10b3} \mar{10b3}
Any vector field on a compact manifold is complete.
\end{theo}

A vector field $u$ on a fibred manifold $Y\to X$ is called
projectable  if it is projected onto a vector field on $X$, i.e.,
there exists a vector field $\tau$ on $X$ such that
\be
\tau\circ\pi= T\pi\circ u.
\ee
A projectable vector field takes the coordinate form
\mar{11f30}\beq
u=u^\la(x^\m) \dr_\la + u^i(x^\m,y^j) \dr_i, \qquad
\tau=u^\la\dr_\la. \label{11f30}
\eeq
A projectable vector field is called vertical if its projection
onto $X$ vanishes, i.e., if it lives in the vertical tangent
bundle $VY$.

A vector field $\tau=\tau^\la\dr_\la$ on a base $X$ of a fibred
manifold $Y\to X$ gives rise to a vector field on $Y$ by means of
a connection on this fibre bundle (see the formula (\ref{b1.85})).
Nevertheless, every tensor bundle (\ref{sp20}) admits the
functorial lift of vector fields
\mar{l28}\beq
\wt\tau = \tau^\m\dr_\m + [\dr_\nu\tau^{\al_1}\dot
x^{\nu\al_2\cdots\al_m}_{\bt_1\cdots\bt_k} + \ldots
-\dr_{\bt_1}\tau^\nu \dot
x^{\al_1\cdots\al_m}_{\nu\bt_2\cdots\bt_k} -\ldots]\dot \dr
_{\al_1\cdots\al_m}^{\bt_1\cdots\bt_k}, \label{l28}
\eeq
where we employ the compact notation
\mar{vvv}\beq
\dot\dr_\la = \frac{\dr}{\dr\dot x^\la}. \label{vvv}
\eeq
This lift is an $\Bbb R$-linear monomorphism of the Lie algebra
$\cT_1(X)$ of vector fields on $X$ to the Lie algebra $\cT_1(Y)$
of vector fields on $Y$. In particular, we have the functorial
lift
\mar{l27}\beq
\wt\tau = \tau^\m\dr_\m +\dr_\nu\tau^\al\dot
x^\nu\frac{\dr}{\dr\dot x^\al} \label{l27}
\eeq
of vector fields on $X$ onto the tangent bundle $TX$ and their
functorial lift
\mar{l27'}\beq
\wt\tau = \tau^\m\dr_\m -\dr_\bt\tau^\nu\dot
x_\nu\frac{\dr}{\dr\dot x_\bt} \label{l27'}
\eeq
onto the cotangent bundle $T^*X$.

Let $Y\to X$ be a vector bundle. Using the canonical vertical
splitting (\ref{12f10}), we obtain the canonical vertical vector
field
\mar{z112'}\beq
u_Y=y^i\dr_i \label{z112'}
\eeq
on $Y$, called the Liouville vector field. For instance, the
Liouville vector field on the tangent bundle $TX$ reads
\mar{z112}\beq
u_{TX}=\dot x^\la\dot\dr_\la. \label{z112}
\eeq
Accordingly, any vector field $\tau=\tau^\la\dr_\la$ on a manifold
$X$ has the canonical vertical lift
\mar{z111}\beq
\tau_V=\tau^\la\dot\dr_\la \label{z111}
\eeq
onto the tangent bundle $TX$.

A multivector field $\vt$ of degree $\nm\vt=r$ (or, simply, an
$r$-vector field) on a manifold $Z$ is a section
\mar{cc6}\beq
\vt =\frac{1}{r!}\vt^{\la_1\dots\la_r} \dr_{\la_1}\w\cdots\w
\dr_{\la_r} \label{cc6}
\eeq
of the exterior product $\op\w^r TZ\to Z$.  Let $\cT_r(Z)$ denote
the $C^\infty(Z)$-module space of $r$-vector fields on $Z$. All
multivector fields on a manifold $Z$ make up the graded
commutative algebra $\cT_*(Z)$ of global sections of the exterior
bundle $\w TZ$ (\ref{ss12f11}) with respect to the exterior
product $\w$.

Given an $r$-vector field $\vt$ (\ref{cc6}) on a manifold $Z$, its
tangent lift $\wt \vt$ onto the tangent bundle $TZ$ of $Z$ is
defined by the relation
\mar{gm60'}\beq
\wt \vt(\wt\si^r,\ldots,\wt\si^1) =\wt{\vt(\si^r,\ldots,\si^1)}
\label{gm60'}
\eeq
where \cite{grab}:

$\bullet$ $\si^k=\si^k_\la dz^\la$ are arbitrary one-forms on a
manifold $Z$,

$\bullet$ by
\be
\wt \si^k =\dot z^\m\dr_\m\si^k_\la dz^\la + \si^k_\la d\dot z^\la
\ee
are meant their tangent lifts (\ref{gm61}) onto the tangent bundle
$TZ$ of $Z$,

$\bullet$ the right-hand side of the equality (\ref{gm60'}) is the
tangent lift (\ref{gm62}) onto $TZ$ of the function
$\vt(\si^r,\ldots,\si^1)$ on $Z$.

The tangent lift (\ref{gm60'}) takes the coordinate form
\mar{gm60}\ben
&& \wt\vt =\frac{1}{r!}[\dot z^\m\dr_\m\vt^{\la_1\dots\la_r}
\dot\dr_{\la_1}\w\cdots\w \dot\dr_{\la_r} + \label{gm60}\\
&& \qquad \vt^{\la_1\dots\la_r}\op\sum_{i=1}^r
\dot\dr_{\la_1}\w\cdots\w\dr_{\la_i}\w\cdots\w
\dot\dr_{\la_r}].\nonumber
\een
In particular, if $\tau$ is a vector field on a manifold $Z$, its
tangent lift (\ref{gm60}) coincides with the functorial lift
(\ref{l27}).

\subsection{Differential forms}

An exterior $r$-form on a manifold $Z$ is a section
\be
\f =\frac{1}{r!}\f_{\la_1\dots\la_r} dz^{\la_1}\w\cdots\w
dz^{\la_r}
\ee
of the exterior product $\op\w^r T^*Z\to Z$, where
\be
&& dz^{\la_1}\w\cdots\w dz^{\la_r}=
\frac{1}{r!}\e^{\la_1\ldots\la_r}{}_{\m_1\ldots\m_r}dz^{\m_1}\ot\cdots\ot
dz^{\m_r},\\
&& \e^{\ldots \la_i\ldots\la_j\ldots}{}_{\ldots
\m_p\ldots\m_k\ldots}= -\e^{\ldots
\la_j\ldots\la_i\ldots}{}_{\ldots \m_p\ldots\m_k\ldots} = -
\e^{\ldots \la_i\ldots\la_j\ldots}{}_{\ldots
\m_k\ldots\m_p\ldots}, \\
&& \e^{\la_1\ldots\la_r}{}_{\la_1\ldots\la_r}=1.
\ee
Sometimes, it is convenient to write
\be
\f =\f'_{\la_1\dots\la_r} dz^{\la_1}\w\cdots\w dz^{\la_r}
\ee
without the coefficient $1/r!$.

Let $\cO^r(Z)$ denote the $C^\infty(Z)$-module of exterior
$r$-forms on a manifold $Z$. By definition, $\cO^0(Z)=C^\infty(Z)$
is the ring of smooth real functions on $Z$. All exterior forms on
$Z$ constitute the graded algebra $\cO^*(Z)$ of global sections of
the exterior bundle $\w T^*Z$ (\ref{ss12f11}) endowed with the
exterior product
\be
&&\f=\frac{1}{r!}\f_{\la_1\dots\la_r} dz^{\la_1}\w\cdots\w
dz^{\la_r}, \qquad \si= \frac{1}{s!}\si_{\m_1\dots\m_s}
dz^{\m_1}\w\cdots\w dz^{\m_s},\\
&& \f\w\si=\frac{1}{r!s!}\f_{\nu_1\ldots\nu_r}\si_{\nu_{r+1}\ldots\nu_{r+s}}
dz^{\nu_1}\w\cdots\w
dz^{\nu_{r+s}}=\\
&& \qquad
\frac{1}{r!s!(r+s)!}\e^{\nu_1\ldots\nu_{r+s}}{}_{\al_1\ldots\al_{r+s}}
\f_{\nu_1\ldots\nu_r}\si_{\nu_{r+1}\ldots\nu_{r+s}}dz^{\al_1}\w\cdots\w
dz^{\al_{r+s}},
\ee
such  that
\be
\f\w\si=(-1)^{|\f||\si|}\si\w\f,
\ee
where the symbol $|\f|$ stands for the form degree. The algebra
$\cO^*(Z)$ also is provided with the exterior differential
\be
d\f= dz^\m\w \dr_\m\f=\frac{1}{r!} \dr_\m\f_{\la_1\ldots\la_r}
dz^\m\w dz^{\la_1}\w\cdots \w dz^{\la_r}
\ee
which obeys the relations
\be
d\circ d=0, \qquad d(\f\w\si)= d(\f)\w \si +(-1)^{|\f|}\f\w
d(\si).
\ee
The exterior differential $d$ makes $\cO^*(Z)$ into a differential
graded algebra, called the exterior algebra.

Given a manifold morphism $f:Z\to Z'$, any exterior $k$-form $\f$
on $Z'$ yields the pull-back exterior form $f^*\f$ on $Z$ given by
the condition
\be
f^*\f(v^1,\ldots,v^k)(z) = \f(Tf(v^1),\ldots,Tf(v^k))(f(z))
\ee
for an arbitrary collection of tangent vectors $v^1,\cdots, v^k\in
T_zZ$. We have the relations
\be
f^*(\f\w\si) =f^*\f\w f^*\si, \qquad df^*\f =f^*(d\f).
\ee

In particular, given a fibred manifold $\pi:Y\to X$, the pull-back
onto $Y$ of exterior forms on $X$ by $\pi$ provides the
monomorphism of graded commutative algebras $\cO^*(X)\to
\cO^*(Y)$. Elements of its range $\pi^*\cO^*(X)$ are called basic
forms. Exterior forms
\be
\phi : Y\to\op\w^r T^*X, \qquad \phi
=\frac{1}{r!}\phi_{\la_1\ldots\la_r}dx^{\la_1}\w\cdots\w
dx^{\la_r},
\ee
on $Y$ such that $u\rfloor\f=0$ for an arbitrary vertical vector
field $u$ on $Y$ are said to be horizontal forms. Horizontal forms
of degree $n=\dim X$ are called densities.

In the case of the tangent bundle $TX\to X$, there is a different
way to lift exterior forms on $X$ onto $TX$ \cite{grab,leon}. Let
$f$ be a function on $X$. Its tangent lift onto $TX$ is defined as
the function
\mar{gm62}\beq
\wt f=\dot x^\la\dr_\la f. \label{gm62}
\eeq
Let $\si$ be an $r$-form on $X$. Its tangent lift onto $TX$ is
said to be the $r$-form $\wt \si$ given by the relation
\mar{z115}\beq
\wt \si(\wt\tau_1,\ldots,\wt\tau_r)=
\wt{\si(\tau_1,\ldots,\tau_r)}, \label{z115}
\eeq
where $\tau_i$ are arbitrary vector fields on $X$ and $\wt\tau_i$
are their functorial lifts (\ref{l27}) onto $TX$. We have the
coordinate expression
\mar{gm61}\ben
&& \si=\frac{1}{r!}\si_{\la_1\cdots\la_r}dx^{\la_1}\w\cdots\w dx^{\la_r},
\nonumber\\
&& \wt\si =\frac1{r!}[\dot x^\m\dr_\m
\si_{\la_1\cdots\la_r}dx^{\la_1}\w\cdots\w dx^{\la_r}+ \label{gm61}\\
&&\qquad \op\sum_{i=1}^r
\si_{\la_1\cdots\la_r}dx^{\la_1}\w\cdots\w d\dot
x^{\la_i}\w\cdots\w dx^{\la_r}]. \nonumber
\een
The following equality holds:
\be
d\wt\si=\wt{d\si}.
\ee

The interior product (or contraction) of a vector field $u$ and an
exterior $r$-form $\f$ on a manifold $Z$ is given by the
coordinate expression
\be
&& u\rfloor\f = \op\sum_{k=1}^r \frac{(-1)^{k-1}}{r!} u^{\la_k}
\f_{\la_1\ldots\la_k\ldots\la_r} dz^{\la_1}\w\cdots\w\wh
{dz}^{\la_k}\w\cdots \w dz^{\la_r}= \\
&& \qquad \frac{1}{(r-1)!}u^\m\f_{\m\al_2\ldots\al_r} dz^{\al_2}\w\cdots\w
dz^{\al_r},
\ee
where the caret $\,\wh{}\,$ denotes omission. It obeys the
relations
\mar{031}\ben
&& \f(u_1,\ldots,u_r)=u_r\rfloor\cdots u_1\rfloor\f,\nonumber\\
&& u\rfloor(\f\w\si)= u\rfloor\f\w\si +(-1)^{|\f|}\f\w
u\rfloor\si. \label{031}
\een

A generalization of the interior product to multivector fields is
the left interior product
\be
\vt\rfloor\f=\f(\vt), \qquad \nm\vt\leq\nm\f, \qquad
\f\in\cO^*(Z), \qquad \vt\in \cT_*(Z),
\ee
of multivector fields and exterior forms. It is defined by the
equalities
\be
\f(u_1\w\cdots\w u_r)=\f(u_1,\ldots,u_r), \qquad \f\in\cO^*(Z),
\qquad u_i\in\cT_1(Z),
\ee
and obeys the relation
\be
\vt\rfloor \up\rfloor\f=(\up\w\vt)\rfloor\f=(-1)^{\nm\up\nm\vt}
\up\rfloor \vt\rfloor\f, \qquad \f\in\cO^*(Z), \qquad \vt,\up\in
\cT_*(Z).
\ee

The Lie derivative of an exterior form $\f$ along a vector field
$u$ is
\mar{034,5}\ben
&& \bL_u\f = u\rfloor d\f +d(u\rfloor\f), \label{034}\\
&& \bL_u(\f\w\si)= \bL_u\f\w\si +\f\w\bL_u\si. \label{035}
\een
In particular, if $f$ is a function, then
\be
\bL_u f =u(f)=u\rfloor d f.
\ee
An exterior form $\f$ is invariant under a local one-parameter
group of diffeomorphisms $G_t$ of $Z$ (i.e., $G_t^*\f=\f$) iff its
Lie derivative along the infinitesimal generator $u$ of this group
vanishes, i.e.,
\be
\bL_u\f=0.
\ee
Following physical terminology (Definition \ref{084}), we say that
a vector field $u$ is a symmetry of an exterior form $\f$.

A tangent-valued $r$-form on a manifold $Z$ is a section
\mar{spr611}\beq
\phi = \frac{1}{r!}\phi_{\la_1\ldots\la_r}^\m dz^{\la_1}\w\cdots\w
dz^{\la_r}\ot\dr_\m \label{spr611}
\eeq
of the tensor bundle
\be
\op\w^r T^*Z\ot TZ\to Z.
\ee

\begin{rem}
There is one-to-one correspondence between the tangent-valued
one-forms $\f$ on a manifold $Z$ and the linear bundle
endomorphisms
\mar{29b,b'}\ben
&& \wh\f:TZ\to TZ,\quad
\wh\f: T_zZ\ni v\to v\rfloor\f(z)\in T_zZ, \label{29b} \\
&&\wh\f^*:T^*Z\to T^*Z,\quad \wh\f^*: T_z^*Z\ni v^*\to
\f(z)\rfloor v^*\in T_z^*Z, \label{29b'}
\een
over $Z$ (Remark \ref{mos30}). For instance, the canonical
tangent-valued one-form
\mar{b1.51}\beq
\thh_Z= dz^\la\ot \dr_\la \label{b1.51}
\eeq
on $Z$ corresponds to the identity morphisms (\ref{29b}) and
(\ref{29b'}).
\end{rem}

\begin{rem}
Let $Z=TX$, and let $TTX$ be the tangent bundle of $TX$. It is
called the double tangent bundle. There is the bundle endomorphism
\mar{z117}\beq
J(\dr_\la)= \dot\dr_\la, \qquad J(\dot\dr_\la)=0 \label{z117}
\eeq
of $TTX$ over $X$. It  corresponds to the canonical tangent-valued
form
\mar{z117'}\beq
\thh_J=dx^\la\ot\dot\dr_\la \label{z117'}
\eeq
on the tangent bundle $TX$. It is readily observed that $J\circ
J=0$.
\end{rem}

The space $\cO^*(Z)\ot \cT_1(Z)$ of tangent-valued forms is
provided with the Fr\"olicher--Nijenhuis bracket
\mar{1149}\ben
&& [,]_{\rm FN}:\cO^r(Z)\ot \cT_1(Z)\times \cO^s(Z)\ot \cT_1(Z)
\to\cO^{r+s}(Z)\ot \cT_1(Z), \nonumber \\
&& [\al\ot u,\, \bt\ot v]_{\rm FN} = (\al\w\bt)\ot [u, v] +
(\al\w \bL_u \bt)\ot v - \label{1149} \\
&& \qquad(\bL_v \al\w\bt)\ot u +  (-1)^r (d\al\w u\rfloor\bt)\ot v
+ (-1)^r(v\rfloor\al\w d\bt)\ot u, \nonumber \\
&&  \al\in\cO^r(Z), \qquad \bt\in \cO^s(Z), \qquad u,v\in\cT_1(Z). \nonumber
\een
Its coordinate expression is
\be
&& [\phi,\si]_{\rm FN} = \frac{1}{r!s!}(\phi_{\la_1
\dots\la_r}^\nu\dr_\n\si_{\la_{r+1}\dots\la_{r+s}}^\m -
\si_{\la_{r+1}
\dots\la_{r+s}}^\nu\dr_\nu\phi_{\la_1\dots\la_r}^\m -\\
&& \qquad r\phi_{\la_1\ldots\la_{r-1}\nu}^\m\dr_{\la_r}\si_{\la_{r+1}
\dots\la_{r+s}}^\nu + s \si_{\nu\la_{r+2}\ldots\la_{r+s}}^\m
\dr_{\la_{r+1}}\phi_{\la_1\ldots\la_r}^\nu)\\
&& \qquad dz^{\la_1}\wedge\cdots \wedge dz^{\la_{r+s}}\otimes\dr_\m,\\
&&
\f\in \cO^r(Z)\ot \cT_1(Z), \qquad \si\in \cO^s(Z)\ot \cT_1(Z).
\ee
There are the relations
\mar{1150,'}\ben
&& [\f,\si]_{\rm FN}=(-1)^{|\f||\s|+1}[\si,\f]_{\rm FN}, \label{1150} \\
&& [\f, [\si, \thh]_{\rm FN}]_{\rm FN} = [[\f, \si]_{\rm FN}, \thh]_{\rm
FN} +(-1)^{|\f||\si|}  [\si, [\f,\thh]_{\rm FN}]_{\rm FN}, \label{1150'}\\
&& \f,\si,\thh\in \cO^*(Z)\ot \cT_1(Z). \nonumber
\een

Given a tangent-valued form  $\thh$, the Nijenhuis differential on
$\cO^*(Z)\ot\cT_1(Z)$ is defined as the morphism
\be
d_\thh : \psi\to d_\thh\psi = [\thh,\psi]_{\rm FN}, \qquad
\psi\in\cO^*(Z)\ot\cT_1(Z).
\ee
By virtue of (\ref{1150'}), it has the property
\be
d_\f[\psi,\thh]_{\rm FN} = [d_\phi\psi,\thh]_{\rm FN}+
(-1)^{|\f||\psi|} [\psi,d_\f\thh]_{\rm FN}.
\ee
In particular, if $\thh=u$ is a vector field, the Nijenhuis
differential is the Lie derivative of tangent-valued forms
\be
&& \bL_u\si= d_u\si=[u,\si]_{\rm FN} =\frac{1}{s!}(u^\n\dr_\n\si_{\la_1\ldots\la_s}^\m -
\si_{\la_1\ldots\la_s}^\n\dr_\n u^\m +\\
&& \qquad s\si^\m_{\nu\la_2\ldots\la_s}\dr_{\la_1}u^\nu)dx^{\la_1}
\w\cdots\w dx^{\la_s}\ot\dr_\m, \qquad \si\in\cO^s(Z)\ot\cT_1(Z).
\ee

Let $\Y$ be a fibred manifold. We consider the following subspaces
of the space $\cO^*(Y)\ot \cT_1(Y)$ of tangent-valued forms on
$Y$:

$\bullet$ horizontal tangent-valued forms
\be
&& \phi : Y\to\op\w^r T^*X\op\otimes_Y TY,\\
&& \phi =dx^{\la_1}\wedge\cdots\wedge dx^{\la_r}\otimes
\frac{1}{r!}[\phi_{\la_1\ldots\la_r}^\m(y) \dr_\m
+\phi_{\la_1\ldots\la_r}^i(y) \dr_i],
\ee

$\bullet$ projectable horizontal tangent-valued forms
\be
\phi =dx^{\la_1}\wedge\cdots\wedge dx^{\la_r}\otimes
\frac{1}{r!}[\phi_{\la_1\ldots\la_r}^\m(x)\dr_\m
+\phi_{\la_1\ldots\la_r}^i(y) \dr_i],
\ee

$\bullet$ vertical-valued form
\be
\phi : Y\to\op\w^r T^*X\op\otimes_Y VY,\quad \phi
=\frac{1}{r!}\phi_{\la_1\ldots\la_r}^i(y)dx^{\la_1}\wedge\cdots
\wedge dx^{\la_r}\otimes\dr_i,
\ee

$\bullet$ vertical-valued one-forms, called soldering forms,
\mar{1290}\beq
\si = \si_\la^i(y) dx^\la\otimes\dr_i, \label{1290}
\eeq

$\bullet$ basic soldering forms
\be
\si = \si_\la^i(x) dx^\la\otimes\dr_i.
\ee

\begin{rem} \label{mos161} \mar{mos161}
The tangent bundle $TX$ is provided with the canonical soldering
form $\thh_J$ (\ref{z117'}). Due to the canonical vertical
splitting
\mar{mos163}\beq
VTX=TX\op\times_X TX, \label{mos163}
\eeq
the canonical soldering form (\ref{z117'}) on $TX$ defines the
canonical tangent-valued form $\thh_X$ (\ref{b1.51}) on $X$. By
this reason, tangent-valued one-forms on a manifold $X$ also are
called soldering forms.
\end{rem}

We also mention the  $TX$-valued forms
\mar{1.11}\ben
&&\f:Y\to \op\w^r T^*X\op\ot_Y TX, \label{1.11}\\
&&\phi =\frac{1}{r!}\phi_{\la_1\ldots\la_r}^\m  dx^{\la_1}\w\cdots\w dx^{\la_r}
\otimes \dr_\m,\nonumber
\een
and  $V^*Y$-valued forms
\mar{87}\ben
&& \f :Y\to \op\w^r T^*X\op\ot_Y V^*Y, \label{87}\\
&& \phi =\frac{1}{r!}\phi_{\la_1\ldots\la_ri}dx^{\la_1}\w\cdots\w dx^{\la_r}\ot
\ol dy^i. \nonumber
\een
It should be emphasized that (\ref{1.11}) are not tangent-valued
forms, while (\ref{87}) are not exterior forms. They exemplify
vector-valued forms. Given a vector bundle $E\to X$, by a
$E$-valued $k$-form on $X$, is meant a section of the fibre bundle
\be
(\op\w^k T^*X)\op\ot_X E^*\to X.
\ee

\subsection{Distributions and foliations}

A subbundle $\bT$ of the tangent bundle $TZ$ of a manifold $Z$ is
called a regular distribution (or, simply, a distribution). A
vector field $u$ on $Z$ is said to be subordinate to a
distribution $\bT$ if it lives in $\bT$. A distribution $\bT$ is
called involutive if the Lie bracket of $\bT$-subordinate vector
fields also is subordinate to $\bT$.

A subbundle of the cotangent bundle $T^*Z$ of $Z$ is called a
codistribution $\bT^*$ on a manifold $Z$. For instance, the
annihilator $\rA\bT$ of a distribution $\bT$ is a codistribution
whose fibre over $z\in Z$ consists of covectors $w\in T^*_z$ such
that $v\rfloor w=0$ for all $v\in \bT_z$.

There is the following criterion of an involutive distribution
\cite{war}.

\begin{theo} \label{warr} \mar{warr} Let $\bT$ be a distribution
and $\rA\bT$ its annihilator. Let $\w\rA\bT(Z)$ be the ideal of
the exterior algebra $\cO^*(Z)$ which is generated by sections of
$\rA\bT\to Z$. A distribution $\bT$ is involutive iff the ideal
$\w\rA\bT(Z)$ is a differential ideal, i.e.,
\be
d(\w\rA\bT(Z))\subset\w\rA\bT(Z).
\ee
\end{theo}

The following local coordinates can be associated to an involutive
distribution  \cite{war}.

\begin{theo}\label{c11.0} \mar{c11.0} Let $\bT$ be an involutive
$r$-dimensional distribution on a manifold $Z$, $\di Z=k$. Every
point $z\in Z$ has an open neighborhood $U$ which is a domain of
an adapted coordinate chart $(z^1,\dots,z^k)$ such that,
restricted to $U$, the distribution $\bT$ and its annihilator
$\rA\bT$ are spanned by the local vector fields $\dr/\dr z^1,
\cdots,\dr/\dr z^r$ and the local one-forms $dz^{r+1},\dots,
dz^k$, respectively.
\end{theo}

A connected submanifold $N$ of a manifold $Z$ is called an
integral manifold of a distribution $\bT$ on $Z$ if $TN\subset
\bT$. Unless otherwise stated, by an integral manifold is meant an
integral manifold of dimension of $\bT$. An integral manifold is
called maximal if no other integral manifold contains it. The
following is the classical theorem of Frobenius \cite{kob,war}.

\begin{theo}\label{to.1}  \mar{to.1} Let $\bT$ be an
involutive distribution on a manifold $Z$. For any $z\in Z$, there
exists a unique maximal integral manifold of $\bT$ through $z$,
and any integral manifold through $z$ is its open subset.
\end{theo}

Maximal integral manifolds of an involutive distribution on a
manifold $Z$ are assembled into a regular foliation $\cF$ of $Z$.

A regular $r$-dimensional foliation (or, simply, a foliation)
$\cF$ of a $k$-dimensional manifold $Z$ is defined as a partition
of $Z$ into connected $r$-dimensional submanifolds (the leaves of
a foliation) $F_\iota$, $\iota\in I$, which possesses the
following properties \cite{rei,tam}.

A manifold $Z$ admits an adapted coordinate atlas
\mar{spr850}\beq
\{(U_\xi;z^\la, z^i)\},\quad \la=1,\ldots,k-r, \qquad
i=1,\ldots,r, \label{spr850}
\eeq
such that transition functions of coordinates $z^\la$ are
independent of the remaining coordinates $z^i$. For each leaf $F$
of a foliation $\cF$, the connected components of $F\cap U_\xi$
are given by the equations $z^\la=$const. These connected
components and coordinates $(z^i)$ on them make up a coordinate
atlas of a leaf $F$. It follows that tangent spaces to leaves of a
foliation $\cF$ constitute an involutive distribution $T\cF$ on
$Z$, called the tangent bundle to the foliation $\cF$. The factor
bundle
\be
V\cF=TZ/T\cF,
\ee
called the normal bundle to $\cF$, has transition functions
independent of coordinates $z^i$. Let $T\cF^*\to Z$ denote the
dual of $T\cF\to Z$. There are the exact sequences
\mar{pp2,3}\ben
&& 0\to T\cF \ar^{i_\cF} TX \ar V\cF\to 0, \label{pp2} \\
&& 0\to {\rm Ann}\,T\cF\ar T^*X\ar^{i^*_\cF} T\cF^* \to 0
\label{pp3}
\een
of vector bundles over $Z$.

A pair $(Z,\cF)$, where $\cF$ is a foliation of $Z$, is called a
foliated manifold. It should be emphasized that leaves of a
foliation need not be closed or imbedded submanifolds. Every leaf
has an open saturated neighborhood $U$, i.e., if $z\in U$, then a
leaf through $z$ also belongs to $U$.

Any submersion $\zeta:Z\to M$ yields a foliation
\be
\cF=\{F_p=\zeta^{-1}(p)\}_{p\in \zeta(Z)}
\ee
of $Z$ indexed by elements of $\zeta(Z)$, which is an open
submanifold of $M$, i.e., $Z\to \zeta(Z)$ is a fibred manifold.
Leaves of this foliation are closed imbedded submanifolds. Such a
foliation is called simple. Any (regular) foliation is locally
simple.

\section{Jet manifolds}

This Section collects the relevant material on jet manifolds of
sections of fibre bundles \cite{book09,kol,book00,sau}.

\subsection{First order jet manifolds}

Given a fibre bundle $Y\to X$ with bundle coordinates
$(x^\la,y^i)$, let us consider the equivalence classes $j^1_xs$ of
its sections $s$, which are identified by their values $s^i(x)$
and the values of their partial derivatives $\dr_\mu s^i(x)$ at a
point $x\in X$. They are called the first order jets of sections
at $x$. One can justify that the definition of jets is
coordinate-independent. A key point is that the set $J^1Y$ of
first order jets $j^1_xs$, $x\in X$, is a smooth manifold with
respect to the adapted coordinates $(x^\la,y^i,y_\la^i)$ such that
\mar{50}\beq
y_\la^i(j^1_xs)=\dr_\la s^i(x),\qquad {y'}^i_\la = \frac{\dr
x^\m}{\dr{x'}^\la}(\dr_\m +y^j_\m\dr_j)y'^i.\label{50}
\eeq
It is called the first order jet manifold of a fibre bundle $Y\to
X$. We call $(y_\la^i)$ the jet coordinate.

A jet manifold $J^1Y$ admits the natural fibrations
\mar{1.14,5}\ben
&&\pi^1:J^1Y\ni j^1_xs\to x\in X, \label{1.14}\\
&&\pi^1_0:J^1Y\ni j^1_xs\to s(x)\in Y. \label{1.15}
\een
A glance at the transformation law (\ref{50}) shows that $\pi^1_0$
is an affine bundle modelled over the vector bundle
\mar{cc9}\beq
T^*X \op\otimes_Y VY\to Y.\label{cc9}
\eeq
It is convenient to call $\pi^1$ (\ref{1.14}) the jet bundle,
while $\pi^1_0$ (\ref{1.15}) is said to be the affine jet bundle.

Let us note that, if $Y\to X$ is a vector or an affine bundle, the
jet bundle $\pi_1$ (\ref{1.14}) is so.

Jets can be expressed in terms of familiar tangent-valued forms as
follows. There are the canonical imbeddings
\mar{18,24}\ben
&&\la_{(1)}:J^1Y\op\to_Y
T^*X \op\otimes_Y TY,\nonumber\\
&& \la_{(1)}=dx^\la
\otimes (\dr_\la + y^i_\la \dr_i)=dx^\la\otimes d_\la, \label{18}\\
&&\thh_{(1)}:J^1Y \op\to_Y T^*Y\op\otimes_Y VY,\nonumber\\
&&\thh_{(1)}=(dy^i- y^i_\la dx^\la)\otimes \dr_i=\thh^i \otimes
\dr_i,\label{24}
\een
where $d_\la$ are said to be total derivatives,  and $\thh^i$ are
called contact forms.

We further identify the jet manifold $J^1Y$ with its images under
the canonical morphisms (\ref{18}) and (\ref{24}), and represent
the jets $j^1_xs=(x^\la,y^i,y^i_\m)$ by the tangent-valued forms
$\la_{(1)}$ (\ref{18}) and $\thh_{(1)}$ (\ref{24}).

Sections and morphisms of fibre bundles admit prolongations to jet
manifolds as follows.

Any section $s$ of a fibre bundle $Y\to X$ has the jet
prolongation to the section
\be
(J^1s)(x)= j_x^1s, \qquad y_\la^i\circ J^1s= \dr_\la s^i(x),
\ee
of the jet bundle $J^1Y\to X$. A section of the jet bundle
$J^1Y\to X$ is called integrable if it is the jet prolongation of
some section of a fibre bundle $Y\to X$.

Any bundle morphism $\Phi:Y\to Y'$ over a diffeomorphism $f$
admits a jet prolongation to a bundle morphism of affine jet
bundles
\mar{1.21a}\beq
J^1\Phi : J^1Y \ar_\Phi J^1Y', \qquad {y'}^i_\la\circ
J^1\Phi=\frac{\dr(f^{-1})^\m}{\dr x'^\la}d_\m\Phi^i. \label{1.21a}
\eeq

Any projectable vector field $u$ (\ref{11f30}) on a fibre bundle
$Y\to X$ has a jet prolongation to the projectable vector field
\mar{1.21}\ben
&&J^1u =r_1\circ J^1u: J^1Y\to J^1TY\to TJ^1Y,\nonumber \\
&& J^1u =u^\la\dr_\la + u^i\dr_i + (d_\la u^i
- y_\m^i\dr_\la u^\m)\dr_i^\la, \label{1.21}
\een
on the jet manifold $J^1Y$. In order to obtain (\ref{1.21}), the
canonical bundle morphism
\be
r_1: J^1TY\to TJ^1Y,\qquad \dot y^i_\la\circ r_1 = (\dot
y^i)_\la-y^i_\m\dot x^\m_\la
\ee
is used. In particular, there is the canonical isomorphism
\mar{d020}\beq
VJ^1Y=J^1VY, \qquad \dot y^i_\la=(\dot y^i)_\la.\label{d020}
\eeq

\subsection{Second order jet manifolds}

Taking the first order jet manifold  of the jet bundle $J^1Y\to
X$, we obtain the repeated jet manifold $J^1J^1Y$ provided with
the adapted coordinates
\be
(x^\la ,y^i,y^i_\la ,\wh y_\m^i,y^i_{\m\la})
\ee
possessing transition functions
\be
&& y'^i_\la = \frac{\dr x^\al}{\dr{x'}^\la} d_\al y'^i,
\qquad \wh y'^i_\la = \frac{\dr x^\al}{\dr{x'}^\la} \wh d_\al
y'^i, \qquad
{y'}_{\m\la}^i= \frac{\dr x^\al}{\dr{x'}^\m}\wh d_\al {y'}^i_\la,\\
&& d_\al = \dr_\al + y^j_\al\dr_j +y^j_{\nu\al}\dr^\nu_j,
\qquad \wh d_\al = \dr_\al +\wh y^j_\al\dr_j
+y^j_{\nu\al}\dr^\nu_j.
\ee
There exist two different affine fibrations of $J^1J^1Y$ over
$J^1Y$:

$\bullet$ the familiar affine jet bundle (\ref{1.15}):
\mar{gm213}\beq
\pi_{11}:J^1J^1Y\to J^1Y, \qquad y_\la^i\circ\p_{11} = y_\la^i,
\label{gm213}
\eeq

$\bullet$ the affine bundle
\mar{gm214}\beq
J^1\pi^1_0:J^1J^1Y\to J^1Y,\qquad y_\la^i\circ J^1\pi_0^1 = \wh
y_\la^i. \label{gm214}
\eeq
\noindent In general, there is no canonical identification of
these fibrations. The points $q\in J^1J^1Y$, where
\be
\pi_{11}(q)=J^1\pi^1_0(q),
\ee
form an affine subbundle $\wh J^2Y\to J^1Y$ of $J^1J^1Y$ called
the sesquiholonomic jet manifold. It is given by the coordinate
conditions $\wh y^i_\la= y^i_\la$, and is coordinated by  $(x^\la
,y^i, y^i_\la,y^i_{\m\la})$.

The second order (or holonomic) jet manifold $J^2Y$ of a fibre
bundle $Y\to X$ can be defined as the affine subbundle of the
fibre bundle $\wh J^2Y\to J^1Y$ given by the coordinate conditions
$y^i_{\la\m}=y^i_{\m\la}$. It is modelled over the vector bundle
\be
\op\vee^2T^*X\op\ot_{J^1Y}VY\to J^1Y,
\ee
and is endowed with adapted coordinates $(x^\la ,y^i,
y^i_\la,y^i_{\la\m}=y^i_{\m\la})$, possessing transition functions
\mar{12f80}\beq
y'^i_\la = \frac{\dr x^\al}{\dr{x'}^\la} d_\al y'^i, \qquad
{y'}_{\m\la}^i= \frac{\dr x^\al}{\dr{x'}^\m} d_\al {y'}^i_\la.
\label{12f80}
\eeq

The second order jet manifold $J^2Y$ also can be introduced as the
set of the equivalence classes $j_x^2s$ of sections $s$ of the
fibre bundle $Y\to X$, which are identified by their values and
the values of their first and second order partial derivatives at
points $x\in X$, i.e.,
\be
y^i_\la (j_x^2s)=\dr_\la s^i(x),\qquad
y^i_{\la\m}(j_x^2s)=\dr_\la\dr_\m s^i(x).
\ee
The equivalence classes $j^2_xs$ are called the second order jets
of sections.

Let $s$ be a section of a fibre bundle $Y\to X$, and let $J^1s$ be
its jet prolongation to a section of a jet bundle $J^1Y\to X$. The
latter gives rise to the section $J^1J^1s$ of the repeated jet
bundle $J^1J^1Y\to X$. This section takes its values into the
second order jet manifold $J^2Y$. It is called the second order
jet prolongation of a section $s$, and is denoted by $J^2s$.

\begin{theo}\label{1.5.3} \mar{1.5.3} Let $\ol s$ be a section of the
jet bundle $J^1Y\to X$, and let $J^1\ol s$ be its jet prolongation
to a section of the repeated jet bundle $J^1J^1Y\to X$. The
following three facts are equivalent:

$\bullet$ $\ol s=J^1s$ where $s$ is a section of a fibre bundle
$Y\to X$,

$\bullet$ $J^1\ol s$ takes its values into $\wh J^2Y$,

$\bullet$ $J^1\ol s$ takes its values into $J^2Y$.
\end{theo}

\subsection{Higher order jet manifolds}

The notion of first and second order jet manifolds is naturally
extended to higher order jet manifolds.

The $k$-order jet manifold $J^kY$ of a fibre bundle $Y\to X$
comprises the equivalence classes $j^k_xs$, $x\in X$, of sections
$s$ of $Y$ identified by the $k+1$ terms of their Tailor series at
points $x\in X$. The jet manifold $J^kY$ is provided with the
adapted coordinates
\be
&&(x^\la, y^i,y^i_\la, \ldots, y^i_{\la_k\cdots\la_1}),\\
&& y^i_{\la_l\cdots\la_1}(j^k_xs)= \dr_{\la_l}\cdots \dr_{\la_1}s^i(x),
\qquad 0\leq l\leq k.
\ee
Every section $s$ of a fibre bundle $Y\to X$ gives rise to the
section $J^ks$ of a fibre bundle $J^kY\to X$ such that
\be
y^i_{\la_l\cdots\la_1}\circ J^ks=\dr_{\la_l}\cdots \dr_{\la_1}s^i,
\qquad 0\leq l\leq k.
\ee

The following operators on exterior forms on jet manifolds are
utilized:

$\bullet$ the total derivative operator
\mar{+410}\beq
d_\la =\dr_\la +y^i_\la\dr_i +y^i_{\la\m}\dr_i^\m +\cdots,
\label{+410}
\eeq
obeying the relations
\be
&& d_\la(\f\w\si)=d_\la(\f)\w\si +\f\w d_\la(\si),\\
&& d_\la(d\f)=d(d_\la(\f)),
\ee
in particular,
\be
&& d_\la (f) = \dr_\la f +y^i_\la\dr_if +y^i_{\la\m}\dr_i^\m f +\cdots,
\qquad f\in C^\infty (J^kY), \\
&& d_\la(dx^\m)=0, \qquad d_\la(dy^i_{\la_l\cdots\la_1})=
dy^i_{\la\la_l\cdots\la_1};
\ee

$\bullet$ the horizontal projection $h_0$ given by the relations
\mar{mos40}\beq
h_0(dx^\la)=dx^\la, \qquad h_0(dy^i_{\la_k\cdots\la_1})=
y^i_{\m\la_k\ldots\la_1}dx^\m, \label{mos40}
\eeq
in  particular,
\be
h_0(dy^i)=y^i_\mu dx^\mu, \qquad h_0(dy^i_\la)=
y^i_{\mu\la}dx^\mu;
\ee

$\bullet$ the total differential
\mar{mos41}\beq
d_H(\f)=dx^\la\w d_\la(\f), \label{mos41}
\eeq
possessing the properties
\be
d_H\circ d_H=0,\qquad h_0\circ d=d_H\circ h_0.
\ee

\subsection{Differential operators and differential equations}

Jet manifolds provide the standard language for the theory of
differential equations and differential operators
\cite{bry,book,kras}.

\begin{defi}\label{ch529} \mar{ch529}
Let $Z$ be an $(m+n)$-dimensional manifold. A system of $k$-order
partial differential equations (or, simply, a differential
equation) in $n$ variables on $Z$ is defined to be a closed smooth
submanifold $\gE$ of the $k$-order jet bundle $J^k_nZ$ of
$n$-dimensional submanifolds of $Z$.
\end{defi}

By its solution is meant an $n$-dimensional submanifold $S$ of $Z$
whose $k$-order jets $[S]^k_z$, $z\in S$, belong to $\gE$.

\begin{defi} \label{mos04} \mar{mos04}
A $k$-order differential equation in $n$ variables on a manifold
$Z$ is called a dynamic equation if it can be algebraically solved
for the highest order derivatives, i.e., it is a section of the
fibration $J^k_nZ\to J^{k-1}_nZ$.
\end{defi}

In particular, a first order dynamic equation in $n$ variables on
a manifold $Z$ is a section of the jet bundle $J^1_nZ\to Z$. Its
image in the tangent bundle $TZ\to Z$ is an $n$-dimensional vector
subbundle of $TZ$. If $n=1$, a dynamic equation is given by a
vector field
\mar{mos011}\beq
\dot z^\la(t)=u^\la(z(t)) \label{mos011}
\eeq
on a manifold $Z$. Its solutions are integral curves $c(t)$ of the
vector field $u$.

Let $Y\to X$ be a fibre bundle. There are several equivalent
definitions of (non-linear) differential operators. We start with
the following.

\begin{defi}\label{ch538} \mar{ch538}
Let $E\to X$ be a vector bundle. A $k$-order $E$-valued
differential operator on a fibre bundle $Y\to X$ is defined as a
section $\cE$ of the pull-back bundle
\mar{5.113}\beq
{\rm pr}_1:E^k_Y= J^kY\op\times_X E \to J^kY. \label{5.113}
\eeq
\end{defi}

Given bundle coordinates $(x^\la, y^i)$ on $Y$ and $(x^\la,
\chi^a)$ on $E$, the pull-back (\ref{5.113}) is provided with
coordinates $(x^\la, y^j_\Si,\chi^a)$, $0\leq|\Si|\leq k$. With
respect to these coordinates, a differential operator $\cE$ seen
as a closed imbedded submanifold $\cE\subset E^k_Y$ is given by
the equalities
\mar{5.132}\beq
\chi^a = \cE^a(x^\la, y^j_\Si). \label{5.132}
\eeq

There is obvious one-to-one correspondence between the sections
$\cE$ (\ref{5.132}) of the fibre bundle (\ref{5.113}) and the
bundle morphisms
\mar{5.115}\ben
&&\Phi: J^kY\ar_X E, \label{5.115}\\
&& \Phi= {\rm pr}_2\circ \cE \, \Longleftrightarrow \, \cE
=(\id J^kY, \Phi). \nonumber
\een
Therefore, we come to the following equivalent definition of
differential operators on $Y\to X$.

\begin{defi}\label{oper} \mar{oper}
A fibred morphism
\mar{gm2}\beq
\cE: J^kY\op\to_X E \label{gm2}
\eeq
is called a $k$-order differential operator on the fibre bundle
$Y\to X$. It sends each section $s(x)$ of $Y\to X$ onto the
section $(\cE\circ J^ks)(x)$ of the vector bundle $E\to X$
\cite{bry,kras}.
\end{defi}

The kernel of a differential operator is the subset
\mar{z60}\beq
\Ker\cE=\cE^{-1}(\wh 0(X))\subset J^kY, \label{z60}
\eeq
where $\wh 0$ is the zero section of the vector bundle $E\to X$,
and we assume that $\wh 0(X)\subset \cE(J^kY)$.

\begin{defi}\label{equa} \mar{equa}
A system of $k$-order partial differential equations (or, simply,
a differential equation) on a fibre bundle $Y\to X$ is defined as
a closed subbundle $\gE$ of the jet bundle $J^kY\to X$.
\end{defi}

Its solution is a (local) section $s$ of the fibre bundle $Y\to X$
such that its $k$-order jet prolongation $J^ks$ lives in $\gE$.

For instance, if the kernel (\ref{z60}) of a differential operator
$\cE$ is a closed subbundle of the fibre bundle $J^kY\to X$, it
defines a differential equation
\be
\cE\circ J^ks=0.
\ee

The following condition is sufficient for a kernel of a
differential operator to be a differential equation.

\begin{theo}\label{oper2} \mar{oper2}
Let the morphism (\ref{gm2}) be of constant rank. Its kernel
(\ref{z60}) is a closed subbundle of the fibre bundle $J^kY\to X$
and, consequently, is a $k$-order differential equation.
\end{theo}

\section{Connections on fibre bundles}

There are different equivalent definitions of a connection on a
fibre bundle $Y\to X$. We define it both as a splitting of the
exact sequence (\ref{1.8a}) and a global section of the affine jet
bundle $J^1Y\to $Y \cite{book09,book00,sau}.

\subsection{Connections}

A connection on a fibred manifold $Y\to X$ is defined as a
splitting (called the horizontal splitting)
\mar{150}\ben
&& \G: Y\op\times_X TX\op\to_Y TY, \qquad
   \G: \dot x^\la\dr_\la \mapsto \dot x^\la(\dr_\la+\G^i_\la(y)\dr_i),
\label{150}\\
&& \dot x^\la \dr_\la + \dot y^i \dr_i = \dot
x^\la(\dr_\la + \G_\la^i \dr_i) + (\dot y^i - \dot
x^\la\G_\la^i)\dr_i.  \nonumber
\een
of the exact sequence (\ref{1.8a}). Its range is a subbundle of
$TY\to Y$ called the horizontal distribution. By virtue of Theorem
\ref{sp11}, a connection on a fibred manifold always exists. A
connection $\G$ (\ref{150}) is represented by the horizontal
tangent-valued one-form
\mar{154}\beq
\G = dx^\la\otimes (\dr_\la + \G_\la^i\dr_i) \label{154}
\eeq
on $Y$ which is projected onto the canonical tangent-valued form
$\thh_X$ (\ref{b1.51}) on $X$.

Given a connection $\G$ on a fibred manifold $Y\to X$, any vector
field $\tau$ on a base $X$ gives rise to the projectable vector
field
\mar{b1.85}\beq
\G \tau=\tau\rfloor\G=\tau^\la(\dr_\la +\G^i_\la\dr_i)
\label{b1.85}
\eeq
on $Y$ which lives in the horizontal distribution determined by
$\G$. It is called the horizontal lift of $\tau$ by means of a
connection $\G$.

The splitting (\ref{150}) also is given by the vertical-valued
form
\mar{b1.223}\beq
\G= (dy^i -\G^i_\la dx^\la)\ot\dr_i, \label{b1.223}
\eeq
which yields an epimorphism $TY\to VY$. It provides the
corresponding splitting
\mar{cc3}\ben
&& \G: V^*Y\ni \ol dy^i\mapsto dy^i-\G^i_\la dx^\la\in
T^*Y, \label{cc3}\\
&&\dot x_\la dx^\la + \dot y_i dy^i= (\dot x_\la +\dot
y_i\G^i_\la)dx^\la +\dot y_i(dy^i-\G^i_\la dx^\la), \nonumber
\een
of the dual exact sequence (\ref{1.8b}).

In an equivalent way, connections on a fibred manifold $Y\to X$
are introduced as global sections of the affine jet bundle
$J^1Y\to Y$. Indeed, any global section $\G$ of $J^1Y\to Y$
defines the tangent-valued form $\la_1\circ \G$ (\ref{154}). It
follows from this definition that connections on a fibred manifold
$Y\to X$ constitute an affine space modelled over the vector space
of soldering forms $\si$ (\ref{1290}). One also deduces from
(\ref{50}) the coordinate transformation law of connections
\be
\G'^i_\la = \frac{\dr x^\m}{\dr{x'}^\la}(\dr_\m
+\G^j_\m\dr_j)y'^i.
\ee

\begin{rem} \label{Ehresmann}  Any connection $\G$ on a fibred manifold $Y\to
X$ yields a horizontal lift of a vector field on $X$ onto $Y$, but
need not defines the similar lift of a path in $X$ into $Y$. Let
\be
\Bbb R\supset[,]\ni t\to x(t)\in X, \qquad \Bbb R\ni t\to y(t)\in
Y,
\ee
be smooth paths in $X$ and $Y$, respectively. Then $t\to y(t)$ is
called a horizontal lift of $x(t)$ if
\be
\pi(y(t))= x(t), \qquad \dot y(t)\in H_{y(t)}Y, \qquad t\in\Bbb R,
\ee
where $HY\subset TY$ is the horizontal subbundle associated to the
connection $\G$. If, for each path $x(t)$ $(t_0\leq t\leq t_1)$
and for any $y_0\in\pi^{-1}(x(t_0))$, there exists a horizontal
lift $y(t)$ $(t_0\leq t\leq t_1)$ such that $y(t_0)=y_0$, then
$\G$ is called the Ehresmann connection. A fibred manifold is a
fibre bundle iff it admits an Ehresmann connection \cite{gre}.
\end{rem}

Hereafter, we restrict our consideration to connections on fibre
bundles. The following are two standard constructions of new
connections from old ones.

$\bullet$ Let $Y$ and $Y'$ be fibre bundles over the same base
$X$. Given connections $\G$ on $Y$ and $\G'$ on $Y'$, the bundle
product $Y\op\times_X Y'$  is provided with the product connection
\mar{b1.96}\beq
\G\times\G' = dx^\la\ot\left(\dr_\la +\G^i_\la\frac{\dr}{\dr y^i}
+ \G'^j_\la\frac{\dr}{\dr y'^j}\right). \label{b1.96}
\eeq

$\bullet$ Given a fibre bundle $Y\to X$, let $f:X'\to X$ be a
manifold morphism and $f^*Y$ the pull-back of $Y$ over $X'$. Any
connection $\G$ (\ref{b1.223}) on $Y\to X$ yields the pull-back
connection
\mar{mos82}\beq
f^*\G=\left(dy^i-\G^i_\la(f^\m(x'^\nu),y^j)\frac{\dr f^\la}{\dr
x'^\m}dx'^\m\right)\ot\dr_i \label{mos82}
\eeq
on the pull-back bundle $f^*Y\to X'$.

Every connection $\G$ on a fibre bundle $Y\to X$ defines the first
order differential operator
\mar{2116}\ben
&& D^\G:J^1Y\op\to_Y T^*X\op\otimes_Y VY, \label{2116}\\
&& D^\G=\la_1- \G\circ \pi^1_0 =(y^i_\la -\G^i_\la)dx^\la\otimes\dr_i,
\nonumber
\een
on $Y$ called the covariant differential. If $s:X\to Y$ is a
section, its covariant differential
\mar{+190}\beq
\nabla^\G s= D_\G\circ J^1s = (\dr_\la s^i - \G_\la^i\circ s)
dx^\la\ot \dr_i \label{+190}
\eeq
and its covariant derivative $\nabla_\tau^\G s
=\tau\rfloor\nabla^\G s$ along a vector field $\tau$ on $X$ are
introduced. In particular, a (local) section $s$ of $Y\to X$ is
called an integral section for a connection $\G$ (or parallel with
respect to $\G$) if $s$ obeys the equivalent conditions
\mar{b1.86}\beq
\nabla^\G s=0 \quad {\rm or} \quad J^1s=\G\circ s. \label{b1.86}
\eeq

Let $\G$ be a connection on a fibre bundle $Y\to X$. Given vector
fields $\tau$, $\tau'$ on $X$ and their horizontal lifts $\G\tau$
and $\G\tau'$ (\ref{b1.85}) on $Y$, let us consider the vertical
vector field
\mar{160a,160}\ben
&& R(\tau,\tau')=\G [\tau,\tau'] - [\G \tau, \G \tau']= \tau^\la
\tau'^\m R_{\la\m}^i\dr_i, \label{160a} \\
&& R_{\la\m}^i = \dr_\la\G_\m^i - \dr_\m\G_\la^i +
\G_\la^j\dr_j \G_\m^i - \G_\m^j\dr_j \G_\la^i. \label{160}
\een
It can be seen as the contraction of vector fields $\tau$ and
$\tau'$ with the vertical-valued horizontal two-form
\mar{161'}\beq
R =\frac{1}{2} [\G,\G]_{\rm FN} = \frac12 R_{\la\m}^i dx^\la\wedge
dx^\m\otimes\dr_i \label{161'}
\eeq
on $Y$ called the curvature form of a connection $\G$.

Given a connection $\G$ and a soldering form $\si$, the torsion of
$\G$ with respect to $\si$ is defined as the vertical-valued
horizontal two-form
\mar{1190}\beq
T = [\G, \si]_{\rm FN} = (\dr_\la\si_\m^i + \G_\la^j\dr_j\si_\m^i
- \dr_j\G_\la^i\si_\m^j) dx^\la\w dx^\m\ox \dr_i. \label{1190}
\eeq

\subsection{Flat connections}

A flat (or curvature-free) connection is a connection $\G$ on a
fibre bundle $Y\to X$ which satisfies the following equivalent
conditions:

$\bullet$ its curvature vanishes everywhere on $Y$;

$\bullet$ its horizontal distribution is involutive;

$\bullet$ there exists a local integral section for the connection
$\G$ through any point $y\in Y$.

By virtue of Theorem \ref{to.1}, a flat connection $\G$  yields a
foliation of $Y$ which is transversal to the fibration $Y\to X$.
It called a horizontal foliation. Its leaf through a point $y\in
Y$ is locally defined by an integral section $s_y$ for the
connection $\G$ through $y$. Conversely, let a fibre bundle $Y\to
X$ admit a horizontal foliation such that, for each point $y\in
Y$, the leaf of this foliation through $y$ is locally defined by a
section $s_y$ of $Y\to X$ through $y$. Then the map
\be
\G:Y\ni y\mapsto j^1_{\pi(y)}s_y\in J^1Y
\ee
sets a flat connection on $Y\to X$. Hence, there is one-to-one
correspondence between the flat connections and the horizontal
foliations of a fibre bundle $Y\to X$.

Given a horizontal foliation of a fibre bundle $Y\to X$, there
exists the associated atlas of bundle coordinates $(x^\la, y^i)$
on $Y$ such that every leaf of this foliation is locally given by
the equations $y^i=$const., and the transition functions $y^i\to
{y'}^i(y^j)$ are independent of the base coordinates $x^\la$
\cite{book09}. It is called the atlas of constant local
trivializations. Two such atlases are said to be equivalent if
their union also is an atlas of the same type. They are associated
to the same horizontal foliation. Thus, the following is proved.

\begin{theo} \label{gena113} \mar{gena113}
There is one-to-one correspondence between the flat connections
$\G$ on a fibre bundle $Y\to X$ and the equivalence classes of
atlases of constant local trivializations of $Y$ such that
$\G=dx^\la\ot\dr_\la$ relative to the corresponding atlas.
\end{theo}

\begin{ex} \label{spr852} \mar{spr852}
Any trivial bundle has flat connections corresponding to its
trivializations. Fibre bundles over a one-dimensional base have
only flat connections.
\end{ex}

\subsection{Linear connections}

Let $Y\to X$ be a vector bundle equipped with linear bundle
coordinates $(x^\la,y^i)$. It admits a linear connection
\mar{167}\beq
\G =dx^\la\ot(\dr_\la + \G_\la{}^i{}_j(x) y^j\dr_i). \label{167}
\eeq
There are the following standard constructions of new linear
connections from old ones.

$\bullet$  Any linear connection $\G$ (\ref{167}) on a vector
bundle $Y\to X$ defines the dual linear connection
\mar{spr300}\beq
\G^*=dx^\la\ot(\dr_\la - \G_\la{}^j{}_i(x) y_j\dr^i)
\label{spr300}
\eeq
on the dual bundle $Y^*\to X$.

$\bullet$ Let $\G$ and $\G'$ be linear connections on vector
bundles $Y\to X$ and $Y'\to X$, respectively. The direct sum
connection $\G\oplus\G'$ on the Whitney sum $Y\oplus Y'$ of these
vector bundles is defined as the product connection (\ref{b1.96}).

$\bullet$ Similarly, the tensor product $Y\ot Y'$ of vector
bundles possesses the tensor product connection
\mar{b1.92}\beq
\G\otimes\G'=dx^\la\ot\left[\dr_\la +(\G_\la{}^i{}_j
y^{ja}+\G'_\la{}^a{}_b y^{ib}) \frac{\dr}{\dr y^{ia}}\right].
\label{b1.92}
\eeq

The curvature of a linear connection $\G$ (\ref{167}) on a vector
bundle $Y\to X$ is usually written as a $Y$-valued two-form
\mar{mos4}\ben
&&R=\frac12 R_{\la\m}{}^i{}_j(x)y^j dx^\la\w dx^\m\ot e_i,\label{mos4}\\
&&R_{\la\m}{}^i{}_j = \dr_\la \G_\m{}^i{}_j - \dr_\m
\G_\la{}^i{}_j + \G_\la{}^h{}_j \G_\m{}^i{}_h - \G_\m{}^h{}_j
\G_\la{}^i{}_h, \nonumber
\een
due to the canonical vertical splitting $VY\cong Y\times Y$, where
$\{\dr_i\}=\{e_i\}$. For any two vector fields $\tau$ and $\tau'$
on $X$, this curvature yields the zero order differential operator
\mar{+98}\beq
R(\tau,\tau')
s=([\nabla_\tau^\G,\nabla_{\tau'}^\G]-\nabla_{[\tau,\tau']}^\G)s
\label{+98}
\eeq
on section $s$ of a vector bundle $Y\to X$.

An important example of linear connections is a  connection
\mar{B}\beq
K= dx^\la\otimes (\dr_\la +K_\la{}^\m{}_\n \dot x^\n \dot\dr_\m)
\label{B}
\eeq
on the tangent bundle $TX$ of a manifold $X$. It is called a world
connection or, simply, a connection on a manifold  $X$. The dual
connection (\ref{spr300}) on the cotangent bundle $T^*X$ is
\mar{C}\beq
K^*= dx^\la\otimes (\dr_\la -K_\la{}^\m{}_\n\dot x_\m \dot\dr^\n).
\label{C}
\eeq

The curvature of a world connection $K$ (\ref{B}) reads
\mar{1203}\ben
&& R=\frac12R_{\la\m}{}^\al{}_\bt\dot x^\bt dx^\la\w dx^\m\ot\dr_\al,
 \label{1203}\\
&& R_{\la\m}{}^\al{}_\bt = \dr_\la K_\m{}^\al{}_\bt - \dr_\m
K_\la{}^\al{}_\bt + K_\la{}^\g{}_\bt K_\m{}^\al{}_\g -
K_\m{}^\g{}_\bt K_\la{}^\al{}_\g.\nonumber
\een
Its Ricci tensor $R_{\la\bt}=R_{\la\m}{}^\m{}_\bt$ is introduced.

A torsion of a world connection is defined as the torsion
(\ref{1190}) of the connection $K$ (\ref{B}) on the tangent bundle
$TX$ with respect to the canonical vertical-valued form
$dx^\la\ot\dot\dr_\la$. Due to the vertical splitting of $VTX$, it
also is written as a tangent-valued two-form
\mar{191}\beq
T =\frac12 T_\m{}^\n{}_\la  dx^\la\w dx^\m\ot\dr_\n, \qquad
T_\m{}^\n{}_\la  = K_\m{}^\n{}_\la - K_\la{}^\n{}_\m, \label{191}
\eeq
on $X$. A world connection (\ref{B}) is called symmetric if its
torsion (\ref{191}) vanishes.

For instance, let a manifold $X$ be provided with a non-degenerate
fibre metric
\be
g\in\op\vee^2\cO^1(X), \qquad g=g_{\la\m}dx^\la\ot dx^\m,
\ee
in the tangent bundle $TX$, and with the dual metric
\be
g\in\op\vee^2\cT^1(X), \qquad g=g^{\la\m}\dr_\la\ot \dr_\m,
\ee
in the cotangent bundle $T^*X$. Then there exists a world
connection $K$ such that $g$ is its integral section, i.e.,
\be
\nabla_\la g^{\al\bt}=\dr_\la\, g^{\al \bt} -
g^{\al\g}K_\la{}^\bt{}_\g - g^{\bt\g}K_\la{}^\al{}_\g=0.
\ee
It is called the metric connection. There exists a unique
symmetric metric connection
\mar{b1.400}\beq
K_\la{}^\n{}_\m =
\{_\la{}^\n{}_\m\}=-\frac{1}{2}g^{\n\rho}(\dr_\la g_{\rho\m} +
\dr_\m g_{\rho\la}-\dr_\rho g_{\la\m}). \label{b1.400}
\eeq
This is the Levi--Civita connection, whose components
(\ref{b1.400}) are called Christoffel symbols.

\subsection{Composite connections}

Let us consider the composite bundle $Y\to\Si\to X$ (\ref{1.34}),
coordinated by $(x^\la, \si^m, y^i)$. Let us consider the jet
manifolds $J^1\Si$, $J^1_\Si Y$, and $J^1Y$ of the fibre bundles
$\Si\to X$, $Y\to \Si$ and $Y\to X$, respectively. They are
parameterized respectively by the coordinates
\be
( x^\la ,\si^m, \si^m_\la),\quad ( x^\la ,\si^m, y^i, \wt y^i_\la,
y^i_m),\quad ( x^\la ,\si^m, y^i, \si^m_\la ,y^i_\la).
\ee
There is the canonical map
\mar{1.38}\beq
\vr : J^1\Si\op\times_\Si J^1_\Si Y\ar_Y J^1Y, \qquad
y^i_\la\circ\vr=y^i_m{\si}^m_{\la} +\wt y^i_{\la}. \label{1.38}
\eeq
Using the canonical map (\ref{1.38}), we can consider the
relations between connections on fibre bundles $Y\to X$, $Y\to\Si$
and $\Si\to X$ \cite{book00,sau}.

Connections on fibre bundles $Y\to X$, $Y\to\Si$ and $\Si\to X$
read
\mar{spr290-2}\ben
&& \g=dx^\la\ot (\dr_\la +\g_\la^m\dr_m + \g_\la^i\dr_i), \label{spr290}\\
&&  A_\Si=dx^\la\ot (\dr_\la + A_\la^i\dr_i) +d\si^m\ot (\dr_m + A_m^i\dr_i),
\label{spr291}\\
&& \G=dx^\la\ot (\dr_\la + \G_\la^m\dr_m). \label{spr292}
\een
The canonical map $\vr$ (\ref{1.38}) enables us to obtain a
connection $\g$ on $Y\to X$ in accordance with the diagram
\be
\begin{array}{rcccl}
 & J^1\Si\op\xx_\Si J^1_\Si Y & \ar^\vr & J^1Y & \\
_{(\G,A)} & \put(0,-10){\vector(0,1){20}} & &
\put(0,-10){\vector(0,1){20}} & _{\g} \\
 & \Si\op\xx_X Y & \longleftarrow & Y &
\end{array}
\ee
This connection, called the composite connection, reads
\mar{b1.114}\beq
\g=dx^\la\ot [\dr_\la +\G_\la^m\dr_m + (A_\la^i +
A_m^i\G_\la^m)\dr_i]. \label{b1.114}
\eeq
It is a unique connection such that the horizontal lift $\g\tau$
on $Y$ of a vector field $\tau$ on $X$ by means of the connection
$\g$ (\ref{b1.114}) coincides with the composition $A_\Si(\G\tau)$
of horizontal lifts of $\tau$ onto $\Si$ by means of the
connection $\G$ and then onto $Y$ by means of the connection
$A_\Si$. For the sake of brevity, let us write $\g=A_\Si\circ\G$.

Given the composite bundle $Y$ (\ref{1.34}), there is the exact
sequence
\mar{63a}\beq
0\to V_\Si Y\to VY\to Y\op\times_\Si V\Si\to 0, \label{63a}
\eeq
where $V_\Si Y$ denotes the vertical tangent bundle of a fibre
bundle $Y\to\Si$ coordinated by $(x^\la,\si^m,y^i, \dot y^i)$. Let
us consider the splitting
\mar{63c}\ben
&& B: VY\ni v=\dot y^i\dr_i +\dot \si^m\dr_m \mapsto v\rfloor B= \label{63c}\\
&& \qquad  (\dot y^i -\dot \si^m B^i_m)\dr_i\in V_\Si
Y, \nonumber \\
&& B=(\ol d y^i- B^i_m\ol d\si^m)\ot\dr_i\in
V^*Y\op\ot_Y V_\Si Y, \nonumber
\een
of the exact sequence (\ref{63a}). Then the connection $\g$
(\ref{spr290}) on $Y\to X$ and the splitting $B$ (\ref{63c})
define a connection
\mar{nnn}\ben
&& A_\Si=B\circ \g: TY\to VY \to V_\Si Y, \nonumber\\
&& A_\Si=dx^\la\ot(\dr_\la +(\g^i_\la - B^i_m\g^m_\la)\dr_i)+
\label{nnn}\\
&& \qquad d\si^m\ot (\dr_m +B^i_m\dr_i), \nonumber
\een
on the fibre bundle $Y\to\Si$.

Conversely, every connection $A_\Si$ (\ref{spr291}) on a fibre
bundle $Y\to\Si$ provides the splitting
\mar{46a}\ben
&& VY=V_\Si Y\op\oplus_Y A_\Si(Y\op\times_\Si V\Si),\label{46a}\\
&& \dot y^i\dr_i + \dot\si^m\dr_m= (\dot y^i -A^i_m\dot\si^m)\dr_i
+ \dot\si^m(\dr_m+A^i_m\dr_i), \nonumber
\een
of the exact sequence (\ref{63a}). Using this splitting, one can
construct the first order differential operator
\mar{7.10}\beq
\wt D: J^1Y\to T^*X\op\otimes_Y V_\Si Y, \qquad \wt D=
dx^\la\otimes(y^i_\la- A^i_\la -A^i_m\si^m_\la)\dr_i, \label{7.10}
\eeq
called the vertical covariant differential, on the composite fibre
bundle $Y\to X$.

The vertical covariant differential (\ref{7.10}) possesses the
following important property. Let $h$ be a section of a fibre
bundle $\Si\to X$, and let $Y_h\to X$ be the restriction of a
fibre bundle $Y\to\Si$ to $h(X)\subset \Si$.  This is a subbundle
$i_h:Y_h\to Y$ of a fibre bundle $Y\to X$. Every connection
$A_\Si$ (\ref{spr291}) induces the pull-back connection
\mar{mos83}\beq
A_h=i_h^*A_\Si=dx^\la\ot[\dr_\la+((A^i_m\circ h)\dr_\la h^m
+(A\circ h)^i_\la)\dr_i] \label{mos83}
\eeq
on $Y_h\to X$. Then the restriction of the vertical covariant
differential $\wt D$ (\ref{7.10}) to $J^1i_h(J^1Y_h)\subset J^1Y$
coincides with the familiar covariant differential $D^{A_h}$
(\ref{2116}) on $Y_h$ relative to the pull-back connection $A_h$
(\ref{mos83}).

\bibliographystyle{alpha}
\bibliographystyle{plain}
\addcontentsline{toc}{chapter}{Bibliography}
\bibliography{conn99}

\begin{thebibliography}{ederf}

\bibitem{abr} Abraham, R.  and Marsden, J. (1978). \emph{Foundations of Mechanics}
(Benjamin / Cummings Publ. Comp., London).

\bibitem{arn} Arnold, V. (Ed.) (1988). \emph{Dynamical Systems III}
(Springer, Berlin).

\bibitem{bry} Bryant, R., Chern, S., Gardner, R., Goldschmidt, H. and
Griffiths, P. (1991). \emph{Exterior Differential Systems}
(Springer-Verlag, Berlin).

\bibitem{dew} Dewisme, A. and Bouquet, S. (1993). First integrals and
symmetries of time-dependent Hamiltonian systems, \emph{J. Math.
Phys} \textbf{34}, 997.

\bibitem{eche95} Echeverr\'{\i}a Enr\'{\i}quez, A., Mu\~noz Lecanda, M. and
Rom\'an Roy, N. (1995). Non-standard connections in classical
mechanics, \emph{J. Phys. A} \textbf{28}, 5553.

\bibitem{eche} Echeverr\'{\i}a Enr\'{\i}quez, A., Mu\~noz Lecanda, M. and
Rom\'an Roy, N. (1991). Geometrical setting of time-dependent
regular systems. Alternative models, \emph{Rev. Math. Phys.}
\textbf{3}, 301.

\bibitem{fat} Fatibene, L., Ferraris, M., Francaviglia, M.
and  McLenaghan, R. (2002). Generalized symmetries in mechanics
and field theories, \emph{J. Math. Phys.} \textbf{43}, 3147.


\bibitem{book}  Giachetta, G., Mangiarotti, L. and Sardanashvily, G. (1997).
\emph{New Lagrangian and Hamiltonian Methods in Field Theory}
(World Scientific, Singapore).

\bibitem{jpa99} Giachetta, G., Mangiarotti, L. and Sardanashvily, G. (1999). Covariant
Hamilton equations for field theory, \emph{J. Phys. A}
\textbf{32}, 6629.

\bibitem{jmp02} Giachetta, G., Mangiarotti, L. and Sardanashvily, G.
(2002). Geometric quantization of mechanical systems with
time-dependent parameters, \emph{J. Math. Phys} \textbf{43}, 2882.

\bibitem{cmp04} Giachetta, G., Mangiarotti, L. and Sardanashvily, G.
(2005). Lagrangian supersymmetries depending on derivatives.
Global analysis and cohomology, \emph{Commun. Math. Phys.}
\textbf{259}, 103.

\bibitem{book05} Giachetta, G., Mangiarotti, L. and Sardanashvily, G.
(2005). \emph{Geometric and Topological Algebraic Methods in
Quantum Mechanics} (World Scientific, Singapore).

\bibitem{jmp09} Giachetta, G., Mangiarotti, L. and Sardanashvily, G.
(2009). On the notion of gauge symmetries of generic Lagrangian
field theory, \emph{J. Math. Phys.} \textbf{50}, 012903.

\bibitem{book09} Giachetta, G., Mangiarotti, L. and Sardanashvily, G.
(2009). \emph{Advanced Classical Field Theory} (World Scientific,
Singapore).

\bibitem{book10} Giachetta, G., Mangiarotti, L. and Sardanashvily, G.
(2010). \emph{Advanced Mechanics: Classical and Quantum} (World
Scientific, Singapore).

\bibitem{got91} Gotay, M. (1991). A multisymplectic framework for classical field
theory and the calculus of variations. I. Covariant Hamiltonian
formalism, In \emph{Mechanics, Analysis and Geometry: 200 Years
after Lagrange} (North Holland, Amsterdam) p. 203.

\bibitem{grab} Grabowski, J. and Urba\'nski, P. (1995). Tangent lifts of Poisson and
related structures, \emph{J. Phys. A} \textbf{28}, 6743.

\bibitem{gre} Greub, W.,  Halperin, S. and Vanstone, R. (1972). \emph{Connections, Curvature
and Cohomology} (Academic Press, New York).

\bibitem{hir} Hirzebruch, F. (1966). \emph{Topological Methods in Algebraic Geometry}
(Springer-Verlag, Berlin).

\bibitem{ibr} Ibragimov, N. (1985). \emph{Transformation Groups Applied
to Mathematical Physics} (Riedel, Boston).

\bibitem{kamb} Kamber, F. and Tondeur, P. (1975). \emph{Foliated Bundles and
Characteristic Classes}, Lect. Notes in Mathematics \textbf{493}
(Springer, Berlin).


\bibitem{kob} Kobayashi, S. and Nomizu, K. (1963). \emph{Foundations of Differential
Geometry}  (John Wiley, New York - Singapore).

\bibitem{kol} Kol\'a\v{r}, I.,  Michor, P. and Slov\'ak, J. (1993). \emph{Natural Operations
in Differential Geometry} (Springer-Verlag, Berlin).

\bibitem{kras} Krasil'shchik, I., Lychagin, V. and Vinogradov, A. (1985). \emph{Geometry of
Jet Spaces and Nonlinear Partial Differential Equations} (Gordon
and Breach, Glasgow).


\bibitem{leon} De Le\'on, M. and Rodrigues, P. (1989). \emph{Methods of Differential
Geometry in Analytical Mechanics} (North-Holland, Amsterdam).

\bibitem{libe} Libermann, P. and Marle, C-M. (1987). \emph{Symplectic Geometry and
Analytical Mechanics} (D.Reidel Publishing Company, Dordrecht).


\bibitem{book98} Mangiarotti, L. and Sardanashvily, G. (1998). \emph{Gauge
Mechanics} (World Scientific, Singapore).

\bibitem{jmp00} Mangiarotti, L. and Sardanashvily, G. (2000).
On the geodesic form of second order dynamic equations, \emph{J.
Math. Phys.} \textbf{41}, 835.

\bibitem{mang00} Mangiarotti, L. and Sardanashvily, G. (2000). Constraints
in Hamiltonian time-dependent mechanics, \emph{J. Math. Phys.}
\textbf{41}, 2858.

\bibitem{book00} Mangiarotti, L. and Sardanashvily, G. (2000). \emph{Connections in
Classical and Quantum Field Theory} (World Scientific, Singapore).

\bibitem{jmp07} Mangiarotti, L. and Sardanashvily, G. (2007). Quantum
mechanics with respect to different reference frames, \emph{J.
Math. Phys.} \textbf{48}, 082104.


\bibitem{marl97} Marle, C.-M. (1997). The Schouten--Nijenhuis bracket and interior
products, \emph{J. Geom. Phys.} \textbf{23}, 350.

\bibitem{massa} Massa, E. and Pagani, E. (1994). Jet bundle geometry, dynamical
connections and the inverse problem of Lagrangian mechanics,
\emph{Ann. Inst. Henri Poincar\'e} \textbf{61}, 17.


\bibitem{meig} Meigniez, G. (2002). Submersions, fibrations and
bundles, \emph{Trans. Amer. Math. Soc.} \textbf{354}, 3771.

\bibitem{mora} Morandi, G., Ferrario, C., Lo Vecchio, G., Marmo, G. and Rubano, C. (1990). The
inverse problem in the calculus of variations and the geometry of
the tangent bundle, \emph{Phys. Rep.} \textbf{188}, 147.


\bibitem{olv} Olver, P. (1986). \emph{Applications of Lie Groups to
Differential Equations} (Springer-Verlag, Berlin).


\bibitem{rei} Reinhart, B. (1983). \emph{Differential Geometry and Foliations}
(Springer-Verlag, Berlin).

\bibitem{rie} Riewe, F. (1996). Nonconservative Lagrangian and Hamiltonian mechanics,
\emph{Phys. Rev. E} \textbf{53}, 1890.

\bibitem{sard95} Sardanashvily, G. (1995). \emph{Generalized Hamiltonian Formalism for
Field Theory. Constraint Systems.} (World Scientific, Singapore).

\bibitem{sard98} Sardanashvily, G. (1998). Hamiltonian time-dependent mechanics,
\emph{J. Math. Phys.} \textbf{39}, 2714.


\bibitem{sard08} Sardanashvily, G. (2008). Classical field theory. Advanced mathematical formulation,
\emph{Int. J. Geom. Methods Mod. Phys.} \textbf{5}, 1163.

\bibitem{ijgmmp09}  Sardanashvily, G. (2009). Gauge conservation laws in a general
setting. Superpotential, \emph{Int. J. Geom. Methods Mod. Phys.}
\textbf{6}, 1046.


\bibitem{sau} Saunders, D. (1989). \emph{The Geometry of Jet Bundles}
(Cambridge Univ. Press, Cambridge).

\bibitem{ste} Steenrod, N. (1972). \emph{The Topology of Fibre Bundles} (Princeton Univ.
Press, Princeton).


\bibitem{tam} Tamura, I. (1992). \emph{Topology of Foliations: An Introduction},
Transl. Math. Monographs \textbf{97} (AMS, Providence).

\bibitem{vais} Vaisman, I. (1994). \emph{Lectures on the Geometry of Poisson Manifolds}
(Birkh\"auser Verlag, Basel).


\bibitem{war} Warner, F. (1983). \emph{Foundations of Differential Manifolds and Lie
Groups} (Springer-Verlag, Berlin).


\end{thebibliography}

\end{document}